\numberwithin{equation}{chapter}
\newtheorem{claim}{Claim}[chapter]
\newtheorem{lemma}[claim]{Lemma}
\newtheorem{theorem}{Theorem}[chapter]
\newtheorem{proposition}[claim]{Proposition}
\newtheorem{corollary}[claim]{Corollary}
\newtheorem{definition}[claim]{Definition}
\newtheorem{remark}{Remark}[chapter]
\newcommand{\un}{\underline}
\newcommand{\be}{\begin{equation}}
\newcommand{\ee}{\end{equation}}
\newcommand{\ben}{\begin{equation*}}
\newcommand{\een}{\end{equation*}}
\newcommand{\bi}{\begin{itemize}}
\newcommand{\ei}{\end{itemize}}
\newcommand{\mc}{\mathcal}
\newcommand{\mbf}{\mathbf}
\newcommand{\msf}{\tsf}
\newcommand{\tsf}{\mathsf}
\newcommand{\snr}{\textsf{snr}}
\newcommand{\e}{\epsilon}
\newcommand{\statt}{\textsf{stat}^t}
\newcommand{\mcb}{\mathcal{B}_{M,L}}
\newcommand{\mcs}{\mathcal{S}}
\newcommand{\pr}{\mathbb{P}}
\newcommand{\re}{\mathbb{R}}
\newcommand{\bth}{\hat{\beta}}
\newcommand{\bfs}{s}  
\newcommand{\bfsh}{\hat{s}}  
\newcommand{\tbfs}{\tilde{\bfs}}
\newcommand{\abs}[1]{\left \lvert #1\right \rvert}
\newcommand{\norm}[1]{ \lVert #1 \rVert}
\newcommand{\normsq}[1]{ \lVert #1 \rVert^2}
\newcommand{\expec}{\mathbb{E}}
\newcommand{\bks}{\backslash}
\newcommand{\mscrs}{\mathscr{S}}
\newcommand{\sfi}{\mathsf{I}}
\newcommand{\Ah}{A_{\mathsf{H}}}
\newcommand{\Esec}{\mc{E}_{\textsf{sec}}}
\DeclareMathOperator*{\argmin}{arg\,min}
\newcommand{\reals}{\mathbb{R}}
\newcommand{\xtl}{x_{\textsf{L}}(\tau)}
\newcommand{\algorithmicbreak}{\textbf{break}}
\newcommand{\BREAK}{\STATE \algorithmicbreak}
\newcommand{\wh}[1]{\widehat{#1}}
\newcommand{\normal}{\mc{N}}
\newcommand{\bmin}{b_{\text{min}}}
 \newcommand{\highlightgr}[1]{%
  \colorbox{black!15}{$\displaystyle#1$}}
\newcommand{\M}{M}
\newcommand{\Lc}{L_C}
\newcommand{\Lr}{L_R}
\newcommand{\Mc}{M_C}
\newcommand{\Mr}{M_R}
\newcommand{\sfb}{\textsf{b}}
\newcommand{\Wricj}{W_{\textsf{r}(i)\textsf{c}(j)}}
\newcommand{\sfr}{\textsf{r}}
\newcommand{\sfc}{\textsf{c}}
\newcommand{\stdnorm}{\mathcal{N}(0,1)}
\newcommand{\Smat}{S}
\title{\Huge \sf Sparse Regression Codes \vspace{1in}}
\author{
{\sf Ramji Venkataramanan, \emph{University of Cambridge} \hfill \vspace{7pt}} \\ 
{\sf Sekhar Tatikonda, \emph{Yale University} \hfill \vspace{7pt} }  \\ 
{\sf Andrew Barron, \emph{Yale University} \hfill  } 
}
\date{}
\newcommand\summaryname{\sf \large Abstract}
\newenvironment{Abstract}%
    {\begin{center}%
    \bfseries{\summaryname} \end{center}}
\newcommand\ackname{\sf \large Acknowledgements}
\newenvironment{ack}%
    {\begin{center}%
    \bfseries{\ackname} \end{center}}
\begin{document}

\frontmatter
\maketitle
\tableofcontents

\newpage

\begin{Abstract}
\begin{changemargin}{1cm}{1cm}
Developing computationally-efficient codes that approach the Shannon-theoretic limits for communication and compression has long been one of the major goals of information and coding theory.  There have been significant advances towards this goal in the last couple of decades, with the emergence of turbo codes,  sparse-graph codes, and polar codes. These codes are  designed primarily  for discrete-alphabet channels and sources. For Gaussian channels and sources, where the alphabet is inherently continuous, \emph{Sparse Superposition Codes} or \emph{Sparse Regression Codes} (SPARCs) are a promising class of codes for achieving the Shannon limits.

This monograph provides a unified and comprehensive over-view of sparse regression codes,  covering theory, algorithms, and practical implementation aspects. The first part of the monograph focuses on SPARCs for AWGN channel coding, and the second part on  SPARCs for lossy compression (with  squared error distortion criterion). In the third part, SPARCs are used to construct codes for Gaussian multi-terminal channel and source coding models such as broadcast channels, multiple-access channels,  and source and channel coding with side information. The monograph concludes with a discussion of open problems and directions for future work.
\end{changemargin}
\end{Abstract}

\mainmatter

\chapter{Introduction} \label{chap:intro} 
%

Developing computationally-efficient codes that approach the Shannon-theoretic limits for communication and compression has long been one of the major goals of information and coding theory.  There have been significant advances towards this goal in the last couple of decades, with the emergence of turbo and sparse-graph codes in the '90s \cite{Berrou96,costForney2007,RichUBook},  and more recently polar codes and spatially-coupled LDPC codes \cite{Polarcc, Polarrd,spatialCoup13}. These codes are primarily designed for  channels with discrete input alphabet, and for discrete-alphabet sources. 

There are many channels and sources of practical interest where the alphabet is inherently continuous, e.g., additive white Gaussian noise (AWGN) channels, and Gaussian sources. This monograph discusses a class of codes for such Gaussian models called   \emph{Sparse Superposition Codes} or \emph{Sparse Regression Codes} (SPARCs). These codes were introduced by Barron and Joseph \cite{AntonyML,AntonyFast} for efficient communication over AWGN channels, but have since also been used  for lossy compression \cite{RVGaussianML,RVGaussFeasible} and multi-terminal communication \cite{RVAller12}. Our goal in this monograph is to provide a unified and comprehensive view of SPARCs, covering theory, algorithms, as well as practical implementation aspects. 

To motivate the construction of SPARCs, let us begin with the standard AWGN channel. The goal is to construct codes with computationally efficient encoding and decoding that \emph{provably} achieve the channel capacity $\mc{C} =\frac{1}{2}\log_2(1 + \snr)$ bits/transmission, where $\snr$ denotes the signal-to-noise ratio.   In particular, we are interested in codes whose encoding and decoding complexity grows no faster than a low-order polynomial in the block length $n$.

 Though it is well known that rates approaching $\mc{C}$ can be achieved with Gaussian codebooks, this has been largely avoided in practice because of the high decoding complexity of unstructured Gaussian codes.   Instead, the popular approach has been to separate the design of the coding scheme into two steps: \emph{coding} and \emph{modulation}.  State-of-the-art coding schemes for the AWGN channel such as coded modulation \cite{forney98,bicmAlbert,bocherer15}  use this two-step design, and combine  binary error-correcting codes such as LDPC and turbo codes with standard modulation schemes such as Quadrature Amplitude Modulation (QAM). Though such schemes have good empirical performance,  they have not been proven to be capacity-achieving for the AWGN channel.    With sparse regression codes, we step back from the coding/modulation divide and instead use a structured codebook to construct low-complexity, capacity-achieving schemes tailored to the AWGN channel. 
 
There have been several lattice based schemes \cite{ErezZ04,zamir14lattice} proposed for communication over the AWGN channel, including low density lattice codes \cite{LDLC08}  and polar lattices \cite{yanLing14,abbe2011polar}. The reader is referred to the cited works for details of the performance vs. complexity trade-offs of these codes.

In the rest of this chapter, we describe the sparse regression codebook, and give a brief overview of the topics covered in the later chapters. First, we lay down some notation that will be used throughout the monograph.
 
 \paragraph{Notation}  The Gaussian distribution with mean $\mu$ and variance $\sigma^2$ is denoted by $\mc{N}(\mu,\sigma^2)$. For a positive integer $L$, we use $[L]$ to denote the set $\{1, \ldots, L\}$. The Euclidean norm of a vector $x$ is denoted by $\norm{x}$.  The indicator function of an event $\mc{E}$ is denoted by $\mbf{1}\{ \mc{E} \}$.   The transpose of a matrix $A$ is denoted by $A^*$. The $n \times n$ identity matrix is denoted by $\mathsf{I}_n$, with the subscript dropped when it is clear from context.

 Both $\log$ and $\ln$  are used to denote the natural logarithm. Logarithms to the base $2$ are denoted by $\log_2$. For most of the theoretical analysis, we will find it convenient to use natural logarithms. Therefore, rate is measured in \emph{nats}, unless otherwise specified. Throughout, we use $n$ for the block length of the code. 
 
 For random vectors $X,Y$ defined on the same probability space, we write $X \stackrel{d}{=} Y$ to indicate that $X$ and $Y$ have the same distribution.

\section{The Sparse Regression Codebook} \label{sec:sparc_const}

\begin{figure}
\centering
\includegraphics[width = 4.5in]{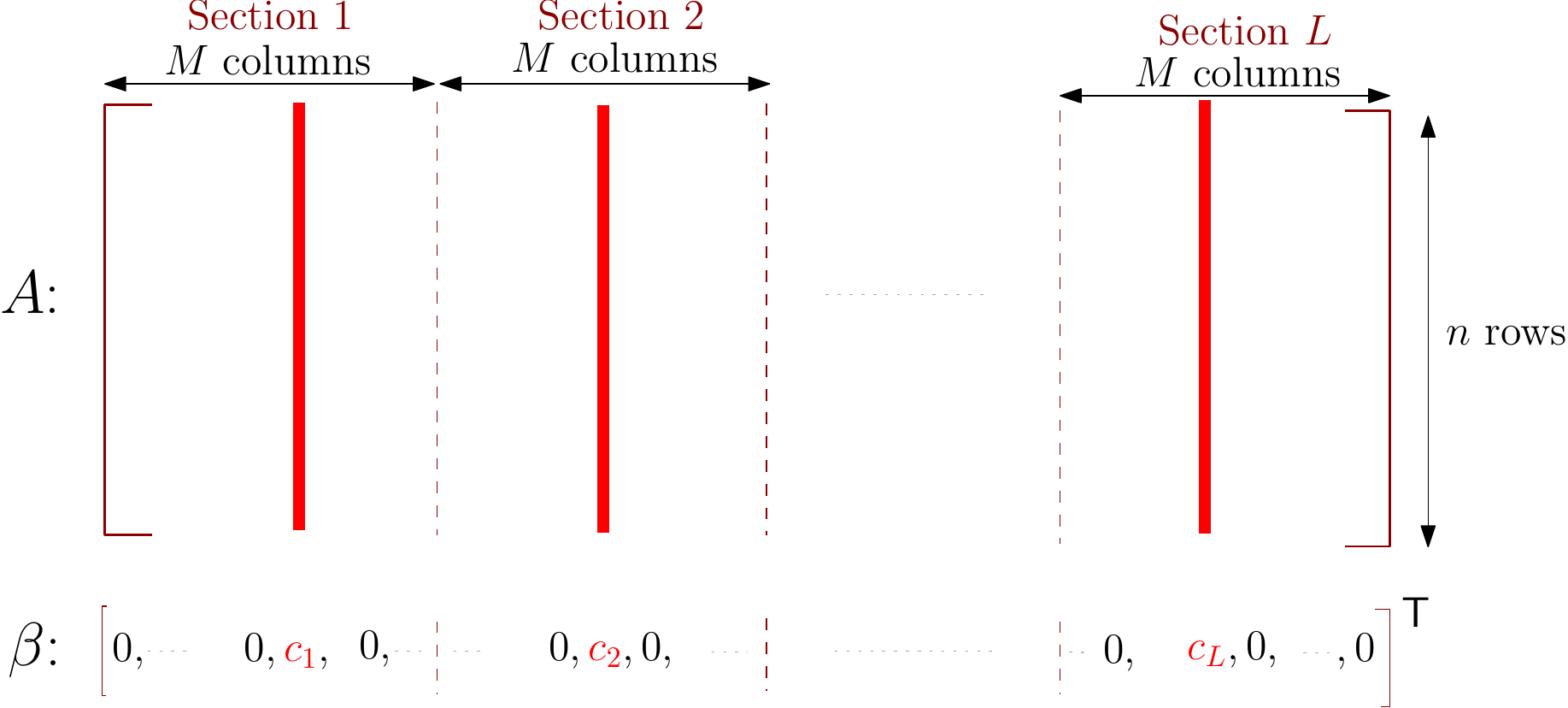}
\caption{\small{A Gaussian sparse regression codebook of block length $n$: ${A}$ is a design matrix with independent Gaussian entries, and $\beta$ is a sparse vector with one non-zero in each of $L$ sections. Codewords are of the form ${A}\beta$, i.e., linear combinations of the columns corresponding to the non-zeros in $\beta$. The message is indexed by the \emph{locations} of the non-zeros, and the values $c_1, \ldots, c_L$ are fixed a priori.}}
\vspace{-5pt}
\label{fig:sparc_codebook}
\end{figure}

As shown in  Fig. \ref{fig:sparc_codebook}, a SPARC is defined  in terms of a `dictionary' or design matrix $A$ of dimension $n \times ML$,  whose entries are chosen i.i.d. $\sim$ $\mc{N}(0,\tfrac{1}{n})$. Here $n$ is the block length, and $M, L$ are integers whose values are specified below in terms of $n$ and the rate $R$.  We think of the matrix $A$ as being composed of $L$ sections with $M$ columns each.  The variance of the entries ensures that the lengths of the columns of $A$ are close to $1$ for large $n$. \footnote{In some papers, the entries of  $A$ are assumed to be $\sim_{ i.i.d} \mc{N}(0,1)$. For consistency,  throughout this monograph we will assume that the entries are $\sim_{i.i.d.} \mc{N}(0,1/n)$.}

Each codeword is a linear combination of $L$ columns, with exactly one column chosen per section. Formally, a codeword can be expressed as  $A \beta$, where $\beta = (\beta_1, \ldots, \beta_{ML})^*$ is  a length $ML$ message vector with the following property:  there is exactly one non-zero $\beta_j$ for  $1 \leq j \leq M$, one non-zero $\beta_j$ for $M+1 \leq j \leq 2M$, and so forth.  We denote the set of valid message vectors by $\mcb$. Since each of the $L$ sections contains $M$ columns, the size of this set is 
\be \abs{\mcb}=M^L. \ee

The non-zero value of $\beta$ in section $\ell \in [L]$  is set to $c_\ell$, where the coefficients $\{ c_\ell \}$ are specified a priori. Since the entries of $A$ are i.i.d.\ $\mc{N}(0,\tfrac{1}{n})$, the entries of the codeword $A\beta$ are  i.i.d.  $\mc{N}(0,\tfrac{1}{n} \sum_{\ell=1}^L c^2_\ell)$.  In the case of AWGN channel coding, the variance $\frac{1}{n} \sum_{\ell=1}^L c^2_\ell$ is equal to the average symbol power.

\emph{Rate}: Since each of the $L$ sections contains $M$ columns, the total number of codewords is $M^L$. To obtain a  rate $R$ code, we need
\be
M^L = e^{nR} \quad \text{ or } \quad L \log M = nR.
\label{eq:ml_nR}
\ee
 There are several choices for the pair $(M,L)$ which satisfy \eqref{eq:ml_nR}. For example, $L=1$ and $M=e^{nR}$ recovers the Shannon-style random codebook in which the number of columns in $A$ is $e^{nR}$. For most of our constructions, we will often choose $M$ equal to $L^{\textsf{a}}$, for some constant $\textsf{a} >0$. In this case, \eqref{eq:ml_nR} becomes
 \be
 \textsf{a} L \log L = nR.
 \label{eq:llogl_nR}
 \ee
  Thus $L= \Theta(\tfrac{n}{\log n})$, and the size of the design matrix $A$ (given by $n \times ML = n \times L^{\textsf{a}+1}$)  grows polynomially in $n$.  In our numerical simulations, typical values for $L$ are $512$ or $1024$.

We note that the SPARC is a non-linear code with pairwise dependent codewords. Indeed, two codewords $A\beta$ and $A\beta'$ are dependent whenever the underlying  message vectors $\beta, \beta'$ share one or more common  non-zero entries.

\paragraph{Subset superposition coding} The SPARC described above has a partitioned structure, i.e., the message vector contains exactly one non-zero in each of the $L$ sections, with each section having $M$ entries. One could also define a non-partitioned SPARC,  where a message can be indexed by \emph{any} subset of $L$ entries of the length-$ML$ vector $\beta$. The number of codewords in this case would be ${ML \choose L}$, compared to  $M^L$ for the partitioned case. For a given pair $(M,L)$, the non-partitioned SPARC has a larger number of codewords. However,  using Stirling's formula we find that 
\[ \frac{\log {ML \choose L}}{ \log M^L} =1 + O\left( \frac{1}{\log M}\right). \]
Hence the ratio of the rates tends to $1$ as $M$ grows large.  
Though subset based (non-partitioned)  superposition codes have a small rate advantage for finite $M$, we focus on the partitioned structure in this monograph as it facilitates the design and analysis of efficient coding algorithms.

\section{Organization of the monograph} \label{sec:sparc_org}

\textbf{In Part I}, we focus on communication over the AWGN channel.  The performance of SPARCs with optimal (least-squares) decoding is analyzed in Chapter \ref{chap:AWGN_opt}. Though optimal decoding is infeasible, its performance provides a benchmark for  the computationally efficient decoders described in the next chapter.  It is shown that SPARCs with optimal encoding achieve the  AWGN capacity with an error exponent of the same order as Shannon's random coding ensemble. Similar results are also obtained for SPARCs defined via Bernoulli dictionaries rather than Gaussian ones.

\begin{figure}[t]
\centering
\includegraphics[width=0.49\textwidth, height=2in]{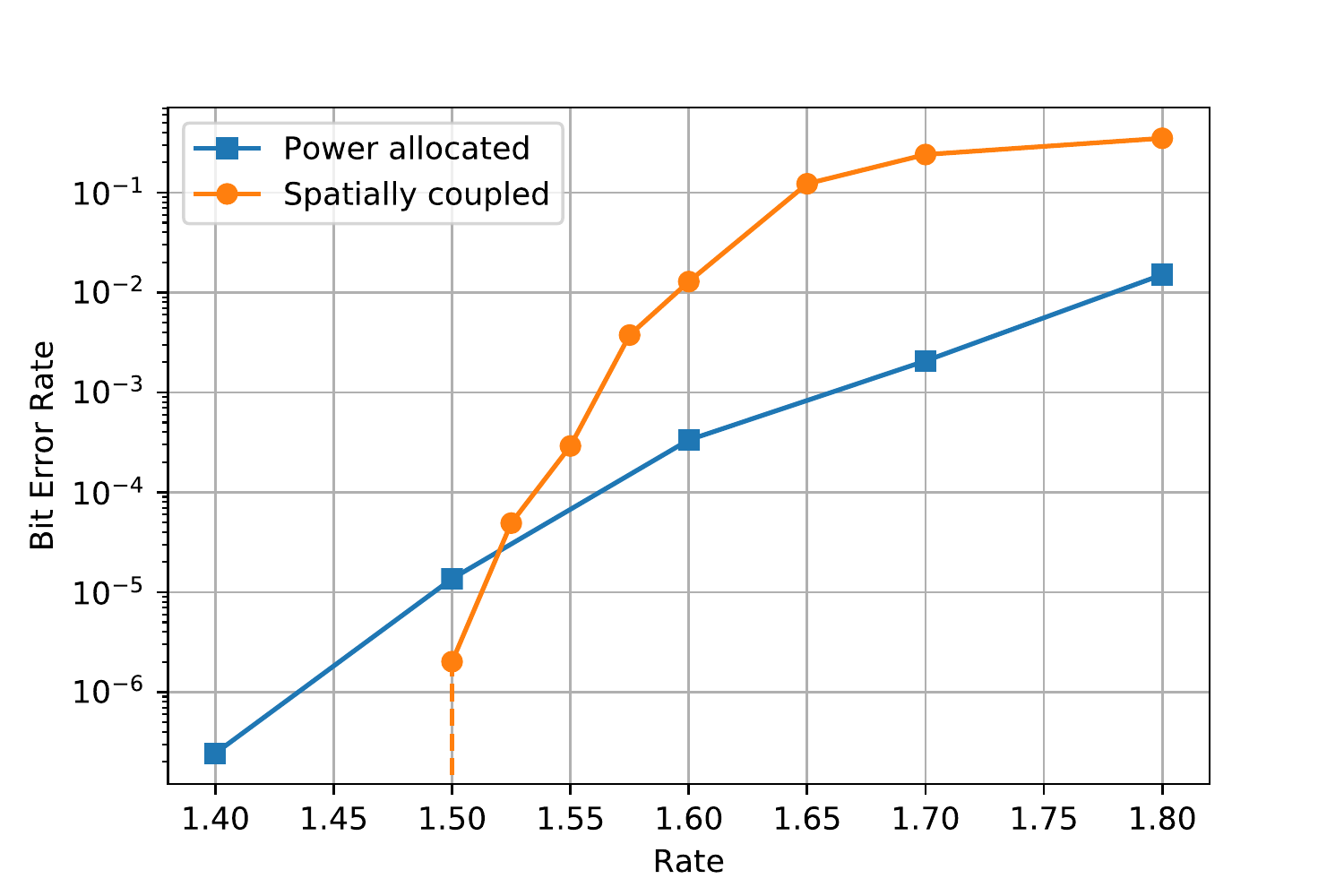}
\includegraphics[width=0.49\textwidth, height=2in]{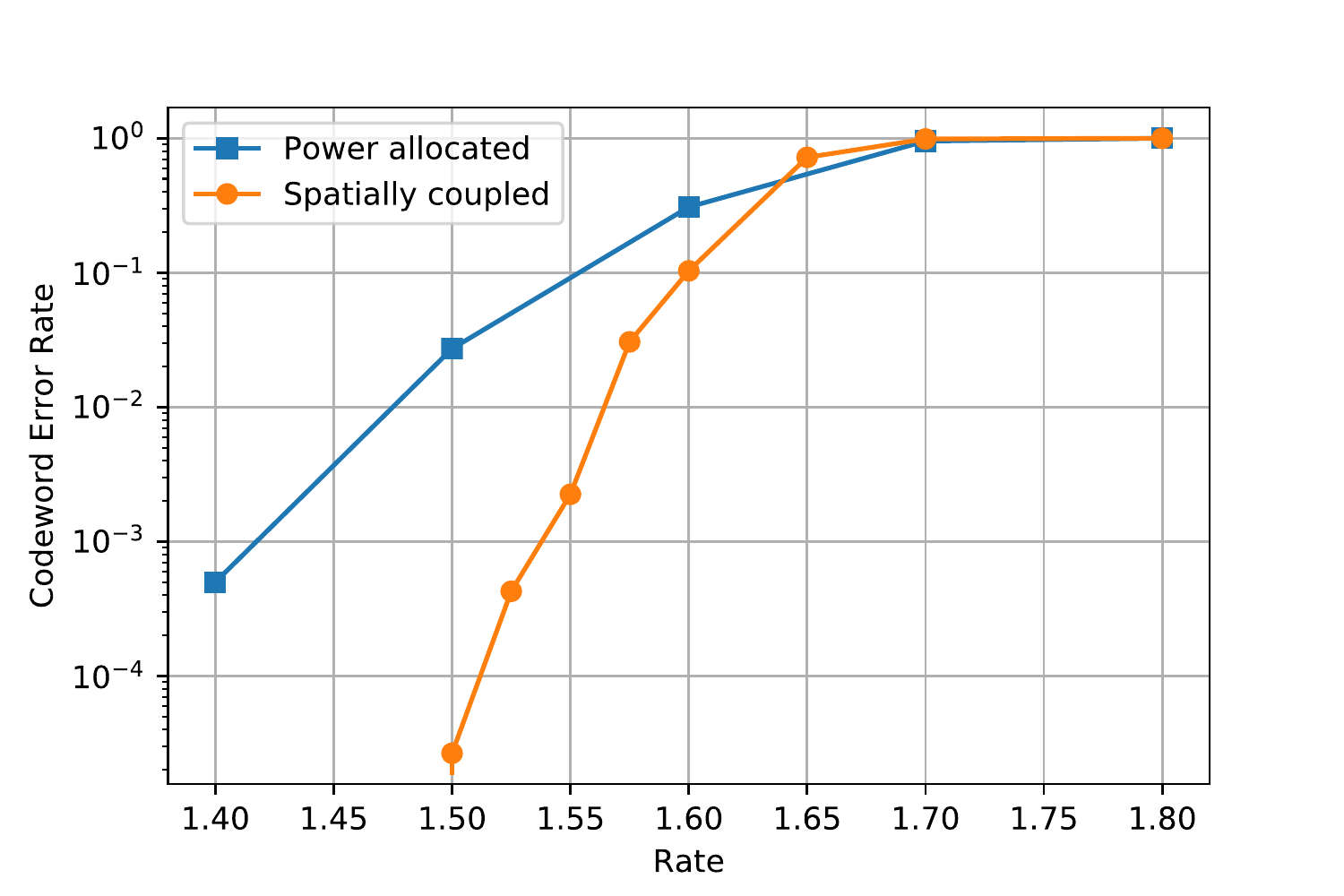}
\caption{ \small  Average bit error rate (left) and codeword error rate (right) vs. rate for SPARC over an AWGN channel with $\text{snr}=15$, $\mc{C}=2$ bits. The SPARC parameters are $\M=512$, $L=1024$, $n\in[5100,7700]$. Curves are shown for for power allocated SPARC (Chapter \ref{chap:emp_perf}) and spatially coupled SPARC (Chapter \ref{chap:sc_sparcs}). The different ways of measuring error rate performance in a SPARC are discussed in Chapter \ref{chap:AWGN_opt} (p.\pageref{eq:SERdef}).  The SPARC is decoded using the Approximate Message Passing (AMP) algorithm described in Chapters \ref{chap:AWGN_eff} and \ref{chap:sc_sparcs}.  
}
\label{fig:PASCsparcs}
\vspace{-5pt}
\end{figure}

 In Chapter \ref{chap:AWGN_eff}, we describe three efficient iterative decoders. These decoders generate an updated estimate of the message vector in each iteration based on a test statistic. The first decoder makes hard decisions,  decoding a few sections of the message vector $\beta$  in each iteration. The other two  decoders are based on soft-decisions, and generate new estimates of the whole message vector in each iteration. All three efficient decoders are asymptotically capacity-achieving, but the soft-decision decoders have better finite length error performance.  

In Chapter \ref{chap:emp_perf}, we turn our attention to techniques for improving the decoding performance at moderate block lengths. We observe that the power allocation (choice of the non-zero coefficients $\{c_\ell\}$) has a crucial effect on the finite length error performance.  We describe an  algorithm to determine a good power allocation, provide guidelines on choosing the parameters of the design matrix, and compare the empirical performance with coded modulation using LDPC codes from the WiMAX standard.  In Chapter \ref{chap:sc_sparcs}, we discuss spatially coupled SPARCs, which consist of several smaller SPARCs chained together in a band-diagonal structure.  An attractive feature of spatially coupled SPARCs is that  they are  asymptotically capacity-achieving and have good finite length performance without requiring a tailored power allocation.  Figure \ref{fig:PASCsparcs} shows the finite length error rate performance of power allocated SPARCs and spatially coupled SPARCs over an AWGN channel. The figure is discussed in detail in Sec. \ref{sec:sims}.

\textbf{In Part II} of the monograph, we use  SPARCs for lossy compression with the squared error distortion criterion. In Chapter  \ref{chap:comp_opt}, we analyze compression with optimal (least-squares) encoding, and show that SPARCs attain the optimal rate-distortion  function and the optimal excess-distortion exponent for i.i.d.  Gaussian sources.  We then describe an efficient successive cancellation encoder  in Chapter \ref{chap:comp_eff_enc}, and show that it achieves the optimal Gaussian rate-distortion function, with the probability of excess distortion decaying exponentially in the block length.

\textbf{In Part III}, we design rate optimal coding schemes using SPARCs for a few canonical models in multiuser information theory. In Chapter \ref{chap:bcmac}, we show how SPARCs designed for point-to-point AWGN channels can be combined to construct rate-optimal superposition coding schemes for the AWGN broadcast and multiple-access channels. In Chapter 
\ref{chap:sideinfo}, we show how to implement random binning using SPARCs. Using this, we can nest the channel coding and source coding SPARCs constructed in Parts I and II to construct rate-optimal schemes for a variety of problems in multiuser information theory.   We conclude in Chapter \ref{chap:summary} with a discussion of open problems and directions for future work. 

Proofs or proof sketches for the main results in a chapter are given at the end of the chapter.   The proofs of some intermediate lemmas are omitted, with pointers to the relevant references. The goal is to describe the key technical ideas in the proofs, while not impeding the flow within the chapter.

\part{AWGN Channel Coding with SPARCs}
\chapter{Optimal Decoding} \label{chap:AWGN_opt}
%

In this chapter, we consider sparse regression codes for the additive white Gaussian noise (AWGN) channel, and analyze the performance under optimal (maximum-likelihood) decoding.  Though the optimal decoder has computational complexity that grows exponentially with $n$,  its decoding performance sets a benchmark for the efficient decoders discussed in the next chapter. The results in this chapter show that the SPARC error probability with optimal decoding decays exponentially with $n$ for any rate $R$ less than the AWGN channel capacity. In particular, we will see that the error probability bound for SPARCs has the same form as the bound for a Shannon-style random codebook consisting of independent Gaussian codewords \cite{gallagerBook,ppv10}, but with a weaker constant.  We note that a Shannon-style random codebook is infeasible except for very short code lengths as the complexity of encoding and decoding grow exponentially with $n$. 

\section{Problem set-up} \label{sec:prob_setup}

\paragraph{Channel Model} The discrete-time AWGN channel is described by the model
\be
y_i  = x_i + w_i, \quad i=1,\ldots,n.
\ee
That is, the channel output $y_i$ at time instant $i$ is the sum of the channel input $x_i$ and the Gaussian noise variable $w_i$. The random variables $w_i, \ 1 \leq i \leq n$, are i.i.d. $\sim \mc{N}(0, \sigma^2)$.   There is an average power constraint $P$ on the input: the codeword $x=(x_1,\ldots, x_n)$ should satisfy $\frac{1}{n} \sum_{i=1}x_i^2 \leq P$.  The signal-to-noise ratio $\frac{P}{\sigma^2}$ is denoted by $\snr$.

We wish to use the sparse regression codebook described in Section \ref{sec:sparc_const} to communicate reliably at any rate $R < \mc{C}$, where the channel capacity $\mc{C} = \frac{1}{2} \log (1+ \snr) $.

\paragraph{Power Allocation} We need to specify the non-zero coefficients $c_1, \ldots, c_L$  in the message vector   so as to satisfy the power constraint. Recall that the entries of each codeword $A\beta$ are  i.i.d.  $\mc{N}(0,\tfrac{1}{n} \sum_{\ell=1}^L c^2_\ell)$. In this chapter, we consider the flat power allocation with
\ben
c_1 = c_2 \ldots = c_L= \sqrt{\frac{nP}{L}}.
\een
This choice ensures that for a message $\beta \in \mcb$, the expected codeword power, given by $ \expec \norm{A\beta}^2/n$, equals $P$.  Using standard large deviations techniques, it can be shown that the distribution of the average codeword power $\norm{A \beta}^2/n$ is tightly concentrated around $P$ \cite[Appendix B]{AntonyMLarxiv}.

In the next  two chapters, we will consider different power allocations where the coefficients $c_1, \ldots, c_L$ are not equal to one another. One example is the exponentially decaying allocation, where $c_\ell \propto e^{- \mc{C} \ell/L}$,  for $\ell \in [L]$.  As we will see, such power allocations facilitate computationally feasible decoders that are reliable at  rates close to capacity.

\paragraph{Encoding} The encoder splits its stream of input bits into segments of $\log M$ bits each. A length $ML$  message  vector $\beta$ is indexed by $L$ such segments --- the decimal equivalent of segment $\ell$ determines the position of the non-zero coefficient in section $\ell$ of $\beta$. The input codeword is then computed as $x=A\beta$. Note that computing $x$ simply involves adding  $L$ columns of $A$, weighted by the appropriate coefficients.

\paragraph{Maximum Likelihood Decoding}  Assuming that the messages are equally likely is equivalent to assuming  a uniform prior over $\mcb$ for the message vector $\beta$. Then the  decoder that minimizes the probability of message decoding error is  the maximum likelihood decoder. We will refer to the maximum likelihood decoder as the optimal decoder. Given the channel output sequence $y=(y_1, \ldots, y_n)$, the optimal decoder produces
  \be \hat{\beta}_{\textsf{opt}} = \argmin_{\hat{\beta} \in \mcb} \, \norm{y-A\hat{\beta}}^2. \label{eq:opt_decoder} \ee

  \paragraph{Probability of Decoding Error}  A natural performance metric for a SPARC decoder is  the \emph{section error rate}, which is the fraction of sections decoded incorrectly. If the true message vector is $\beta$ and  the decoded message vector is $\hat{\beta}$, the section error rate is defined as 
  \be \label{eq:SERdef}
  \Esec = \frac{1}{L} \sum_{\ell =1}^{L}  \mathbf{1}\{ \hat{\beta}_\ell \neq \beta_{\ell} \}, 
  \ee where  $\beta_{\ell}, \hat{\beta}_{\ell} \in \mathbb{R}^M$ denote the $\ell$th  section of $\beta, \hat{\beta}$, respectively. We will first aim to bound the probability of excess section error rate, i.e., the probability of the event 
 $\{ \Esec  \geq \e \}$, for  $\e >0$.

 Assuming that the mapping that determines  the  non-zero location  within a section for each segment of $\log M$ input bits is generated uniformly at random, a section error will,  on average, lead to half the bits corresponding to the section being decoded wrongly. Therefore, when a large number of segments are transmitted, the \emph{bit error rate} of a SPARC decoder will be close to half its section error rate. 
 
 Finally, one may also wish to minimize the probability of codeword (or message) error, i.e.,  $\pr(\bth \neq \beta )$. For this,  one can use a concatenated code with the SPARC as the inner code and  an outer Reed-Solomon (RS) code.   Later in this chapter (see p. \pageref{eq:dRS}), we describe how an RS code of rate $(1-2\e)$  can be used to ensure that $\bth =\beta$ whenever the section error rate $\Esec < \e$, for any $\e >0$. With a SPARC of rate $R$, such a concatenated code has rate $R(1-2\e)$ and its  probability of codeword error is bounded by $\pr(\Esec \geq \e)$.  The main result of this chapter, Theorem \ref{thm:MLresult}, shows that  $\pr(\Esec \geq \e)$ decays exponentially in $n$ for any $R < \mc{C}$. 
 
 \section{Performance of the optimal decoder}
 
The goal is to  obtain bounds on the probability of excess section error rate, averaged over all messages and over the space of design matrices. More precisely,   for any $\e >0$, we wish to bound
 \be
 \begin{split}
 \pr(\Esec \geq \e ) &= \expec_{A,\beta} \left[ \pr( \Esec \geq \e \mid A, \beta) \right] \\
 & =  \frac{1}{M^L} \sum_{\beta \in \mcb} \expec_A \left[ \pr( \Esec \geq \e \mid A, \beta) \right],
 \end{split}
 \label{eq:Perrbeta}
 \ee
where the subscripts indicate the random variable(s) the expectation is computed over.  In \eqref{eq:Perrbeta}, we note that the probability measure  on $A$ is that induced by its i.i.d. $\mc{N}(0,1)$ entries. By symmetry, $\expec_A \left[ \pr( \Esec \geq \e \mid A, \beta) \right]$ is the same for all $\beta \in \mcb$. Therefore we shall obtain bounds for 
\be \pr_{\beta_0}(\Esec \geq \e) = \expec_A \left[ \pr( \Esec \geq \e \mid A, \beta_0) \right], \label{eq:excess_ser} \ee 
for a fixed message vector $\beta_0 \in \mcb$.

\paragraph{Preliminaries} We list some facts and definitions that will be used in the bounds.  

If $Z, \tilde{Z}$ are jointly Gaussian random variables with means equal to 0, variances equal to 1, and correlation coefficient $\rho$, then we have the following Chernoff bound for the difference of their squares. For any $\Delta >0$, 
\be
\pr \left( \frac{1}{2}(Z^2 -\tilde{Z}^2) > \Delta \right) \leq \exp\left(-D(\Delta, 1-\rho^2) \right),
\label{eq:chi_sq_diff}
\ee
where the Cram{\'e}r-Chernoff large deviation exponent  is
\be \label{eq:Ddef}
D(\Delta, 1-\rho^2) = \max_{\lambda \geq 0} \left\{ \lambda \Delta + \frac{1}{2} \log\left( 1- \lambda^2(1-\rho^2) \right) \right\}.
\ee
We also define 
\be \label{eq:D1def}
D_1(\Delta, 1-\rho^2) = \max_{0 \leq \lambda \leq 1} \left\{ \lambda \Delta + \frac{1}{2} \log\left( 1- \lambda^2(1-\rho^2) \right) \right\}.
\ee
Finally, for $0 \leq \alpha \leq 1$, let 
\be
\label{eq:calph_def}
\mc{C}_{\alpha} = \frac{1}{2}\log(1 + \alpha  \, \snr).
\ee
Recalling that the capacity $\mc{C} = \mc{C}_{1}$, we note that $\mc{C}_{\alpha} - \alpha \mc{C}$ is a concave function equal to $0$ when $\alpha$ is $0$ or $1$, and strictly positive in between. 

\paragraph{Error probability bounds} The first result is a non-asymptotic bound on the probability of excess section error rate defined in \eqref{eq:excess_ser}.

\begin{proposition} \cite[Eq. (24)]{AntonyML}
 \label{prop:nonasympML}
For any $\beta_0 \in \mcb$ and $\e >0$, 
\be
\pr_{\beta_0}(\Esec \geq \e)  \leq \sum_{\ell = \e L}^{L} \min \left\{ \textsf{err}_1(\ell/L),  \  \textsf{err}_2(\ell/L) \right\},
\label{eq:Pbeta0_bnd}
\ee
where for $0 <  \alpha \leq 1$, the functions $\textsf{err}_1(\alpha)$ and $\textsf{err}_2(\alpha)$ are defined as follows.
\begin{align}
\textsf{err}_1(\alpha) &  = {L \choose \alpha L} \exp \left(  - n D_1\left( \mc{C}_{\alpha} - \alpha R, \, \frac{\alpha \, \snr}{1+ \alpha \, \snr} \right) \right), \\
\textsf{err}_2(\alpha) & = \min_{t_\alpha  \in [0, C_\alpha - \alpha R]} \ \textsf{err}_2(\alpha, t_{\alpha}),
\end{align}
where
\begin{align}
\textsf{err}_2(\alpha, t_\alpha) =  &  {L \choose \alpha L} \exp \left(  - n D_1\left( \mc{C}_{\alpha} - \alpha R - t_\alpha, \, \frac{\alpha(1-\alpha) \, \snr}{1+ \alpha \, \snr} \right) \right) \nonumber \\
& + \exp \left(  - n D\left( t_\alpha, \, \frac{\alpha^2\, \snr}{1+ \alpha^2 \, \snr} \right) \right). \label{eq:nonasympML}
\end{align}
\end{proposition}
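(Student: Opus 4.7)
The plan is to decompose the excess-section-error event according to the number of sections in error, and at each level to apply two complementary bounds whose minimum matches \eqref{eq:Pbeta0_bnd}. Fix $\beta_0 \in \mcb$ and let $\mc{A}_\ell$ denote the event that some $\tilde{\beta} \in \mcb$ at Hamming distance exactly $\ell$ from $\beta_0$ satisfies $\|y - A\tilde{\beta}\|^2 \leq \|y - A\beta_0\|^2$. Since the optimal decoder outputs a single $\hat{\beta}$, we have $\{\Esec \geq \e\} \subseteq \bigcup_{\ell = \e L}^L \mc{A}_\ell$, so a union bound over $\ell$ reduces the task to bounding $\pr(\mc{A}_{\alpha L})$ for each fixed $\alpha = \ell/L \in [\e,1]$ by $\min\{\textsf{err}_1(\alpha), \textsf{err}_2(\alpha)\}$.

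For $\textsf{err}_1(\alpha)$, the plan is to combine a direct union bound over the $\binom{L}{\alpha L}(M-1)^{\alpha L} \leq \binom{L}{\alpha L} e^{nR\alpha}$ decoys at Hamming distance $\alpha L$ with a per-decoy Chernoff bound on the pairwise-error probability. The Gallager--Jensen inequality $\pr(\cup_k E_k) \leq \bigl(\sum_k \pr(E_k)\bigr)^\lambda$, valid for $\lambda \in [0,1]$ by Markov applied to $(\sum_k \mathbf{1}_{E_k})^\lambda$, raises the combinatorial count to the $\lambda$-th power and simultaneously constrains the free Chernoff parameter to $[0,1]$ --- precisely the domain defining $D_1$. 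For a single decoy $\tilde{\beta}$, the pairwise-error condition can be recast, after standardizing the relevant Gaussian coordinates, as an $n$-fold tail event of the form $\sum_i \frac12(Z_i^2 - \tilde{Z}_i^2) > n\mc{C}_\alpha$, where $(Z_i,\tilde{Z}_i)$ is an i.i.d.\ sequence of correlated standard Gaussian pairs whose correlation $\rho$ is determined by the $(1-\alpha)L$ sections shared between $\beta_0$ and $\tilde{\beta}$; a direct calculation identifies $1-\rho^2 = \alpha\snr/(1+\alpha\snr)$. Tensorizing \eqref{eq:chi_sq_diff} over the $n$ coordinates, multiplying by $\binom{L}{\alpha L}e^{\lambda nR\alpha}$, and optimizing over $\lambda \in [0,1]$ produces $\textsf{err}_1(\alpha)$.

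For $\textsf{err}_2(\alpha)$, the plan is a Forney-style threshold split. For any $t_\alpha \in [0, \mc{C}_\alpha - \alpha R]$, decompose $\mc{A}_{\alpha L} \subseteq B_1 \cup B_2$, where $B_2$ is an ``atypical-projection'' event on the column span of $A$ associated with the shared sections of $\beta_0$ and the decoy (deviation of size $nt_\alpha$ from its typical value), and $B_1$ is the residual event that some decoy beats $\beta_0$ on $B_2^c$. The bound on $\pr(B_1)$ follows the Gallager/Chernoff argument of the preceding paragraph but with the smaller correlation $1-\rho^2 = \alpha(1-\alpha)\snr/(1+\alpha\snr)$, which reflects the variance reduction obtained by restricting to the event that the shared projection is typical; after absorbing $e^{\lambda nR\alpha}$ into the exponent, this contributes the first term of \eqref{eq:nonasympML}. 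The bound on $\pr(B_2)$ is a single Chernoff estimate with no combinatorial prefactor and with correlation $1-\rho^2 = \alpha^2\snr/(1+\alpha^2\snr)$; since there is no union over decoys, the Chernoff parameter is unrestricted and the full exponent $D$ appears, giving the second term. Summing the two estimates and optimizing over $t_\alpha$ yields $\textsf{err}_2(\alpha,t_\alpha)$ and hence $\textsf{err}_2(\alpha)$.

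The hard part will be the third step: identifying a threshold decomposition whose two sub-events exhibit exactly the correlation parameters and rate deductions appearing in \eqref{eq:nonasympML}. This requires decomposing $A$ into the orthogonal column subspaces indexed by the shared versus disjoint sections relative to the candidate decoy, and carefully tracking how the signal variance and the rate budget $\mc{C}_\alpha - \alpha R$ redistribute between $B_1$ and $B_2$. The remaining ingredients --- the level-wise union bound, tensorization of the per-coordinate Chernoff bound \eqref{eq:chi_sq_diff}, and the Gallager fractional-moment inequality --- are by now standard manipulations.
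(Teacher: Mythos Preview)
Your overall architecture (union bound over the number $\ell$ of mistaken sections, then two complementary bounds on $\pr(\mc{A}_{\alpha L})$) is right, and you have correctly identified the three correlation parameters that must appear. But the $\textsf{err}_1$ step has a real gap. The pairwise error event $\{T(\mcs)\le 0\}$, with $T(\mcs)=\frac{1}{2n\sigma^2}(\|Y-X_\mcs\|^2-\|Y-X_{\mcs^*}\|^2)$, cannot be recast for a single decoy as $\sum_i\tfrac12(Z_i^2-\tilde Z_i^2)>n\,\mc{C}_\alpha$ with unit-variance Gaussians having $1-\rho^2=\alpha\,\snr/(1+\alpha\,\snr)$: in fact $e^{-nT(\mcs)}=p(Y\mid X_\mcs)/p(Y\mid X_{\mcs^*})$ is a likelihood ratio, so $\expec[e^{-nT(\mcs)}]=1$, and your Gallager--Jensen bound applied to the full sum over decoys collapses to the useless $N^\lambda$. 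What is actually needed is the split $T(\mcs)=T_1+T_2$ with
\[
T_1=\frac{1}{2n}\Bigl[\frac{\|Y-X_{\mcs_1}\|^2}{\sigma^2+\alpha P}-\frac{\|Y-X_{\mcs^*}\|^2}{\sigma^2}\Bigr],
\qquad
T_2=\frac{1}{2n}\Bigl[\frac{\|Y-X_\mcs\|^2}{\sigma^2}-\frac{\|Y-X_{\mcs_1}\|^2}{\sigma^2+\alpha P}\Bigr],
\]
where $\mcs_1=\mcs\cap\mcs^*$. Two things then happen at once: (i) $T_1$ depends on the decoy only through the \emph{subset} $\mcs_1$ of shared sections, so one bounds $\mbf{1}\{\Esec=\ell/L\}\le\sum_{\mcs_1}(\sum_{\mcs_2}e^{-nT(\mcs)})^\lambda$ and applies Jensen only to the inner sum over the $M^{\alpha L}$ choices of $\mcs_2$---this is why $\binom{L}{\alpha L}$ enters to the first power, not the $\lambda$th; and (ii) the conditional Gaussian integral $\expec_{X_{\mcs_2}}[e^{-nT_2}]=(1+\alpha\,\snr)^{-n/2}=e^{-n\mc{C}_\alpha}$ holds \emph{exactly}, which is where $\mc{C}_\alpha$ comes from. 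The pair $(Z,\tilde Z)$ with the stated correlation arises from the two standardized quadratics inside $T_1$, not from $T(\mcs)$ itself.

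The $\textsf{err}_2$ sketch has the same issue. The smaller correlation $\alpha(1-\alpha)\snr/(1+\alpha\,\snr)$ does not come from ``restricting to the event that the shared projection is typical''; it comes from a second algebraic split $T(\mcs)=\tilde T(\mcs)+T^*$ in which one introduces the \emph{shrunk} reference $(1-\alpha)X_{\mcs^*}$. Here $T^*=\frac{1}{2n}\bigl[\|Y-(1-\alpha)X_{\mcs^*}\|^2/(\sigma^2+\alpha^2 P)-\|Y-X_{\mcs^*}\|^2/\sigma^2\bigr]$ does not depend on the decoy at all, so $\{\exists\,\mcs:T(\mcs)\le 0\}\subset\{T^*\le -t_\alpha\}\cup\{\exists\,\mcs:\tilde T(\mcs)\le t_\alpha\}$ is a deterministic inclusion; one Chernoff bound on $T^*$ gives the second term of \eqref{eq:nonasympML} with $1-\rho^2=\alpha^2\snr/(1+\alpha^2\snr)$. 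The first event is handled exactly as above, now with $\tilde T_1$ comparing $(Y-X_{\mcs_1})/\sqrt{\sigma^2+\alpha P}$ against $(Y-(1-\alpha)X_{\mcs^*})/\sqrt{\sigma^2+\alpha^2 P}$, which is what produces $1-\rho^2=\alpha(1-\alpha)\snr/(1+\alpha\,\snr)$.
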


The proof of the proposition is given in Section \ref{subsec:proof_main_prop}. 

The bound in \eqref{eq:Pbeta0_bnd} can be computed numerically  given the rate $R$ and the SPARC parameters $(M,L)$.  The function $\textsf{err}_1(\alpha)$ gives the better bound for $\alpha$ close to $0$, while $\textsf{err}_2(\alpha)$ is better for $\alpha$ close to $1$.

The next result simplifies the non-asymptotic bound and shows that the probability of excess section error rate decays exponentially in $n$, with the exponent depending on the gap from capacity $\Delta= \mc{C} -R$. First, a few definitions that are needed to state the result.  

For $x>0$, let
\be
\label{eq:gx_def}
g(x)=\sqrt{1 + 4x^2} -1.
\ee
It follows that
\be
g(x) \geq \min\{ \sqrt{2}x, x^2 \} \  \text{ for all } \ x \geq 0.
\label{eq:gx_lb}
\ee
Next, let 
\be
w(\snr) = \frac{\snr}{2 (1 + \snr)^2 \sqrt{4+ \snr^3/(1 + \snr) }}.
\label{eq:wsnr_def}
\ee
Finally, let  $\msf{a}^*_L(\snr)$ be defined as 
\be
\msf{a}^*_L(\snr) = \max_{\alpha \in \{\frac{1}{L}, \ldots, 1-\frac{1}{L} \}} 
\frac{ R \log  {L \choose L\alpha}}{ D_1\left(\mc{C}_\alpha - \alpha \mc{C}, \frac{\alpha(1-\alpha) \snr}{1 + \alpha \snr} \right) L \log L},
\label{eq:astar_def}
\ee
where the $D_1$ is defined in \eqref{eq:D1def}. The behavior of $\msf{a}^*_L(\snr)$ as $L \to \infty$ is described in Remark \ref{rem:al}.

\begin{theorem} \cite{AntonyML}
Assume that $M=L^{\msf{a}}$, where $\msf{a} \geq \msf{a}^*_L(\snr)$, and that the  gap from capacity $\Delta=(\mc{C}-R)$ is strictly positive.  Then, for any $\e >0$, the section error rate $\Esec$ of the optimal decoder satisfies
\be
\label{eq:P_SER_ML_bnd}
\pr\left( \Esec \geq \e \right) = e^{-nE(\e, R)},
\ee
with
\be
\label{eq:heD}
E(\e , R) \geq h(\e, \Delta) - \frac{\log 2L}{n},
\ee
where
\be
h(\e, \Delta)= \min \left\{ \e  \Delta w(\snr), \ \frac{1}{4} g\left( \frac{\Delta}{2 \sqrt{\snr}} \right) \right\}.
\label{eq:hedel_def}
\ee
\label{thm:MLresult}
\end{theorem}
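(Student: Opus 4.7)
The plan is to derive Theorem \ref{thm:MLresult} as an asymptotic corollary of Proposition \ref{prop:nonasympML}. By the symmetry noted in \eqref{eq:Perrbeta}, it suffices to bound $\pr_{\beta_0}(\Esec \geq \e)$, which \eqref{eq:Pbeta0_bnd} controls by a sum of at most $L$ terms of the form $\min\{\textsf{err}_1(\alpha), \textsf{err}_2(\alpha)\}$ indexed by $\alpha = \ell/L \in [\e, 1]$. Showing each summand is at most $\tfrac{1}{2L}\,e^{-n h(\e,\Delta)}$ yields the theorem, with the number-of-terms contribution providing the $(\log 2L)/n$ correction in \eqref{eq:heD}.

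The key reparameterization is $\mc{C}_\alpha - \alpha R = (\mc{C}_\alpha - \alpha \mc{C}) + \alpha \Delta$, which separates the nonnegative quantity $\mc{C}_\alpha - \alpha \mc{C}$ (positive on $(0,1)$, by the concavity remark following \eqref{eq:calph_def}) from the rate gap $\Delta$. I would then split the range of $\alpha$ into a small-$\alpha$ regime where I invoke $\textsf{err}_1$, and a large-$\alpha$ regime where I invoke $\textsf{err}_2$. In the small-$\alpha$ regime, the hypothesis $\msf{a} \geq \msf{a}^*_L(\snr)$ together with the rate identity $nR = \msf{a} L\log L$ is precisely the calibration in \eqref{eq:astar_def} under which the combinatorial prefactor is absorbed, namely $\log \binom{L}{\alpha L} \leq n\,D_1(\mc{C}_\alpha - \alpha \mc{C},\, \alpha(1-\alpha)\snr/(1+\alpha\snr))$. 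What remains is the exponential margin created by shifting the first argument of $D_1$ by $\alpha\Delta$ and the variance parameter from $\alpha(1-\alpha)\snr/(1+\alpha\snr)$ up to $\alpha\snr/(1+\alpha\snr)$. A Taylor expansion of $D_1$ in its first argument, evaluated at its Legendre-dual optimizer, extracts a linear-in-$\Delta$ margin whose coefficient is lower-bounded by $w(\snr)$ of \eqref{eq:wsnr_def}; the worst case over the regime is $\alpha = \e$, producing the $\e\Delta\,w(\snr)$ term of $h(\e,\Delta)$.

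In the large-$\alpha$ regime, I would use $\textsf{err}_2(\alpha, t_\alpha)$ and choose $t_\alpha$ to equalize (or near-equalize) the two exponentials in \eqref{eq:nonasympML}. Using the leading quadratic behaviour $D(\Delta,v) \approx \Delta^2/(2v)$ and similarly for $D_1$, together with $\Delta = \mc{C} - R$, the balancing value of $t_\alpha$ yields an exponent quadratic in $\Delta/\sqrt{\snr}$ which, after matching it against \eqref{eq:gx_def} and invoking the lower bound \eqref{eq:gx_lb}, is at least $\tfrac{1}{4}\,g(\Delta/(2\sqrt{\snr}))$, the second piece of $h(\e,\Delta)$. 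Taking the pointwise minimum across the two regimes yields $h(\e, \Delta)$.

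The main technical obstacle is the uniform-in-$\alpha$ analysis of $D_1$: one must produce a derivative lower bound in the first argument that is valid over the entire interval $\alpha \in [\e,1]$, not only near a single $\alpha$, and verify that the worst-case derivative over this interval equals exactly $w(\snr)$ as in \eqref{eq:wsnr_def}. A secondary subtlety is respecting the constraint $t_\alpha \in [0,\mc{C}_\alpha - \alpha R]$ when balancing the two exponentials in $\textsf{err}_2$; the square-root form of $g$ in \eqref{eq:gx_def} is precisely what emerges from this constraint once the balancing condition is written as a quadratic in $t_\alpha$. One should also briefly remark that $\msf{a}^*_L(\snr)$ is bounded as $L \to \infty$, so that the hypothesis $\msf{a} \geq \msf{a}^*_L(\snr)$ is a finite polynomial-size requirement on $M$ rather than a vacuous one.
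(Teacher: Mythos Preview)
Your plan has a genuine gap in the small-$\alpha$ regime where you invoke $\textsf{err}_1$. You correctly note that the hypothesis $\msf{a}\ge \msf{a}^*_L(\snr)$ gives $\log\binom{L}{\alpha L}\le n\,D_1\bigl(\mc{C}_\alpha-\alpha\mc{C},\,\tfrac{\alpha(1-\alpha)\snr}{1+\alpha\snr}\bigr)$, but the exponent in $\textsf{err}_1$ carries the \emph{larger} variance $\tfrac{\alpha\snr}{1+\alpha\snr}$. Since $D_1(\Delta,v)$ is \emph{decreasing} in $v$, shifting the variance ``up'' is a loss, not a margin: you must show that the first-argument gain $\alpha\Delta$ dominates the variance-shift loss, uniformly in $\alpha$, and with a surplus of at least $\alpha\Delta\,w(\snr)$. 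Nothing in your outline establishes this, and for $\alpha$ bounded away from $0$ the loss is of the same order as the putative gain. A minor additional point: each summand is bounded by $2\,e^{-nh(\e,\Delta)}$ and there are at most $L$ of them, which is where the $(\log 2L)/n$ comes from; you have the accounting inverted.

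The paper avoids the mismatch altogether by using \emph{only} $\textsf{err}_2$, for every $\alpha$, with no regime split. The first summand of $\textsf{err}_2(\alpha,t_\alpha)$ carries exactly the variance $\tfrac{\alpha(1-\alpha)\snr}{1+\alpha\snr}$ that appears in the definition of $\msf{a}^*_L(\snr)$, so the binomial cancels cleanly. One then writes $\mc{C}_\alpha-\alpha R-t_\alpha=(\mc{C}_\alpha-\alpha\mc{C})+(\alpha\Delta-t_\alpha)$, Taylor-expands $D_1$ at $\tilde\Delta_\alpha=\mc{C}_\alpha-\alpha\mc{C}$, and takes the fixed choice $t_\alpha=\alpha\Delta/2$ (not a balancing of exponents). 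The derivative lower bound $D_1'(\tilde\Delta_\alpha)\ge 2w(\snr)$ then gives the $\e\Delta\,w(\snr)$ term. The second piece $\tfrac{1}{4}g(\Delta/(2\sqrt{\snr}))$ comes not from equalizing the two summands but from a direct inequality $D(x,1-\rho^2)\ge\tfrac{1}{4}g\bigl(x/\sqrt{1-\rho^2}\bigr)$ applied to the second term of $\textsf{err}_2$ with $x=\alpha\Delta/2$ and $1-\rho^2=\tfrac{\alpha^2\snr}{1+\alpha^2\snr}$. Both pieces of $h(\e,\Delta)$ therefore arise from the two summands of a single $\textsf{err}_2$ bound, evaluated at every $\alpha$, and the $2L$ factor collects the at most $L$ summands plus the two-term sum.
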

The proofs of the theorem is based on Proposition \ref{prop:nonasympML}, and is given in Section \ref{subsec:proof_GaussML}.  

\begin{remark}
The lower bound on $g(x)$ in \eqref{eq:gx_lb} implies that the function $h(\e, \Delta)$ can be bounded from below as
\[ h(\e, \Delta) \geq \min \left \{ \e \Delta w(\snr), \, \frac{\Delta}{\sqrt{32 \snr}}, \, \frac{\Delta^2}{16 \snr} \right \}, \]
revealing that the exponent is, up to a constant, of the form $\min\{ \e \Delta, \Delta^2 \}$.

An improved lower bound on the exponent $E(\e, R)$ is obtained in \cite[Appendix C]{AntonyML}. This lower bound replaces the function $h(\e, \Delta)$ with a larger function $\tilde{h}(\e, \Delta)$, and shows that the exponent is of the form $\min\{ \e, \Delta^2 \}$.
\end{remark}

\begin{remark} The parameter $\msf{a}^*_L(\snr)$ approaches the following limiting value as $L \to \infty$. Let $v^*$ near 15.8 be the solution to the equation
$ (1+ v^*)  \log(1 + v^*)=3v^*$.
Then \cite[Lemma 5]{AntonyML} shows that
\be
\lim_{L \to \infty} \msf{a}^*_L(\snr) =
\left\{ 
\begin{array}{cc}
\frac{8R \, \snr \, (1+ \snr)}{ [ (1+ \snr) \log(1+\snr)- \snr ]^2} & \text{ for }\snr < v^*, \\
\frac{2R (1+\snr)}{[ (1+\snr) \log(1+\snr) -2 \, \snr ]} & \text{ for }\snr \geq v^*.
\end{array}
\right.
\ee
Taking the upper bound of $\mc{C}$ for $R$, it can be shown that the above limit is  approximately $16/\snr^2$ for small values of $\snr$, and $1$ for large $\snr$.
\label{rem:al}
\end{remark}

\paragraph{Probability of message error} Using a suitable outer code,  the bound on the probability of excess section error  in Theorem \ref{thm:MLresult} can be translated into a bound on the probability of  message error, i.e.,  $\pr(\hat{\beta} \neq\beta)$.  

Consider a concatenated code with a SPARC of  rate $R$ as the inner code, and a Reed-Solomon (RS) outer code, chosen as follows.
 For simplicity,  assume that $M=2^m$.  We consider a systematic $(n_{out}, k_{out})$ RS code with symbols in $GF(2^m)$. From the theory of RS codes \cite{blahut03algebraic,linCost04}, we can take 
 \be n_{out} =M, \quad k_{out}= \lceil (1-\e) M \rceil. \label{eq:nk_outer} \ee
 to obtain an RS code with minimum distance
 \be
 d_{\textsf{RS}}=M -  \lceil (1-\e)  M \rceil +1 \ \text{ symbols}.
 \label{eq:dRS}
 \ee
 The information bits are encoded into the SPARC codeword as follows.  First consider the case where $L=M$. Here, the RS encoder maps $k_{out} m$ information bits  ($k_{out}$ symbols) into a length $L$ RS codeword.   Since each SPARC section has $M$ columns, each symbol of the RS encoder represents the index for one section of the SPARC.  For the case where $L < M$, we can use the same procedure by setting the first $(M-L)$ symbols of the systematic RS codeword to 0.
 
 From \eqref{eq:dRS}, the number of symbol errors that this code is guaranteed to correct in a length $n_{out}$ codeword is
\[ \left \lfloor   \frac{d_{\textsf{RS}}}{2}  \right \rfloor  \geq \lfloor \e M \rfloor. \]
Therefore, the decoded message $\hat{\beta}$ equals  the transmitted one $\beta$ whenever the optimal SPARC decoder makes no more than $\lfloor \e M \rfloor$ section errors.  Therefore, the probability of message error for the concatenated code is bounded by the RHS of \eqref{eq:P_SER_ML_bnd}. From \eqref{eq:nk_outer}, the rate of the concatenated code is at least $R(1-2\e)$, where $R$ is the rate of the  SPARC.

We therefore have the following result. 
\begin{proposition} \cite{AntonyML}
Consider a SPARC with rate $R < \mc{C}$, with parameters $(M,L)$ satisfying the assumptions of Theorem \ref{thm:MLresult}. Then for any $\e >0$, through concatenation with an outer RS code, one obtains a code of rate $R(1-2\e)$ with message error probability bounded by  $e^{-nE(\e,R)}$. Here $E(\e,R)$ is the exponent from Theorem \ref{thm:MLresult}, which can be bounded from below as in \eqref{eq:heD}.
\label{thm:msg_error}
\end{proposition}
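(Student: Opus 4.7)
The proof is essentially constructive and follows the recipe sketched just before the proposition: we need to (i) exhibit the concatenation, (ii) translate section errors into symbol errors for the outer code, (iii) verify the rate, and (iv) invoke Theorem~\ref{thm:MLresult}. There is no real analytic obstacle; the work lies in carefully aligning the symbol alphabet of the Reed–Solomon code with the sectional structure of the SPARC.

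The plan is as follows. First, for simplicity assume $M = 2^m$ (otherwise pad to the next power of two, costing a negligible rate factor that vanishes as $M \to \infty$). Choose the systematic $(n_{out}, k_{out})$ Reed–Solomon code over $GF(2^m)$ with $n_{out}=M$ and $k_{out}= \lceil (1-2\e)M\rceil$. Standard theory of RS codes gives minimum distance $d_{\textsf{RS}} = M - k_{out} + 1 \ge 2\lfloor \e M\rfloor + 1$, so the code corrects any pattern of at most $\lfloor \e M\rfloor$ symbol errors. To encode, feed the information bits (together with the appropriate number of zero-padding symbols if $L < M$) through the RS encoder to obtain an $M$-symbol codeword, of which we retain the first $L$ symbols; then map the $\ell$-th symbol (which is an element of $GF(2^m)$, i.e., an $m$-bit string, hence an index in $[M]$) to the non-zero location in section $\ell$ of the SPARC message vector $\beta \in \mcb$, and transmit $A\beta$ over the AWGN channel.

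Second, observe the key identification: the decoded inner SPARC produces $\hat\beta \in \mcb$, and the section-wise comparison $\mathbf{1}\{\hat\beta_\ell \neq \beta_\ell\}$ is exactly the indicator of a symbol error in coordinate $\ell$ of the RS codeword. Thus the number of RS symbol errors equals $L\,\Esec$, which is at most $L\e \le M\e$ on the event $\{\Esec < \e\}$. Since the RS decoder corrects up to $\lfloor \e M\rfloor$ symbol errors, this event implies $\hat\beta = \beta$, so
\be
\pr(\hat\beta \neq \beta) \;\le\; \pr(\Esec \ge \e).
\label{eq:conc_bnd}
\ee
Applying Theorem~\ref{thm:MLresult} to the right-hand side gives $\pr(\hat\beta \neq \beta) \le e^{-n E(\e, R)}$ with $E(\e,R)$ satisfying \eqref{eq:heD}.

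Third, compute the overall rate. The SPARC carries $L\log M = nR$ nats per channel use, but only the fraction $k_{out}/n_{out} \ge 1 - 2\e$ of those nats are information nats (the rest are RS parity). Hence the rate of the concatenated scheme is at least $R(1-2\e)$, completing the argument. The only step requiring any care is the alignment between ``one RS symbol'' and ``one SPARC section,'' which is why the choice $M = 2^m$ and $n_{out}=M$ was made; all other elements are bookkeeping plus an appeal to Theorem~\ref{thm:MLresult}.
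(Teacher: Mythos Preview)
Your proposal is correct and follows essentially the same construction as the paper: identify each SPARC section with one $GF(2^m)$ symbol of a systematic RS code over $GF(2^m)$ with $n_{out}=M$, shorten by zero-padding when $L<M$, observe that section errors are precisely RS symbol errors, and invoke Theorem~\ref{thm:MLresult} on the event $\{\Esec \ge \e\}$. Your choice $k_{out}=\lceil(1-2\e)M\rceil$ is in fact the one that makes both the error-correction count ($\lfloor d_{\textsf{RS}}/2\rfloor \ge \lfloor \e M\rfloor$) and the rate bound ($\ge R(1-2\e)$) come out cleanly; the only imprecision is which $L$ symbols you retain after zero-padding (you want the non-zero information symbols together with all parity symbols, i.e., the \emph{last} $L$ coordinates in the paper's convention), but this is a trivial relabeling.
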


\begin{remark}
Consider the regime where the SPARC rate $R$ is made to approach $\mc{C}$  as $R= C-\Delta_n$. Let $\Delta_n$ tend to zero at a rate slower than $1/\sqrt{n}$, e.g., $\frac{1}{n^{1/4}}$ or $\frac{1}{\log n}$). Then choosing $\e = \Delta_n$, Proposition \ref{thm:msg_error} shows that we have a code whose overall rate is $(\mc{C}- \Delta_n)(1-\Delta_n)$ and whose probability of message error decays as $\exp(-\kappa n \Delta_n^2)$, where $\kappa$ is a universal positive constant. 
\label{rem:ML_gap_to_cap}
\end{remark}

\section{Performance with i.i.d. Bernoulli dictionaries} \label{sec:bern_dict}

SPARCs defined via an i.i.d. Gaussian design matrix are not suitable for practical implementation, especially for large code lengths. Large Gaussian design matrix have prohibitive storage complexity as the entries will span a large range of real numbers which need to be stored with high precision.  To reduce the storage requirement, one could define the SPARC via a Bernoulli design matrix entries are chosen uniformly at random from the set $\{1, -1 \}$.  As before, the set of valid message vectors is $\mcb$, i.e.,  $\beta$ which have one non-zero in each of the $L$ sections. In this section, we consider Bernoulli-defined SPARCs with equal power allocation, i.e., each non-zero entry of $\beta$ equals $\sqrt{P/L}$. Each entry of the codeword is therefore a sum of $L$ i.i.d. random variables, each drawn uniformly from $\left\{\sqrt{{P}/{L}}, -\sqrt{{P}/{L}} \right\}$. Therefore, by the central limit theorem, each codeword entry converges in distribution to an i.i.d. $\mc{N}(0, P)$ random variable.

The performance of Bernoulli dictionaries with optimal decoding was analyzed by Takeishi et al. \cite{takeishi14,takeishi2016improved}. The main result, stated below, gives an error probability bound that is almost identical to the one in Theorem \ref{thm:MLresult} for the Gaussian case, except for a slightly weaker exponent. 

\begin{theorem} \cite{takeishi2016improved}
With the same assumptions and notation as in Theorem \ref{thm:MLresult}, the section error rate of a SPARC defined with a Bernoulli dictionary satisfies 
\be
\pr\left( \Esec \geq \e \right) = e^{-nE(\e, R)},
\ee
with
\be
\label{eq:heD_Bern}
E(\e , R) \geq h(\e, \Delta) - \iota(L) ,
\ee
where $\iota(L)=O(1/\sqrt{L})$. 
\label{thm:Bern_ML}
\end{theorem}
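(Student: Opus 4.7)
The plan is to retrace the proof of Proposition \ref{prop:nonasympML} and Theorem \ref{thm:MLresult} step by step, replacing each invocation of Gaussianity of $A$ with a corresponding Bernoulli statement and accounting for the $O(1/\sqrt{L})$ penalty incurred at the one step where Gaussianity was actually used in an essential way.

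First, I would reduce to the pairwise comparison probability exactly as in the Gaussian case. Fix a transmitted $\beta_0 \in \mcb$. The event $\{\Esec \geq \e\}$ is contained in the union, over $\ell \geq \e L$, of the events that \emph{some} $\hat\beta \in \mcb$ differing from $\beta_0$ in exactly $\ell$ sections satisfies $\norm{y - A\hat\beta}^2 \leq \norm{y - A\beta_0}^2$. By the union bound and the symmetry of the Bernoulli distribution across sections, this yields a bound of the same $\sum_{\ell \geq \e L} \binom{L}{\ell}(M-1)^\ell \cdot (\text{pairwise probability})$ form that drives \eqref{eq:Pbeta0_bnd}. The combinatorial factors and the role of the parameter $\msf{a} \geq \msf{a}^*_L(\snr)$ are unchanged.

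Second, I would decompose the pairwise comparison in terms of the matching and mismatched sections. Write $A\beta_0 = u + v$ and $A\hat\beta = u + \tilde v$ where $u$ aggregates the $L - \ell$ matching sections and $v, \tilde v$ the $\ell$ mismatched ones (with $\alpha = \ell/L$). In the Gaussian proof, $(v,\tilde v)$ is jointly Gaussian with entries of variance $\alpha P$ and an explicit correlation structure, and the bound \eqref{eq:chi_sq_diff} on the tail of $\tfrac12(Z^2 - \tilde Z^2)$ gives the exponent $D_1(\,\cdot\,,\tfrac{\alpha(1-\alpha)\snr}{1+\alpha\snr})$ appearing in \eqref{eq:nonasympML}. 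In the Bernoulli case, each coordinate of $v$ is a sum of $\ell$ independent symmetric $\pm\sqrt{P/L}$ variables (and similarly for $\tilde v$, independent of $v$), so neither is exactly Gaussian; however, each is a CLT-scale sum of $\ell$ bounded i.i.d.\ summands of magnitude $\sqrt{P/L}$.

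Third, and this is the core technical step, I would replace the Gaussian Chernoff bound with a Cram\'er--Chernoff bound for the Bernoulli quadratic form and show that its rate function differs from $D_1$ only by $O(1/\sqrt L)$. For every admissible Chernoff parameter $\lambda$ in \eqref{eq:D1def}, one compares the cumulant generating function of the Bernoulli sum-of-squares-difference against its Gaussian counterpart; since the $\pm\sqrt{P/L}$ summands are bounded with third absolute moment $O(L^{-3/2})$, an Edgeworth/Berry--Esseen-style expansion yields $\log \expec[\exp(\lambda Q_{\text{Bern}})] = \log \expec[\exp(\lambda Q_{\text{Gauss}})] + O(1/\sqrt L)$, uniformly in $\lambda$ over the compact optimization range. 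Exponentiating, the Bernoulli pairwise probability is at most $\exp(O(1/\sqrt L))$ times the Gaussian pairwise probability. The main obstacle is precisely this uniform MGF comparison: one needs the error term to be genuinely $O(1/\sqrt L)$ and to survive the supremum over $\lambda$ as well as the $n$-fold tensorization implicit in $\norm{\cdot}^2 = \sum_i (\cdot)_i^2$; this is what the analysis in \cite{takeishi2016improved} establishes.

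Finally, I would reassemble the union bound and re-run the $\alpha$-optimization from the proof of Theorem \ref{thm:MLresult}. The $\exp(O(1/\sqrt L))$ factors pull through the sum over $\ell \geq \e L$ and, after taking logs and dividing by $n$, contribute an additive $\iota(L) = O(1/\sqrt L)$ to the exponent, exactly as in \eqref{eq:heD_Bern}. Note that $\iota(L)$ depends on $L$ and not on $n$, which is consistent with the CLT origin of the correction: the per-entry distribution of $A\beta$ approaches Gaussian at rate $1/\sqrt L$, independently of the block length.
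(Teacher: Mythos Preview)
Your high-level architecture matches the paper: reuse the decomposition $T(\mathcal S)=\tilde T(\mathcal S)+T^*$ of \eqref{eq:Ttil_def}--\eqref{eq:Tstar_def}, show that each moment generating function in \eqref{eq:prTstar} and \eqref{eq:Ttil_T2} is bounded by its Gaussian counterpart times a correction factor, obtain the analogue \eqref{eq:nonasympML_bern} of \eqref{eq:nonasympML} with additive penalties $\iota_1=O(1/\sqrt L)$ and $\iota_2=O(1/L)$, and then rerun the proof of Theorem~\ref{thm:MLresult} verbatim.

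The divergence is at your step~3. You invoke an ``Edgeworth/Berry--Esseen-style expansion'' of the cumulant generating function; but Berry--Esseen controls Kolmogorov (CDF) distance, not MGF ratios, and an Edgeworth expansion would have to be made uniform in the tilting parameter $\lambda$ and survive the $n$-fold product. You defer all of this to the citation, which is fair for a sketch, but it is not actually the mechanism the paper uses. The paper's route is more elementary: it writes each per-coordinate MGF as an expectation over a Gaussian $Z_1\sim\mc N(0,1)$ and an independent normalized Bernoulli sum $W_1$, and then replaces the discrete law of $W_1$ by a Gaussian in two explicit steps. First, a \emph{pointwise} pmf-to-density bound $\binom{\ell}{k}2^{-\ell}\le e^{\varphi(\ell)}\,\phi(k;\ell/2,\ell/4)$ with $\varphi(\ell)=O(1/\ell)$ (Lemma~\ref{lem:bin_normal_ratio}); second, a Riemann-sum-to-integral bound $I_d\le (1+\eta s^2/n)\,I_c$ for Gaussian integrands (Lemmas~\ref{lem:pmf_bound} and~\ref{lem:reimann_int_bnd}, the latter handling the two- and three-dimensional integrands coming from $\tilde T_1,\tilde T_2$). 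Because these bounds compare the underlying \emph{measures} rather than expanding the cumulants, the multiplicative factor is $\lambda$-independent and tensorizes cleanly across the $n$ coordinates, giving \eqref{eq:nonasympML_bern} directly. Your CLT intuition for why the penalty is $O(1/\sqrt L)$ is correct; the pmf/Riemann-sum machinery is the concrete way that intuition is cashed out.
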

We note that the only difference from result for the Gaussian case is that in the lower bound for $E(\e, R)$, the $\log (2L)/n$ term is now replaced with $\iota(L)=O(1/\sqrt{L})$.  A proof sketch of the proposition is given in Section \ref{subsec:bern_proof}.

\section{Proofs} \label{sec:proofML}

\subsection{Proof of Proposition \ref{prop:nonasympML} }\label{subsec:proof_main_prop}

We obtain \eqref{eq:Pbeta0_bnd} by proving that for $\ell \in \{1, \ldots, L\}$, 
\begin{align}
 \pr_{\beta_0}(\Esec = \ell/L)  & \leq  \textsf{err}_1(\ell/L), \label{eq:err1_ell} \\
  \pr_{\beta_0}(\Esec = \ell/L)  & \leq  \textsf{err}_2(\ell/L),  \label{eq:err2_ell}
\end{align}
where $\textsf{err}_1(\cdot )$ and $\textsf{err}_2(\cdot )$ are defined in the statement of the proposition.

For any $\beta \in \mcb$, let 
$\mc{S}(\beta)= \{ j: \, \beta_j=1 \}$ denote the set of non-zero indices. Let $\mc{S}^* =\mc{S}(\beta_0)$ denote the set of non-zero indices for the true message vector $\beta_0$, and let $\mc{S}$ denote the set of non-zero indices in the decoded message vector  When there are $\ell$ section errors, the set $\mc{S}$ differs from  $\mc{S}^*$ in exactly $\ell$ elements. Letting $X_{\mc{S}} = A \beta_{\mc{S}}$ and $X_{\mc{S}^*} = A \beta_{\mc{S}^*} =  A \beta_{0}$, the ML decoder decodes $\mc{S}$ only when the received vector $Y$ satisfies $\norm{Y- X_{\mc{S}}}^2  \leq  \norm{Y- X_{\mc{S}^* }}^2$, or equivalently, when $T(\mc{S}) \leq 0$, where
\be
T(\mc{S}) = \frac{1}{2n} \left[ \frac{\normsq{Y- X_{\mc{S}}} }{\sigma^2} -   \frac{\normsq{Y- X_{\mc{S}^*}} }{\sigma^2} \right].
\label{eq:TSdef}
\ee

The analysis proceeds by  obtaining a  bound for  $\pr_{\beta_0}(T(\mcs) \leq 0 )$ that holds for each choice of $\mcs$.   Noting that there are ${L \choose \ell } M^\ell$ choices for $\mc{S}$, the natural way to combine these is via a union bound:
\[   \pr_{\beta_0}(\Esec = \ell/L)  = {L \choose \ell } M^\ell \,  \pr_{\beta_0}(T(\mcs) \leq 0 ).   \]
However, such a union bound gives a result weaker than that of Proposition \ref{prop:nonasympML}.  Therefore, we obtain \eqref{eq:err1_ell} and  \eqref{eq:err2_ell} by decomposing $T(\mcs)$ in two different ways and using a modified union bound.

\paragraph{Proof of \eqref{eq:err1_ell}  \cite[Lemma 3]{AntonyML}:}   Let $\mcs_1 = \mcs \cap \mcs^*$ and $\mcs_2= \mcs- \mcs_1$  denote, respectively,  the intersection and difference between sets $\mcs$ and $\mcs^*$. With $\alpha = \ell /L$, the sizes of $\mcs_2$ and $\mcs_1$ are $\alpha L$ and $L - \alpha L$, respectively.  With this notation, the probability of the event $\{ \Esec = \ell/L \}$ can be bounded as follows. For any $\lambda >0$, the indicator of the event satisfies
\be
\mbf{1}\{ \Esec = \ell/L   \} \leq \sum_{\mcs_1} \Big( \sum_{\mcs_2} e^{-nT(\mcs)} \Big)^\lambda.
\label{eq:TSlamb}
\ee
We decompose the test statistic $T(\mc{S})$ as $T_1 + T_2$, where
\be
T_1= \frac{1}{2n} \left[ \frac{\normsq{Y- X_{\mc{S}_1}} }{\sigma^2 + \alpha P} -   \frac{\normsq{Y- X_{\mc{S}^*}} }{\sigma^2} \right],
\label{eq:T1_def}
\ee
and
\be
T_2 = \frac{1}{2n} \left[ \frac{\normsq{Y- X_{\mc{S}}} }{\sigma^2} -   \frac{\normsq{Y- X_{\mc{S}_1}} }{\sigma^2 + \alpha P} \right].
\label{eq:T2_def}
\ee
Observing that $T_1 $ depends only on the indices in $\mcs^*$ (and not on those in $\mc{S}_2$), we take  expectations on both sides of  \eqref{eq:TSlamb} to write
\begin{align}
 \pr_{\beta_0}(\Esec = \ell/L)  & \leq \sum_{\mcs_1} \expec e^{-n\lambda T_1(\mcs_1)} \, 
 \expec_{ X_{\mcs_2} | Y, X_{\mcs_1}, X_{\mcs^*} } \Big( \sum_{\mcs_2} e^{-nT_2(\mcs)} \Big)^\lambda \nonumber \\
 & \stackrel{(a)}{\leq}   \sum_{\mcs_1} \expec e^{-n\lambda T_1(\mcs_1)} \Bigg[ \sum_{\mcs_2}  \expec_{X_{\mcs_2}}  e^{-nT_2(\mcs)} \Bigg]^\lambda \nonumber \\
& \stackrel{(b)}{=} \sum_{\mcs_1} \expec e^{-n\lambda T_1(\mcs_1)}  \Bigg[ \sum_{\mcs_2}  \left( 1 + \frac{\alpha P}{\sigma^2} \right)^{-\frac{n}{2}} \Bigg]^{\lambda},  \nonumber \\
& \stackrel{(c)}{\leq} \sum_{\mcs_1} \expec e^{-n\lambda T_1(\mcs_1)} \, e^{-n\lambda(\mc{C}_\alpha - \alpha R)}.
\label{eq:EnlamT}
\end{align}
where $(a)$ is obtained using Jensen's inequality (arranging for $\lambda$ to be not more than 1), and noting that  $X_{\mcs_2}$ is  independent of 
$(Y, X_{\mcs_1}, X_{\mcs^*})$. Step $(b)$ is obtained by writing
\[ 
e^{-nT_2(\mcs)} = \exp\left( -\frac{1}{2} \Bigg[ \frac{\normsq{Y- X_{\mc{S}_1} -  X_{\mc{S}_2}  } }{\sigma^2} -   \frac{\normsq{Y- X_{\mc{S}_1}} }{\sigma^2 + \alpha P}  \Bigg] \right), 
\]
and evaluating the expectation with respect to $X_{\mcs_2}$, which is  i.i.d. $\sim \mc{N}(0, \alpha P)$. For step $(c)$, we observe that the sum over $\mcs_2$ involves at most $M^\ell = e^{nR \alpha}$ terms, and use the definition of $\mc{C}_\alpha$ from \eqref{eq:calph_def}.

We note from \eqref{eq:T1_def} that $T_1(\mcs_1)$ is distributed as 
\[ T_1(\mcs_1) \stackrel{d}{=}  \frac{1}{2n} \sum_{i=1}^n\left( Z_i^2 - \tilde{Z}_i^2 \right),  \]
where  each pair  $(Z_i, \tilde{Z}_i)$ is bivariate Gaussian with squared correlation equal to $1/(1+ \alpha \snr)$. The pairs are i.i.d. for
$1\leq i \leq n$. Using this, the expectation in  \eqref{eq:EnlamT} is found to be
\[ 
\expec e^{-n\lambda T_1(\mcs_1)} = (1- \lambda^2 \alpha \snr/(1+\alpha \snr))^{-n/2}.
\]
The proof is completed by using this in \eqref{eq:EnlamT},  noting that the sum over $\mcs_1$ has ${L \choose L\alpha}$ terms, and optimizing the bound over $\lambda \in [0,1]$.


\paragraph{Proof of \eqref{eq:err2_ell}  \cite[Lemma 4]{AntonyML}:}  For any $\mcs$ which differs from $\mcs^*$ in $\ell$ sections, we decompose the test statistic  in \eqref{eq:TSdef} as $T(\mcs) = \tilde{T}(\mcs) +T^*$, where
\begin{align}
\tilde{T}(\mcs)  &=  \frac{1}{2n} \left[ \frac{\normsq{Y- X_{\mc{S}}} }{\sigma^2} -   \frac{\normsq{Y- (1-\alpha)X_{\mc{S}^*}} }{\sigma^2 + \alpha^2 P} \right], \label{eq:Ttil_def}\\
T^* & = \frac{1}{2n} \left[ \frac{\normsq{Y- (1-\alpha)X_{\mc{S}^*}} }{\sigma^2 + \alpha^2 P} - \frac{\normsq{Y- X_{\mc{S}^*}}}{\sigma^2}  \right]. \label{eq:Tstar_def}
\end{align}
Let $t_\alpha \in [0, \mc{C}_\alpha-\alpha R]$. Then, 
\be 
\pr_{\beta_0} (\Esec = \ell/L )  \leq  \pr_{\beta_0}(\exists \, \mcs :  \tilde{T}(\mcs)  \leq t_\alpha) +   \pr_{\beta_0}( T^* \leq -t_\alpha). 
\label{eq:tilT_Tstar_decomp}
\ee
We note that $T^*$ does not depend on $\mcs$: it is a mean zero average of the difference of squared Gaussian random variables, with squared correlation $1/(1+ \alpha^2 \snr)$. The second term on the RHS of \eqref{eq:tilT_Tstar_decomp} can therefore be bounded via a Chernoff bound (as in \eqref{eq:chi_sq_diff}) to obtain the second term in \eqref{eq:nonasympML}.

The analysis of the first term in \eqref{eq:tilT_Tstar_decomp} is very similar to the proof of \eqref{eq:err1_ell} above. We write $\tilde{T}(\mcs) = \tilde{T}_1 + \tilde{T}_2$, where 
\begin{align}
\tilde{T}_1   & = \frac{1}{2n} \left[ \frac{\normsq{Y- X_{\mc{S}_1}} }{\sigma^2 + \alpha P} -   \frac{\normsq{Y- (1-\alpha) X_{\mc{S}^*}} }{\sigma^2 + \alpha P} \right],  \nonumber \\
\tilde{T}_2 & = \frac{1}{2n} \left[ \frac{\normsq{Y- X_{\mc{S}}} }{\sigma^2} -   \frac{\normsq{Y- X_{\mc{S}_1}} }{\sigma^2 + \alpha P} \right].
\label{eq:t1_til}
\end{align}
The key difference between $\tilde{T}_1$  and $T_1$ (defined in \eqref{eq:T1_def}) is that  the two standard normals have a higher correlation coefficient in $\tilde{T}_1$. Indeed, the squared correlation coefficient between the two standard normals in \eqref{eq:t1_til} is $\rho_\alpha^2 = (1+\alpha^2 \snr)/(1+ \alpha \snr)$, and the moment generating function is found to be
\[ 
\expec e^{-n\lambda \tilde{T}_1(\mcs_1)} = (1- \lambda^2 \alpha (1-\alpha) \snr/(1+\alpha \snr))^{-n/2}.  
\]
Following steps similar to \eqref{eq:EnlamT} yields the second term in \eqref{eq:nonasympML}.

This proves \eqref{eq:err2_ell}, and therefore  Proposition \ref{prop:nonasympML}. \qed


\subsection{Proof of Theorem \ref{thm:MLresult}} \label{subsec:proof_GaussML} 
To prove Theorem \ref{thm:MLresult}, we use the following weaker bound implied by Proposition \ref{prop:nonasympML}: $\pr_{\beta_0}(\Esec \geq \e)  \leq \sum_{\ell = \e L}^{L} \, \textsf{err}_2(\ell/L)$, where $\textsf{err}_2(\cdot)$ is defined in \eqref{eq:nonasympML}.   Let
\begin{align} 
\Delta_\alpha=  \mc{C}_{\alpha} - \alpha R - t_\alpha, \quad \tilde{\Delta}_\alpha=\mc{C}_\alpha - \alpha \mc{C}, \quad
1- \rho_\alpha^2 =\frac{\alpha(1-\alpha) \snr}{1 + \alpha \snr}.
\label{eq:Del_defs}
\end{align}
Noting that $\Delta_\alpha = \tilde{\Delta}_\alpha+ \alpha(\mc{C} - R) - t_\alpha$, the idea is to cancel the combinatorial coefficient ${ L \choose L\alpha }$ using $\exp(-n D_1(\tilde{\Delta}_\alpha,  1- \rho_\alpha^2))$, and produce an exponentially small error probability using $\exp(-n [ D_1(\Delta_\alpha, 1- \rho_\alpha^2) -  D_1(\tilde{\Delta}_\alpha, 1- \rho_\alpha^2) ])$.  

The derivative of $D_1(\Delta,  1- \rho_\alpha^2)$ with respect to  $\Delta$, denoted by $D_1^\prime(\Delta)$, is equal to 
\be
D_1^\prime(\Delta) = 
\begin{cases}
 \frac{2 \Delta}{(1- \rho_\alpha^2) (1+ \sqrt{1 + 4 \Delta^2/(1- \rho_\alpha^2)})}   & \text{ if } \Delta < \frac{1- \rho_\alpha^2}{\rho_\alpha^2}, \\
 1 & \text{ otherwise}.
 \end{cases}
 \label{eq:D1_def}
 \ee
 Since the derivative  is non-decreasing in $\Delta$, using a first-order Taylor expansion we deduce
\be
D_1( \Delta_\alpha,  1- \rho_\alpha^2) \geq D_1(\tilde{\Delta}_\alpha,  1- \rho_\alpha^2) + (\Delta_\alpha - \tilde{\Delta}_\alpha) 
D_1^\prime(\tilde{\Delta}_\alpha).
\ee
Then using the definition of $\textsf{err}_2$ in \eqref{eq:nonasympML}, we have  for $\frac{1}{L} \leq \alpha \leq 1- \frac{1}{L}$, 
\begin{align}
 & \textsf{err}_2(\alpha)   \leq  
 \exp \Big(-n D \Big( t_\alpha, \, \frac{\alpha^2\, \snr}{1+ \alpha^2 \, \snr} \Big) \Big) 
    \nonumber \\
& \qquad +  {L \choose \alpha L} \exp(-n D_1(\tilde{\Delta}_\alpha,  1- \rho_\alpha^2))\exp(-n ( \alpha(\mc{C} -R) - t_\alpha )  D_1^\prime(\tilde{\Delta}_\alpha))  \nonumber \\
& \leq \exp \Big(-n D \Big( t_\alpha, \, \frac{\alpha^2\, \snr}{1+ \alpha^2 \, \snr} \Big) \Big)   + \exp(-n ( \alpha(\mc{C} -R) - t_\alpha )  D_1^\prime(\tilde{\Delta}_\alpha))
\end{align}
where the last inequality is obtained by using the relation $nR = L \log M =  \textsf{a} L \log L$, and the fact that 
$\msf{a} \geq \msf{a}^*_L(\snr)$ (see \eqref{eq:astar_def}). Choosing $t_\alpha =  \alpha(\mc{C} -R)/2$, we obtain
\begin{align}
 & \textsf{err}_2(\alpha)    \leq   \exp \Bigg(-n D \Bigg(\frac{\alpha(\mc{C} -R)}{2}, \, \frac{\alpha^2\, \snr}{1+ \alpha^2 \, \snr} \Bigg) \Bigg)     +   \exp\Bigg(\frac{-n\alpha(\mc{C} -R)  D_1^\prime(\tilde{\Delta}_\alpha)}{2}  \Bigg) \nonumber  \\
 & \stackrel{(a)}{\leq} \exp \Bigg(-n D \Bigg(\frac{\alpha(\mc{C} -R)}{2}, \, \frac{\alpha^2\, \snr}{1+ \alpha^2 \, \snr} \Bigg) \Bigg)  
 +  \exp\big(- n\alpha(\mc{C} -R) w(\snr)  \big) \nonumber  \\
 & \stackrel{(b)}{\leq}   \exp \Bigg(\frac{-n}{4}  g\Bigg( \frac{(\mc{C}-R) \sqrt{1+ \alpha^2 \snr}}{2 \sqrt{\snr}} \Bigg) \Bigg)  +  \exp\big(- n\alpha(\mc{C} -R) w(\snr)  \big)
 \label{eq:err2_bnd2}
\end{align}
where the function $g$ is defined in \eqref{eq:gx_def}.  In the above, inequality $(a)$ is obtained using the lower bound $D_1^\prime(\tilde{\Delta}_\alpha) \geq 2 w(\snr)$, with $w(\snr)$ being defined in \eqref{eq:wsnr_def}. This lower bound is obtained by using the definition of $1-\rho_\alpha^2$ from \eqref{eq:Del_defs} and the lower bound $\tilde{\Delta}_\alpha \geq \frac{\snr}{4(1+\snr)^2} \alpha (1-\alpha)$ in the expression for $D_1^\prime(\tilde{\Delta}_\alpha)$  in \eqref{eq:D1_def}.  Inequality $(b)$ is obtained using the following lower bound \cite[Lemma 6]{AntonyML}:
\ben
D(x, 1-\rho^2) \geq \frac{1}{4} g\left( \sqrt{1+ \frac{4x^2}{1-\rho^2}} -1 \right).
\een

Finally, using \eqref{eq:err2_bnd2} in \eqref{eq:Pbeta0_bnd} and noting that  there are at most $L$ terms in the sum, we deduce
\ben
\begin{split}
\pr_{\beta_0}(\Esec \geq \e) & \leq 2 L  \exp \left( -n \min \Big\{ \e  \Delta w(\snr), \, \frac{1}{4}g\Big(\frac{\Delta}{2 \sqrt{\snr}} \Big) \Big\} \right)  \\
& = \exp \left(-h(\e,\Delta) - \frac{\log 2L}{n} \right),
\end{split}
\een
 where $\Delta= (\mc{C} - R)$, and $h(\e,\Delta)$ is defined in \eqref{eq:hedel_def}.  This completes the proof of Theorem \ref{thm:MLresult}.
 
 
 \subsection{Proof sketch of  Theorem \ref{thm:Bern_ML}} \label{subsec:bern_proof}
 
 We will prove the theorem via the following bound similar to Proposition \ref{prop:nonasympML}: \be \pr_{\beta_0}(\Esec \geq \e)  \leq \sum_{\ell = \e L}^{L} \, \textsf{err}'_2(\ell/L)\label{eq:pr_beta0} \ee where $\textsf{err}'_2(\cdot)$ is defined as follows.  For $0 < \alpha \leq 1$,
 \be
 \textsf{err}'_2(\alpha)  = \min_{t_\alpha  \in [0, C_\alpha - \alpha R]} \ \textsf{err}'_2(\alpha, t_{\alpha}), 
 \label{eq:err2_bnd_bern}
 \ee
where
 \begin{align}
 \textsf{err}'_2(\alpha, t_\alpha) =  &  {L \choose \alpha L} \exp \left(  - n D_1\left( \mc{C}_{\alpha} - \alpha R - t_\alpha, \, \frac{\alpha(1-\alpha) \, \snr}{1+ \alpha \, \snr} \right) - \iota_1 \right) \nonumber \\
& + \exp \left(  - n D\left( t_\alpha, \, \frac{\alpha^2\, \snr}{1+ \alpha^2 \, \snr} -\iota_2 \right) \right). \label{eq:nonasympML_bern}
 \end{align}
 Here $\iota_1=O(1/\sqrt{L})$ and $\iota_2=O(1/L)$.
 
The only difference between  $\textsf{err}'_2(\cdot)$ and $\textsf{err}_2(\cdot)$ defined in \eqref{eq:nonasympML} is the  presence of $\iota_1$ and $\iota_2$ in the former. Using the bound in \eqref{eq:pr_beta0}, Theorem \ref{thm:Bern_ML} can be established using steps similar to those used for Theorem \ref{thm:MLresult} in Sec. \ref{subsec:proof_GaussML}. 
 
 We now sketch the proof of \eqref{eq:pr_beta0}. The proof hinges on two key lemmas. The first uniformly bounds the ratio between a binomial pmf and a Gaussian with the same mean and variance.  
 
 \begin{lemma} \cite{takeishi14}
Let $\phi(x; \, \mu,  \sigma^2)$ denote the normal density with mean $\mu$ and variance $\sigma^2$. Then for any $\ell \in \mathbb{N}$, 
\ben
\max_{k \in \{0,1,\ldots, \ell \}} \ \frac{{\ell \choose k} \, 2^{-\ell}}{\phi(k; \, \ell/2, \ell/4)} \leq \exp(\varphi(\ell)),
\een
where $\varphi(\ell) \leq 5/\ell$ for $\ell \geq 1000$.
\label{lem:bin_normal_ratio}
 \end{lemma}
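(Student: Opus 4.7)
The plan is to prove Lemma \ref{lem:bin_normal_ratio} as a quantitative version of the local central limit theorem for a symmetric binomial. The core idea is to apply Stirling's formula with explicit Robbins-style remainder bounds, namely $n! = \sqrt{2\pi n}(n/e)^n e^{\theta_n}$ with $0 < \theta_n \leq 1/(12n)$, to each factorial in ${\ell \choose k}$, then compare the resulting expression to the matching Gaussian density $\phi(k;\ell/2,\ell/4) = \sqrt{2/(\pi\ell)} \exp(-2j^2/\ell)$, where I set $j = k - \ell/2$ and $p = k/\ell$.

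After the substitution, the log of the ratio to be bounded splits into three pieces: (i) a prefactor term equal to $\tfrac{1}{2}\log\bigl(\tfrac{1}{1 - 4j^2/\ell^2}\bigr)$ coming from $\sqrt{(\ell/4)/(\ell p(1-p))}$, (ii) a Stirling remainder of magnitude at most $\tfrac{1}{12\ell} + \tfrac{1}{12k} + \tfrac{1}{12(\ell-k)}$ once one is away from the extreme indices, and (iii) an exponent difference $\tfrac{2j^2}{\ell} - \ell D(p\|\tfrac{1}{2})$, where $D$ is the binary relative entropy. The Taylor expansion $\ell D(\tfrac{1}{2}+j/\ell \,\|\, \tfrac{1}{2}) = \tfrac{2j^2}{\ell} + \tfrac{4j^4}{3\ell^3} + O(j^6/\ell^5)$ shows piece (iii) is non-positive, so it only helps the upper bound.

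I would then split on the magnitude of $|j|$. In the bulk regime, say $|j| \leq \ell^{2/3}$, piece (i) is at most $\tfrac{1}{2}\log(1/(1-4\ell^{-2/3})) = O(\ell^{-2/3})$, piece (ii) is $O(1/\ell)$ since both $k$ and $\ell-k$ are of order $\ell$, and piece (iii) is $\leq 0$. In the tail regime $|j| > \ell^{2/3}$, piece (iii) decays much faster than piece (i) grows: using $\ell D(p\|\tfrac{1}{2}) \geq 2j^2/\ell + \tfrac{4j^4}{3\ell^3}$ from the Taylor series with positive remainder (valid throughout $(0,1)$ since all higher coefficients are positive), one shows that piece (iii) dominates and drives the whole ratio below $1$. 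The corner cases $k \in \{0,1,\ell-1,\ell\}$ are handled directly since $2^{-\ell}$ is exponentially smaller than $\phi$ at those points.

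The main obstacle is not the strategy but the book-keeping needed to land a constant as tight as $5$: one must use the positive Taylor remainder for $D(p\|\tfrac{1}{2})$ rather than Pinsker's inequality, choose the split point so that neither regime controls the bound too loosely, and exploit the fact that all Stirling remainders have known sign, which lets several error terms partially cancel. The threshold $\ell \geq 1000$ is precisely the slack needed so that the constant in the $O(1/\ell)$ bulk contribution fits under $5$ after piece (i) and the Stirling remainders are accounted for. Once these estimates are assembled, maximizing over $k$ gives $\max_k \log(\text{ratio}) \leq 5/\ell$, which is the claimed $\varphi(\ell)$.
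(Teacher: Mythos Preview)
The paper does not prove this lemma; it is quoted from \cite{takeishi14}. Your overall framework---Stirling with Robbins remainders, the three-piece decomposition into prefactor (i), Stirling error (ii), and exponent difference (iii), followed by a bulk/tail split---is the standard and correct route to a local-CLT ratio bound of this kind.

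There is, however, a genuine gap in your bulk estimate. You choose the split $|j|\le \ell^{2/3}$, bound piece (i) by $\tfrac12\log\bigl(1/(1-4\ell^{-2/3})\bigr)=O(\ell^{-2/3})$, and simply discard piece (iii) as non-positive. That yields a bulk contribution of order $\ell^{-2/3}$, which is a full power of $\ell$ larger than the target $5/\ell$. This is not a constant-tracking issue; the argument as written cannot produce an $O(1/\ell)$ bound.

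The fix is to keep pieces (i) and (iii) together rather than throwing (iii) away. Writing $x=j/\ell$, piece (i) expands as $2x^2+O(x^4)$ and piece (iii) as $-\ell\bigl(\tfrac{4}{3}x^4+O(x^6)\bigr)$, so their sum is $2x^2-\tfrac{4\ell}{3}x^4+\cdots$, which is maximized at $x^2=3/(4\ell)$ (i.e.\ $|j|\approx\sqrt{3\ell}/2$) with leading value $3/(4\ell)$. Adding the Stirling remainder, which satisfies $\theta_\ell-\theta_k-\theta_{\ell-k}<1/(12\ell)$ directly (your triangle-inequality bound is looser than necessary), gives a leading total of $5/(6\ell)$. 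The threshold $\ell\ge 1000$ then absorbs the higher-order corrections and the transition to the tail, pushing the constant up to $5$. In short, the correct split scale is $|j|\sim\sqrt{\ell}$, not $\ell^{2/3}$, and the cancellation between (i) and (iii) is essential, not optional.
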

 
The next two lemmas give bounds on the ratio of certain Reimann sums to the corresponding integrals. 

\begin{lemma} \cite{takeishi2016improved}
For $n \in \mathbb{N}$, let $h=2/\sqrt{n}$ and $x_k= -\sqrt{n} + \frac{2k}{\sqrt{n}}$ for $k=0,1,\ldots, n$.For $\mu \in \reals$ and $s >0$, define
\begin{align*}
I_d & = h \, \sum_{k=0}^n \exp\left\{ -\frac{s^2}{2} (x_k- \mu)^2\right\}, \\
I_c & =  \int_{-\infty}^{\infty} \exp\left\{ -\frac{s^2}{2}(x - \mu)^2 \right\} \, dx. 
\end{align*}
Then $$I_d \leq \left( 1  + \frac{\eta s^2}{n} \right)I_c,$$
where $\eta = 3 / \sqrt{8 \pi e} \leq 0.37$.
\label{lem:pmf_bound}
\end{lemma}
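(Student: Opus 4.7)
The plan is to recognize $I_d$ as a midpoint Riemann sum, with step $h = 2/\sqrt{n}$, for the integral of the Gaussian $f(x) = \exp\{-\tfrac{s^2}{2}(x-\mu)^2\}$ over the truncated interval $[-\sqrt{n}-h/2,\, \sqrt{n}+h/2]$. Since $f$ is nonnegative, restricting the integral from $\reals$ to any subinterval only decreases it, hence any overshoot of $I_d$ beyond $I_c$ must arise from the local midpoint-rule error, not from the tails. Consequently, it suffices to control $\sum_{k=0}^n \bigl[h f(x_k) - \int_{x_k - h/2}^{x_k + h/2} f(x)\,dx\bigr]$.

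The workhorse will be the standard midpoint-rule identity $h f(x_k) - \int_{x_k - h/2}^{x_k+h/2} f(x)\,dx = -\tfrac{h^3}{24}\,f''(\xi_k)$, with $\xi_k$ in the corresponding subinterval. A direct computation gives $f''(x) = s^2\bigl(s^2(x-\mu)^2 - 1\bigr) f(x)$, so $f''<0$ precisely on the ``concave strip'' $(\mu - 1/s,\, \mu+1/s)$ and $f''\ge 0$ outside it. Only intervals whose $\xi_k$ lie in the concave strip contribute positively to the overshoot; the convex intervals and the truncated tails each push $I_d - I_c$ further negative and may be discarded when seeking an upper bound.

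The core estimate will therefore bound $\sum_k h\,|f''(\xi_k)|\,\mathbf{1}\{f''(\xi_k)<0\}$. I will view this sum as a Riemann sum for $\int_{|x-\mu|<1/s} |f''(x)|\,dx$, plus an $O(h)$ boundary correction coming from the at most two subintervals that straddle $\mu \pm 1/s$. The key integral evaluates in closed form: integrating $-f''$ over the concave strip yields $-[f'(\mu+1/s) - f'(\mu-1/s)] = 2s\,e^{-1/2}$. Combined with the $h^3/24$ factor from the midpoint identity, this delivers an overshoot bounded by roughly $\tfrac{h^2}{24}\cdot 2s/\sqrt{e}$, i.e., of order $s h^2/\sqrt{e}$. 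Dividing by $I_c = \sqrt{2\pi}/s$ converts the bound into a multiplicative factor of the form $\kappa\, s^2 h^2 / \sqrt{e} = 4\kappa s^2/(n\sqrt{e})$, and absorbing the $O(h)$ boundary slack into a slightly enlarged constant gives the announced $\eta = 3/\sqrt{8\pi e}$.

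The main obstacle will be matching the explicit constant $\eta$ rather than just getting the right order. A crude union over concave-strip intervals with the worst-case bound $|f''|\le s^2$ gives the correct $h^2 s$ scaling but a looser constant, so to reach $3/\sqrt{8\pi e}$ I must use the sharp value of $\int_{|x-\mu|<1/s}|f''|\,dx = 2s/\sqrt{e}$ (which is where the $\sqrt{e}$ factor in $\eta$ originates) and carefully dispose of the at most two boundary intervals where $\xi_k$ may sit near $\mu\pm1/s$. No subtlety is expected from the tails or from $\mu$, since the argument is translation-invariant and the tail omission only helps.
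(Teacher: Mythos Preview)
The paper does not supply its own proof of this lemma; it is quoted from \cite{takeishi2016improved} and used as a black box in the proof sketch of Theorem~\ref{thm:Bern_ML}. So there is nothing in the paper to compare your argument against, and I can only comment on the argument itself.

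Your strategy is sound and in fact delivers a slightly better constant than the one stated. The only soft spot is the sentence ``I will view this sum as a Riemann sum for $\int_{|x-\mu|<1/s}|f''(x)|\,dx$.'' The tags $\xi_k$ are handed to you by the mean-value theorem, not chosen, so this is a \emph{tagged} sum whose discrepancy from the integral carries an extra factor of the total variation of $|f''|$ on the concave strip, namely $O(hs^2)$; the resulting boundary slack is then $O(h^3 s^2)$, and uniformity in $s$ (which the lemma requires) is not automatic. The clean fix is to replace the mean-value form by the Peano-kernel identity for the midpoint rule,
\[
hf(x_k)-\int_{x_k-h/2}^{x_k+h/2}\! f
= -\int_{x_k-h/2}^{x_k+h/2}\! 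K_k(t)\,f''(t)\,dt,
\qquad 0\le K_k\le \tfrac{h^2}{8}.
\]
Keeping only the part where $f''<0$ and summing in $k$ gives, with your computation $\int_{\reals}(-f'')_+ = 2s/\sqrt{e}$,
\[
I_d - I_c \;\le\; I_d - \int_{-\sqrt{n}-h/2}^{\sqrt{n}+h/2}\! f
\;\le\; \frac{h^2}{8}\cdot\frac{2s}{\sqrt{e}}
\;=\; \frac{s}{n\sqrt{e}}
\;=\; \frac{s^2}{n\sqrt{2\pi e}}\,I_c,
\]
so the bound holds with $\eta=1/\sqrt{2\pi e}=2/\sqrt{8\pi e}$, below the stated $3/\sqrt{8\pi e}$. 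This version is uniform in $s,\mu,n$ and dispenses with any boundary-interval bookkeeping.
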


\begin{lemma}
For $n, n'\in \mathbb{N}$, let $h=2/\sqrt{n}$, $\mc{X}= \{  -\sqrt{n} + \frac{2k}{\sqrt{n}} \mid k=0,1,\ldots, n \}$, and 
let $h'=2/\sqrt{n'}$, $\mc{X}= \{  -\sqrt{n'} + \frac{2k}{\sqrt{n'}} \mid k=0,1,\ldots, n \}$.
\begin{enumerate}[(a)]
\item  For a two-dimensional vector $\mbf{x} =[x_1, x_2]^T \in \reals^2$ and a $2 \times 2$ positive definite matrix $B$, define
\begin{align*}
I_d & = \int_{\reals} h \, \sum_{x_1\in \mc{X}} \exp\left\{ -\mbf{x}^T B \mbf{x}/2 \right\} \, dx_2, \\
I_c & =  \int_{\reals^2} \exp\left\{ -\mbf{x}^T B \mbf{x}/2 \right\} \, d\mbf{x}. 
\end{align*}
Then $I_d \leq \left( 1  + \frac{\eta B_{11}}{n} \right)I_c$, where $\eta$ is defined in Lemma  \ref{lem:pmf_bound}, and $B_{ij}$ denotes the $(i,j)$th element of  the matrix $B$.

\item For a three-dimensional vector $\mbf{x} =[x_1, x_2, x_3]^T \in \reals^3$ and a $3 \times 3$ positive definite matrix $B$, define
\begin{align*}
I_d & = \int_{\reals} h \, h'  \, \sum_{x_1\in \mc{X}_1} \sum_{x_1\in \mc{X}_2}  \exp\left\{ -\mbf{x}^T B \mbf{x}/2 \right\} \, dx_3, \\
I_c & =  \int_{\reals^3} \exp\left\{ -\mbf{x}^T B \mbf{x}/2 \right\} \, d \mbf{x}.
\end{align*}
Then $I_d \leq \left( 1  + \frac{\eta B_{11}}{n} \right)\left( 1  + \frac{\eta B_{22}}{n'} \right) I_c$.
\end{enumerate}
\label{lem:reimann_int_bnd}
\end{lemma}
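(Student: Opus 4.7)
The plan is to prove both parts by iterative application of Lemma \ref{lem:pmf_bound}, reducing the multivariate problem to a one-dimensional Riemann-sum-versus-integral estimate one variable at a time via completion of squares. The key algebraic identity I would use is
\[ \mbf{x}^T B \mbf{x} = B_{11}(x_1 - \mu_1)^2 + \tilde{\mbf{x}}^T \tilde{B} \tilde{\mbf{x}}, \]
where $\tilde{\mbf{x}}$ collects the remaining variables, $\mu_1$ is a linear function of $\tilde{\mbf{x}}$ of the form $-(B_{12}x_2 + \cdots)/B_{11}$, and $\tilde{B}$ is the Schur complement of $B_{11}$ in $B$. Because $B$ is positive definite with $B_{11}>0$, each diagonal entry of $\tilde{B}$ satisfies $\tilde{B}_{ii} = B_{i+1,i+1} - B_{1,i+1}^2/B_{11} \le B_{i+1,i+1}$, a bound I will invoke to re-express the final estimate in terms of the original diagonal of $B$.

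For part (a), take $\tilde{\mbf{x}} = x_2$ and factor
\[ \exp\{-\mbf{x}^T B \mbf{x}/2\} = \exp\{-B_{11}(x_1 - \mu_1(x_2))^2/2\} \cdot \exp\{-\tilde{B}_{11} x_2^2/2\}. \]
Applying Lemma \ref{lem:pmf_bound} to the inner Riemann sum over $x_1 \in \mc{X}$ with $s^2 = B_{11}$ and $\mu = \mu_1(x_2)$ yields the multiplicative factor $(1 + \eta B_{11}/n)$, which crucially does not depend on $x_2$. Pulling this factor outside the integral over $x_2$, the remaining nested integral is exactly $I_c$, giving the claim.

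For part (b), I would apply the same idea twice. First, for each fixed $(x_2, x_3)$, bound the Riemann sum over $x_1 \in \mc{X}_1$ via Lemma \ref{lem:pmf_bound}, picking up $(1 + \eta B_{11}/n)$ and replacing the sum by $\int \exp\{-B_{11}(x_1 - \mu_1)^2/2\}\, dx_1 = \sqrt{2\pi/B_{11}}$. What remains is
\[ (1 + \eta B_{11}/n)\sqrt{2\pi/B_{11}} \int_{\reals} h' \sum_{x_2 \in \mc{X}_2} \exp\{-\tilde{\mbf{x}}^T \tilde{B} \tilde{\mbf{x}}/2\}\, dx_3, \]
and the mixed sum--integral on the right is precisely the part-(a) quantity for the $2 \times 2$ matrix $\tilde{B}$ with discrete variable $x_2$ and continuous variable $x_3$. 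Applying part (a) produces the factor $(1 + \eta \tilde{B}_{11}/n')$, which I would then upper bound by $(1 + \eta B_{22}/n')$ using the Schur-complement inequality above. Finally, $\sqrt{2\pi/B_{11}} \int_{\reals^2} \exp\{-\tilde{\mbf{x}}^T \tilde{B} \tilde{\mbf{x}}/2\}\, d\tilde{\mbf{x}} = I_c$, which completes the bound.

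The main subtlety, rather than a genuine obstacle, is bookkeeping: verifying that after the $x_1$ integration, the prefactor $\sqrt{2\pi/B_{11}}$ and the residual 2D Gaussian integral over $(x_2, x_3)$ recombine exactly to $I_c = \int_{\reals^3} \exp\{-\mbf{x}^T B \mbf{x}/2\}\, d\mbf{x}$. The replacement of $\tilde{B}_{11}$ by $B_{22}$ only weakens the bound slightly, but it is what allows the statement to be phrased purely in terms of the original diagonal of $B$, independently of the order in which variables are eliminated, which is presumably the form required in the proof of Theorem \ref{thm:Bern_ML}.
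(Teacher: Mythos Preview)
The paper does not actually give a proof of this lemma; it is stated without proof in Section~\ref{subsec:bern_proof} alongside Lemmas~\ref{lem:bin_normal_ratio} and~\ref{lem:pmf_bound}, with the details deferred to \cite{takeishi14,takeishi2016improved}. Your argument is correct and is the natural one: complete the square in $x_1$ so that the quadratic form splits as $B_{11}(x_1-\mu_1(\tilde{\mbf{x}}))^2$ plus a Schur-complement quadratic in the remaining variables, apply Lemma~\ref{lem:pmf_bound} with $s^2=B_{11}$ (crucially, the bound there is uniform in $\mu$), and then iterate for part~(b). The Schur-complement inequality $\tilde{B}_{11}=B_{22}-B_{12}^2/B_{11}\le B_{22}$ is exactly what is needed to pass from the diagonal of $\tilde{B}$ back to that of $B$, and your bookkeeping check that $\sqrt{2\pi/B_{11}}\int_{\reals^2}\exp\{-\tilde{\mbf{x}}^T\tilde{B}\tilde{\mbf{x}}/2\}\,d\tilde{\mbf{x}}=I_c$ is just the standard block-determinant identity $\det B = B_{11}\det\tilde{B}$.
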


The proof is along the lines of that of \eqref{eq:err2_ell} on p.\pageref{eq:tilT_Tstar_decomp}. We have
\be 
\pr_{\beta_0} (\Esec = \ell/L )  \leq  \pr_{\beta_0}(\exists \, \mcs :  \tilde{T}(\mcs)  \leq t_\alpha) +   \pr_{\beta_0}( T^* \leq -t_\alpha). 
\label{eq:tilT_Tstar_decomp_bern}
\ee
 with $\tilde{T}(\mc{S})$ and $T^*$ defined as in \eqref{eq:Ttil_def}-\eqref{eq:Tstar_def}.  Using a Chernoff bound, the second term can be bounded as 
 \be
  \pr_{\beta_0}( T^* \leq -t_\alpha) \leq  e^{-n\lambda t_\alpha} \expec_{Y, X_{\mc{S}_*}}e^{-n\lambda T^*}. 
  \label{eq:prTstar}
 \ee
The moment generating function can be written as 
\be
\expec_{Y, X_{\mc{S}_*}}e^{-n\lambda T^*} = \left[ \expec_{Z_1, \tilde{Z}_1} e^{\lambda(Z_1^2 - \tilde{Z}_1^2)/2} \right]^n,
\label{eq:mgf_exp0}
\ee
 where 
 $$Z_1 \sim \mc{N}(0,1), \qquad \tilde{Z}_1 = \frac{(\sigma Z + \alpha \sqrt{P} \, W_1)}{\sqrt{\sigma^2 + \alpha^2 P}}.$$
 Here $W_1$ (independent of $Z_1$) is the sum of $L$  independent equiprobable $\pm 1$ random variables, normalized to have unit variance. If $W_1$ was Gaussian, then the moment generating function in \eqref{eq:mgf_exp0} would be exactly equal to $(1- \lambda^2 \alpha^2 \snr/(1 + \alpha^2 \snr))^{-n/2}$. For the $W_1$ arising from a Bernoulli dictionary, using Lemmas \ref{lem:bin_normal_ratio}, \ref{lem:pmf_bound} and \ref{lem:reimann_int_bnd}, it can be shown that 
 \[  \expec_{Y, X_{\mc{S}_*}}e^{-n\lambda T^*}  \leq \left( \frac{e^{\iota_2}}{1- \lambda^2 \alpha^2 \snr/(1 + \alpha^2 \snr) } \right)^n. \]
 Using this in  \eqref{eq:prTstar} yields the second term in \eqref{eq:nonasympML_bern}.
 
 For the first term in \eqref{eq:tilT_Tstar_decomp_bern}, we write $\tilde{T}(\mc{S}) = \tilde{T}_1 + \tilde{T}_2$ where $\tilde{T}_1$ and $\tilde{T}_2$ are defined in \eqref{eq:t1_til}.  Then, using steps similar to \eqref{eq:EnlamT}, we obtain 
 \be
 \pr_{\beta_0}(\exists \, \mcs :  \tilde{T}(\mcs)  \leq t_\alpha) \leq e^{n t_\alpha} \sum_{\mcs_1} \expec_{Y, X_{\mcs^*}} e^{-n\lambda T_1(\mcs_1)} \Bigg[ \sum_{\mcs_2}  \expec_{X_{\mcs_2}}  e^{-nT_2(\mcs)} \Bigg]^\lambda.
 \label{eq:Ttil_T2}
 \ee
 Again, using Lemmas  \ref{lem:bin_normal_ratio}, \ref{lem:pmf_bound} and \ref{lem:reimann_int_bnd}, the two moment generating functions in \eqref{eq:Ttil_T2} can be bounded to yield the first term in \eqref{eq:nonasympML_bern}.  The details of the computation can be found in \cite[Section III.C]{takeishi14} and \cite{takeishi2016improved}.

%
%
\chapter{Computationally Efficient Decoding}  \label{chap:AWGN_eff}

In this chapter, we will discuss computationally efficient decoders for SPARCs over the AWGN channel. The goal is to design and analyze  feasible capacity-achieving decoders whose complexity is polynomial in the code length $n$, in contrast to the infeasible maximum-likelihood decoder.

The channel model and the encoding procedure are as described in Sec. \ref{sec:prob_setup}. The first idea for designing a  efficient decoder is to use a decaying power allocation across sections. As shown in Fig.  \ref{fig:sparc_nPl}, the non-zero coefficients in the message vector $\beta$ are 
\[ 
c_1= \sqrt{nP_1}, \ c_2= \sqrt{nP_2}, \ldots, c_L=\sqrt{nP_L}.
\]
Without loss of generality, we assume that the power allocation is non-increasing across sections, i.e., $P_1 \geq  P_2 \ldots \geq P_L$. 
\begin{figure}
\centering
\includegraphics[width = 4.5in]{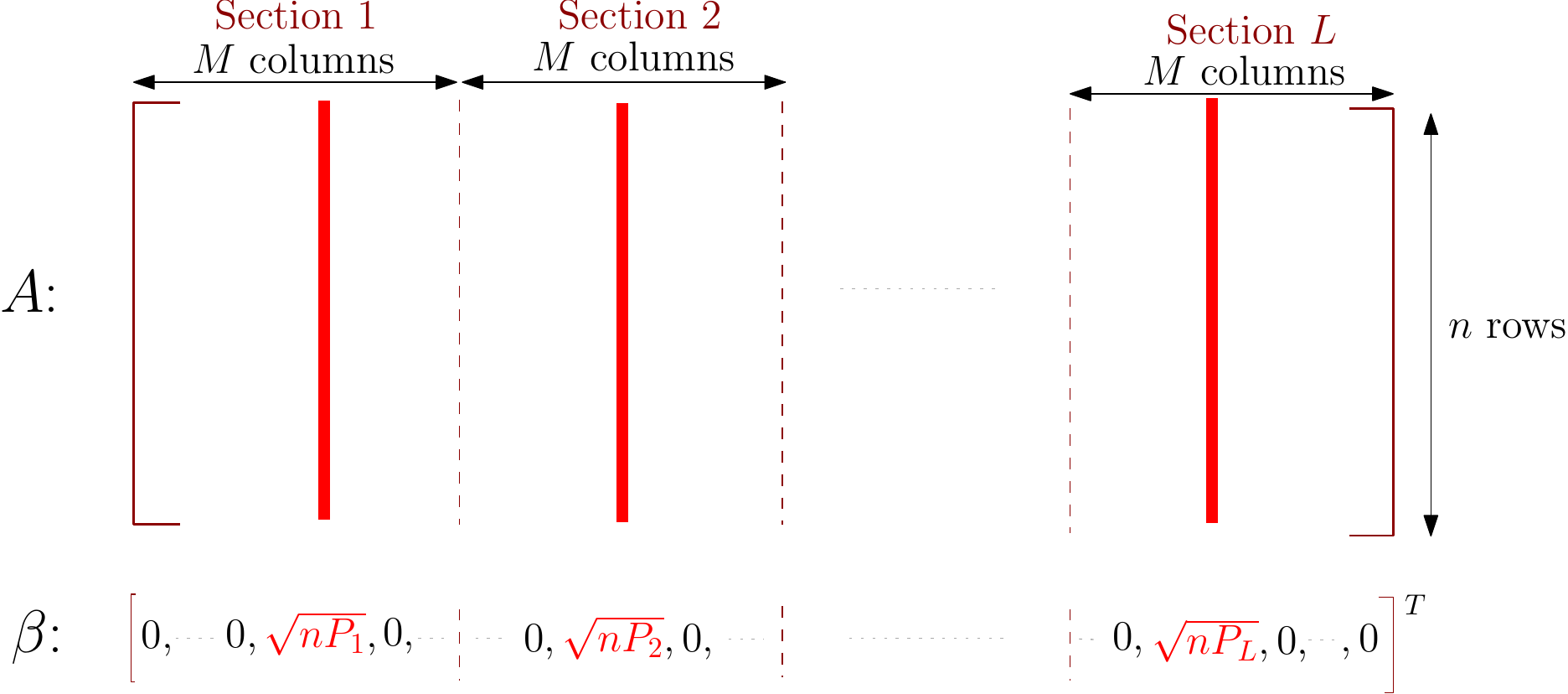}
\caption{\small{A Gaussian sparse regression codebook with power allocation. The power allocation coefficients $P_1, \ldots, P_L$ are of order $\frac{1}{L}$ and satisfy $P_1+ \ldots + P_L=P$.}}
\vspace{-5pt}
\label{fig:sparc_nPl}
\end{figure}
Denoting  the column of $A$ corresponding to the $\ell$th non-zero entry of $\beta$ by $A_{i_{\ell}}$, for $\ell \in [L]$, the received sequence $y \in \re^n$  is 
\be y = \sqrt{nP_1} A_{i_{1}} +  \sqrt{nP_2} A_{i_{2}} + \ldots +  \sqrt{nP_L} A_{i_{L}} + w. \label{eq:rx_cwd} \ee

The decoding task is to recover the non-zero locations $i_1, \ldots, i_L$.
The idea of power allocation is to facilitate an iterative decoder that first decodes (either exactly or approximately) the sections with the highest power, then the sections with the next highest power etc.  Correctly recovering a subset of the indices, say allows the decoder to cancel their contribution from $y$, thereby making the decoding of the remaining sections easier.  

In the next section, we discuss an adaptive `hard-decision' decoder based on the successive cancellation idea above. In Sections \ref{sec:adap_soft_dec} and \ref{sec:AMP_dec}, we discuss two `soft-decision' versions of the iterative decoder. All three decoders are asymptotically capacity-achieving, but the soft-decision decoders have better finite length error performance.   In the next chapter, we will discuss how to design alternative power allocations to optimize finite length error performance.

We emphasize that the decoders above do not  pre-specify an order in which the sections are decoded. Rather, the power allocation makes it likely that sections with higher power are decoded before those with lower power.  This is similar in spirit to how algorithms such as Orthogonal Matching Pursuit for  recovering sparse vectors  can be significantly more powerful when the magnitudes of the non-zero coefficients have a decaying profile \cite{herzet2016relaxed}.

\section{Adaptive successive hard-decision decoding}

For our theoretical results we will use the following exponentially decaying allocation, with the power in section $\ell$ proportional to $e^{-2\mc{C} \ell /L}$: 
 \be  
 P_\ell = P \cdot  \frac{e^{2\mc{C}/L} -1}{1- e^{-2\mc{C}}} \cdot e^{-2\mc{C}\ell/L}, \quad  \ell \in [L].  \label{eq:exp_power_alloc} 
 \ee
Recalling that $\mc{C} = \frac{1}{2} \log (1+ \snr)$, we note that $1-e^{-2\mc{C}} =\snr/(1+\snr)$. 

This power allocation is motivated by thinking of the $L$ sections of the SPARC as corresponding to  $L$ users of a Gaussian multiple-access channel (MAC) with total power constraint $P$.   Indeed, consider  the equal-rate point  on the capacity region of a $L$-user Gaussian MAC where each user gets rate $\mc{C}/{L}$. It is well-known \cite{coverT12,elGKBook} that this rate point can be achieved  via successive cancellation decoding, where user $1$ is first decoded, then user $2$ is decoded after subtracting the codeword of user $1$, and so on. For this successive cancellation scheme, the power allocation for the $L$ users is determined by the following set of equations:
\be
\frac{1}{2} \log \left( 1+  \frac{P_\ell}{\sigma^2 + P_{\ell+1} \ldots + P_L} \right) =\frac{\mc{C}}{L}, \quad  \ell \in [L].
\label{eq:rec_PA_form}
\ee
Sequentially solving the set of equations in \eqref{eq:rec_PA_form}, starting from $\ell=L$, yields the exponentially decaying  power allocation in \eqref{eq:exp_power_alloc}. 

Continuing the analogy with an $L$-user MAC,  we ask: can the above successive cancellation scheme  be used for SPARC decoding to achieve rates close to $\mc{C}$?  Unfortunately, successive cancellation performs poorly for SPARC decoding. This is because  $L$, the number of sections (`users') in the codebook grows with $n$. Indeed, for the choice $M= L^{\textsf{a}}$, $L$ grows as  $n/\log n$, while $M$, the number of codewords per user, only grows polynomially in $n$.   An error in decoding one section affects the decoding of future sections,  leading to a large number of section errors after $L$ steps. We note that in the standard MAC set-up, the number of users $L$ remains constant as the code length $n$ grows; hence the rate per user is also of constant order. 

The first feasible SPARC decoder, proposed in \cite{AntonyFast}, controls the accumulation of section errors using  \emph{adaptive} successive decoding. The idea is to \emph{not} pre-specify the order  in which sections are decoded, but to look across all the undecoded sections in each step, and adaptively decode columns which have a large inner product with the residual.  The adaptive successive decoding algorithm proceeds as follows.

Given $y = A\beta + w$, start with estimate $\beta^0=0$. 
\paragraph{Initial step [$t=1$]}
  \begin{enumerate}
  \item Compute the inner product of $\sqrt{n} \,  {y}/{\norm{y}}$ with each column of $A$.
  \item Pick the columns corresponding to inner products that cross a {threshold} $\sqrt{2 \log M} +a$ to form $\beta^1$, for a fixed constant $a >0$.
  \item Form the  initial fit as weighted sum of columns: $\textsf{Fit}^1=A \beta^1$.
  \end{enumerate}
  
\paragraph{Iterate [step $t+1, \ t \geq 1$]}
\begin{enumerate}
 \item Compute the normalized residual $\textsf{Res}^t =  \sqrt{n}\, (y- \textsf{Fit}^{t})/{\norm{y -  \textsf{Fit}^{t}}}$.
  \item Compute the inner product of $\textsf{Res}^t$ with each  \emph{remaining} column of $A$.
  \item Pick  the columns that cross the threshold $\sqrt{2\log M}+a$  to form $\beta^{t+1}$.
  \item Compute the new fit $\textsf{Fit}^{t+1}=A \beta^{t+1}$.
  \end{enumerate}

\paragraph{Stop}  if there are no additional inner products above threshold, or after $(\snr)\log M$ steps.

\subsection{Intuition and analysis}

\paragraph{First step} The key observation is that  in Step $1$, the columns of $A$ that are not sent (i.e., correspond to a zero entries in $\beta$) will produce  normalized inner products whose joint distribution is close to i.i.d. $\mc{N}(0,1)$.  On the other hand, the column that was sent in section $\ell$ will produce an inner product that is close to a standard normal plus a shift of size $\sqrt{nP_\ell/(P+\sigma^2)}$.  This is made precise in the following lemma.
\begin{lemma} \cite[Lemma 3]{AntonyFast}
For $j \in [ML]$, let $A_j$ denote the $j$th column of $A$, and let $\mc{Z}_{1,j} = \sqrt{n} A_j^* y/\norm{y}$. Then for $j \in$ section $\ell$, $\ell \in [L]$  we have
 \be
 \mc{Z}_{1,j} \stackrel{d}{=} \sqrt{\frac{nP_\ell}{P+\sigma^2}} \,  \frac{\chi_n}{\sqrt{n}} \,   \mbf{1}\{ j \, \textsf{sent} \} + N_{1,j},
 \ee
 where $N_1=(N_{1,j}: 1 \leq j \leq ML)$ is multivariate normal with zero mean and covariance matrix $\textsf{I} - \frac{\beta \beta^*}{n(P+\sigma^2)}$. Furthermore, $\chi_n^2 = \norm{y}^2/(P+\sigma^2)$ is a Chi-square $n$ random variable that is independent of $N_1$.
 \label{lem:step1_dist}
\end{lemma}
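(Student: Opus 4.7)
The plan is to exploit the fact that $A$, $w$, and hence $y = A\beta + w$ are jointly Gaussian, decompose each column $A_j$ by Gaussian regression onto $y$, and read off both the deterministic shift and the noise covariance from the resulting orthogonal decomposition. First I would verify the marginal law of $y$: since $\norm{\beta}^2 = \sum_\ell nP_\ell = nP$ and the columns of $A$ are i.i.d.\ $\mc{N}(0, I_n/n)$, the codeword $A\beta$ is $\mc{N}(0, PI_n)$, so $y \sim \mc{N}(0, (P+\sigma^2)I_n)$ and $\chi_n^2 := \norm{y}^2/(P+\sigma^2) \sim \chi^2_n$.

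For each $j$, I would then write $A_j = \expec[A_j \mid y] + R_j$ using the standard Gaussian regression formula. A short computation with $\mathrm{Cov}(A_j, y) = (\beta_j/n)\, I_n$ and $\mathrm{Cov}(y) = (P+\sigma^2)I_n$ yields $\expec[A_j \mid y] = \beta_j\, y /[n(P+\sigma^2)]$ and the joint residual covariance $\mathrm{Cov}(R_j, R_k) = \tfrac{1}{n}\bigl(\mbf{1}\{j=k\} - \tfrac{\beta_j\beta_k}{n(P+\sigma^2)}\bigr) I_n$. Because everything is jointly Gaussian, the entire residual process $(R_j)_j$ is independent of $y$. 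Substituting into $\mc{Z}_{1,j} = \sqrt{n}\, A_j^* y/\norm{y}$ gives the split $\mc{Z}_{1,j} = \beta_j \norm{y}/[\sqrt{n}(P+\sigma^2)] + N_{1,j}$ with $N_{1,j} := \sqrt{n}\, R_j^* y/\norm{y}$. For $j$ in section $\ell$ with $\beta_j = \sqrt{nP_\ell}$, the shift simplifies algebraically to $\sqrt{nP_\ell/(P+\sigma^2)}\,(\chi_n/\sqrt{n})$; for $j$ not sent, $\beta_j = 0$ and the shift vanishes, giving the advertised representation.

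It then remains to identify the law of $N_1 = (N_{1,j})_j$. Because each $\mathrm{Cov}(R_j, R_k)$ is a scalar multiple of $I_n$, the conditional covariance computes as $\mathrm{Cov}(N_{1,j}, N_{1,k} \mid y) = (n/\norm{y}^2)\, y^* \mathrm{Cov}(R_j, R_k)\, y = \mbf{1}\{j=k\} - \beta_j\beta_k/[n(P+\sigma^2)]$, which is exactly the $(j,k)$ entry of $I - \beta\beta^*/[n(P+\sigma^2)]$. The essential point is that this conditional covariance is \emph{deterministic}: it does not depend on $y$. Hence $N_1$ is Gaussian, has the claimed covariance, and is independent of $y$ and therefore of $\chi_n$.

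The main thing to get right is the cancellation of $\norm{y}^2$ in the final step: if the residual covariance were not a scalar multiple of $I_n$, one would be left with a genuinely $y$-dependent noise covariance and the independence of $N_1$ and $\chi_n$ would fail. This cancellation hinges on $A$'s columns being i.i.d.\ isotropic Gaussians, so that both $\mathrm{Cov}(A_j, A_k)$ and $\mathrm{Cov}(A_j, y)$ are multiples of $I_n$. Beyond this observation the proof is routine linear algebra; I do not anticipate a deeper obstacle.
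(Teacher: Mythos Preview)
Your proposal is correct and follows essentially the same route as the paper: both compute the conditional (regression) decomposition $A_j = \tfrac{\beta_j}{n(P+\sigma^2)}\,y + R_j$ with $R_j$ independent of $y$, substitute into $\mc{Z}_{1,j}$, and then check that the conditional covariance of $N_{1,j} = \sqrt{n}\,R_j^* y/\norm{y}$ given $y$ is the constant matrix $\msf{I} - \beta\beta^*/[n(P+\sigma^2)]$. Your explicit remark about why the $\norm{y}^2$ cancellation (via the scalar-times-$I_n$ residual covariance) is the crux of the independence claim is a nice bit of added clarity, but the argument itself is the same.
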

\begin{proof}
Recall from \eqref{eq:rx_cwd} that 
$
y = \sqrt{nP_1} A_{i_{1}}  + \ldots +  \sqrt{nP_L} A_{i_{L}} + w,
$
where $i_1,\ldots, i_L$ denote the indices of the sent terms. Also recall that $w \sim \mc{N}(0,  \sigma^2 \msf{I})$ and $A_j \sim \mc{N}(0, \tfrac{1}{n} \msf{I})$ are i.i.d. for $1 \leq j \leq ML$.  Using this we find that the conditional distribution of $A_j$ given $y$, for $j$ in section $\ell$, is:
\be
A_j \mid y  \sim 
\begin{cases}
\mc{N}(0, \tfrac{1}{n} \msf{I}) & \text{if } j \neq i_\ell, \\
\mc{N}\left( y \,\frac{\sqrt{nP_\ell}}{n(P+\sigma^2)}, \,  \tfrac{1}{n} (1-\tfrac{P_\ell}{P+\sigma^2})\msf{I} \right) & \text{if } j = i_\ell.
\end{cases}
\ee
Hence the conditional distribution of $A_j$ given $y$ may be expressed as 
\be
A_j = \frac{1}{\sqrt{n}}\left( \frac{\beta_j}{P+\sigma^2} \, \frac{y}{\sqrt{n}} \, + \, U_j \right)
\label{eq:Aj_dist}
\ee
where $U_j \sim \mc{N}(0, (1-\tfrac{\beta_j^2}{n(P+\sigma^2)}))$ is independent of $y$. Moreover, for a given row index $i$, since $\expec[A_{j,i} A_{k,i}] = \tfrac{1}{n} \mbf{1}\{ j=k\}$,  we have 
$\expec[U_{j,i} U_{k,i}] = 
\tfrac{-\beta_j \beta_k}{n(P+\sigma^2)}$ for  $j\neq k$. Therefore, for any $i \in [n]$ the random vector $(U_{1,i}, \ldots, U_{ML,i})$ has distribution 
$\mc{N}(0, (1-\tfrac{\beta \beta^*}{n(P+\sigma^2)})\msf{I})$. 

From \eqref{eq:Aj_dist} we have
\be
\mc{Z}_{1,j} = \sqrt{n} A_j^* \frac{y}{\norm{y}} = \frac{\beta_j}{\sqrt{P+\sigma^2}} \frac{\norm{y}}{\sqrt{n(P+\sigma^2)}} +  \frac{U_j^*  y}{\norm{y}}.
\ee
Letting $N_{1,j}=\frac{U_j^*  y}{\norm{y}}$ and $N_1=(N_{1,j}: 1 \leq j \leq ML)$, to complete the proof we need to show that  $N_1$ is a multivariate normal that is independent of $y$ with covariance matrix 
$\msf{I} - \tfrac{\beta \beta^*}{n(P+\sigma^2)}$. Indeed, conditioning on any (non-zero) realization of $y$ it is seen that   $N_1$ is a $\mc{N}(0, (1-\tfrac{\beta \beta^*}{n(P+\sigma^2)})\msf{I})$ random vector. This completes the proof. 
\end{proof}

In Lemma \ref{lem:step1_dist}, since $\chi_n/\sqrt{n}$ is close to $1$ for large $n$, the shift  in the inner product corresponding to the sent term in section in $\ell$ is
\be
\sqrt{\frac{nP_\ell}{P+\sigma^2}} = \sqrt{\frac{LP_\ell}{R(P+\sigma^2)} \log M} \stackrel{(a)}{=} \sqrt{\frac{\mc{C}}{R}(1 +  O\big( \tfrac{1}{L} \big)) e^{-2 \mc{C} \ell/L} } \sqrt{2 \log M},
\label{eq:shift_step1}
\ee
where $(a)$ is obtained using the exponential power allocation in \eqref{eq:exp_power_alloc}, and the fact that $e^{2\mc{C}/L} -1 = \tfrac{2 \mc{C}}{L}(1+ O(\tfrac{1}{L}))$. Since $R< \mc{C}$ we observe from \eqref{eq:shift_step1} that the shift will be larger than $\sqrt{2 \log M}$  for $1 \leq \ell \leq \ell_0$, where $\ell_0$ is determined by $\mc{C}/R$.

On the other hand, for any column $j$ that is \emph{not} sent in section $\ell$, the shift is zero, and the test statistic  $\mc{Z}_{1,j}$ normal. Recalling that each section has $M$ columns, we note that the maximum of $M$ standard normals concentrates near $\sqrt{2 \log M}$ for large $M$ \cite{hall79rate}. Therefore, if the constant $a$ defining the threshold is chosen to be small compared to $\sqrt{2\log M}$, then the true columns in sections $1 \leq \ell \leq \ell_0$  are likely to have inner products that exceed the threshold $\sqrt{2\log M} +a$. On the hand, $a >0$ ensures that the probability of inner product of a wrong column crossing the threshold is small. It is evident that the value of $a$ determines the trade-off between the probabilities of false alarm and missed detection.

\paragraph{Subsequent steps}  Let $\tsf{dec}_t$ denote the set of sections decoded up to the end of step $t$. Then the residual $\tsf{Res}_t$ removes the contribution of the sections in  $\tsf{dec}_t$ from $y$. Assuming that no mistakes were made until step $t$, by analogy with the Step 1 analysis above we expect the shift for the sent term in  (a yet to be decoded) section $\ell$   to be close to 
\be  \sqrt{ \frac{n P_\ell}{ \sigma^2 + P(1-x_t)}}, \label{eq:later_shifts} \ee
where $x_t =  \frac{1}{P} \sum_{k \in \tsf{dec}_t} P_k$ is the fraction of power that has  already been decoded.  Thus as decoding successfully progresses, $x_t$ increases with $t$, making the shift in \eqref{eq:later_shifts} larger and facilitating the decoding of sections with lower power. 

However, establishing a result analogous to Lemma \ref{lem:step1_dist}  for $t>1$ is challenging. This is because the dependence between the residual $\tsf{Res}_t$ and the matrix $A$ cannot be easily characterized. Indeed, recall that $\tsf{Res}_t$ has been generated via decisions based on inner products with columns on $A$ computed  in previous steps. 

To address this, Barron and Joseph consider a slightly modified version of the decoder, where at  the end of each step $t$, we compute $G_t$, the part of 
$\tsf{Fit}_t$ that is orthogonal to $y, \tsf{Fit}_1, \ldots, \tsf{Fit}_{t-1}$. That is, with $G_0 \triangleq y$, the collection
\[
\frac{G_0}{\norm{G_0}},   \frac{G_1}{\norm{G_1}}, \ldots, \frac{G_t}{\norm{G_t}} 
\label{eq:g1g2def}
\]
forms an orthonormal basis for $\tsf{Fit}_1, \ldots, \tsf{Fit}_t$. Then in step $(t+1)$, instead of residual-based inner products, we compute the following test statistic for each column $j$ in an undecoded section:
\be
\mc{Z}^{\tsf{comb}}_{t,j} = (A_j)^* \left[ \lambda_0 \frac{G_0}{\norm{G_0}} + \ldots + \lambda_t \frac{G_t}{\norm{G_t}}  \right]
\label{eq:comb_test_stat}
\ee
where $\lambda_0, \ldots, \lambda_t$ are deterministic positive constants such that $\sum_{k=0}^t \lambda_k^2 =1$. For an appropriate choice of  $\lambda_k$'s, the test statistic in \eqref{eq:comb_test_stat} closely mimics the residual based statistic. Essentially,   $\lambda_k$  is chosen to be a deterministic proxy for the inner product
\[ \frac{ (\tsf{Res}_t)^* G_k}{\norm{\tsf{Res}_t} \norm{G_k}}. \]

With this choice, the test statistic $\mc{Z}^{\tsf{comb}}_{t,j}$ can be shown to have a distributional representation that is approximately a shifted normal. That is, 
\be
\mc{Z}^{\tsf{comb}}_{t,j}  \stackrel{(a)}{\approx}   \sqrt{ \frac{n P_\ell}{ \sigma^2 + P(1-x_t)}} \mbf{1}\{ j \, \tsf{sent} \} + N_{t,j},
\ee
where $N_{t,j}$ is normal zero-mean random variable with variance near $1$. The parameter $x_t$ quantifies the expected success rate, and can be interpreted as the expected fraction of power in the sections decoded by the end of iteration $t$.  It can be recursively computed as follows, starting from $x_0=0$. With $\tau= \sqrt{2\log M} +a$ denoting the threshold used in each step  and $\Phi$ denoting the standard normal distribution function, we have
 \begin{align}
& x_{t+1}   = \sum_{\ell=1}^L \frac{P_\ell}{P}  \, {\Phi}\left( \sqrt{ \frac{n P_\ell}{ \sigma^2 + P(1-x_t)}} - \tau \right) 
\label{eq:xt_hard_update} \\
& = \sum_{\ell=1}^L \frac{P_\ell}{P} \, {\Phi}\Bigg( \sqrt{2 \log M}
 \Bigg(  \sqrt{\frac{\mc{C}}{R} \cdot  \frac{\sigma^2 +P}{ \sigma^2 + P(1-x_t)}  e^{-2\mc{C} \ell/L}} -1 \Bigg) - a  + O(\tfrac{1}{L}) \Bigg)
 \label{eq:hard_update_fn2}
\end{align}
where \eqref{eq:hard_update_fn2} is obtained using the expression for $\sqrt{nP_\ell}$ from \eqref{eq:shift_step1}. 

Figure \ref{fig:SE_hard_dec_example} shows the progression of $(1-x_t)$ with $t$, for three different values of the threshold $\sqrt{2\log M}+a$, with $a=1,0.8, 0$. The parameter $(1-x_t)$ quantifies the expected fraction of power in the undecoded sections 
after iteration $t$. Observe from \eqref{eq:xt_hard_update} that a smaller value of the threshold $\tau$ results in a smaller value of $(1-x_t)$; this is illustrated by the curves shown in Fig. \ref{fig:SE_hard_dec_example}.  However, the  recursive formula \eqref{eq:xt_hard_update} is an idealized prediction: it gives the expected fraction of power in the decoded sections at the end of each iteration $t$, assuming that there are no \emph{false alarms}, i.e.,  no sections have been incorrectly decoded.  For finite block lengths the value of $a >0$ in the threshold $\tau = \sqrt{2\log M}+a$ plays a crucial role in determining the false alarm rate. The larger the value of $a$, the lower the probability of false alarms.

\begin{figure}
\centering
\includegraphics[width=3.4in]{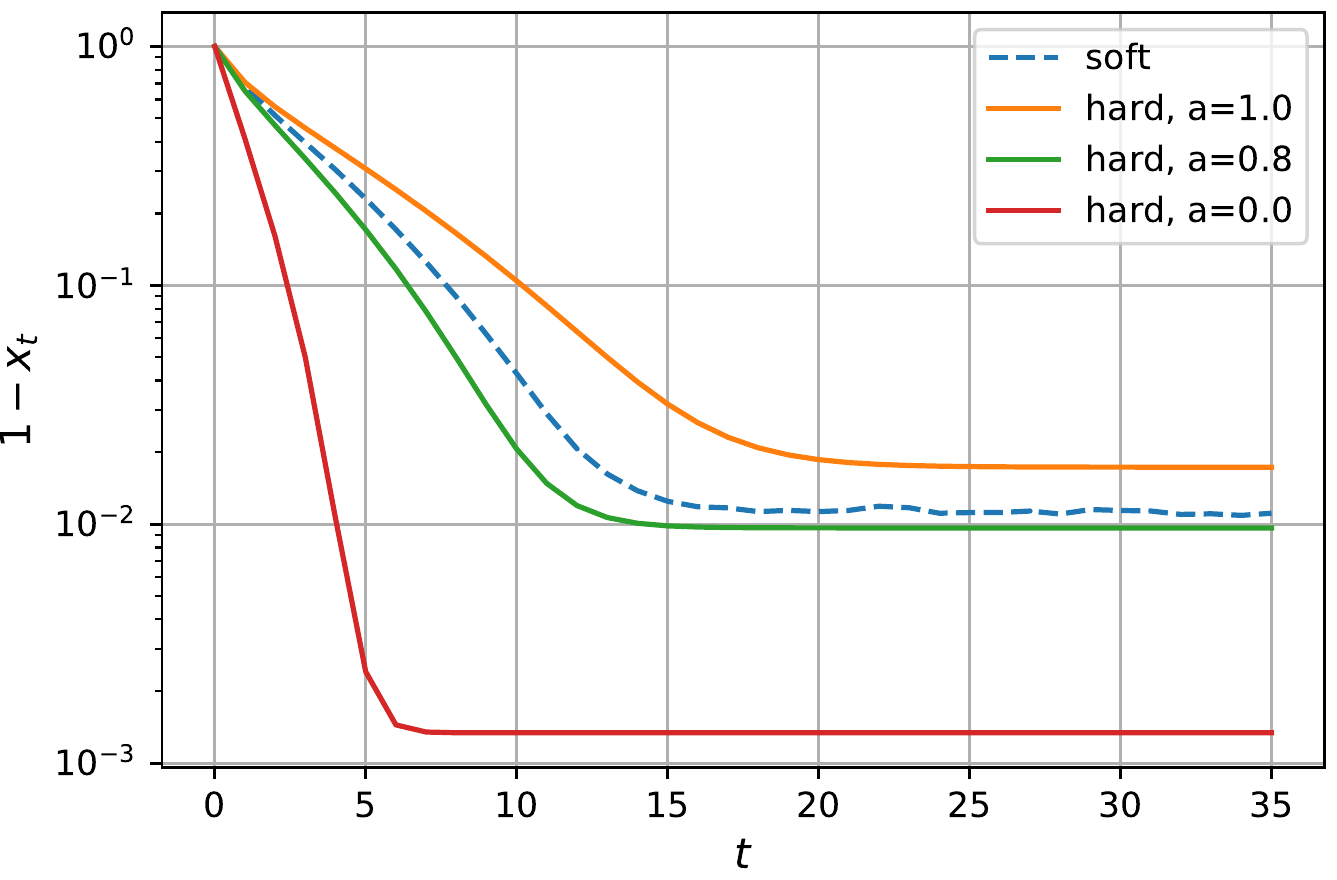}
\caption{\small{Evolution of $(1-x_t)$ with iteration  $t$. The SPARC parameters are $M=512, L=1024, \snr =15, R=0.8 \mc{C}, \, P_\ell \propto e^{-2\mc{C}{\ell}/{L}}$ with $\mc{C}$ in nats. Curves are shown for three different values of the threshold  $\tau=\sqrt{2\log M}+a$, with $a=1,0.8, 0$. The dashed curve shows the evolution of $(1-x_t)$ for the soft-decison decoder discussed in the next section.}}
\vspace{-7pt}
\label{fig:SE_hard_dec_example}
\end{figure}

 A rigorous statement specifying the distributional representation of $\mc{Z}^{\tsf{comb}}_{t,j}$, taking into account the false alarm rate, is given in \cite[Lemma 4]{AntonyFast}. This representation leads to the following performance guarantee for the decoder, which in essence states that rates up to
\be \mc{C}^* := \frac{\mc{C}}{1 + \delta_M} \label{eq:Cstar_def} \ee can be achieved with $O(\delta_M)$ fraction of section errors, where
\be
\delta_M := \frac{1}{\sqrt{\pi \log M}}.
\label{eq:deltaM_def}
\ee
 \begin{theorem} \cite[Theorem 2]{AntonyFast}
Let the rate $R < \mc{C}^*$ be expressed in the form
\be  \frac{\mc{C}^*}{1 + \tfrac{\kappa}{\log M}} \ee with $\kappa > 0$.
Then, with the exponentially decaying power allocation in \eqref{eq:exp_power_alloc}  the adaptive successive decoder has section error rate less than
\be \delta_{err} := \frac{1}{2 \mathcal{C} \sqrt{\pi \log M}} + \frac{3 \kappa + 5}{8 \mc{C} \log M} \ee
with probability at least $1-P_e$, where
\be P_e = \kappa_{1,M} e^{-\kappa_2 L \min\{ \kappa_3 \Delta^2, \ \kappa_4 \Delta \}}. \label{eq:Pe_def} \ee
In \eqref{eq:Pe_def}, $\Delta = \tfrac{\mc{C}^* - R}{\mc{C}^*}$,    $\kappa_{1,M}$ is a polynomial in $M$, and $\kappa_2, \kappa_3$ and $\kappa_4$ are constants that depend on $\snr$.
\label{thm:channel_cod}
\end{theorem}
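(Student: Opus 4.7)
The plan is to combine a distributional representation of the step-$t$ test statistics $\mc{Z}^{\tsf{comb}}_{t,j}$ with a state-evolution style analysis of the deterministic recursion \eqref{eq:xt_hard_update}, and then apply concentration plus a union bound over false alarms. Lemma \ref{lem:step1_dist} already handles $t=1$; the first (and most technical) step is to extend it to all $t \geq 1$: conditional on the past, the orthogonalized fits $G_0/\norm{G_0},\ldots,G_t/\norm{G_t}$ are shown to form an orthonormal basis whose span is close to the span of the sent columns in the decoded sections, and the deterministic weights $\lambda_0,\ldots,\lambda_t$ are chosen as proxies for the true (random) residual-projections $(\textsf{Res}_t)^* G_k/(\norm{\textsf{Res}_t}\norm{G_k})$. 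Because $\sum_k \lambda_k^2=1$, one can write $\mc{Z}^{\tsf{comb}}_{t,j} = A_j^\ast v_t$ with $v_t$ a unit vector that is, up to small errors, \emph{deterministic} relative to the columns whose sent-status one is currently testing. That decoupling is what makes $\mc{Z}^{\tsf{comb}}_{t,j}$ approximately $\mc{N}(\sqrt{nP_\ell/(\sigma^2+P(1-x_t))}\,\mathbf{1}\{j\text{ sent}\},\,1)$.

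Once that representation is in place, I would analyze the deterministic map $x_{t+1}=F(x_t)$ given by \eqref{eq:hard_update_fn2}. Using the exponential allocation \eqref{eq:exp_power_alloc}, the shift for section $\ell$ exceeds the threshold $\tau=\sqrt{2\log M}+a$ precisely when $e^{-2\mc{C}\ell/L}$ is larger than a quantity depending only on $x_t$ and $R/\mc{C}^\ast$. Writing $R=\mc{C}^\ast/(1+\kappa/\log M)$, a direct monotonicity argument shows that $x_t$ is non-decreasing and reaches, in $O(\log M)$ steps, a fixed point $x_\infty$ with $1-x_\infty$ bounded by a multiple of $\delta_M=1/\sqrt{\pi\log M}$. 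Plugging this bound back into \eqref{eq:xt_hard_update} and keeping track of finite-$M$ terms of order $1/\log M$ yields exactly the claimed section-error bound $\delta_{err}=\tfrac{1}{2\mc{C}\sqrt{\pi\log M}}+\tfrac{3\kappa+5}{8\mc{C}\log M}$.

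Next I would pass from the deterministic recursion to the random decoder. Let $\hat x_t$ be the true fraction of power correctly decoded by iteration $t$. Using the shifted-Gaussian representation, each correct-decoding indicator is (conditionally) a Bernoulli with mean close to $\Phi(\text{shift}-\tau)$; applying a Hoeffding-type concentration across the $L$ sections and a small perturbation argument for $F$ gives $\abs{\hat x_t-x_t}$ exponentially small with probability at least $1-\exp(-cL\min\{\Delta^2,\Delta\})$. The $\min\{\Delta^2,\Delta\}$ dichotomy comes from two regimes: when $\Delta$ is small, the deviation needed to push the test statistic across the threshold is proportional to $\Delta$ itself (giving a Gaussian-tail factor $\Delta^2$), while for moderate $\Delta$ one can use a sub-exponential tail (giving $\Delta$). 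False alarms are controlled by $\Pr[\mc{N}(0,1)>\sqrt{2\log M}+a]\leq M^{-1}e^{-a\sqrt{2\log M}}$ together with a union bound over at most $ML\cdot(\snr\log M)$ (column, iteration) pairs, which is absorbed into the polynomial factor $\kappa_{1,M}$.

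The hard part will be the first step: making the distributional representation rigorous uniformly across all iterations and all remaining columns, since the residuals and the sets of decoded sections depend on $A$ through a nonlinear sequence of thresholding events. Correctly choosing the $\lambda_k$'s so that $v_t$ is close to deterministic, and bounding the cumulative error in this approximation by something that does not overwhelm the Gaussian shift gap (which is only of order $1/\sqrt{\log M}$ for the marginal sections), is delicate; this is where I would lean heavily on \cite[Lemma 4]{AntonyFast} and on the fact that $\chi_n^2/n$ concentrates very tightly. Everything after that reduces to standard concentration, monotone fixed-point analysis of $F$, and union bounding, with constants packaged into $\kappa_{1,M},\kappa_2,\kappa_3,\kappa_4$.
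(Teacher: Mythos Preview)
Your proposal is correct and follows essentially the same approach as the paper's discussion preceding the theorem: the paper itself does not prove Theorem~\ref{thm:channel_cod} but cites \cite{AntonyFast}, sketching the same three ingredients you identify --- the approximate shifted-Gaussian representation of $\mc{Z}^{\tsf{comb}}_{t,j}$ via the orthogonalized fits (with the technical core deferred to \cite[Lemma~4]{AntonyFast}, as you also note), the monotone analysis of the deterministic recursion \eqref{eq:xt_hard_update}, and concentration plus a union bound over false alarms to pass from the idealized $x_t$ to the actual decoder. Your identification of the distributional representation as the delicate step, and of the threshold constant $a$ as governing the false-alarm/missed-detection trade-off, matches the paper's emphasis exactly.
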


\begin{remark}
As in Proposition \ref{thm:msg_error}, we can concatenate the SPARC with an outer Reed-Solomon code of rate $(1-2\delta_{err})$ to guarantee that the message error probability is bounded by $P_e$ in \eqref{eq:Pe_def}.  Thus Theorem \ref{thm:channel_cod} tells us that the adaptive successive decoder  can achieve rates of the order of $1/\sqrt{\log M}$ below capacity.

Choosing $L=M^{\tsf{a}}$,  we have  $M$ of order $(n/\log n)^{\tsf{a}}$, and hence the minimum gap from capacity is of order $1/\sqrt{\log n}$. This gap is much larger than that of the optimal decoder, which can achieve rates up to order $1/n^\alpha$ below capacity with error probability decaying exponentially in $n^{1-2\alpha}$, for any $\alpha \in (0, \tfrac{1}{2})$  (see Remark \ref{rem:ML_gap_to_cap}).
\end{remark}

\begin{remark}
It is shown in \cite[Sec. 4.18]{AntonyThesis}  that the gap from capacity can be improved to $O(\log \log M/\log M)$ using a power allocation that is slightly modified from the one in \eqref{eq:exp_power_alloc}. The power $P_\ell$ is now chosen proportional to
\[ \max\{e^{-\frac{2 \mc{C} \ell}{L}}, \ e^{-2\mc{C}}(1 + \tfrac{c}{\sqrt{2 \log M}}) \}, \]
for a suitably chosen constant $c$. This allocation slightly boosts the power for sections $\ell$ close to $L$. This helps ensure that, even towards the end of the algorithm, there will be sections for which the true terms are expected to have inner product above threshold.
\label{rem:improving_gap}
\end{remark}

\section{Iterative soft-decision decoding} \label{subsec:iter_soft_dec}

 Theorem \ref{thm:channel_cod} shows that the adaptive successive hard-thresholding decoder is asymptotically capacity-achieving. However,   the empirical section error rate  at practically  feasible code lengths is rather high for rates near capacity. We now discuss two  soft-decision decoders,  the adaptive successive soft-decision decoder  and the approximate message passing (AMP) decoder, which have better error performance at finite code lengths. Instead of making hard decisions about which columns to decode in each step, the soft-decision decoders generate  iteratively refined estimates of the message vector in each step. 
 Both soft-decision decoders share a few key underlying principles. We first discuss these principles in this section. The specifics of the two decoding algorithms  are then described in the next two sections.
 
The decoder starts with $\beta^0=0$ (the all-zero vector of length $ML$), and generates an updated estimate of  the message vector in each step; these estimates are denoted by $\beta^1, \beta^2, \ldots$. The key idea in soft-decision decoding  is to form the new estimate in each step by updating  the posterior probabilities of each entry of $\beta$ being the true non-zero in its section. This is done  as follows.

At the end of each step $t$, the decoder produces a test  statistic $\statt \in \reals^{ML}$   that has the form 
 \be
 \statt \approx \beta + \tau_t Z,
 \label{eq:desired_dist}
 \ee 
 where $Z$ is a standard normal random vector independent of $\beta$.  That is, $\statt$ is approximately distributed as the true message vector  plus an independent standard Gaussian vector with known variance $\tau_t^2$.  The test statistic $\statt$ is produced based on  $y, A$ and the previous estimates $\beta^1, \ldots, \beta^{t}$. The details of how $\statt$ is produced to ensure that \eqref{eq:desired_dist} holds depend on the type of soft-decision decoder used. These details are described in Sections \ref{sec:adap_soft_dec} and \ref{sec:AMP_dec}.   
 
 In step $(t+1)$, the decoder  generates an updated estimate $\beta^{t+1}$ based on $\statt$. Assuming that the distributional property in \eqref{eq:desired_dist} exactly holds at the end of step $t$, the Bayes-optimal estimate for $\beta$ that minimizes the expected squared error in the next step $(t+1)$ is 
 \be \beta^{t+1} =\eta^t(\statt) := \expec[\beta \mid \beta + \tau_t Z =\statt]. \ee
 
 The conditional expectation above can be computed as follows using the known prior on $\beta$ in which the location of the non-zero within section is uniformly random. For $\statt =s =(s_1, \ldots, s_{ML})$ and index $i \in \text{sec}(\ell)$, $\ell \in [L]$ we have
\be
\label{eq:cond_exp_beta}
\begin{split}
\eta^t_i(\statt = s) & =  \expec[\beta_i \mid  \beta + \tau_t Z = s ]   = \expec[\beta_i \mid  \{ \beta_j + \tau_t Z_j = s_j \}_{j \in \text{sec}(\ell)} ] \\
& = \sqrt{n P_\ell} \ P(\beta_i = \sqrt{n P_\ell} \mid   \{ \beta_j + \tau_t Z_j = s_j \}_{j \in \text{sec}(\ell)} )\\
&=   \frac{ \sqrt{n P_\ell} \, f( \{s_j \}_{j \in \text{sec}(\ell)} \mid \beta_i = \sqrt{n P_\ell}) \, P(\beta_i = \sqrt{n P_\ell})}
{\sum_{k \in \text{sec}(\ell)}  f( \{s_j \}_{j \in \text{sec}(\ell)} \mid \beta_k = \sqrt{n P_\ell}) \, P(\beta_k = \sqrt{n P_\ell})}
\end{split}
\ee
where we have used Bayes' theorem with  $f(\cdot | \beta_k=\sqrt{nP_\ell})$ denoting the joint density of $\{ \beta_j + \tau_t Z_j \}_{j \in \text{sec}(\ell)}$ conditioned on $\beta_k$ being the non-zero entry in section $\ell$.  Since  $\beta$ and $Z$ are independent with $Z$ having i.i.d.\ $\mc{N}(0,1)$ entries,  for each $k \in \text{sec}(\ell)$ we have
\be
\begin{split}
& f( \{ \beta_j + \tau_t Z_j =s_j \}_{j \in \text{sec}(\ell)} \mid \beta_k = \sqrt{n P_\ell})  \\
& \propto e^{-(s_k - \sqrt{nP_\ell})^2/2 \tau_t^2} \prod_{j \in \text{sec}(\ell), j \neq k } e^{-s_j^2/2 \tau_t^2}.
\end{split}
\label{eq:cond_denf}
\ee
Using  \eqref{eq:cond_denf} in \eqref{eq:cond_exp_beta},  together with the fact that $P(\beta_k = \sqrt{n P_\ell}) = \frac{1}{M}$ for each $k \in \text{sec}(\ell)$, we obtain
\be
 \eta^{t}_i(\statt=s) = \expec[\beta_i \, | \,  \beta + \tau_t Z = s] = \sqrt{nP_\ell}  \frac{e^{s_i \sqrt{n P_\ell}/\tau^2_t}}
{\sum_{j \in \text{sec}(\ell)} \, e^{s_j \sqrt{n P_\ell} / \tau^2_t}}.
\label{eq:etat_def}
\ee

 \subsection{State evolution}
 
 To compute $\beta^{t+1}$ using  \eqref{eq:etat_def} requires the parameter $\tau_t^2$,  which is the  variance of the noise in the desired distributional representation $\statt = \beta + \tau_t Z$. This noise variance  has two components: one  is the channel noise variance $\sigma^2$, and the other is the mean-squared estimation error $\tfrac{1}{n}\expec \norm{\beta-\beta^t}^2$.  
 
Starting  with $\tau_0^2 = \sigma^2 +P$, we recursively compute $\tau_{t+1}^2$ for $t \geq 0$ as follows: 
\be 
\tau_{t+1}^2 =  \sigma^2 +  \frac{1}{n} \, \expec   \norm{\beta -  \expec[ \beta | \beta +  \tau_{t} Z] }^2 =  \sigma^2 +  \frac{1}{n} \, \expec   \norm{\beta -  \eta_t(\beta +  \tau_t Z)] }^2, 
\label{eq:tau_t_rec}
\ee
where the expectation on the right is over $\beta$ and  the independent standard normal vector $Z$. The recursion \eqref{eq:tau_t_rec} to generate $\tau_{t+1}^2$ from $\tau^2_t$ can be   written as
 \be \tau_{t+1}^2 = \sigma^2 + P(1 - x_{t+1})  \label{eq:tau_def} 
 \ee
where $x_{t+1} = x(\tau_t)$, with 
\be
\begin{split}
 x(\tau) := \sum_{\ell=1}^{L} \frac{P_\ell}{P} \, \expec \left[
\frac{\exp\left\{ \frac{\sqrt{n P_\ell}}{\tau} \, \left(U^{\ell}_1  + \frac{\sqrt{n P_\ell}}{\tau}\right)\right\} }{\exp\left\{ \frac{\sqrt{n P_\ell}}{\tau} \, \left(U^{\ell}_1  + \frac{\sqrt{n P_\ell}}{\tau}\right)\right\}  + \sum_{j=2}^M 
\exp\left\{ \frac{\sqrt{n P_\ell}}{\tau}U^{\ell}_j \right\}} \right].
\end{split}
\label{eq:xt_tau_def}
\ee
In \eqref{eq:xt_tau_def}, $\{ U^\ell_j\}$ are i.i.d.\ $\mc{N}(0,1)$ random variables for $j\in [M]$ and $\ell \in [L]$. For consistency, we define $x_0=0$.

The equivalence between  the recursions in  \eqref{eq:tau_t_rec} and \eqref{eq:tau_def} is established by the following proposition. 
\begin{proposition} \cite{RushGV17}
Under the assumption that $\statt= \beta + \tau_t U$, where $U \in \reals^{ML}$ is standard normal  and independent of $\beta$,  the quantity $x_{t+1} = x(\tau_t)$ satisfies
\be
x_{t+1} = \frac{1}{nP} \expec[\beta^* \beta^{t+1}], \quad  1-x_{t+1} = \frac{1}{nP} \expec[ \norm{\beta - \beta^{t+1}}^2],
\label{eq:xt_betatbeta}
\ee
and consequently, \eqref{eq:tau_t_rec} and \eqref{eq:tau_def} are equivalent.
\label{prop:se_cons}
\end{proposition}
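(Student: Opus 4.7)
The plan is to establish the two identities in \eqref{eq:xt_betatbeta} directly and then deduce the equivalence of the two recursions as an immediate corollary. The key observation is that under the hypothesis $\statt = \beta + \tau_t U$ with $U$ standard normal and independent of $\beta$, the denoiser $\eta^t$ defined in \eqref{eq:etat_def} coincides with the conditional expectation $\expec[\beta \mid \statt]$; this is precisely what the Bayes calculation in \eqref{eq:cond_exp_beta}--\eqref{eq:etat_def} shows. Hence $\beta^{t+1} = \expec[\beta \mid \statt]$, and the standard orthogonality property of conditional expectations yields
\[
\expec[\beta^{*} \beta^{t+1}] = \expec[\|\beta^{t+1}\|^2], \qquad \expec[\|\beta - \beta^{t+1}\|^2] = \expec[\|\beta\|^2] - \expec[\|\beta^{t+1}\|^2].
\]

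The first step is to verify $x_{t+1} = \tfrac{1}{nP}\expec[\beta^{*} \beta^{t+1}]$ by a sectionwise computation. Expand
\[
\expec[\beta^{*} \beta^{t+1}] = \sum_{\ell=1}^L \sum_{i \in \text{sec}(\ell)} \expec[\beta_i \, \eta^t_i(\statt)].
\]
Since $\beta_i$ equals $\sqrt{nP_\ell}$ on the event that index $i$ is the non-zero coordinate of section $\ell$ (probability $1/M$) and vanishes otherwise, only this event contributes. Conditioned on it, the noise coordinates within section $\ell$ are i.i.d.\ $\mc{N}(0,1)$; relabeling the noise on the true coordinate as $U_1^\ell$ and on the other $M-1$ coordinates as $U_2^\ell,\ldots,U_M^\ell$, direct substitution into \eqref{eq:etat_def} turns $\eta^t_i(\statt)/\sqrt{nP_\ell}$ into exactly the fraction appearing in the $\ell$th summand of \eqref{eq:xt_tau_def}. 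By symmetry every $i \in \text{sec}(\ell)$ contributes $\frac{1}{M}\cdot nP_\ell\,\expec[\text{fraction}]$, so summing over the $M$ indices in the section gives $nP_\ell\,\expec[\text{fraction}]$. Summing over $\ell$ and dividing by $nP$ recovers $x(\tau_t) = x_{t+1}$.

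The second identity then requires no further calculation. Because $\|\beta\|^2 = \sum_{\ell=1}^L nP_\ell = nP$ deterministically under the power allocation convention, the orthogonality display above reduces to $\expec[\|\beta - \beta^{t+1}\|^2] = nP - \expec[\beta^{*} \beta^{t+1}] = nP(1 - x_{t+1})$. Substituting this into the right-hand side of \eqref{eq:tau_t_rec} yields $\tau_{t+1}^2 = \sigma^2 + P(1-x_{t+1})$, which is \eqref{eq:tau_def}, so the two recursions coincide.

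The principal obstacle is purely bookkeeping in the sectionwise step: one must carefully match the posterior softmax weights in \eqref{eq:etat_def} with the normalized Gaussian variables $\{U_j^\ell\}$ defining $x(\tau)$, and verify that the $\sqrt{nP_\ell}$ factors together with the $1/M$ prior probabilities combine cleanly to produce $nP_\ell$ per section. Once that matching is done, the Bayes-optimality of $\eta^t$ together with the fact that $\|\beta\|^2$ is a deterministic constant makes the remainder of the argument automatic.
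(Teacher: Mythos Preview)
Your proposal is correct and follows essentially the same route as the paper's proof: both compute $\tfrac{1}{nP}\expec[\beta^*\beta^{t+1}]$ by a direct sectionwise reduction to the definition of $x(\tau_t)$ in \eqref{eq:xt_tau_def}, and both derive the second identity from the orthogonality property $\expec[\|\beta^{t+1}\|^2]=\expec[\beta^*\beta^{t+1}]$ of the conditional expectation. The only differences are cosmetic: you invoke orthogonality at the outset while the paper saves it for the end, and your accounting via the $1/M$ prior is slightly more explicit than the paper's direct ``only the sent index contributes, and by symmetry take it to be the first'' shortcut.
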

\begin{proof}
For convenience of notation, we label the $ML$ components of the standard normal vector $U$ as
$\{U^\ell_j\}_{j \in [M], \ell \in [L]}$.  For any $\ell$, $U^\ell$ denotes the length $M$ vector $\{U^\ell_j\}_{j \in [M]}$. We  have
\be
\begin{split}
& \frac{1}{nP} \expec[\beta^* \beta^{t+1}] = \frac{1}{nP} \expec[\beta^* \,  \eta^t(\beta + \tau_t U)] \\
& \stackrel{(a)}{=}  \  \frac{1}{nP} \sum_{\ell=1}^L \expec[ \sqrt{n P_\ell} \ \eta^t_{\textsf{sent}(\ell)}(\beta_\ell + \tau_t U^\ell)  ] \\
& \stackrel{(b)}{=} \frac{1}{nP} \sum_{\ell =1}^L \expec \left[ \sqrt{nP_\ell} \, 
\frac{\sqrt{nP_\ell} \cdot e^{\sqrt{n P_\ell} (\sqrt{nP_\ell} + \tau_t U^\ell_1)/ \tau^2_t}}
{e^{{\sqrt{n P_\ell} (\sqrt{nP_\ell} + \tau_t U^\ell_1)}/{\tau^2_t}}  + 
\sum_{j =2}^M e^{{\sqrt{n P_\ell}  \tau_t U^\ell_j}/{\tau^2_t}}  } \right] \\
& = \sum_{\ell=1}^{L} \frac{P_\ell}{P} \, \expec \left[
\frac{e^{\frac{\sqrt{n P_\ell}}{\tau_t} \, (U^{\ell}_1  + \frac{\sqrt{n P_\ell}}{\tau_t})} }
{e^{\frac{\sqrt{n P_\ell}}{\tau_t} \, (U^{\ell}_1  + \frac{\sqrt{n P_\ell}}{\tau_t})} + \sum_{j=2}^M e^{\frac{\sqrt{n P_\ell}}{\tau_t}U^{\ell}_j} } \right] = x_{t+1}.
\end{split}
\label{eq:Ebbt}
\ee
In $(a)$ above, the index of the non-zero term in section $\ell$ is denoted by $\textsf{sent}(\ell)$. Step $(b)$ is obtained by assuming that $\textsf{sent}(\ell)$ is the first entry in section $\ell$ --- this assumption is valid because the prior on $\beta$ is uniform over $\mcb(P_1,\ldots,P_L)$. 

Next  consider
\be
\frac{1}{nP} \expec[ \norm{\beta - \beta^{t+1}}^2 ] = 1 + \frac{ \expec[\norm{\beta^{t+1}}^2]- 2 \expec[\beta^* \beta^{t+1}]}{nP}.
\label{eq:betat_beta_sq}
\ee
Under the assumption that $\statt= \beta + \tau_t Z$, recall from Section \ref{subsec:iter_soft_dec} that $\beta^{t+1}$ can be expressed as $\beta^{t+1} = \expec[\beta \mid \statt]$. 
We therefore have
\begin{align}
& \expec[\norm{\beta^{t+1}}^2] = \expec[ \,  \norm{\expec[ \beta | \statt]}^2 \, ]=   \expec[ \, (\expec[ \beta | \statt] - \beta + \beta)^* \expec[ \beta | \statt ]]  \nonumber \\
& \stackrel{(a)}{=} \expec[ \, \beta^* \expec[ \beta | \statt]  \,] = \expec[ \, \beta^* \beta^{t+1}],
\label{eq:betat_sq}
\end{align}
where step $(a)$ follows because $ \expec[ \, (\expec[ \beta | \statt] - \beta)^* \expec[ \beta | \statt] \, ] =0$ due to the orthogonality principle. Substituting \eqref{eq:betat_sq} in \eqref{eq:betat_beta_sq} and using \eqref{eq:Ebbt} yields \[ \frac{1}{nP} \expec[ \norm{\beta - \beta^{t+1}}^2 ] = 1 - \frac{ \expec[ \, \beta^* \beta^{t+1} \, ]}{nP}  = 1 - x_{t+1}. \]
\end{proof}

The parameter $x_t$ can be interpreted as the power-weighted fraction of sections correctly decodable after step $t$: starting from $x_0=0$. The recursion defined by \eqref{eq:xt_tau_def} and \eqref{eq:tau_def} to compute the parameters $(x_t, \tau_t^2)_{t=0,1,\ldots}$ is called \emph{state evolution}. This terminology is due to the similarity with density evolution, the recursion used to predict the performance of LDPC codes \cite{RichUBook}.  

Figure \ref{fig:SE_hard_dec_example}  on page \pageref{fig:SE_hard_dec_example} shows the progression of $(1-x_t)$ for soft-decision decoding in dashed lines, alongside the  solid lines for  hard-decision decoding. For the soft-decision case, $x_t$ is recursively computed using the state evolution recursion in \eqref{eq:xt_tau_def} and \eqref{eq:tau_def}.	
 As we do not make hard decisions on decoded columns until the end, there is no false alarm rate to be controlled in each iteration. If the iterative soft decision decoder is run for $T$ steps, we  wish to ensure that $x_T$ is as close to one as possible, implying that   the expected squared error $\frac{1}{n} \expec \norm{\beta - \beta^{T}}^2 \approx 0$ under the distributional assumption for $\statt$.  

Figure \ref{fig:SE_example} shows the progression of the MSE $\tfrac{1}{n} \| \beta - \beta^t \|^2$ for 200 trials of the AMP decoder (green curves); it is seen that the average is closely tracked by 
$(1-x_t)$ (black curve). The theoretical analysis of the soft-decision decoders discussed in the next two sections shows that the decoding performance of the soft-decision decoders in each step $t$ is closely tracked by the parameter $x_t$ as the SPARC parameters $(L,M,n)$ grow large.

\begin{figure}[t]
\centering
\includegraphics[width=3.4in]{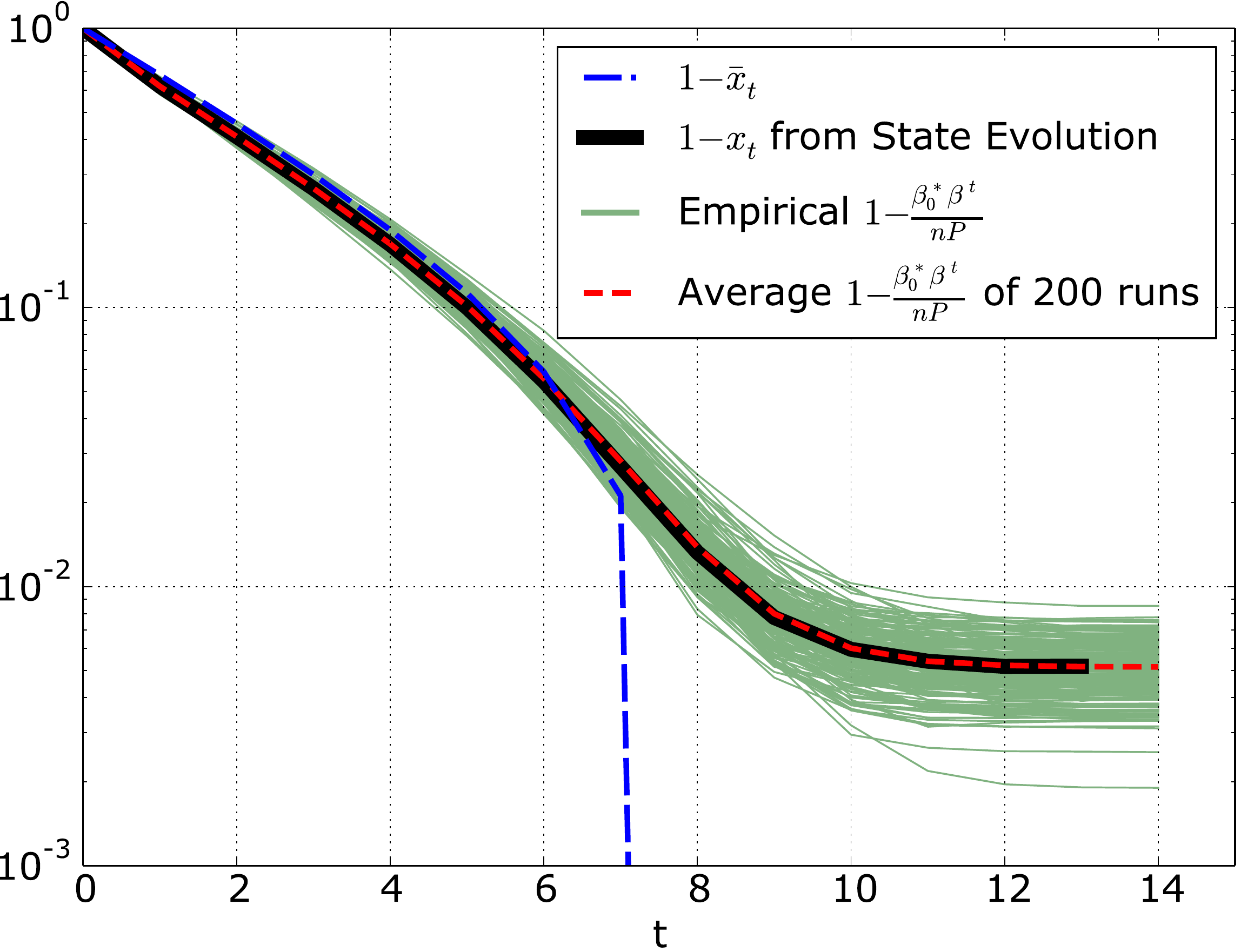}
\caption{\small{Comparison of state evolution predictions with AMP performance. The SPARC parameters are $M=512, L=1024, \snr =15, R=0.7 \mc{C}, \, P_\ell \propto e^{-2\mc{C}{\ell}/{L}}$ with $\mc{C}$ in nats. The average of the $200$ trials (green curves) is the dashed red curve, which is almost indistinguishable from the state evolution prediction (black curve).}}
\vspace{-7pt}
\label{fig:SE_example}
\end{figure}

The following lemma specifies the state evolution recursion in the large system limit, i.e., as $L, M, n \to \infty$ such that $L \log M = nR$. We denote this limit by $\lim$.
\begin{lemma} \cite[Lemma 1]{RushGV17}
For any power allocation $\{ P_\ell \}_{\ell=1, \ldots, L}$ that is non-increasing with $\ell$, we have
\be
\bar{x}(\tau) := \lim x(\tau) = \lim \,  \sum_{\ell=1}^{ \lfloor \xi^*(\tau) L \rfloor} \frac{P_\ell}{P}, 
\label{eq:xt_tau_def1}
\ee
where $\xi^*(\tau)$ is the supremum of all $\xi \in (0,1]$ that satisfy
\[ \lim L P_{ \lfloor \xi L \rfloor} >  2 R \, \tau^2. \]
 If $ \lim L P_{ \lfloor \xi L \rfloor} \leq  2 R \, \tau^2$ for all $\xi > 0$, then $\bar{x}(\tau)=0$. (The rate $R$ is measured in nats.)
\label{lem:conv_expec}
\end{lemma}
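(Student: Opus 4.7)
The plan is to analyze the per-section expected value
\[
E_\ell := \expec\!\left[ \frac{e^{\mu_\ell(U^\ell_1+\mu_\ell)}}{e^{\mu_\ell(U^\ell_1+\mu_\ell)} + \sum_{j=2}^M e^{\mu_\ell U^\ell_j}} \right], \qquad \mu_\ell := \frac{\sqrt{nP_\ell}}{\tau},
\]
so that $x(\tau)=\sum_{\ell=1}^L (P_\ell/P) E_\ell$, and to show that $E_\ell$ undergoes a sharp $0$--$1$ transition according to whether $\mu_\ell^2$ exceeds the threshold $2\log M$. The key algebraic reduction is that, using $nR=L\log M$,
\[
\frac{\mu_\ell^2}{2\log M}\;=\;\frac{LP_\ell}{2R\tau^2},
\]
so the transition point $\mu_\ell^2=2\log M$ corresponds exactly to $LP_\ell=2R\tau^2$, which is the condition defining $\xi^*(\tau)$. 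Note also that $\mu_\ell=O(\sqrt{\log M})$ uniformly in $\ell$, since $LP_\ell\leq P$ and $n/L=\log M/R$; this controls all Gaussian-tail corrections below.

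For the \emph{above-threshold} regime $\mu_\ell^2\geq(1+\delta)\cdot 2\log M$, the goal is $E_\ell\to 1$. Writing $A:=e^{\mu_\ell(U^\ell_1+\mu_\ell)}$ and $B:=\sum_{j\geq 2}e^{\mu_\ell U^\ell_j}$, I would use $1-E_\ell=\expec[B/(A+B)]\leq \expec[\min(1,B/A)]$. Conditioning on $\{|U^\ell_1|\leq K\}$ (complementary probability $2\Phi(-K)$, chosen small) and using the Gaussian MGF $\expec[e^{\mu_\ell U^\ell_j}]=e^{\mu_\ell^2/2}$,
\[
\expec\!\left[B/A\mid U^\ell_1=u\right]=(M-1)\,e^{-\mu_\ell^2/2-\mu_\ell u}\;\leq\; M\,e^{-\mu_\ell^2/2+\mu_\ell K}\;=\;\exp\bigl(-\delta\log M+O(\sqrt{\log M})\bigr)\to 0,
\]
where the last equality uses $\mu_\ell=O(\sqrt{\log M})$, so the polynomial decay beats the sub-exponential factor. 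A Markov bound then yields $E_\ell\to 1$ with convergence uniform over every section that meets the threshold hypothesis.

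For the \emph{below-threshold} regime $\mu_\ell^2\leq(1-\delta)\cdot 2\log M$, I would show $E_\ell\to 0$ by lower-bounding $B\geq e^{\mu_\ell V}$ with $V:=\max_{j\geq 2}U^\ell_j$. The classical extreme-value fact $V=\sqrt{2\log M}+o_p(1)$ combined with $|U^\ell_1|\leq K$ gives, on a high-probability event,
\[
\log(A/B)\;\leq\;\mu_\ell(U^\ell_1+\mu_\ell-V)\;\leq\; \mu_\ell(K+\eta)+\mu_\ell^2-\mu_\ell\sqrt{2\log M}.
\]
Since the hypothesis forces $\mu_\ell\sqrt{2\log M}\geq \mu_\ell^2/\sqrt{1-\delta}$, the right-hand side tends to $-\infty$ (the quadratic term dominates when $\mu_\ell\to\infty$, while $V\to\infty$ dominates when $\mu_\ell$ stays bounded). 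Hence $A/(A+B)\to 0$ in probability, and $E_\ell\to 0$ uniformly over such sections.

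Finally, I would assemble the sum. Given $\varepsilon>0$, use the supremum definition of $\xi^*(\tau)$ and the monotonicity of $P_\ell$ to select $\xi'<\xi^*(\tau)<\xi''$ with $\lim LP_{\lfloor\xi'L\rfloor}>2R\tau^2(1+\delta)$, $\lim LP_{\lfloor\xi''L\rfloor}<2R\tau^2(1-\delta)$, and intermediate total power $\lim\sum_{\xi'L<\ell\leq\xi''L}P_\ell/P<\varepsilon$. Monotonicity of $P_\ell$ then propagates the above-threshold bound to every $\ell\leq\xi'L$ and the below-threshold bound to every $\ell\geq\xi''L$, so
\[
\lim\sum_{\ell\leq\xi'L}\tfrac{P_\ell}{P}\;\leq\;\liminf x(\tau)\;\leq\;\limsup x(\tau)\;\leq\;\lim\sum_{\ell\leq\xi''L}\tfrac{P_\ell}{P}+\varepsilon,
\]
and sending $\xi',\xi''\to\xi^*(\tau)$ and $\varepsilon\downarrow 0$ pinches both sides to the claimed value; the edge case $\xi^*(\tau)=0$ is handled identically with $\xi'$ omitted. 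The main obstacle is precisely this last step: verifying that one can indeed find $\xi''>\xi^*(\tau)$ with a \emph{strict} gap $\lim LP_{\lfloor\xi''L\rfloor}<2R\tau^2$, and that the cumulative-power function $\xi\mapsto\lim\sum_{\ell\leq\xi L}P_\ell/P$ is continuous at $\xi^*(\tau)$ so that the middle band has vanishing total power. Both properties hold automatically for the exponentially decaying profile $P_\ell\propto e^{-2\mathcal{C}\ell/L}$ that the lemma is intended to cover, so the regularity can be invoked rather than proved in generality.
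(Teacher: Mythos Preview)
Your proposal is correct and follows essentially the same route as the paper's proof: both decompose $x(\tau)$ into per-section expectations, identify the threshold $\mu_\ell^2=2\log M$ via $nR=L\log M$, handle the above-threshold case through the Gaussian MGF (the paper uses Jensen's inequality on $c/(c+X)$ where you use a Markov bound on $B/A$, and conditions on $\{U_1>-(\log M)^{1/4}\}$ rather than a fixed window) and the below-threshold case through the extreme-value behavior of $\max_j U_j$, and both explicitly defer the boundary-band regularity to the exponential allocation. One small slip to fix: the justification ``$LP_\ell\leq P$'' is false in general (for the exponential profile $LP_1\approx 2\mathcal{C}(P+\sigma^2)>P$); the correct reason $\mu_\ell=O(\sqrt{\log M})$ is simply that $LP_\ell=O(1)$, which is implicit in the hypothesis that $\lim LP_{\lfloor\xi L\rfloor}$ exists and is finite.
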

\begin{proof} 
In Sec. \ref{subsec:conv_exp_proof}.
\end{proof}

Recalling that ${x}_{t+1}=x(\tau_t)$ is the expected power-weighted fraction of correctly decoded sections after step $(t+1)$,   for any power allocation $\{P_\ell \}$, Lemma \ref{lem:conv_expec}  can be interpreted as follows: in the large system limit,  sections $\ell$ such that $\ell \leq  \lfloor \xi^*(\bar{\tau}_t) L \rfloor$ will be correctly  decodable in step $(t+1)$, i.e., the soft-decision decoder will assign most of the posterior probability mass to the correct  term. Conversely all sections whose power falls below the threshold will not be decodable in this step.

For the  exponentially decaying power allocation in \eqref{eq:exp_power_alloc}, we have for $\xi \in (0,1]$:
\be
\lim L P_{ \lfloor \xi L \rfloor}  =  \sigma^2 (1+\snr)^{1-\xi} \ln(1+\snr).
\label{eq:cell}
\ee
Using this in Lemma \ref{lem:conv_expec} yields the following result. 
\begin{lemma}  \cite[Lemma 2]{RushGV17}
For the power allocation $\{ P_\ell \}$ given in \eqref{eq:exp_power_alloc}, we have for $t=0,1,\ldots$:
\begin{align}
\bar{x}_{t} & := \lim x_{t} = \frac{ (1+ \snr) - (1+ \snr)^{1- \xi_{t-1}}}{\snr} \label{eq:limxt1}, \\
\bar{\tau}^2_{t}&  := \lim \tau^2_{t}  = \sigma^2 + P(1 - \bar{x}_t) = \sigma^2\left( 1 + \snr \right)^{1-\xi_{t-1}} \label{eq:limtaut1}
\end{align}
where $\xi_{-1}=0$, and for $t \geq 0$,
\be
\begin{split}
 \xi_{t} & = \min \left\{ \left(\frac{1}{2\mc{C}}\log\left(\frac{\mc{C} }{R}\right) +  \xi_{t-1}\right), \ 1  \right\}.
\end{split}
\label{eq:lim_alph}
\ee
\label{lem:lim_xt_taut}
\end{lemma}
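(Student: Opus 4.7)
The plan is a straightforward induction on $t$, combining Lemma \ref{lem:conv_expec} with the already-computed asymptotic form \eqref{eq:cell}. For the base case $t=0$, the convention $\xi_{-1}=0$ gives $\bar{\tau}_0^2 = \sigma^2(1+\snr) = \sigma^2 + P$ and $\bar{x}_0 = 0$, agreeing with the initial values $\tau_0^2 = \sigma^2 + P$ and $x_0 = 0$ used to seed the state evolution. For the inductive step, I will assume \eqref{eq:limxt1}--\eqref{eq:lim_alph} hold at time $t$ and derive them at time $t+1$ in three substeps: (i) identify $\xi^*(\bar{\tau}_t) = \xi_t$; (ii) evaluate the geometric sum $\sum_{\ell \le \lfloor \xi_t L \rfloor} P_\ell/P$ to obtain $\bar{x}_{t+1}$; (iii) recover $\bar{\tau}_{t+1}^2$ from \eqref{eq:tau_def}.

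For substep (i), substituting the inductive expression $\bar{\tau}_t^2 = \sigma^2(1+\snr)^{1-\xi_{t-1}}$ together with \eqref{eq:cell} into the defining inequality of Lemma \ref{lem:conv_expec} gives
\[ \sigma^2 (1+\snr)^{1-\xi}\, \ln(1+\snr) \ > \ 2R\,\sigma^2(1+\snr)^{1-\xi_{t-1}}. \]
Dividing, using $2\mc{C} = \ln(1+\snr)$, and taking logarithms base $(1+\snr)$ reduces this to $\xi < \xi_{t-1} + \tfrac{1}{2\mc{C}}\log(\mc{C}/R)$. Taking the supremum over $\xi \in (0,1]$ yields $\xi^*(\bar{\tau}_t) = \min\{\xi_{t-1} + \tfrac{1}{2\mc{C}}\log(\mc{C}/R),\,1\} = \xi_t$, matching \eqref{eq:lim_alph}. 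For substep (ii), the normalizing constant in \eqref{eq:exp_power_alloc} is chosen precisely so that the finite geometric series $\sum_{\ell=1}^{K} P_\ell/P$ telescopes to $(1 - e^{-2\mc{C}K/L})/(1 - e^{-2\mc{C}})$; setting $K = \lfloor \xi_t L \rfloor$ and passing to the limit with $e^{2\mc{C}} = 1+\snr$ gives $\bar{x}_{t+1} = ((1+\snr) - (1+\snr)^{1-\xi_t})/\snr$, which is \eqref{eq:limxt1}. For substep (iii), inserting this into $\bar{\tau}_{t+1}^2 = \sigma^2 + P(1 - \bar{x}_{t+1})$ and using $\snr = P/\sigma^2$ simplifies directly to $\bar{\tau}_{t+1}^2 = \sigma^2(1+\snr)^{1-\xi_t}$, which is \eqref{eq:limtaut1}.

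The one subtle point, and the only real obstacle, is the saturation at $\xi_t = 1$: once $\xi_{t-1} + \tfrac{1}{2\mc{C}}\log(\mc{C}/R) \ge 1$, the supremum in Lemma \ref{lem:conv_expec} is simply $1$, corresponding to all sections being decodable, at which point $\bar{x}_{t+1} = 1$ and $\bar{\tau}_{t+1}^2 = \sigma^2$. Both the unsaturated and saturated regimes are captured uniformly by the closed-form expressions above, so no separate case analysis is needed beyond noting that the $\min$ in \eqref{eq:lim_alph} enforces the constraint $\xi \in (0,1]$ inherited from Lemma \ref{lem:conv_expec}. Taking the limit inside the finite partial sum is legitimate because the sum is already a closed geometric expression, so no dominated-convergence or tightness argument is required beyond what Lemma \ref{lem:conv_expec} already delivers.
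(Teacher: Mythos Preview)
Your proposal is correct and follows exactly the approach the paper indicates: the paper's own proof is the single sentence ``The result is obtained by applying Lemma \ref{lem:conv_expec} with the exponential power allocation, and using induction on $t$,'' and you have faithfully filled in the details of that induction. Your three substeps (identifying $\xi^*(\bar\tau_t)=\xi_t$ via \eqref{eq:cell}, summing the geometric series for the $P_\ell$, and plugging into \eqref{eq:tau_def}) are precisely what is required, and the handling of the saturation at $\xi_t=1$ is correct.
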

\begin{proof} 
The result is obtained by applying Lemma \ref{lem:conv_expec} with the exponential power allocation, and using induction on $t$.
\end{proof}
A direct consequence of \eqref{eq:limxt1} and \eqref{eq:lim_alph} is that $\bar{x}_t$ strictly increases with $t$ until it reaches one, and the number of steps $T^*$ until  $\bar{x}_{T^*}=1$  is
$T^* = \left\lceil \frac{2 \mc{C}}{\log(\mc{C}/R)} \right\rceil$.

The constants $\{ \xi_t \}_{t\geq 0}$ have a nice interpretation in the large system limit: at the end of step $t+1$, the first $\xi_t$ fraction of sections in $\beta^{t+1}$ will be correctly decodable with high probability. The other $(1- \xi_t)$ fraction of sections \emph{will not} be correctly decodable from $\beta^{t+1}$ as the power allocated to these sections is not large enough. An additional $\tfrac{1}{2\mc{C}}\log\left(\tfrac{\mc{C} }{R}\right)$ fraction of sections become correctly decodable in each step until step $T^*$, when all the sections are correctly decodable with high probability. 

The discussion in this section --- starting from the way the estimates $(\beta^t)_{t \geq 1}$ are generated, up to the interpretation of the state evolution parameters $(x_t, \tau_t^2)_{t \geq 0}$ --- has been based on the assumption that the decoder has available test statistics of the form $\statt = \beta + \tau_t Z$ at the end of each iteration. In the next two sections, we will describe two decoders which produce $\statt$ of approximately this form when the SPARC parameters $(L,M,n)$ are sufficiently large.

\section{Adaptive successive soft-decision decoder} \label{sec:adap_soft_dec}

The soft-decision decoder proposed by Barron and Cho \cite{BarronC12,choBarron13,choThesis} computes the test statistic $\statt$ at the end of step $t$ as a function of $(A, y, \tsf{Fit}_1, \ldots, \tsf{Fit}_t)$, where we recall $\tsf{Fit}_t = A\beta^t$.   As in hard decision decoding (see p. \pageref{eq:g1g2def}), starting with $G_0 \triangleq y$, let 
 $G_t$ be the part of $\tsf{Fit}_t$ that is orthogonal to $G_0, \ldots, G_{t-1}$.
Then the collection
\[
\frac{G_0}{\norm{G_0}},   \frac{G_1}{\norm{G_1}}, \ldots, \frac{G_t}{\norm{G_t}} 
\]
forms an orthonormal basis for $y, \tsf{Fit}_1, \ldots, \tsf{Fit}_t$.  Also define, for $t \geq 0$: 
\be
\mc{Z}_t = \sqrt{n} \, \frac{A^*G_t}{\norm{G_t}}
\label{eq:Zt_def}
\ee
We  compute a linear combination of $\mc{Z}_0, \ldots, \mc{Z}_t$ given by
\be
\mc{Z}_t^{\tsf{comb}}= \lambda_0 \mc{Z}_0 + \ldots + \lambda_t \mc{Z}_t, 
\label{eq:Ztcomb_def}
\ee
where $\lambda_0, \ldots, \lambda_t$ are coefficients chosen such that $\sum_{k=0}^t \lambda_k^2=1$. 
(These coefficients may depend on $A$ and $y$.)  The adaptive successive soft-decision decoder then computes the statistic 
$\statt = \tau_t \mc{Z}^{\tsf{comb}}_t + \beta^t$, where $\tau_t$ is the state evolution parameter defined in \eqref{eq:tau_def} and $\beta^t$ is the estimate at the end of step $t$.  The new estimate is generated as $\beta^{t+1} =\eta_t(\statt)$, where $\eta_t$ is defined in \eqref{eq:etat_def}. The algorithm is summarized in Fig. \ref{fig:adap_sd_alg}.

The key question  is: how do we choose coefficients $\un{\lambda}_t = (\lambda_0, \ldots, \lambda_t)$ such that $\statt$ 
 has the desired representation $\statt \approx \beta + \tau_t Z$.   To answer this, we use the following lemma which specifies the conditional distribution of the components $\mc{Z}_t$ defined in \eqref{eq:Zt_def}. We need some definitions before stating the result.
 
 \begin{figure}[t]
 \begin{tcolorbox}
\textbf{Step $0$}: Initialize $\beta^0 = 0$ and $G_0 =y$.

\vspace{5pt}
\textbf{Step $t+1$}, for $0 \leq t \leq (T-1)$: 
\begin{enumerate}
\item Compute $\tsf{Fit}_t = A\beta^t$

\item If $t \geq 1$, compute $G_t$, the orthogonal projection of $\tsf{Fit}_t$ onto the space orthogonal to $G_0, \ldots, G_{t-1}$.

\item Compute $\mc{Z}_t  = \sqrt{n} \, {A^*G_t}/{\| G_t \|}$, and 
\begin{align*}
\mc{Z}_t^{\tsf{comb}}  = \lambda_0 \mc{Z}_0 + \ldots + \lambda_t \mc{Z}_t, 
\end{align*}
where $(\lambda_0, \ldots, \lambda_t)$ are given by  \eqref{eq:lambda_det}.

\item Compute  $\statt = \tau_t \mc{Z}^{\tsf{comb}}_t + \beta^t$ where $\tau_t$ is given by \eqref{eq:tau_def}.

\item Generate the updated estimate $\beta^{t+1} =\eta_t(\statt)$, where $\eta_t$ is defined in \eqref{eq:etat_def}.
\end{enumerate}
The number of iterations $T$ is determined using the state evolution recursion, as discussed on p. \pageref{thm:choBar}.
 \end{tcolorbox}
 \caption{Adaptive successive soft-decision decoder with deterministic coefficients of combination.}
 \label{fig:adap_sd_alg}
\end{figure}

Let $b_{0,e},b_{1,e},\ldots,b_{t,e} \in \reals^{ML+1}$  be the successive {orthonormal} components of the length of the   \emph{extended} vectors 
\begin{equation}
\beta_e := \begin{bmatrix} \beta \\ \sqrt{n} \, \sigma \end{bmatrix},
\quad 
\beta^1_e:=\begin{bmatrix} \beta^1 \\ 0   \end{bmatrix},
\quad
\ldots,
\quad
\beta^t_e := \begin{bmatrix}  \beta^t \\ 0  \end{bmatrix}.
\label{eq:ext_defs}
\end{equation}
Let $b_{0},\ldots,b_t \in \reals^{ML}$ be the vectors formed  from the upper $ML$ coordinates of $b_{0,e},\ldots,b_{t,e}$. Let $\Sigma_{t,e}= \sfi -b_{0,e}b_{0,e}^*-b_{1,e}b_{1,e}^* \ldots-b_{t,e} b_{t,e}^*$
denote the $(ML+1) \times (ML+1)$ projection matrix onto the space orthogonal to the vectors in \eqref{eq:ext_defs}.  The upper left  $ML \times ML$ portion of this matrix is denoted $\Sigma_t$.

\begin{lemma} \cite[Lemma 1]{BarronC12}
For $t \geq 0$, let $$\mc{F}_{t-1}=(\mc{Z}_0, \norm{G_0}, \ldots, \mc{Z}_{t-1}, \norm{G_{t-1}}),$$ with $\mc{F}_{-1}$ being the empty set. Then for $t\geq 0$, given $\mc{F}_{t-1}$, the conditional distribution $\pr_{\mc{Z}_t | \mc{F}_{t-1}}$ of $\mc{Z}_t$ is determined by the representation 
\be
\mc{Z}_t = b_t \frac{\norm{G_t}}{\varsigma_t} + Z_t,
\ee
where $Z_t$ has conditional distribution $\mc{N}(0, \Sigma_t)$. Here, $\varsigma_0^2 = \sigma^2 +P$ and for $t \geq 1$ it is $\varsigma_t^2 = {\hat{\beta}^*}_{t}\varsigma_{t-1} \beta^t$. Moreover, $\norm{G_t}^2/\varsigma_t^2$ is distributed as a $\chi_{n-t}^2$ random variable independent of $Z_t$ and $\mc{F}_{t-1}$. 
\label{eq:key_cond_dist}
\label{lem:cond_dist_adap}
\end{lemma}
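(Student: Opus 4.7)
The plan is to reduce the claim to a Gaussian conditional-distribution calculation by folding the noise into an extended design matrix. Define $\tilde{A} = [\,A \mid w/(\sqrt{n}\sigma)\,] \in \reals^{n \times (ML+1)}$; because $w \sim \mc{N}(0,\sigma^2 \sfi_n)$ is independent of $A$, the extended matrix $\tilde{A}$ has i.i.d.\ $\mc{N}(0,1/n)$ entries. With this extension $y = \tilde{A}\beta_e$ and $\tsf{Fit}_k = A\beta^k = \tilde{A}\beta_e^k$, so conditioning on the algorithm's history through step $t$ is, up to an invertible deterministic map, equivalent to conditioning on the $t$ linear functionals $\tilde{A}b_{0,e},\ldots,\tilde{A}b_{t-1,e}$, where the $b_{k,e}$ are the Gram--Schmidt orthonormalization of $\beta_e,\beta_e^1,\ldots,\beta_e^{t-1}$ already appearing in the lemma statement.

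By the standard conditional-distribution identity for matrices with i.i.d.\ Gaussian entries,
\[
\tilde{A} \;\stackrel{d}{=}\; \sum_{k=0}^{t-1} \bigl(\tilde{A}b_{k,e}\bigr)\, b_{k,e}^{*} \;+\; \tilde{A}'\,\Sigma_{t-1,e},
\]
where $\tilde{A}'$ is an independent copy of $\tilde{A}$. Applying this representation to $\beta_e^t$, the first sum contributes a vector inside $V_t := \mathrm{span}(G_0,\ldots,G_{t-1})$, while the second contributes $\sqrt{n}\,\varsigma_t\,\tilde{A}'b_{t,e}$, using the Gram--Schmidt identity $\Sigma_{t-1,e}\beta_e^t = \sqrt{n}\,\varsigma_t\, b_{t,e}$. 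Projecting onto $V_t^{\perp}$ annihilates the first term, leaving
\[
G_t \;=\; \sqrt{n}\,\varsigma_t\, P_{V_t^\perp}\bigl(\tilde{A}'b_{t,e}\bigr).
\]
Because $\tilde{A}'b_{t,e} \sim \mc{N}(0, \sfi_n/n)$ and $V_t^{\perp}$ is conditionally fixed of dimension $n-t$, this immediately gives $\|G_t\|^2/\varsigma_t^2 \sim \chi^2_{n-t}$ in the conditional law.

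For the conditional law of $\mc{Z}_t$ I compute $\tilde{A}^{*}G_t$ from the same representation. Since $G_t \in V_t^\perp$ kills every $\tilde{A}b_{k,e}$ with $k \leq t-1$, we have $\tilde{A}^{*}G_t = \Sigma_{t-1,e}(\tilde{A}')^{*}G_t$. Now decompose each row of $\tilde{A}'$ along $b_{t,e}$: writing the $i$th row as $a_i = g_i\, b_{t,e} + \tilde{a}_i$ with $g_i := a_i^{*}b_{t,e}$, the vector $g = (g_1,\ldots,g_n)^{*}$ equals $\tilde{A}'b_{t,e}$ and so generates $G_t$, while the $\tilde{a}_i$ are independent of $g$ with $\tilde{a}_i \sim \mc{N}(0, P_{b_{t,e}^\perp}/n)$. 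A short calculation using $G_t^{*}g = \|G_t\|^2/(\sqrt{n}\,\varsigma_t)$ and the projector identity $\Sigma_{t-1,e}(\sfi - b_{t,e}b_{t,e}^{*})\Sigma_{t-1,e} = \Sigma_{t,e}$ then yields
\[
\tilde{A}^{*}G_t \;=\; \frac{\|G_t\|^2}{\sqrt{n}\,\varsigma_t}\,b_{t,e} \;+\; \tilde{Z}_t, \qquad \tilde{Z}_t \,\big|\, \|G_t\| \sim \mc{N}\!\left(0,\,\tfrac{\|G_t\|^2}{n}\,\Sigma_{t,e}\right).
\]
Multiplying by $\sqrt{n}/\|G_t\|$ and restricting to the first $ML$ coordinates delivers $\mc{Z}_t = b_t\,\|G_t\|/\varsigma_t + Z_t$ with $Z_t \sim \mc{N}(0,\Sigma_t)$; crucially, the covariance of $\sqrt{n}\,\tilde{Z}_t/\|G_t\|$ no longer involves $\|G_t\|$, so $Z_t$ and $\|G_t\|$ are independent.

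The main obstacle is upgrading uncorrelatedness to genuine independence: $G_t$ and $\tilde{Z}_t$ are both built from the same $\tilde{A}'$. The column decomposition $\tilde{A}' = \tilde{A}'b_{t,e}b_{t,e}^{*} + \tilde{A}'(\sfi-b_{t,e}b_{t,e}^{*})$ is what cleanly splits off the randomness that produces $G_t$ (entirely captured by $g$) from that driving the residual (captured by the $\tilde{a}_i$'s), and verifying the identity $\Sigma_{t-1,e} P_{b_{t,e}^\perp} \Sigma_{t-1,e} = \Sigma_{t,e}$ is what makes the residual's covariance collapse to the desired $\Sigma_t$ in the upper-left block. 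The remaining work is bookkeeping.
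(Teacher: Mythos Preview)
The paper does not give its own proof of this lemma; it is quoted from the cited reference. Your extended-matrix device $\tilde{A}=[A\mid w/(\sqrt{n}\sigma)]$ and the Gram--Schmidt bookkeeping are exactly the right machinery, and the downstream computation (splitting off $g=\tilde{A}'b_{t,e}$ from $\tilde{A}'(\sfi-b_{t,e}b_{t,e}^*)$, and the projector identity $\Sigma_{t-1,e}(\sfi-b_{t,e}b_{t,e}^*)\Sigma_{t-1,e}=\Sigma_{t,e}$) is correct and matches the argument in the original source.

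There is, however, a genuine gap in your conditioning step. You claim that $\sigma(\mc{F}_{t-1})$ coincides, up to an invertible map, with $\sigma(\tilde{A}b_{0,e},\ldots,\tilde{A}b_{t-1,e})$. This is false in both directions. First, for $k\geq 1$ the vector $b_{k,e}$ depends on $\beta^k$, which is a function of $\mc{Z}_0,\ldots,\mc{Z}_{k-1}$; but knowing only $\tilde{A}b_{0,e}\propto y$ does \emph{not} determine $\mc{Z}_0=\sqrt{n}A^*y/\|y\|$, so $b_{1,e}$ is not even measurable with respect to your proposed $\sigma$-algebra. Second, and more structurally, $\mc{F}_{t-1}$ carries the row-projection information $A^*G_k$ (equivalently $\tilde{A}^*g_k$ with $g_k=G_k/\|G_k\|$), while your $\tilde{A}b_{k,e}$ are column-projections; neither $\sigma$-algebra contains the other.

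The repair is standard: first enlarge the conditioning to $(\mc{F}_{t-1},G_0,\ldots,G_{t-1})$, which is harmless because $\mc{Z}_t$ and $\mc{F}_{t-1}$ are invariant under left-multiplication of $\tilde{A}$ by any orthogonal matrix while the $G_k$ rotate, so the conditional law of $\mc{Z}_t$ does not depend on the extra directions. Under this enlarged conditioning one knows $P_{V_t}\tilde{A}$ exactly (from the $\tilde{A}^*g_k$), and the requirement $\tilde{A}b_{k,e}\in V_t$ for $k<t$ becomes the linear constraint $P_{V_t^\perp}\tilde{A}\,b_{k,e}=0$. The only free randomness is then $P_{V_t^\perp}\tilde{A}\,\Sigma_{t-1,e}$, an i.i.d.\ $\mc{N}(0,1/n)$ block of shape $(n-t)\times(ML+1-t)$. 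Your subsequent computation touches only this block (since $G_t\in V_t^\perp$ forces $\tilde{A}^*G_t=(P_{V_t^\perp}\tilde{A})^*G_t$), which is why your formulas for $\|G_t\|^2/\varsigma_t^2$ and $Z_t$ come out right; but as written the argument does not establish that you have conditioned on the correct $\sigma$-algebra.
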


As number of iterations of the algorithm is small compared to $n$, $\frac{\norm{G_t}}{\varsigma_t}$ is close to $\sqrt{n}$. The lemma tells us that for each $t$, $\mc{Z}$ is approximately equal to $\sqrt{n} b_t$ plus a standard normal vector. We use this property to choose  coefficients $(\lambda_0, \ldots, \lambda_t)$ which lead to $\statt$ having the desired form.

\paragraph{Idealized coefficients.} Consider the choice 
$\un{\lambda}_t^\tsf{id} = (\lambda_0, \ldots, \lambda_t)$ given by 
\be
\un{\lambda}_t^\tsf{id}  = \frac{1}{c^\tsf{id}_t} \left( (\sqrt{n(P+\sigma^2)} - b_0^*\beta^t), \, -b_1^*\beta^t, \ldots, -b_t^*\beta^t \right),
\label{eq:lambda_ideal}
\ee
where $c^\tsf{id}_t$ is a normalizing constant to ensure that $\sum_{k} \lambda_k^2 =1$.
Since $b_0= \beta/\sqrt{n(\sigma^2+P)}$ and the decoder does not know $\beta$, this choice of coefficients cannot be used in practice. We call these idealized coefficients because understanding the test statistic produced by these will help us design good deterministic or observation-based coefficients.   

Recalling that $b_{0,e}, \ldots, b_{t,e}$ form an orthonormal basis, the  normalizing constant  in \eqref{eq:lambda_ideal} is computed as 
\be
\begin{split}
 (c^\tsf{id}_t )^2  &  =  (\sqrt{n(P+\sigma^2)} - b_0^*\beta^t)^2 +  ( -b_1^*\beta^t)^2 + \ldots + (-b_t^*\beta^t)^2 \\
& = n(\sigma^2+P)  +  \norm{\beta^t}^2  - 2 \sqrt{P+\sigma^2} \frac{\beta^*  \beta^t}{\sqrt{P+\sigma^2} }  = n\sigma^2  + \norm{\beta -\beta^t}^2,
\end{split}
\ee
where we have used the fact that $\norm{\beta}^2=nP$.

Let us now examine the distributional properties of the $\mc{Z}_t^{\tsf{comb}}$ generated using these idealized coefficients via \eqref{eq:Ztcomb_def}.   Lemma \ref{lem:conv_expec} tells us that given $\mc{F}_{k-1}$,  $\mc{Z}_k$ is closely approximated by $\sqrt{n} b_k+ Z_k$ with $Z_k$ standard normal, for $0\leq k \leq t$. Therefore, with the idealized coefficients we obtain
\begin{align}
&  \mc{Z}^{\tsf{comb}}_t  = \lambda_0 \mc{Z}_0 + \ldots  + \lambda_t \mc{Z}_t   \nonumber \\
& \stackrel{d}{\approx} \frac{ \sqrt{n(P+\sigma^2)} \, \sqrt{n} b^0 - \left[ (b_0^*\beta^t)\sqrt{n}b_0 + \ldots  + (b_t^*\beta^t) \sqrt{n}b_t \right] }{c^{\tsf{id}}_t} + Z \nonumber \\
& = \frac{\sqrt{n} (\beta - \beta^t)}{c^{\tsf{id}}_t} +  Z = \frac{\beta - \beta^t}{\sqrt{\sigma^2 + \| \beta -\beta^t \|^2/n}} +  Z, 
\label{eq:Zcomb_dist}
\end{align}
where $Z \in \reals^{ML}$ is  standard normal.  If we assume (via an induction hypothesis) that $\tsf{stat}_{t-1}= \beta + \tau_{t-1}Z'$ for a standard normal vector $Z' \in \reals^{ML}$, then Proposition \ref{prop:se_cons} tells us that $\tfrac{1}{n} \expec[\| \beta - \beta^t \|^2] = P(1 - x_t)$. Therefore, for large $n$,  the term 
\ben \sigma^2 + \frac{\| \beta -\beta^t \|^2}{n} \approx \sigma^2+ P(1-x_t) = \tau_t^2. \een

Hence the statistic $\statt = \tau_t \mc{Z}^{\tsf{comb}}_t + \beta^t$ has the following approximate representation:
\be
\statt = \tau_t \mc{Z}^{\tsf{comb}}_t + \beta^t \approx \sqrt{\sigma^2 + \frac{\| \beta -\beta^t \|^2}{n}} \, 
\mc{Z}^{\tsf{comb}}_t + \beta^t  \stackrel{d}{=} \beta + \tau_t Z,
\label{eq:statt_gen}
\ee
where the distributional representation follows from \eqref{eq:Zcomb_dist}.

\paragraph{Deterministic coefficients.} Since $b_0 = \beta/\sqrt{n(\sigma^2+P)}$ is unknown,  the idealized coefficients in \eqref{eq:lambda_ideal} cannot be used to produce $\statt$.  We now specify a deterministic choice for $\un{\lambda}_t$ which mimics the idealized coefficients using deterministic proxies for the inner products $b_0^*\beta^t, b_1^*\beta^t, \ldots, b_t^*\beta^t$.  Recall that the vectors $b_{0,e}, \ldots, b_{1,e}$ are obtained by performing a successive orthonormalization on the extended vectors $(\beta_e, \beta^1_e, \ldots, \beta^t_e)$ defined in \eqref{eq:ext_defs}. We therefore have
\begin{equation}
B:= \left[ \beta_e \  \beta^1_e, \ldots \  \beta^t_e \right] = \left[ b_{0,e}  \quad b_{1,e} \ \ldots \ b_{k,e} \right]
\begin{bmatrix} 
b_{0,e}^* \beta_e & b_{0,e}^* \beta^1_e & \ldots & b_{0,e}^* \beta^t_e \\
0  & b_{1,e}^* \beta^1_e & \ldots &  b_{1,e}^* \beta^t_e  \\
\vdots & \vdots & \ddots  & \quad  \vdots  \quad  \\
0  & 0& \ldots &  b_{t,e}^* \beta^t_e
\end{bmatrix}.
\end{equation}
Noting that the last entry in each of $\beta^1_e, \ldots, \beta^t_e$ is zero, we observe that
\begin{align}
 B^* B = 
\begin{bmatrix}
{\beta_e^* \beta_e} & \beta^* \beta^1 & \ldots & \beta^* \beta^t \\ 
(\beta^1)^* \beta & {(\beta^1)^* \beta^1} & {\stackrel{}{\ldots}} & {(\beta^1)^* \beta^t} \\
\vdots & {\ \quad \vdots \  \quad} & { \ddots  } &  {\ \quad \vdots \ \quad } \\
(\beta^t)^* \beta &  {\stackrel{}{\quad \ldots \quad}}  &  {\stackrel{}{\ldots}}  & { (\beta^t)^* \beta^t}
\end{bmatrix}
= R^* R,
\label{eq:RTR}
\end{align}
where
\be
R = \begin{bmatrix} 
b_{0}^* \beta_e & b_{0}^* \beta^1 & \ldots & \highlightgr{b_{0}^* \beta^t} \\
0  & b_{1}^* \beta^1 & \ldots &  \highlightgr{b_{1}^* \beta^t}  \\
\vdots & \vdots & \ddots  & \quad  \vdots  \quad  \\
0  & 0& \ldots &  \highlightgr{b_{t}^* \beta^t}
\end{bmatrix}.
\label{eq:Rmat_def}
\ee

The high-level idea in obtaining the deterministic coefficients is as follows. Observe that the entries in the last column of $R$ (highlighted) are exactly those that are required to compute the idealized weights of combination in \eqref{eq:lambda_ideal}. These can be estimated by replacing each  entry of the matrix in \eqref{eq:RTR} with its idealized (deterministic) value, and then computing the Cholesky decomposition of this matrix.  The last column of the resulting upper triangular matrix then provides a deterministic proxy
 for the highlighted terms in \eqref{eq:Rmat_def}.

In detail, to obtain the deterministic coefficients we use the following result implied by Proposition \ref{prop:se_cons}:  under the assumption (via an induction hypothesis) that 
$\tsf{stat}^k = \beta + \tau_{k} Z'_k$ for $0\leq k \leq (t-1)$, we have 
\be
\frac{1}{n}\expec[\beta^* \beta^k] = \frac{1}{n}\expec[(\beta^m)^* \beta^k] =P x_k, \quad 1\leq k \leq m \leq t.
\ee
Therefore, replacing each entry of $B^* B/n$ by its expected value, we obtain the matrix
\be
\begin{bmatrix}
\tau_0^2 & x_1 P & \ldots & x_t P \\ 
x_1 P  & x_1 P & {\stackrel{}{\ldots}} & x_1 P \\
\vdots & {\ \quad \vdots \  \quad} & { \ddots  } &  {\ \quad \vdots \ \quad } \\
x_t P &  {\stackrel{}{\quad \ldots \quad}}  &  {\stackrel{}{\ldots}}  & x_t P
\end{bmatrix}
= \widehat{R}^*\widehat{R},
\ee
where $\widehat{R}$ is  the upper triangular matrix obtained via the Cholesky decomposition. This is found to be
\be
\widehat{R} = \begin{bmatrix} 
\tau_0 & \tau_0 - \tau_1^2 \sqrt{\omega_0} & \ldots & \highlightgr{\tau_0 - \tau_t^2 \sqrt{\omega_0}} \\
0  & \tau_1^2 \sqrt{\omega_1} & \ldots &  \highlightgr{ \tau_t^2 \sqrt{\omega_1} }  \\
\vdots & \vdots & \ddots  & \highlightgr{\quad  \vdots  \quad } \\
0  & 0& \ldots &    \highlightgr{ \tau_t^2 \sqrt{\omega_t} }, 
\end{bmatrix}
\ee
where
\be 
\omega_0 = \frac{1}{\tau_0^2}, \qquad \omega_k = \frac{1}{\tau_k^2} - \frac{1}{\tau_{k-1}^2}, \quad k \geq 1.
\ee 
The last column of $\widehat{R} $ (highlighted) is a deterministic estimate for $(b_0^* \beta^t/\sqrt{n}, \ldots, b_t^*\beta^t/\sqrt{n})$.   This is used to replace  the idealized coefficients in \eqref{eq:lambda_ideal}, yielding the  following deterministic choice for $\un{\lambda}_t$:
\be
\un{\lambda}_t^{\tsf{det}} = (\tau_t \sqrt{\omega_0}, \ - \tau_t \sqrt{\omega_1},  \ \ldots, \ ,- \tau_t \sqrt{\omega_t}), \quad t \geq 0.
\label{eq:lambda_det}
\ee

At the end of each iteration $t \geq 1$, these coefficients are used to first produce $\mc{Z}^{\tsf{comb}}_t$, which is then used to compute $\statt = \tau_t \mc{Z}^{\tsf{comb}}_t + \beta^t$. The updated estimate of the message vector in step $(t+1)$ is $\beta^{t+1} =\eta_t(\statt)$, where $\eta_t$ is defined in \eqref{eq:etat_def}. 

The performance of the adaptive successive decoder with deterministic coefficients in \eqref{eq:lambda_det} is given by the following theorem.

\begin{theorem} \cite[Lemma 11]{choThesis} \cite[Lemma 7]{choBarron13}
Consider a SPARC with a rate $R <\mc{C}$, parameters $(n,L, M)$ chosen according to \eqref{eq:ml_nR}, and power allocation $P_\ell \propto e^{-2\mc{C} \ell /L}$. For $t \geq 1$, let
$$\mc{A}_t : = \left\{ \abs{ \frac{1}{nP} \beta^* \beta^t-x_t }> \e  \right\} \cup \left\{ \abs{ \frac{1}{nP} \| \beta^{t} \|^2-x_t)} > \e \right\},$$
where $x_t$ is defined by the state evolution recursion in \eqref{eq:tau_def} and \eqref{eq:tau_def}. Then,
\be
\mathbb{P}\{ \cup_{k=1}^t \mc{A}_{k} \} \leq 
\sum_{k =1}^{t} 6(k+1) \exp \left(  \frac{-2L \e^2}{ c^2 (\log M /R)^{2k-1}} \right),
\label{eq:adap_succ}
\ee
where $c^2 = \max_\ell L P_\ell /P$, which is a constant close to $2 \mc{C} (1+ \snr)/\snr$ for large $L$.
\label{thm:choBar}
\end{theorem}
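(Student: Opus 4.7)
The natural strategy is induction on $t$ combined with the conditional distribution of $\mc{Z}_t$ given in Lemma~\ref{lem:cond_dist_adap}. Let $\mc{G}_{t-1} := \cap_{k=1}^{t-1} \mc{A}_k^c$ denote the good event through iteration $t-1$. Taking the bound \eqref{eq:adap_succ} truncated at index $t-1$ as the induction hypothesis, the inductive step reduces to proving
\[
\pr(\mc{A}_t \cap \mc{G}_{t-1}) \leq 6(t+1)\exp\!\left(\frac{-2L\epsilon^2}{c^2(\log M/R)^{2t-1}}\right),
\]
and then combining via a union bound.

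\textbf{Base case $t=1$.} Here $\beta^0 = 0$ and $\un{\lambda}_0^{\tsf{det}} = 1$, so $\tsf{stat}^0 = \tau_0 \mc{Z}_0$. Lemma~\ref{lem:cond_dist_adap} yields the exact representation $\mc{Z}_0 = b_0 \|G_0\|/\tau_0 + Z_0$ with $Z_0 \sim \mc{N}(0, \sfi - b_0 b_0^*)$ and $\|G_0\|^2/\tau_0^2 \sim \chi_n^2$. A standard $\chi_n^2$ tail bound together with Gaussian concentration for the Lipschitz map $\eta_0$ yield $\tsf{stat}^0 \approx \beta + \tau_0 Z$ with the desired probability. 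The quantities $\beta^*\beta^1/(nP)$ and $\|\beta^1\|^2/(nP)$ then split as sums over the $L$ sections of terms of magnitude at most $P_\ell/P \leq c^2/L$, and Hoeffding's inequality across sections produces the exponent with $k=1$.

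\textbf{Inductive step.} On $\mc{G}_{t-1}$, the empirical moments $\beta^*\beta^k/n$ and $\|\beta^k\|^2/n$ lie within $O(\epsilon)$ of their state-evolution values $Px_k$ for all $k \leq t-1$; a parallel argument extends this to the cross terms $(\beta^j)^*\beta^k/n$. Therefore the random Cholesky factor $R$ in \eqref{eq:Rmat_def} is close to the deterministic $\widehat{R}$ that defines $\un{\lambda}_t^{\tsf{det}}$, so the idealized coefficients $\un{\lambda}_t^{\tsf{id}}$ --- which would produce exactly the representation derived in \eqref{eq:Zcomb_dist} --- differ from the coefficients actually used by the decoder by $O(\epsilon)$. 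Applying Lemma~\ref{lem:cond_dist_adap} successively for $k = 0, 1, \ldots, t$ and substituting into $\mc{Z}_t^{\tsf{comb}} = \sum_k \lambda_{k,t}^{\tsf{det}} \mc{Z}_k$ then gives a representation $\tsf{stat}^t = \beta + \tau_t Z + r_t$, with $Z$ conditionally Gaussian (up to a rank-$(t+1)$ projection arising from the successive orthogonalization) and $\|r_t\|/\sqrt{n}$ of order $\epsilon$. Passing through $\eta_t$ section by section and invoking Hoeffding across the $L$ sections then concentrates $\beta^*\beta^{t+1}/(nP)$ and $\|\beta^{t+1}\|^2/(nP)$ around $x_{t+1}$, via Proposition~\ref{prop:se_cons}.

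\textbf{Main obstacle.} The hardest part is tracking how the $O(\epsilon)$ perturbations in $\un{\lambda}_t^{\tsf{det}} - \un{\lambda}_t^{\tsf{id}}$ and in the $\chi_{n-k}^2$ factors are amplified when composed with the non-linear $\eta_k$ maps of the $t-1$ previous iterations. Each $\eta_k$ has section-wise Lipschitz constant of order $\sqrt{nP_\ell}/\tau_k^2 \sim \sqrt{\log M/R}$, and composing such constants through the iterations is what forces the polynomial factor $(\log M/R)^{2t-1}$ in the denominator of the exponent. Balancing this amplification against the exponential-in-$L$ concentration across sections, while keeping the pre-factor linear in $t+1$, is the core technical challenge; intuitively, it is the soft-decision analogue of the false-alarm bookkeeping that appears in the hard-decision analysis of the previous section.
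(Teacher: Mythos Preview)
The paper does not actually prove this theorem; it is stated with citations to \cite{choThesis} and \cite{choBarron13} and then used without proof. So there is no ``paper's own proof'' to compare against here. That said, your proposal is consistent with the framework the paper lays out in the surrounding exposition: the conditional distribution Lemma~\ref{lem:cond_dist_adap}, the comparison of idealized versus deterministic coefficients via the Cholesky factor $\widehat{R}$, and the section-wise structure of $\eta_t$ that allows Hoeffding-type concentration across the $L$ sections. Your identification of the $(\log M/R)^{2t-1}$ degradation as arising from the section-wise sensitivity of $\eta_k$ (the factor $\sqrt{nP_\ell}/\tau_k^2 \asymp \sqrt{\log M/R}$) compounding through the iterations is on the right track.

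One point deserves more care. The events $\mc{A}_k$ as defined in the theorem control only $\beta^*\beta^k$ and $\|\beta^k\|^2$, but the Cholesky comparison $R \approx \widehat{R}$ requires closeness of \emph{all} entries of $B^*B/n$, in particular the off-diagonal blocks $(\beta^j)^*\beta^k/n$ for $1 \leq j < k \leq t$. You acknowledge this with ``a parallel argument extends this to the cross terms,'' but it is not parallel in the sense of following from $\mc{G}_{t-1}$ as stated: the induction hypothesis must be strengthened to a larger good event that also pins down these cross terms around $Px_{\min(j,k)}$. This is the standard move (prove a stronger statement than the theorem asserts), and the cited references do carry it out, but as written your inductive step has a gap at precisely the point where you invoke $R \approx \widehat{R}$.
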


We run the decoder for $T$ steps, where $T$ can be determined using the SE recursion in \eqref{eq:tau_def}  as  the minimum number of steps after which $1-x_T$ is below a specified small value $\delta$. Or, using the asymptotic SE characterization in Lemma \ref{lem:lim_xt_taut}, we can take 
$T= T^* = \left\lceil \frac{2 \mc{C}}{\log(\mc{C}/R)} \right\rceil$. The large deviations bound for $ \frac{1}{nP} \beta^* \beta^T$ and $ \frac{1}{nP} \| \beta^{T} \|^2$ in  \eqref{eq:adap_succ} can then be translated into a bound on the excess section error rate. We defer the explanation of how this is done to the next section   where we analyze the AMP decoder.   (See Eq. \eqref{eq:sec_error_implies} and the surrounding discussion.)

As an alternative to the deterministic coefficients of combination, Cho and Barron  \cite{choBarron13,choThesis}  propose another method of  choosing coefficients based on the Cholesky decomposition of $B^*B=RR^*$ in \eqref{eq:RTR}. This method uses the known values of $(\beta^k)^*\beta^m$, $1 \leq k \leq m \leq t$, in the matrix $B^*B$ and estimates based on Lemma \ref{lem:cond_dist_adap} for the diagonal entries of $R$ to recursively solve for the 
$(b_{0}^* \beta^t, \ldots, b_{t-1}^* \beta^t)$. These resulting values are then used to generate 
$\mc{Z}^{\tsf{comb}}_t$  via \eqref{eq:lambda_ideal}.  The reader is referred to \cite[Sec. 4.3]{choBarron13} or \cite[Sec. 4.3]{choThesis} for details of the Cholesky decomposition based estimates and the corresponding performance analysis.

\section{Approximate Message Passing (AMP) decoder} \label{sec:AMP_dec}

Approximate message passing (AMP) refers to a class of algorithms \cite{DonMalMont09, MontChap11, BayMont11, BayLASSO, krz12,Rangan11,DonSpatialC13}
that are Gaussian or quadratic approximations of  loopy belief propagation algorithms (e.g., min-sum, sum-product) on dense factor graphs. In its basic form \cite{DonMalMont09,BayLASSO},  AMP gives a fast iterative algorithm to solve the LASSO \cite{tibshirani1996regression}  under certain conditions on the design matrix. The LASSO is the following convex optimization problem. Given a  matrix $A \in \reals^{n \times N}$,   an observation vector $y \in \reals^n$, and a scalar $\lambda >0$,  compute
\be  
 \argmin_{\hat \beta \in \reals^N} \ \norm{y-A \hat \beta}^2_2 + \lambda \norm{\hat \beta}_1,
 \label{eq:lasso_def}
 \ee
 where the $\ell_1$-norm is defined as $\norm{\hat \beta}_1 = \sum_{i=1}^N {\hat \beta}_i$.
The $\ell_1$ penalty added to the least-squares term promotes sparsity in the solution. The LASSO has been widely used  in applications such as compressed sensing and sparse linear regression; see, e.g., \cite{tibshirani2015statistical}.

When $A$ has i.i.d. entries drawn from a Gaussian or sub-Gaussian distribution, AMP  has been found to converge to the LASSO solution \eqref{eq:lasso_def}  faster than the best competing solvers (based on first-order convex optimization). This is because AMP takes advantage of the distribution of the matrix $A$, unlike generic convex optimization methods. The AMP also yields sharp results for the asymptotic risk of LASSO with Gaussian matrices \cite{BayLASSO}.

For SPARCs, recall from  \eqref{eq:opt_decoder}  that the optimal decoder solves the optimization problem
\be \hat{\beta}_{\textsf{opt}} = \argmin_{\hat{\beta} \in \mcb} \, \norm{y-A\hat{\beta}}^2.  \ee
One cannot directly use the LASSO-AMP of  \cite{DonMalMont09,BayLASSO} for SPARC  decoding as it does not use the prior knowledge about $\beta$, i.e., the knowledge that $\beta$ has exactly one non-zero value in each section, with the values of the non-zeros also being known.

We start with the factor graph for the  model
$y = A\beta + w$,
where $\beta \in \mcb(P_1, \ldots, P_L)$. Each row of $A$ corresponds to a constraint (factor) node, while each column corresponds to a variable node. We use the indices $a,b$ to denote factor nodes, and indices $i,j$ to denote variable nodes. The AMP algorithm is obtained via  a first-order approximation to the following message passing updates that iteratively computes estimates of $\beta$ from $y$.  For $i \in [N]$, $a \in [n]$, set $\beta^0_{ j \to a}=0$,  and compute the following for $t \geq 0$:
\begin{align}
z^t_{a \to i}  & = y_a - \sum_{j \in [N] \bks i} A_{aj}  \beta^t_{j \to a},  \label{eq:z_update} \\
\beta^{t+1}_{i \to a} & = \eta_i^t \left( \tsf{stat}_{a \to i}  \right),
\label{eq:beta_update}
\end{align}
where $\eta_i^t (\cdot)$ is the estimation function defined in \eqref{eq:etat_def}, and for $i \in \text{sec}(\ell)$, the entries of the test statistic ${\tsf{stat}}_{i \to a} \in \mathbb{R}^M$ are defined as
\be
\begin{split}
(\tsf{stat}_{a \to i})_i &  = \sum_{b \in [n] \bks a}  A_{bi} z^t_{b \to i},  \\
(\tsf{stat}_{a \to i})_j & =  \sum_{b \in [n] }  A_{bj} z^t_{b \to j}, \quad j \in \text{sec}(\ell) \bks i.
\end{split}
\ee
As before, the update function \eqref{eq:beta_update} is based on the assumption that $\tsf{stat}_{a \to i} \approx \beta + \tau_t Z$.  In \eqref{eq:z_update}, note that the dependence of $z^t_{a \to i}$ on $i$ is only due to the  term $A_{ai} \beta^t_{i \to a}$ being excluded from the sum. Similarly, in \eqref{eq:beta_update} the dependence of $\beta^t_{i \to a}$ on $a$ is  due to excluding the term $A_{ai} z^t_{a \to i}$ from the argument. 
We therefore write
\be
z^{t}_{a \to i} = z^t_a + \delta z^t_{a \to i}, \quad \text{ and } \quad \beta^{t+1}_{i \to a} = \beta^{t+1}_{i} + \delta\beta^{t+1}_{i \to a}.
\label{eq:delta_defs}
\ee
Using a first-order Taylor approximation for the updates \eqref{eq:z_update} and \eqref{eq:beta_update}  around the terms $z^t_a$ and $\beta^{t+1}_{i}$ and simplifying yields the  AMP decoding algorithm which produces iterates $(z^t, \beta^{t+1})$ in each iteration. The AMP algorithm is described in Fig. \ref{fig:AMP_alg}. (See \cite[Appendix A]{RushGV17} for details of the derivation.)

\begin{figure}[t]
 \begin{tcolorbox}
\textbf{Step $0$}: Initialize $\beta^0 = 0$ and $z^{-1} =0$.

\vspace{5pt}
\textbf{Step $t+1$}, for $0 \leq t \leq (T-1)$: 
Compute 
\begin{align}
& z^t = y - A \beta^t + \frac{z^{t-1}}{\tau^2_{t-1}}\left( P - \frac{\norm{\beta^t}^2}{n} \right),  \label{eq:amp1} \\ 
& \statt = A^* z^t + \beta^t,   \label{eq:amp12}  \\
& \beta^{t+1} = \eta_t(\statt).   \label{eq:amp2}
\end{align}
where the constants $(\tau_{t}^2)_{t \geq 0}$ required in  \eqref{eq:amp1}  and \eqref{eq:amp2} are given by the SE recursion described  in \eqref{eq:tau_def}. Instead of pre-computing $\tau_{t}^2$, it can also be estimated online as $\| z^t \|^2/n$ (see Sec. \ref{subsec:online_tau}).
The number of iterations $T$ is  determined either using the state evolution recursion as discussed on p. \pageref{thm:choBar}, or using the termination criterion in Sec. \ref{subsec:online_tau}. 
 \end{tcolorbox}
 \caption{Approximate Message Passing (AMP) Decoder.}
 \label{fig:AMP_alg}
\end{figure}

The vector $z^t$ in \eqref{eq:amp1} is a modified residual: it consists  of the standard residual $y-A\beta^t$, plus an extra term $\frac{z^{t-1}}{\tau^2_{t-1}} ( P - \frac{\norm{\beta^t}^2}{n} )$. This extra `Onsager' term is crucial to ensuring that $\statt$ has the desired distributional property. To get some intuition about the role of the Onsager term, we express $\statt$ as  
\begin{align}
\statt  & = A^* z^t + \beta^t  = A^* (y-A \beta^t) + \beta^t +  \frac{A^* z^{t-1}}{\tau^2_{t-1}}\left( P - \frac{\norm{\beta^t}^2}{n} \right) \nonumber  \\
& = \beta + A^*w + (\tsf{I} - A^*A)(\beta^t - \beta) +     \frac{A^* z^{t-1}}{\tau^2_{t-1}}\left( P - \frac{\norm{\beta^t}^2}{n} \right) 
\label{eq:statt_expand}
\end{align}

We can interpret  the second and third terms on the RHS of \eqref{eq:statt_expand} as noise terms added to $\beta$.  The term $A^*w$ is a random vector independent of $\beta$ with i.i.d $\mc{N}(0, \sigma^2)$ entries. For the next term, the entries of the symmetric matrix
$(\tsf{I} - A^*A)$ can be shown to be approximately $\mc{N}(0, \tfrac{1}{n})$, with distinct entries being approximately pairwise independent.  Therefore, \emph{if} the $(\beta^t - \beta)$ were independent of $A$, then the vector $(\tsf{I} - A^*A)(\beta^t - \beta)$ would be approximately i.i.d.$\sim \mc{N}(0, \tfrac{\| \beta^t -\beta \|^2}{n})$; consequently the second and third terms of \eqref{eq:statt_expand} combined would be close to standard normal with variance
$\sigma^2 +  \frac{\| \beta^t -\beta \|^2}{n} \approx \tau_t^2$. However,   $(\beta^t - \beta)$ is not independent of $A$, since $A$ is used to generate $\beta^1, \ldots, \beta^t$.  The role of the last term in \eqref{eq:statt_expand} is to asymptotically cancel the correlation between $A$ and $(\beta^t-\beta)$, so that $\statt$ is well approximated as  $\beta + \tau_t Z$.  This intuition is made precise in the analysis of the AMP decoder in the next subsection.

\subsection{Analysis of the AMP decoder} We now obtain a non-asymptotic bound on error performance of the AMP decoder.  To do this, we first need a lower bound on how much the state evolution parameter $x_t$ increases in each iteration of the algorithm. 

\begin{lemma}\cite{RushV17ErrExp}
Let $\delta \in (0,  \min\{ \Delta_R, \frac{1}{2} \}]$, where $\Delta_R := (\mc C - R)/\mc C$. Let $f(M) := \frac{M^{-\kappa_2 \delta^2}}{\delta \sqrt{\log M}}$, where $\kappa_2$ is the universal constant in Lemma \ref{lem:conv_expec}(b). Consider the sequence of state evolution parameters $x_0=0, x_1,  \ldots$ computed according to \eqref{eq:tau_def} --\eqref{eq:xt_tau_def} with the exponentially decaying power allocation in \eqref{eq:exp_power_alloc}. For sufficiently large $L,M$, we have:
\be
x_1 \geq  \chi_1 := \left(1 - f(M)\right)  \frac{P+\sigma^2}{P} \left( 1 - \frac{(1+ \delta/2) R}{\mc C} - \frac{5}{L}  \right),
\label{eq:chi1}
\ee
and for $t > 1$:
\begin{align}
   & x_t - x_{t-1}   \nonumber  \\
   & \geq  \chi := \left(1 - f(M)\right)  \left[  \frac{\sigma^2}{P} \left( 1 - \frac{(1+ \delta/2) R}{\mc C} \right) - f(M) \frac{(1+ \delta/2) R}{\mc C}  \right]  \nonumber \\
   & \qquad \qquad  - \frac{5(1 +\sigma^2/P)}{L},    \label{eq:chi}
\end{align} 
until $x_t$ reaches (or exceeds) $(1-f(M))$. 
\label{lem:xt_taut_lb}
\end{lemma}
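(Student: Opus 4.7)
The plan is to unfold the state evolution recursion in \eqref{eq:xt_tau_def}--\eqref{eq:tau_def} and bound $x(\tau_{t-1})$ sectionwise, using a non-asymptotic refinement of Lemma \ref{lem:conv_expec} (the part (b) already invoked in the lemma statement) that controls the soft-max expectation at a rate governed by $\kappa_2$. For each section $\ell$, this expectation is the probability that the shifted Gaussian $U_1^\ell + \sqrt{nP_\ell}/\tau$ dominates the other $M-1$ standard normals in a soft-max sense. I would partition the sections into a ``strong'' class, on which the drift $\sqrt{nP_\ell}/\tau$ exceeds $(1+\delta/2)\sqrt{2\log M}$, and a ``weak'' complement, and note that a Gaussian lower-tail bound for $U_1^\ell$ coupled with an upper-tail bound for $\max_{j\ge 2} U_j^\ell$ near $\sqrt{2\log M}$ forces each strong term to be at least $1-f(M)$; weak terms are simply dropped. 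This yields
\[ x(\tau) \;\ge\; (1-f(M))\sum_{\ell \le \ell^*(\tau,\delta)} \frac{P_\ell}{P}, \]
where $\ell^*(\tau,\delta)$ is the largest index satisfying the drift condition.

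For the exponential allocation \eqref{eq:exp_power_alloc}, the partial sum is a geometric series,
\[ \sum_{\ell=1}^k \frac{P_\ell}{P} \;=\; \frac{1-e^{-2\mc{C} k/L}}{1-e^{-2\mc{C}}} \;=\; \frac{P+\sigma^2}{P}\bigl(1-e^{-2\mc{C} k/L}\bigr), \]
using $1-e^{-2\mc{C}} = P/(P+\sigma^2)$. Together with $nR=L\log M$, the drift condition translates into $e^{-2\mc{C}\ell^*/L} \le (1+\delta/2)R\tau^2/[\mc{C}(P+\sigma^2)]$ up to an $O(1/L)$ slack that comes from replacing $e^{2\mc{C}/L}-1$ by $2\mc{C}/L$ in the definition of $P_\ell$. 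Substituting produces the master inequality
\[ x(\tau) \;\ge\; (1-f(M))\,\frac{P+\sigma^2}{P}\!\left(1-\frac{(1+\delta/2)R\tau^2}{\mc{C}(P+\sigma^2)}\right) - \frac{5(1+\sigma^2/P)}{L}, \]
in which the $5/L$ term absorbs all the geometric rounding losses.

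Both conclusions of the lemma now follow by specialization. Setting $\tau^2=\tau_0^2=P+\sigma^2$ gives \eqref{eq:chi1} after collecting the $5/L$ inside the parentheses. For $t>1$, substituting $\tau_{t-1}^2=\sigma^2+P(1-x_{t-1})$ and writing $\gamma := (1+\delta/2)R/\mc{C}$ yields
\[ x_t \;\ge\; (1-f(M))\tfrac{P+\sigma^2}{P}(1-\gamma) \;+\; (1-f(M))\gamma\, x_{t-1} \;-\; \tfrac{5(1+\sigma^2/P)}{L}. \]
The coefficient $(1-f(M))\gamma - 1$ of $x_{t-1}$ in $x_t-x_{t-1}$ is negative, so the bound is minimized at the largest admissible $x_{t-1}$; substituting $x_{t-1} = 1-f(M)$ and rearranging produces exactly $\chi$ as given in \eqref{eq:chi}.

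The main obstacle will be the first step, namely quantifying the soft-max concentration with the specific error rate $f(M)=M^{-\kappa_2\delta^2}/(\delta\sqrt{\log M})$. The natural approach is to couple a Gaussian lower-tail event $\{U_1^\ell \ge -\Theta(\delta\sqrt{\log M})\}$ with an upper-tail event $\{\max_{j\ge 2} U_j^\ell \le \sqrt{2\log M} + \Theta(\delta\sqrt{\log M})\}$, so that the resulting $\Theta(\delta\sqrt{\log M})$ drift cushion overwhelms the soft-max denominator; both tails decay as $M^{-\kappa\delta^2}$, and the Mills-ratio prefactor $1/(\delta\sqrt{\log M})$ yields $f(M)$. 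Once this non-asymptotic input is available, the remaining computations are algebraic and the $5/L$ bookkeeping is routine.
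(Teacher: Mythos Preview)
Your proposal is correct and follows essentially the same route as the paper: the paper invokes a non-asymptotic lower bound on $x(\tau)$ (stated as Lemma~\ref{lem:xtl_lemma}, drawn from \cite{RushV17ErrExp}) that gives exactly your $x(\tau)\ge(1-f(M))\sum_{\ell:\,\nu_\ell>2+\delta}P_\ell/P$, then evaluates the geometric partial sum for the exponential allocation, substitutes $\tau_{t-1}^2=\sigma^2+P(1-x_{t-1})$, and minimizes the resulting affine bound in $x_{t-1}$ at $x_{t-1}=1-f(M)$, arriving at $\chi$ via the same algebra you wrote out. Your identification of the soft-max concentration as the only nontrivial input is accurate; the paper simply imports it as a black-box lemma rather than re-deriving the $M^{-\kappa_2\delta^2}/(\delta\sqrt{\log M})$ rate.
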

\begin{proof}
In Section \ref{app:proof_lem_xt_taut_lb}. 
\end{proof}

\paragraph{Number of iterations and the gap from capacity} \label{subsec:del_vs_CR}

We want the lower bounds $\chi_1$ and $\chi$ in \eqref{eq:chi1} and \eqref{eq:chi} to be strictly positive and depend only on the gap from capacity $\Delta_R=(\mc C - R)/\mc C$ as $M,L \to \infty$. 
For all $\delta \in (0, \Delta_R]$, we have
\be
\left(1- \frac{(1+ \delta/2)R}{\mc{C}} \right) \geq  \left(1- \left(1+ \frac{\Delta_R}{2}\right) (1-\Delta_R)  \right) =\frac{\Delta_R +  \Delta_R^2}{2}.
\label{eq:key_term_lb}
\ee
Therefore, the quantities on the RHS of \eqref{eq:chi1} and \eqref{eq:chi}  can be bounded from below as
\begin{align}
\chi_1 &  \geq (1-f(M)) \frac{P+ \sigma^2}{P}\left( \frac{\Delta_R +  \Delta_R^2}{2}- \frac{5}{L} \right),  \label{eq:chi1_lb1} \\
 \chi &  \geq (1-f(M)) \left[ \frac{\sigma^2}{P} \left(  \frac{\Delta_R +  \Delta_R^2}{2}  \right)  - f(M)\right] -  \frac{5(1 +\sigma^2/P)}{L}.
 \label{eq:chi_lb1}
\end{align}
We take $\delta=\Delta_R$, which\footnote{As Lemma \ref{lem:lim_xt_taut} assumes that $\delta \in (0, \min\{\frac{1}{2}, \Delta_R \}]$, by taking $\delta=\Delta_R$ we have assumed that $\Delta_R \leq \frac{1}{2}$, i.e., $R \geq \mc{C}/2$. This assumption can be made without loss of generality --- as the probability of error increases with rate, the large deviations bound of Theorem \ref{thm:main_amp_perf}  evaluated for $\Delta_R = \frac{1}{2}$ applies for all $R$ such that $\Delta_R < \frac{1}{2}$.
}  
gives the smallest value for $f(M)$.  We denote this value by
 \be f_R(M) := \frac{M^{-\kappa_2 \Delta_R^2}}{\Delta_R \sqrt{\log M}}. \label{eq:f_RM} \ee  
From \eqref{eq:chi_lb1}, if  $f_R(M)/\Delta_R \to 0$ as $M \to \infty$, then  $\frac{\sigma^2}{P}\left(\frac{\Delta_R +  \Delta_R^2}{2}\right)$  will be the dominant term in $\chi$ for large enough $L,M$.  The condition $f_R(M)/\Delta_R \to 0$  will be satisfied if we choose $ \Delta_R$ such that 
\be
\Delta_R \geq \sqrt{\frac{\log \log M}{\kappa_2 \log M}},
\label{eq:DelRbnd}
\ee
where $\kappa_2$ is the universal constant of Lemma \ref{lem:lim_xt_taut}.  From here on, we assume that $\Delta_R$ satisfies \eqref{eq:DelRbnd}.

Let $T$ be the number of iterations until $x_t$ exceeds $(1-f_R(M))$. We run the AMP decoder for $T$ iterations, where
\begin{align}
T  := \min_t \, \{ t: \,  x_{t} \geq 1- f_R(M) \} 
&  \stackrel{(a)}{\leq} \frac{(1-f_R(M))}{\chi}   \nonumber \\
 & \stackrel{(b)}{=} \frac{P/\sigma^2}{(\Delta_R + \Delta_R^2)/2}(1 + o(1)),  \label{eq:Tdef}
\end{align}
where $o(1) \to 0$ as $M,L \to \infty$.  In \eqref{eq:DelRbnd}, inequality $(a)$ holds for sufficiently large $L,M$ due to Lemma \ref{lem:xt_taut_lb}, which shows for large enough $L,M$, the $x_t$ value increases by at least $\chi$ in each iteration. The equality $(b)$ follows from the lower bound on $\chi$ in \eqref{eq:chi_lb1}, and because $f(M)/\Delta_R = o(1)$.

After running the decoder for $T$ iterations, the decoded message $\hat{\beta}$ is obtained  by setting the maximum of $\beta^{T}$  in each section $\ell \in [L]$ to $\sqrt{nP_{\ell}}$ and the remaining entries to $0$.    From \eqref{eq:Tdef}, we see that the number of iterations  $T$ increases as $R$ approaches $\mc{C}$.   The definition of $T$ guarantees that
  $x_T \geq (1-f_R(M))$. Therefore, using $\tau_T^2=\sigma+P(1-x_T)$ we have
    \be \sigma^2 \leq \tau^{2}_T  \leq \sigma^2 +  P f_R(M).
  \label{eq:tauT_bound} \ee 

\paragraph{Performance of the AMP decoder}

The main result is a bound on the probability of the section error rate exceeding any fixed $\e > 0$. 

\begin{theorem}\cite{RushV17ErrExp}
Fix any rate $R < \mc{C}$. Consider a rate $R$ SPARC $\mc{S}_n$  with block length $n$, design matrix parameters $L$ and $M$ determined according to  \eqref{eq:ml_nR}, and an exponentially decaying power allocation given by \eqref{eq:exp_power_alloc}. Furthermore, assume that $M$ is large enough that 
\[  \Delta_R \geq \sqrt{\frac{\log \log M}{\kappa_2 \log M}}, \]
where $\kappa_2$ is the universal constant used in Lemmas \ref{lem:conv_expec}(b) and \ref{lem:lim_xt_taut}. Fix any $\e > \frac{2 \snr}{\mc{C}} f_R(M)$, where $f_R(M) := \frac{M^{-\kappa_2 \Delta_R^2}}{\Delta_R \sqrt{\log M}}$.

Then, for sufficiently large $L,M$, the section error rate of the AMP decoder satisfies
\be
\begin{split}
& P \left( \mc{E}_{sec}(\mc{S}_n)  >  \e \right) \leq K_{T} \exp\left\{\frac{-\kappa_{T} L}{(\log M)^{2T-1}}   
\left(\frac{\e  \sigma^2 \mc{C}}{2} - P f_R(M) \right)^2 \right\},
 \label{eq:pezero}
\end{split}  
\ee
where $T$ is defined in \eqref{eq:Tdef}. The constants $\kappa_T$ and $K_T$ in \eqref{eq:pezero} are given by $\kappa_T = [c^{2T} (T!)^{17}]^{-1}$ and $K_T = C^{2T} (T!)^{11}$ where $c, C > 0$ are universal constants (not depending on AMP parameters $L, M, n, $ or $\e$) but are not explicitly specified.  
\label{thm:main_amp_perf}
\end{theorem}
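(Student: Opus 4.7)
The plan is to reduce the bound on $\pr(\Esec > \e)$ to a stronger concentration statement at iteration $T$: that for a broad class of (pseudo-)Lipschitz test functions $\phi$, the empirical average $\tfrac{1}{ML}\sum_i \phi(\beta_i, \statt_i)$ concentrates around its state-evolution prediction $\expec[\phi(\beta, \beta + \tau_T Z)]$ with tails of order $K_T \exp(-\kappa_T L \nu^2/(\log M)^{2T-1})$.  Once this concentration is available, the bound \eqref{eq:pezero} follows by choosing $\phi$ to extract (a smoothed proxy for) the section error rate, computing the idealized expectation using $\tau_T^2 \leq \sigma^2 + P f_R(M)$ from \eqref{eq:tauT_bound}, and absorbing the $O(f_R(M))$ discrepancy through the hypothesis $\e > 2\snr\, f_R(M)/\mc{C}$ in the theorem statement.

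The core technical ingredient is a conditional distribution lemma for AMP iterates, established by induction on $t$.  At step $t$ one conditions on the $\sigma$-algebra $\mscrs_t$ generated by all iterates produced through step $t$ and decomposes $A^* z^t$ into its projections onto past iterates plus an orthogonal residual.  The Onsager correction in \eqref{eq:amp1} is engineered so that the projection terms that encode the correlations between $A$ and $\{\beta^s - \beta\}_{s \leq t}$ cancel to leading order, leaving the residual to behave conditionally as an i.i.d.\ $\mc{N}(0, \tau_t^2)$ vector.  One then writes $\statt = \beta + \tau_t \tilde Z + \Delta_t$ and bounds $\|\Delta_t\|$ by sub-exponential concentration of the relevant chi-square norms and inner products.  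Each inductive step introduces combinatorial factors in $t$ and an additional $\log M$ factor coming from the Lipschitz constant of $\eta_t$, which from \eqref{eq:etat_def} is of order $\sqrt{nP_\ell}/\tau_t^2 = O(\sqrt{\log M})$; this is what produces both the $(\log M)^{2T-1}$ penalty and the $[c^{2T}(T!)^{17}]^{-1}$ constant appearing in \eqref{eq:pezero}.

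To convert iterate concentration into a bound on $\Esec$, apply Proposition \ref{prop:se_cons}: the power-weighted correlation $\tfrac{1}{nP}\beta^* \beta^{T+1} = \tfrac{1}{nP}\sum_\ell \beta_\ell^* \eta_T^\ell(\statt)$ is an empirical average (over $L$ sections) of a bounded Lipschitz function of $(\beta, \statt)$, and by the concentration lemma it is close to its expectation $x_{T+1}$.  Since $\hat\beta$ is obtained by hard-thresholding $\beta^{T+1}$ section-by-section and since $\tau_T^2 \approx \sigma^2$ forces $\eta_T$ to concentrate essentially all of its mass in each section on a single column, a section error in section $\ell$ reduces $\tfrac{1}{nP} \beta^* \hat\beta$ by approximately $P_\ell/P$.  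The allocation \eqref{eq:exp_power_alloc} satisfies $P_\ell/P \geq 2\mc{C}/(L\,\snr)$ uniformly in $\ell$, so the event $\{\Esec > \e\}$ forces a deviation $1 - \tfrac{1}{nP}\beta^*\hat\beta \geq 2\e\mc{C}/\snr$, which together with $x_{T+1} \geq 1 - f_R(M)$ (from \eqref{eq:Tdef} and \eqref{eq:tauT_bound}) yields a deviation of size at least $\e \sigma^2 \mc{C}/2 - P f_R(M)$ in the appropriately normalized statistic, delivering \eqref{eq:pezero}.

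The principal obstacle is the inductive proof of the conditional distribution lemma.  Three difficulties compound.  First, one must set up the $\sigma$-algebra decomposition so that the Onsager correction exactly cancels the cross-correlations between $A$ and $\{\beta^s - \beta\}_{s \leq t}$ at every step, which requires tracking the Gram matrix of all past iterates (analogous to the $B^*B = R^*R$ decomposition used in Section \ref{sec:adap_soft_dec}).  Second, sub-exponential concentration for quadratic forms such as $\|\beta - \beta^t\|^2/n$ and $\|z^t\|^2/n$ must hold uniformly across iterations with only polynomial degradation in $t$.  Third, the $\sqrt{\log M}$-scale Lipschitz constant of $\eta_t$ must be prevented from inflating the exponent beyond $(\log M)^{2T-1}$.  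The combinatorial explosion of constants encoded in $(T!)^{17}$ is intrinsic to induction-based AMP analyses and is precisely why $\Delta_R$ must satisfy \eqref{eq:DelRbnd}: since $T$ in \eqref{eq:Tdef} grows as $\snr/\Delta_R$, the tail bound is useful only while $(T!)^{17}$ does not overwhelm the exponential gain from $L$.
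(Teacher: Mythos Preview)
Your proposal matches the paper's approach at the technical core: the conditional distribution lemma for the AMP iterates (Lemma~\ref{lem:hb_cond}), the inductive concentration analysis (Lemma~\ref{lem:main_lem}), and the identification of the $O(\sqrt{\log M})$ Lipschitz constant of $\eta_t$ as the source of the $(\log M)^{2T-1}$ and $(T!)$ factors are exactly what the paper does.

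The one place you diverge---and where there is a small genuine gap---is the conversion step.  You route through $\tfrac{1}{nP}\beta^*\hat\beta$ and then invoke ``$\eta_T$ concentrates its mass on a single column'' to tie $\hat\beta$ back to $\beta^{T+1}$.  But precisely in the sections that are decoded \emph{wrongly}, $\eta_T$ need not concentrate its mass, so $\hat\beta_\ell$ and $\beta^{T+1}_\ell$ can differ substantially and the bridge from (3)--(4) back to your concentration statement for $\beta^*\beta^{T+1}$ does not close.  The paper avoids this by working with the soft-estimate MSE directly: a section error forces the correct entry of $\beta^T$ to carry at most half the section mass, hence $\|\beta^T_\ell-\beta_{0_\ell}\|^2 \geq nP_\ell/4$; summing over the $>\e L$ erroneous sections and using $LP_L \geq 2\sigma^2\mc{C}$ gives $\{\Esec>\e\}\subset\{\|\beta^T-\beta_0\|^2/n \geq \e\sigma^2\mc{C}/2\}$, after which the concentration of $\|q^T\|^2/n$ around $\sigma_T^2\leq Pf_R(M)$ (a single instance of your Lipschitz family) with deviation $\tilde\e=\tfrac{\e\sigma^2\mc{C}}{2}-Pf_R(M)$ yields \eqref{eq:pezero}.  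Your argument is fixable along the same lines---bypass $\hat\beta$ and observe that a section error already forces $\beta_\ell^*\beta^T_\ell \leq nP_\ell/2$---but as written it does not quite connect.
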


\begin{remark}
The dependence of the constants $K_T, \kappa_T$ on $T!$ is due to the induction-based proof of a key concentration result used in the proof (Lemma \ref{lem:main_lem}). These constants have not been optimized, but we believe that the dependence of  these constants on $T!$ is inevitable in any induction-based proof of the result.
\end{remark}

\subsection{Error exponent and gap from capacity with AMP decoding}

In this subsection we consider the behavior of the bound in Theorem \ref{thm:main_amp_perf} in two different regimes. The first is where $R < \mc{C}$ is held constant as $L, M \to \infty$ (with $n= L \log M /R$) --- this is  the so-called ``error exponent" regime. In this case, $\Delta_R$ is of constant order, so $f_R(M)$ in \eqref{eq:f_RM} decays polynomially with growing $M$.  The other regime is where $R$ approaches $\mc{C}$ as $L,M \to \infty$  (equivalently, $\Delta_R$ shrinks to $0$), while ensuring that the error probability remains small or goes to $0$. Here, \eqref{eq:DelRbnd} specifies that $\Delta_R$ should be of order at least $\sqrt{\frac{\log \log M}{ \log M}}$.

\paragraph{Error exponent} \label{subsec:errexp}
 For any ensemble of codes, the error exponent specifies how the codeword error probability decays with growing code length $n$ for a fixed $R < \mc{C}$ \cite{gallagerBook}.  In the SPARC setting,  we wish to understand how the bound on the probability of excess section error rate in Theorem \ref{thm:main_amp_perf}  decays with $n$ for fixed values of $\e >0$ and  $R < \mc{C}$.   (As explained in Remark 5 following Theorem \ref{thm:main_amp_perf}, concatenation using an outer code can be used to extend the result to  the codeword error probability.)  With optimal encoding, it was shown in \cite{AntonyML} that the  probability of excess section error rate decays exponentially in $n  \min\{ \e \Delta, \Delta^2 \}$, where $\Delta=(\mc{C} -R)$.   For the AMP decoder, we consider two choices for $(M,L)$ in terms of $n$ to illustrate the trade-offs involved: 
 \begin{enumerate}
\item  $M=L^{\textsf{a}}$, for some constant $\textsf{a} >0$. Then, \eqref{eq:ml_nR} implies that $L= \Theta(\frac{n}{\log n})$ and $M=  \Theta((\frac{n}{\log n})^\mathsf{a})$. Therefore,  the bound in Theorem \ref{thm:main_amp_perf}  decays exponentially in $n/(\log n)^{2T}$.  

\item   $L =  {\kappa n}/{\log \log n}$, for some constant $\kappa$, which implies $M = \frac{R}{\kappa} \log n$. With this choice the bound in Theorem \ref{thm:main_amp_perf}  decays exponentially in $n/(\log \log n)^{2T}$.
 \end{enumerate}
Note from \eqref{eq:Tdef}  that  for a fixed $R < \mc{C}$, the number of AMP iterations $T$ is an $\Theta(1)$ quantity that does not grow with $L, M,$ or $n$. The  excess section error rate  decays more rapidly with $n$ for the second choice, but this comes at the expense of much smaller $M$ (for a given $n$).  Therefore, the first choice allows for a much smaller target section error rate (due to smaller $f_R(M)$), but has a larger probability of deviation from the target. One can also compare the two cases in terms of decoding complexity, which is $O(nML T
)$ with Gaussian design matrices. The complexity in the first case is $O({n^{2+\mathsf{a}}}/{(\log n)^{1 + \mathsf{a}}})$, while in the second case it is $O({n^2 \log n}/{\log \log n})$.

\paragraph{Gap from capacity} \label{subsec:scaling}

We now consider how fast  $R$ can approach the capacity $\mc{C}$ with growing $n$, so that the probability of excess section error rate still decays to zero. Recall that lower bound on the gap from capacity is already specified by \eqref{eq:DelRbnd}: for the state evolution parameter $x_T$ to converge to $1$ with growing $M$ (predicting reliable decoding), we need $\Delta_R \geq \sqrt{\frac{\log \log M}{\kappa_2 \log M}}$.  When $\Delta_R$ takes this minimum value, the minimum target section error rate $f_{R}(M)$ in Theorem \ref{thm:main_amp_perf} is  
\be \underline{f_{R}}(M) =  \frac{\sqrt{\kappa_2}}{\log M  \sqrt{\log \log M} }. \label{eq:frmin0} \ee

 We evaluate the large deviations bound of Theorem \ref{thm:main_amp_perf} with $\Delta_R$ at the minimum value of $\sqrt{\frac{\log \log M}{\kappa_2 \log M}}$, for $\e > \frac{2 \snr}{\mc C} \underline{f_R}(M)$, with $\underline{f_R}(M)$ given in \eqref{eq:frmin0}. From \eqref{eq:Tdef}, we have the  bound 
 \be 
 T \leq \frac{2\snr}{\Delta_R}  \leq \kappa_4 \sqrt{\frac{\log M}{\log \log M}}  \label{eq:Tbnd0} 
 \ee for large enough $L,M$. Then, using Stirling's approximation to write $\log(T!) =    T\log T - T + O(\log T)$, Theorem \ref{thm:main_amp_perf} yields
 \begin{align}
&  -  \log P  \left( \mc{E}_{sec}(\mc{S}_n) >  \e \right)  \geq   \frac{\kappa_5 L \e^2}{ c^{2T} (T!)^{17} (\log M)^{2T-1}}  - O(T\log T) \nonumber \\
& =   \frac{\kappa_5 L \e^2}{ \exp\{ 2T \log c + 17(T \log T - T) + (2T-1) \log \log  M  + O(\log T)\}}   - O(T\log T) \nonumber \\
& \geq \frac{L \e^2}{\exp\left\{  \kappa_6 \sqrt{(\log M) (\log \log M)} \left(1 + O(\frac{1}{\log \log M}) \right) \right\}} -  O\left(\sqrt{(\log M)(\log \log M)} \right)
\label{eq:logPe_bnd}
 \end{align}
where the last inequality above follows from \eqref{eq:Tbnd0}.

We now evaluate the bound in \eqref{eq:logPe_bnd} for the case $M=L^{\textsf{a}}$  considered in Sec  \ref{subsec:errexp}. We then we have $L = \Theta(\frac{n}{\log n})$ and $M=  \Theta((\frac{n}{\log n})^\mathsf{a})$. Substituting these in \eqref{eq:logPe_bnd}, we obtain  
 \begin{align}
  -  \log P  \left( \mc{E}_{sec}(\mc{S}_n) >  \e \right)  &  \geq \frac{\kappa_7 n \e^2}{(\log n) \exp\{ \kappa_8 \sqrt{(\log n) (\log \log n)}\}}  \nonumber \\
 &  = \kappa_7 \exp \left\{ \log n -  \kappa_8\sqrt{(\log n) (\log \log n)} - \log \log n \right \} \e^2 \nonumber \\
 &  = \kappa n^{1 - O\left(\sqrt{\frac{\log \log n}{\log n}} + \frac{\log \log n}{\log n} \right)} \e^2.
 \end{align}
 Therefore, for the case $M= L^{\textsf{a}}$, we can achieve a probability of  excess section error rate that decays as
 $\exp \left\{-\kappa n^{1 - O\left(\sqrt{{\log \log n}/{\log n}}\right)} \e^2 \right\}$, with a gap from capacity ($\Delta_R$) that is of order $\sqrt{\frac{\log \log n}{\log n}}$. Furthermore, from  \eqref{eq:frmin0} we see that $\e$ must be of  order at least $\frac{1}{\log n  \sqrt{\log \log n}}$.
 
 We note that this gap from capacity is of a much larger order than that for polar codes over binary input, symmetric memoryless channels \cite{GuruXia15polar}. Guruswami and Xia showed in \cite{GuruXia15polar} that for such channels, polar codes  of block length $n$ with gap from capacity of  order $\frac{1}{n^{\mu}}$ can achieve a block error probability decaying as $2^{-n^{0.49}}$ with a decoding algorithm whose complexity scales as $n \cdot  \text{poly}(\log n)$. (Here $0 < \mu < \frac{1}{2}$ is a universal constant.) For AWGN channels, there is no known coding scheme that provably achieves a polynomial gap to capacity with efficient decoding.

 Recall the lower bound on the gap to capacity arises from the condition \eqref{eq:chi_lb1} which is required to ensure that the (deterministic) state evolution sequence $x_1, x_2, \ldots $ is guaranteed to increase by at least an amount proportional to $\Delta_R$ in each iteration. As described in Remark \ref{rem:improving_gap} the capacity gap for the iterative hard-decision decoder can be improved to $O(\frac{\log \log M}{\log M})$ by modifying the exponential power allocation to  flatten the power allocation for a certain number of of sections at the end. We expect such a modification to yield a similar improvement in the capacity gap for the AMP decoder, but we do not detail this analysis as it is involves additional technical details.

\section{Comparison of the decoders }  \label{sec:dec_comparison}
 
All three decoders discussed in this section --  the adaptive successive hard-decision  decoder, the  adaptive successive soft-decision  decoder, and the AMP decoder --- achieve near-exponential decay of  error probability in the regime where $R < \mc {C}$ remains fixed. However, the finite length performance of the two soft-decision decoders is significantly better than that of the hard-decision decoder. This is because of the need to control the proliferation of false alarms in hard-decision decoding. 

 In the regime where $R < \mc {C}$ is fixed, the number of iterations also remains fixed. Consequently, the complexity of  all three decoders is $O(nML)$. The complexity is determined by the matrix-vector products  that need to be computed in each step,  using the design matrix $A \in \reals^{n \times ML}$.  Among the two soft-decision decoders, the AMP decoder has lower per iteration complexity (though still of the same order) as it does not require orthonormalization or Cholesky decomposition  to compute the test statistic. In the next chapter, we describe how replacing the Gaussian design matrix with a Hadamard-based design matrix can lead to significant savings in both running time and memory.

 In the regime where $\Delta_R$ shrinks to $0$ with growing $M$, the decoders discussed in this chapter are no longer efficient as they require $M$ to increase exponentially in $1/\Delta_R$ (cf.\ \eqref{eq:DelRbnd}). An interesting open question is whether SPARCs can achieve a smaller gap from capacity with efficient decoding.  The spatially coupled SPARC discussed in Chapter \ref{chap:sc_sparcs}  is a promising candidate, but a fully rigorous analysis of AMP-decoded spatially coupled SPARCs remains open.

\section{Proofs}

\subsection{Proof of Lemma \ref{lem:conv_expec}}  \label{subsec:conv_exp_proof}

From \eqref{eq:xt_tau_def}, $x(\tau)$ can be written as
\be
x(\tau) := \sum_{\ell=1}^{L} \frac{P_\ell}{P} \, \mc{E}_\ell(\tau),
\ee
where 
\be 
\begin{split}
&\mc{E}_\ell (\tau) =
\expec \left[
\frac{e^{\frac{\sqrt{n P_\ell}}{\tau} \, U^{\ell}_1}}
{ e^{ \frac{\sqrt{n P_\ell}}{\tau} \, U^{\ell}_1}  +  e^{-\frac{n P_\ell}{\tau^2}} \sum_{j=2}^M e^{\frac{\sqrt{n P_\ell}}{\tau}U^{\ell}_j } } \right].
\label{eq:Eell_def}
\end{split}
\ee
The result needs to be proved only for $\xi^* >0$. (For brevity, we supress the dependence of $\xi^*$ on $\tau$.) Since $P_\ell$ is non-increasing with $\ell$,  it is enough\footnote{We can also prove that $\lim \mc{E}_{\lfloor \xi^* L \rfloor} = \tfrac{1}{2}$, but we do not need this for the exponentially decaying power allocation since it will only affect a vanishing fraction of sections as $L$ increases. Since $\mc{E}_\ell \in [0,1]$, these sections do not affect the value of $\lim x(\tau)$ in \eqref{eq:Eell_def}.} to prove that for  $\xi \in  ( 0, 1]$,  
\be
\lim \mc{E}_{\lfloor \xi L \rfloor} (\tau) = \left\{
\begin{array}{ll}
1, & \text{ if } \xi <  \xi^*, \\
0, & \text{ if } \xi > \xi^*.
\end{array}
\right.
\label{eq:Eell_def0}
\ee
Using the relation $nR =  {L \ln M}$,
we can write
\ben
\frac{n P_{\lfloor \xi L \rfloor} }{\tau^2} = \nu_{\lfloor \xi L \rfloor} \ln M, \quad \text{ where } \quad \nu_{\lfloor \xi L \rfloor} = \frac{ L P_{\lfloor \xi L \rfloor} }{R \tau^2 }.
\een
From the definition of $\xi^*$ in the lemma statement and the non-increasing power-allocation, we see that $\lim  \nu_{\lfloor \xi L \rfloor} >2$ for $\xi <\xi^*$, and $\lim  \nu_{\lfloor \xi L \rfloor} < 2$ for $\xi > \xi^*$.

For brevity, in what follows we drop the superscripts on  $U_j^{\lfloor \xi L \rfloor}$, and denote it by $U_j$ for $j \in [M]$. From  \eqref{eq:Eell_def},  $\mc{E}_{\lfloor \xi L \rfloor}(\tau) $  can be written as
\begin{small}
\begin{align}
 \mc{E}_{\lfloor \xi L \rfloor}(\tau) &    = \expec \left[  \frac{e^{\sqrt{ \nu_{\lfloor \xi L \rfloor} \ln M} \, U_1}}
{e^{\sqrt{ \nu_{\lfloor \xi L \rfloor} \ln M} \, U_1 }  +   M^{- \nu_{\lfloor \xi L \rfloor}} \sum_{j=2}^M e^{\sqrt{ \nu_{\lfloor \xi L \rfloor} \ln M} \, U_j} }  \right] \nonumber  \\
& = \expec \, \expec \left[  \frac{e^{\sqrt{ \nu_{\lfloor \xi L \rfloor} \ln M} \, U_1 }}
{e^{\sqrt{ \nu_{\lfloor \xi L \rfloor} \ln M} \, U_1}  +   M^{- \nu_{\lfloor \xi L \rfloor}} \sum_{j=2}^M e^{\sqrt{ \nu_{\lfloor \xi L \rfloor} \ln M} \, U_j } }   \Big{|} U_1\right].
\label{eq:Eell_iter}
\end{align}
\end{small}
The inner expectation in \eqref{eq:Eell_iter} is of the form
\ben
\begin{split}
&\expec \left[  \frac{e^{\sqrt{ \nu_{\lfloor \xi L \rfloor} \ln M} \, U_1}}
{e^{\sqrt{ \nu_{\lfloor \xi L \rfloor} \ln M} \, U_1 }  +   M^{- \nu_{\lfloor \xi L \rfloor}} \sum_{j=2}^M e^{\sqrt{ \nu_{\lfloor \xi L \rfloor} \ln M} \, U_j} }   \Big{|} U_1\right]  = \expec_X \left[ \frac{c}{c + X} \right], 
\label{eq:inner_exp0} 
\end{split}
\een
where $c = \exp\left(\sqrt{ \nu_{\lfloor \xi L \rfloor} \ln M} \, U_1 \right)$ is treated as a positive constant, and the expectation is with respect to the random variable
\be
X := M^{- \nu_{\lfloor \xi L \rfloor}} \sum_{j=2}^M \exp\left(\sqrt{ \nu_{\lfloor \xi L \rfloor} \ln M} \, U_j \right).
\label{eq:Xrv_def}
\ee

\textbf{Case $1$: $\xi < \xi^*$}. Here we have $ \lim \nu_{\lfloor \xi L \rfloor} >2$.  Since $\frac{c}{c +X}$ is a convex function of $X$, applying Jensen's inequality we get
$\expec_X [  \frac{c}{c + X} ] \geq \frac{c}{c + \expec X}$.
The expectation of $X$ is
\ben
\begin{split}
\expec X &= M^{- \nu_{\lfloor \xi L \rfloor}} \sum_{j=2}^M \expec \left[e^{\sqrt{ \nu_{\lfloor \xi L \rfloor} \ln M} \, U_j } \right] \\
& \stackrel{(a)}{=} M^{- \nu_{\lfloor \xi L \rfloor}} (M-1)  M^{ \nu_{\lfloor \xi L \rfloor} /2}  \leq M^{1 -  \nu_{\lfloor \xi L \rfloor} /2},
\end{split}
\een
with $(a)$ is obtained from the moment generating function of a Gaussian random variable. Therefore,
\be
\begin{split}
1 \geq \expec_X \left[ \frac{c}{c + X} \right] \geq  \frac{c}{c + \expec X} &\geq \frac{c}{c + M^{1-  \nu_{\lfloor \xi L \rfloor} /2}} \\
&= \frac{1}{1 + c^{-1} \, M^{1-  \nu_{\lfloor \xi L \rfloor} /2}}.
\label{eq:jensen_chain}
\end{split}
\ee
Recalling that $c = \exp\left(\sqrt{ \nu_{\lfloor \xi L \rfloor} \ln M} \, U_1 \right)$, \eqref{eq:jensen_chain} implies that 
\be
\begin{split}
&\expec_X \left[ \frac{e^{\sqrt{ \nu_{\lfloor \xi L \rfloor} \ln M} \, U_1}}{e^{\sqrt{ \nu_{\lfloor \xi L \rfloor} \ln M} \, U_1 } + X}  \ \Big{|} \ U_1 \right]  
\geq \frac{1}{ 1 +   M^{1-  \nu_{\lfloor \xi L \rfloor} /2} \, e^{- \sqrt{ \nu_{\lfloor \xi L \rfloor} \ln M} \, U_1 }}.
\label{eq:UU_lb}
\end{split}
\ee
When $\{ U_1 > - (\ln M)^{1/4} \}$, the RHS of \eqref{eq:UU_lb} is at least $$[1 +  M^{1-  \nu_{\lfloor \xi L \rfloor} /2} \, \exp\left( (\ln M)^{3/4} \sqrt{ \nu_{\lfloor \xi L \rfloor}} \right)]^{-1}.$$ Using this in  \eqref{eq:Eell_iter}, we obtain that
\be
\begin{split}
& 1 \geq \mc{E}_{\lfloor \xi L \rfloor} (\tau) 
\geq  \frac{P(U_1 > - (\ln M)^{1/4} )}{1 +  M^{1-  \nu_{\lfloor \xi L \rfloor} /2} \, e^{ (\ln M)^{3/4} \sqrt{ \nu_{\lfloor \xi L \rfloor}} }} \stackrel{ M \to \infty}{\longrightarrow} \  1,
\end{split}
\ee
since  $\lim  \nu_{\lfloor \xi L \rfloor} > 2$.  Hence $\mc{E}_{\lfloor \xi L \rfloor} \to 1$ when $\lim  \nu_{\lfloor \xi L \rfloor} > 2$.

\textbf{Case $2$: $\xi > \xi^*$.} Here we have $ \lim \nu_{\lfloor \xi L \rfloor} < 2$. The random variable $X$ in \eqref{eq:Xrv_def} can be bounded from below as follows.
\be
\begin{split}
X &\geq  M^{- \nu_{\lfloor \xi L \rfloor}} \max_{j \in \{2, \ldots, M\}} e^{\sqrt{ \nu_{\lfloor \xi L \rfloor} \ln M} \, U_j } \\
&= M^{- \nu_{\lfloor \xi L \rfloor}} e^{ \left[ \max_{j \in \{2, \ldots, M\}} U_j \right] \sqrt{ \nu_{\lfloor \xi L \rfloor} \ln M} }.
\label{eq:X_lb0}
\end{split}
\ee
Using standard bounds for the standard normal distribution, it can be shown that
\be
P\left( \max_{j \in \{2, \ldots, M\}} U_j \ < \sqrt{2 \ln M}(1-\e)\right) \leq e^{-M^{\e(1-\e)}},
\label{eq:pmax_gauss}
\ee
for
$ \e = \omega\left( \frac{\ln \ln M}{\ln M} \right)$.\footnote{Recall that $f(n) = \omega(g(n))$ if for each $k >0$, $\abs{f(n)} / \abs{g(n)} \geq k$ for all sufficiently large $n$. } \label{eq:eps_order}
Combining \eqref{eq:pmax_gauss} and \eqref{eq:X_lb0}, we obtain that
\ben
\begin{split}
&\exp(-M^{\e(1-\e)}) \geq P\left( \max_{j \in \{2, \ldots, M\}} U_j \ < \sqrt{2 \ln M}(1-\e)\right)  \\ 
 & \quad \geq  P\left( X <  M^{- \nu_{\lfloor \xi L \rfloor}} e^{ \sqrt{2 \ln M}(1-\e) \sqrt{ \nu_{\lfloor \xi L \rfloor} \ln M}} \right) \\
 & \quad = P\left( X  <  M^{\sqrt{2  \nu_{\lfloor \xi L \rfloor}} (1-\e) - \nu_{\lfloor \xi L \rfloor}}  \right).
\end{split}
\een
Since $\lim  \nu_{\lfloor \xi L \rfloor} < 2$ and $\e >0$ can be an arbitrarily small constant, there exists a strictly positive constant $\delta$ such that 
$\delta <  \sqrt{2  \nu_{\lfloor \xi L \rfloor}}(1-\e) - \nu_{\lfloor \xi L \rfloor}$ for all sufficiently large $L$. Therefore, for sufficiently large $M$, the expectation in \eqref{eq:inner_exp0} can be bounded as
\be
\begin{split}
\expec_X \left[ \frac{c}{c +X} \right] &\leq P(X < M^\delta) \cdot 1 + P (X \geq M^{\delta})\cdot \frac{c}{c + M^{\delta}} \\
& \leq e^{-M^{\e(1-\e)}} + 1 \cdot \frac{c}{c + M^{\delta}} \leq \frac{2}{1 + c^{-1} M^{\delta}}.
\end{split}
\label{eq:ccx_bound}
\ee
Recalling that $c = \exp\left(\sqrt{ \nu_{\lfloor \xi L \rfloor} \ln M} \, U_1  \right)$, and  using the bound  of \eqref{eq:ccx_bound} in  \eqref{eq:Eell_iter}, we obtain
\be
\begin{split}
&\mc{E}_{\lfloor \xi L \rfloor} (\tau)  \leq \expec \left[  \frac{2}
{1 +   M^{\delta} e^{-\sqrt{ \nu_{\lfloor \xi L \rfloor} \ln M} \, U_1 }}  \right] \\
& \leq P(U_1 > (\ln M)^{1/4}) \cdot 2 + \frac{2P(U_1 \leq (\ln M)^{1/4})}{1 + M^{\delta} e^{-\sqrt{ \nu_{\lfloor \xi L \rfloor}} \, (\ln M)^{3/4}} } \\
&  \stackrel{(a)}{\leq} 2e^{-\tfrac{1}{2} (\ln M)^{1/2}} + \, 1 \cdot \frac{2}{1 + e^{ \delta \ln M -\sqrt{ \nu_{\lfloor \xi L \rfloor}} \, (\ln M)^{3/4}}} \\
& \stackrel{(b)}{\longrightarrow} 0 \text{ as } M \to \infty.
\end{split}
\label{eq:Eell_final}
\ee
In \eqref{eq:Eell_final}, $(a)$ is obtained using the bound  $\Phi(x) < \exp(-x^2/2)$ for $x \geq 0$, where $\Phi(\cdot)$ is the Gaussian cdf; $(b)$ holds since $\delta$ and $\lim  \nu_{\lfloor \xi L \rfloor}$ are both positive constants.

This proves that $\mc{E}_{\lfloor \xi L \rfloor}(\tau) \to 0$ when $\lim \nu_{\lfloor \xi L \rfloor} < 2$.  The proof of the lemma is complete since we have proved both statements in \eqref{eq:Eell_def0}.

\subsection{Proof of Lemma \ref{lem:xt_taut_lb}} \label{app:proof_lem_xt_taut_lb}

We will use the following lower bound on the function $x(\tau)$ in \eqref{eq:xt_tau_def}.   
\begin{lemma}\cite[Lemma 2.1]{RushV17ErrExp}
Consider the exponential power allocation power allocation in \eqref{eq:exp_power_alloc}, and let $\nu_\ell :=  {L P_\ell}/ (R \tau^2)$. Then $x(\tau) \geq \xtl$, where for sufficiently large $M$ and any $\delta \in (0, \tfrac{1}{2})$,   
 \begin{align}
 \xtl & \geq  \left(1- \frac{M^{-\kappa_1 \delta^2}}{\delta \sqrt{\log M}} \right) \sum_{\ell=1}^{L} \frac{P_\ell}{P} \,
\mathbf{1}\left\{ \nu_\ell > 2 + \delta \right\}  \\
& \qquad + \frac{1}{4}\sum_{\ell=1}^{L} \frac{P_\ell}{P} \mathbf{1}\left\{ 2\left(1- \frac{\kappa_2}{\sqrt{\log M}} \right) \leq \nu_\ell \leq  2 + \delta  \right\},  \label{eq:xlb_asym}
 \end{align}
 where $\kappa_1, \kappa_2$ are universal positive constants.
 \label{lem:xtl_lemma}
 \end{lemma}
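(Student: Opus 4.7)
The plan is to prove the displayed lower bound on $\xtl$ section by section: establish $\mc E_\ell(\tau) \geq 1 - M^{-\kappa_1 \delta^2}/(\delta \sqrt{\log M})$ whenever $\nu_\ell > 2+\delta$ (Case A) and $\mc E_\ell(\tau) \geq 1/4$ whenever $\nu_\ell$ lies in the critical band $[2(1 - \kappa_2/\sqrt{\log M}),\, 2+\delta]$ (Case B), then weight by $P_\ell/P$ and sum. The starting point is the representation
\be
\mc E_\ell(\tau) = \expec\!\left[\frac{c_\ell}{c_\ell + S_\ell}\right], \quad c_\ell := e^{\sqrt{\nu_\ell \log M}\, U_1}, \ S_\ell := M^{-\nu_\ell}\sum_{j=2}^M e^{\sqrt{\nu_\ell \log M}\, U_j},
\ee
derived in the proof of Lemma \ref{lem:conv_expec} with i.i.d.\ $U_1,\ldots,U_M \sim \mc N(0,1)$ and $c_\ell$ independent of $S_\ell$. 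Since $\mc E_\ell \geq 0$, sections whose $\nu_\ell$ falls outside the two regimes can be dropped from the lower bound.

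For Case A, the idea is to refine the asymptotic Case 1 argument in the proof of Lemma \ref{lem:conv_expec} into a non-asymptotic two-event estimate. Setting $\alpha := \gamma \delta \sqrt{\log M}$ for a small universal constant $\gamma > 0$ to be chosen, consider $\mc A_1 := \{U_1 \geq -\alpha\}$ and $\mc A_2 := \{S_\ell \leq M^{-\delta/4}\}$. On $\mc A_1$ we have $1/c_\ell \leq e^{\alpha \sqrt{\nu_\ell \log M}}$, and since $\expec[S_\ell] = (M-1) M^{-\nu_\ell/2} \leq M^{-\delta/2}$, Markov's inequality gives $\pr(\mc A_2^c) \leq M^{-\delta/4}$. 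If $\gamma$ is chosen small enough that $\alpha\sqrt{\nu_\ell \log M} \leq (\delta/8) \log M$, then on $\mc A_1 \cap \mc A_2$ we get $S_\ell/c_\ell \leq M^{-\delta/8}$, so $c_\ell/(c_\ell + S_\ell) \geq 1 - M^{-\delta/8}$ there. The complementary event $\mc A_1^c$ has probability at most $e^{-\alpha^2/2}/(\alpha \sqrt{2\pi})$ by Mills ratio, which simplifies to $O(M^{-\kappa_1 \delta^2}/(\delta \sqrt{\log M}))$ with $\kappa_1 = \gamma^2/2$; this dominates the $M^{-\delta/8}$ and $M^{-\delta/4}$ contributions for $\delta$ small, yielding Case A.

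For Case B I would recast $\mc E_\ell(\tau) = \expec[\sigma(-Z_\ell)]$ where $\sigma(z) := 1/(1+e^z)$ and $Z_\ell := \log(S_\ell/c_\ell) = -V_1 - \nu_\ell \log M + \log \sum_{j \geq 2} e^{V_j}$ with $V_j := \sqrt{\nu_\ell \log M}\, U_j$. The log-sum-exp term has median $\sqrt{2\nu_\ell}\log M + O(1)$ with $O(1)$ fluctuations (Borell--TIS concentration applied to the $1$-Lipschitz map $(u_2,\ldots,u_M) \mapsto \log\sum e^{\sqrt{\nu_\ell \log M}\, u_j}$), so the median of $Z_\ell$ equals $(\sqrt{2\nu_\ell} - \nu_\ell)\log M + O(1)$, which at the lower edge $\nu_\ell = 2(1 - \kappa_2/\sqrt{\log M})$ is $\kappa_2 \sqrt{\log M}\,(1 + o(1))$. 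On the other hand, $Z_\ell$ inherits Gaussian fluctuations of order $\sqrt{\nu_\ell \log M} = \Theta(\sqrt{\log M})$ from the $-V_1$ term. Because $\sigma$ varies on scale $1$ while $Z_\ell$ has spread $\Theta(\sqrt{\log M}) \gg 1$, we have $\expec[\sigma(-Z_\ell)] \approx \pr(Z_\ell \leq 0) \approx \Phi(-\kappa_2/\sqrt{2\nu_\ell})$ uniformly across the band; choosing $\kappa_2$ to be a sufficiently small universal constant (for instance it suffices to take $\kappa_2 < \sqrt{2}\,\Phi^{-1}(3/4) \approx 0.954$) yields $\mc E_\ell(\tau) \geq 1/4$ for $M$ large.

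The hard part will be Case B. Case A reduces to two coarse concentration events, whereas Case B requires sharp one-sided control on both the first-order typical value of $\log \sum_{j \geq 2} e^{V_j}$ and its $O(1)$ Gaussian-concentration fluctuations, combined with the observation that the sigmoid behaves effectively as an indicator on the $\sqrt{\log M}$ scale of $Z_\ell$'s spread. The universal constant $\kappa_2$ must be calibrated to absorb all the $O(1)$ corrections so that the resulting probability is at least $1/4$ rather than just close to $1/2$ from below. Once Cases A and B are in place, the lemma follows from $x(\tau) = \sum_\ell (P_\ell/P)\mc E_\ell(\tau)$ by weighting the two bounds by $P_\ell/P$ and discarding the non-negative contributions from the remaining sections.
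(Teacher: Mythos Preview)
The paper does not prove this lemma in-text (it cites \cite{RushV17ErrExp}), so I evaluate your plan on its own merits. Your two-case split and Case~A are essentially correct; the only caveat is that the inequality $\alpha\sqrt{\nu_\ell\log M}\le(\delta/8)\log M$ you need forces $\gamma\le 1/(8\sqrt{\nu_\ell})$, so $\kappa_1=\gamma^2/2$ implicitly depends on $\sup_\ell\nu_\ell$. This supremum is finite (a function of $\snr,R$) for the exponential allocation with $\tau^2\ge\sigma^2$, so this is harmless, but it means your $\kappa_1$ is ``universal'' only in the sense of being independent of $M,L,n$.

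There is a genuine error in Case~B: the map $(u_2,\ldots,u_M)\mapsto\log\sum_{j\ge2}e^{\sqrt{\nu_\ell\log M}\,u_j}$ is \emph{not} $1$-Lipschitz. Its Euclidean gradient is $\sqrt{\nu_\ell\log M}\,(p_2,\ldots,p_M)$ with $p_k$ the softmax weights, so the Lipschitz constant is $\sqrt{\nu_\ell\log M}=\Theta(\sqrt{\log M})$. Borell--TIS therefore only yields $O(\sqrt{\log M})$ fluctuations for the log-sum-exp---the same scale as $V_1$---and your claim that it enters $Z_\ell$ as a deterministic $O(1)$ shift collapses. (It is true that $\max_j V_j$ has $O(1)$ Gumbel fluctuations, but at $\nu_\ell\approx 2$ the sum is not dominated by the maximum alone.) The fix is simpler than what you attempted and is already implicit in the paper's proof of Lemma~\ref{lem:conv_expec}: apply Jensen conditionally on $U_1$ (your $c/(c+x)$ is convex in $x$), exactly as in \eqref{eq:jensen_chain}, to replace the random $S_\ell$ by its mean $\le M^{1-\nu_\ell/2}$ and obtain
\[
\mc E_\ell(\tau)\ \ge\ \expec_{U_1}\!\big[\sigma\big(\sqrt{\nu_\ell\log M}\,(U_1-a_\ell)\big)\big],\qquad a_\ell:=(1-\nu_\ell/2)\sqrt{\log M/\nu_\ell}.
\]
Now only $U_1$ is random, and your sigmoid-to-indicator step (made precise via $|\sigma(t)-\mathbf 1\{t>0\}|\le e^{-|t|}$, which integrates against $\phi$ to give error $O((\nu_\ell\log M)^{-1/2})$) applies cleanly: $\mc E_\ell\ge\Phi(-a_\ell)-o(1)$. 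On the band one has $a_\ell\le\kappa_2/\sqrt{\nu_\ell}\le\kappa_2/\sqrt 2+o(1)$, so your own threshold $\kappa_2<\sqrt 2\,\Phi^{-1}(3/4)$ indeed gives $\mc E_\ell\ge 1/4$ for $M$ large---no control of the fluctuations of $S_\ell$ is needed at all.
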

 
 Let $x_{t-1} =x < (1-f(M))$.  We only need to consider the case where $\nu_L < (2+\delta)$, because otherwise all the $\{ \nu_\ell \}_{\ell \in [L]}$ values are at  least $(2+\delta)$, and \eqref{eq:xlb_asym} guarantees that $x_t \geq (1-f(M))$.  

With $x_{t-1} =x$, we have $\tau_{t-1}^2 = \sigma^2 +P(1-x)$. Therefore, from \eqref{eq:cell} we have
\be 
\nu_\ell = \frac{L P_\ell}{R \tau_{t-1}^2} =  \frac{ \tau_0^2}{R \tau_{t-1}^2} L((1 + \snr)^{1/L}-1)  \left( 1 +\snr \right)^{-\ell/L}, \quad \ell \in [L].  \label{eq:nuellx}\ee 
Using this in \eqref{eq:xlb_asym}, we have
\begin{align}
x_t & \geq (1- f(M)) \sum_{\ell=1}^L   \frac{P_\ell}{P} \, \mathbf{1}\left\{ \nu_\ell > 2 + \delta \right\}  \nonumber \\
& \stackrel{(a)}{=} \left(1 - f(M)\right)  \sum_{\ell=1}^L \frac{P_\ell}{P} \, \mathbf{1}\left\{\frac{\ell}{L} < \frac{1}{2 \mc{C}} \log \left(\frac{L((1+\snr)^{1/L} - 1) \tau_0^2}{(2+\delta)  R \tau_{t-1}^2}\right) \right\} \nonumber \\
& \stackrel{(b)}{\geq}   \left(1 - f(M)\right)  \sum_{\ell=1}^L \frac{P_\ell}{P} \, \mathbf{1}\left\{\frac{\ell}{L} \leq \frac{1}{2 \mc{C}} \log 
\left(\frac{2 \mc{C} \tau_0^2}{(2+\delta)  R \tau_{t-1}^2}\right) \right\} \nonumber \\
& \stackrel{(c)}{\geq}  \left(1 - f(M)\right)  \frac{P+\sigma^2}{P}\left[ 1 - \exp\left\{- \log\left( \frac{2 \mc{C} \tau_0^2}{(2+\delta)  R \tau_{t-1}^2} \right) + \frac{2 \mc C}{L}   \right\} \right] \nonumber \\
& \stackrel{(d)}{\geq}  \left(1 - f(M)\right)  \frac{P+\sigma^2}{P}\left[ 1- \frac{(2 + \delta) R \tau_{t-1}^2}{2 \mc C \tau_0^2} - \frac{5}{L} \right]. \label{eq:xtlb0}
\end{align}
In the above, $(a)$ is obtained using the expression for $\nu_\ell$ in \eqref{eq:nuellx}, while $(b)$ by noting that
$ L((1+\snr)^{1/L} - 1) = L(e^{2 \mc C/L} - 1) \geq 2 \mc{C}$.
Inequality $(c)$ is obtained by  using the geometric series formula: for any $\xi \in (0,1)$, we have
\[ \sum_{\ell=1}^{\lfloor \xi L \rfloor} P_{\ell} = (P+\sigma^2)(1- e^{-2\mc{C}\lfloor \xi L \rfloor/L}) \geq 
(P+\sigma^2)(1- e^{-2\mc{C}\xi} e^{2 \mc C/L}).    \]
Inequality $(d)$ uses $e^{2\mc{C}/L} \leq 1 + 4 \mc{C}/L$ for large enough $L$. Substituting $\tau_{t-1}^2 = \sigma^2 +P(1-x)$, \eqref{eq:xtlb0} implies
\begin{align}
x_t - x & \geq   \left(1 - f(M)\right)  \frac{P+\sigma^2}{P} \left(1 - \frac{5}{L} \right)  \nonumber \\
& \qquad - (1-f(M)) \frac{(1+ \delta/2) R}{\mc C} \left( \frac{P+ \sigma^2}{P} -x \right) - x  \nonumber \\
& =  \left(1 - f(M)\right)  \frac{P+\sigma^2}{P} \left( 1 - \frac{(1+ \delta/2) R}{\mc C} - \frac{5}{L}  \right) \nonumber \\
& \qquad - x \left(1 -  \left(1 - f(M)\right) \frac{(1+ \delta/2) R}{\mc C}   \right).
\label{eq:xtlb1}
\end{align}
Since $\delta < (\mc C -  R)/\mc C$,  the term $\frac{(1+ \delta/2) R}{\mc C}$ is strictly less than $1$, and the RHS of \eqref{eq:xtlb1} is strictly decreasing in $x$. Using the upper bound of $x < (1- f(M))$ in \eqref{eq:xtlb1} and simplifying, we obtain
\begin{align}
x_t - x & \geq  \left(1 - f(M)\right)  \frac{\sigma^2}{P} \left( 1 - \frac{(1+ \delta/2) R}{\mc C} \right) \nonumber  \\
& \qquad - f(M) (1- f(M))\frac{(1+ \delta/2) R}{\mc C}  - \frac{5(1 +\sigma^2/P)}{L}.
\end{align}
This completes the proof for $t > 1$. For $t=1$, we start with $x=0$, and we get the slightly stronger lower bound of $\chi_1$ by substituting $x=0$ in \eqref{eq:xtlb1}.

\subsection{Proof Sketch of Theorem \ref{thm:main_amp_perf}} \label{subsec:amp_proof}

The main ingredients in the proof of Theorem \ref{thm:main_amp_perf} are two technical lemmas (Lemma \ref{lem:hb_cond} and Lemma \ref{lem:main_lem}).  After laying down some definitions and notation that will be used in the proof, we state the two lemmas and use them to prove Theorem \ref{thm:main_amp_perf}.

\paragraph{Definitions and notation for the proof.}
For consistency with earlier analyses of AMP, we use notation similar to  \cite{BayMont11,RushGV17}.  Define the following column vectors recursively for $t\geq 0$, starting with $\beta^0=0$ and $z^0=y$.
\begin{equation}
\begin{split}
h^{t+1}  := \beta_0 - (A^*z^t + \beta^t), \qquad &  q^t  :=\beta^t - \beta_0, \\
b^t := w-z^t,\qquad & m^t :=-z^t.
\end{split}
\label{eq:hqbm_def}
\end{equation}
Recall that $\beta_0$ is the message vector chosen by the transmitter.  The vector $h^{t+1}$ is the noise in the effective observation $A^*z^t + \beta^t$, while $q^t$ is the error in the estimate $\beta^t$.  A key ingredient of the proof is showing that $h^{t+1}$ and $m^t$ are approximately i.i.d.\ $\mc{N}(0, \tau_t^2)$, while $b^t$ is approximately i.i.d.\ $\mc{N}(0, \tau_t^2 - \sigma^2)$.

Define  $\mathscr{S}_{t_1, t_2}$ to be the sigma-algebra generated by
\[ b^0, ..., b^{t_1 -1}, m^0, ..., m^{t_1 - 1}, h^1, ..., h^{t_2}, q^0, ..., q^{t_2}, \text{ and }  \beta_0, w. \]
Lemma \ref{lem:hb_cond} iteratively computes the conditional distributions $b^t |_{ \mscrs_{t, t}}$ and $h^{t+1} |_{ \mscrs_{t+1, t}}$. Lemma \ref{lem:main_lem} then uses this conditional distributions to show the concentration of the mean squared error $\| q^t \|^2/n$.

For $t \geq 1$, let
\be
\lambda_t := \frac{-1}{\tau^2_{t-1}}\left( P - \frac{\norm{\beta^t}^2}{n} \right).
\label{eq:lambda_t_def}
\ee
We then have
\begin{equation}
b^{t} + \lambda_t m^{t-1} = A q^t, \quad \text{ and } \quad h^{t+1} + q^t = A^* m^t,
\label{eq:bmq}
\end{equation}
which follows from \eqref{eq:amp1} and \eqref{eq:hqbm_def}. From \eqref{eq:bmq}, we have the matrix equations 
 \be  B_t +  [0 | M_{t-1}] \Lambda_t = A Q_t  \qquad \text{ and } \qquad  H_t + Q_{t} = A^* M_t, \label{eq:XtYt_rel} \ee
where for $t \geq 1$,
\be
\begin{split}
&M_t  := [m^0 \mid \ldots \mid m^{t-1} ],  \qquad Q_t  :=  [q^0 \mid \ldots \mid q^{t-1} ] \\ 
& B_t := [b^0 | \ldots | b^{t-1}], \quad  H_t = [h^1 | \ldots | h^{t}], \quad  \Lambda_t := \text{diag}(\lambda_0, \ldots, \lambda_{t-1}). 
\end{split}
\label{eq:XYMQt}
\ee
The notation $[c_1 \mid c_2 \mid \ldots \mid c_k]$ is used to denote a matrix with columns $c_1, \ldots, c_k$.  We define $M_0, Q_0, B_0$, $H_0$, and $\Lambda_0$ to be all-zero vectors.  

 We use  $m^t_{\|}$ and $q^t_{\|}$ to denote the projection of $m^t$ and $q^t$ onto the column space of $M_t$ and $Q_t$, respectively. Let
 $\alpha_t := (\alpha^t_0, \ldots, \alpha^t_{t-1})^*$ and $\gamma_t :=  (\gamma^t_0, \ldots, \gamma^t_{t-1})^*$ be the coefficient vectors of these projections, i.e.,
 \be
 m^t_{\| } = \sum_{i=0}^{t-1} \alpha^t_i m^i, \quad  q^t_{\|} = \sum_{i=0}^{t-1} \gamma^t_i q^i.
 \label{eq:mtqt_par}
 \ee
 The projections of $m^t$ and $q^t$ onto the orthogonal complements of $M^t$ and $Q^t$, respectively,  are denoted by
 \be
 m^t_{\perp} := m^t - m^t_{\|}, \quad  q^t_{\perp} := q^t - q^t_{\|}
  \label{eq:mtqt_perp}
 \ee
 
The proof of Lemma \ref{lem:main_lem}  shows that for large $n$, the entries of $\alpha_t$ and $\gamma_t$ concentrate around constants. We now specify these constants. With $\tau^2_t$  and $x_t$ as defined in \eqref{eq:tau_def} and \eqref{eq:xt_tau_def}, for $t \geq 0$ define
\be
\sigma^2_t : = \tau_t^2 - \sigma^2 = P(1-x_t).
\label{eq:sigt_def}
 \ee
The concentrating values for $\gamma^t$ and $\alpha^t$ are
\be
\begin{split}
\hat{\gamma}^{t} &:=  (0,\ldots, 0, \sigma_t^2/\sigma_{t-1}^2)^* \in \mathbb{R}^t, \\
\hat{\alpha}^{t} &:= (0,\ldots, 0, \tau_t^2/\tau_{t-1}^2)^*  \in \mathbb{R}^t.
\label{eq:hatalph_hatgam_def}
\end{split}
\ee
Let $(\sigma^{\perp}_0)^2 := \sigma_0^2$ and $(\tau^{\perp}_0)^2 := \tau_0^2$, and for $t > 0$ define 
\be
\begin{split}
& (\sigma_{t}^{\perp})^2 := \sigma_{t}^2 \left(1 - \frac{ \sigma_{t}^2 }{\sigma_{t-1}^2 }\right), \quad \text{ and } \quad (\tau^{\perp}_{t})^2 := \tau_{t}^2 \left(1 - \frac{\tau_{t}^2}{\tau_{t-1}^2}\right).
\label{eq:sigperp_defs}
\end{split}
\ee

\begin{lemma} [Conditional distribution lemma {\cite[Lemma 4]{RushGV17}}] 
For the vectors $h^{t+1}$ and $b^t$ defined in \eqref{eq:hqbm_def}, the following hold for $1 \leq t \leq T$, provided $n >T$, and $M_t$ and $Q_t$ have full column rank. (We recall that the number of iterations $T$ is defined in \eqref{eq:Tdef}.)
\begin{align}
h^{1} \lvert_{\mscrs_{1, 0}} \stackrel{d}{=} \tau_0 Z_0 + \Delta_{1,0}, \quad &\text{ and } \quad h^{t+1} \lvert_{\mscrs_{t+1, t}} \stackrel{d}{=} \frac{\tau_t^2}{\tau_{t-1}^2} h^{t} + \tau_{t}^{\perp} \, Z_t + \Delta_{t+1,t}, \label{eq:Ha_dist} \\ 
b^{0} \lvert_{\mscrs_{0, 0}} \stackrel{d}{=} \sigma_0 Z'_0, \quad &\text{ and } \quad b^{t} \lvert_{\mscrs_{t, t}}\stackrel{d}{=} \frac{\sigma_t^2}{\sigma_{t-1}^2} b^{t-1} +  \sigma_{t}^{\perp} \, Z'_t + \Delta_{t,t}. \label{eq:Ba_dist}
\end{align}
where $Z_0, Z_t \in \mathbb{R}^N$ and $Z'_0, Z'_t \in \mathbb{R}^n$ are i.i.d.\ standard Gaussian random vectors that are independent of the corresponding conditioning sigma algebras. The deviation terms are $\Delta_{0,0}=0$,
\begin{align}
\Delta_{1,0} &= \left[ \left(\frac{\norm{m^0}}{\sqrt{n}}  - \tau_0\right)\mathsf{I} -\frac{\norm{m^0}}{\sqrt{n}} \mathsf{P}_{q^0}\right] Z_0 \nonumber \\
& \qquad + q^0 \left(\frac{\norm{q^0}^2}{n}\right)^{-1} \left(\frac{(b^0)^*m_0}{n} - \frac{\norm{q^0}^2}{n}\right), \label{eq:D10}
\end{align}
and for $t >0$,
\begin{align}
& \Delta_{t,t} =   \sum_{r=0}^{t-2} \gamma^t_r b^r  + \left( \gamma^t_{t-1} - \frac{\sigma^2_t}{\sigma^2_{t-1}} \right)b^{t-1}  +  \Bigg[  \Bigg(\frac{\norm{q^t_{\perp}}}{\sqrt{n}} - \sigma_{t}^{\perp}\Bigg) \mathsf{I}  - \frac{\norm{q^t_{\perp}} }{\sqrt{n}} \mathsf{P}_{M_t}\Bigg]Z'_t  \nonumber \\
& \quad + M_t\left(\frac{M_{t}^* M_{t}}{n}\right)^{-1} \left(\frac{H_t q^t_{\perp}}{n} - \frac{M_t}{n}^*\left[\lambda_t m^{t-1} - \sum_{r=1}^{t-1} \lambda_{r} \gamma^t_{r} m^{r-1}\right]\right),\label{eq:Dtt} 
\end{align}
\begin{align}
& \Delta_{t+1,t}  =   \sum_{r=0}^{t-2} \alpha^t_r h^{r+1}+ \left( \alpha^t_{t-1} - \frac{\tau^2_t}{\tau^2_{t-1}} \right) h^{t} \nonumber  \\
 & \ + \left[\left(\frac{\norm{m^t_{\perp}}}{\sqrt{n}} - \tau_{t}^{\perp}\right)  \mathsf{I} -\frac{\norm{m^t_{\perp}}}{\sqrt{n}} 
 \mathsf{P}_{Q_{t+1}}\right]Z_t \nonumber \\
& \ + Q_{t+1} \left(\frac{Q_{t+1}^* Q_{t+1}}{n}\right)^{-1} \left(\frac{B^*_{t+1} m^t_{\perp}}{n} - \frac{Q_{t+1}^*}{n}\left[q^t - \sum_{i=0}^{t-1} \alpha^t_i q^i\right]\right).\label{eq:Dt1t}  
\end{align} 
\label{lem:hb_cond}
\end{lemma}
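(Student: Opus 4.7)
The plan is to proceed by induction on $t$ using the Gaussian conditioning technique of Bolthausen, as adapted to AMP by Bayati and Montanari. The core fact is that since $A$ has i.i.d.\ $\mc{N}(0, 1/n)$ entries, the conditional distribution of $A$ given a set of linear constraints of the form $AQ = V$, $A^* M = U$ (with $Q, M$ full column rank) is the deterministic minimum-Frobenius-norm solution to those constraints plus an independent Gaussian matrix $\widetilde{A}$ acting only on $\mathrm{col}(Q)^\perp$ on the right and on $\mathrm{col}(M)^\perp$ on the left. The matrix equations in \eqref{eq:XtYt_rel} are precisely the $\mscrs_{t+1,t}$-measurable linear constraints on $A$: the relation $AQ_{t+1} = B_{t+1} + [0 \mid M_t]\Lambda_{t+1}$ fixes $A$ on $\mathrm{col}(Q_{t+1})$, while $A^* M_t = H_t + Q_t$ fixes $A^*$ on $\mathrm{col}(M_t)$.

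For the inductive step producing \eqref{eq:Ha_dist}, I would decompose $m^t = m^t_\| + m^t_\perp$ with $m^t_\| = M_t\alpha^t$ and write
\begin{equation*}
A^* m^t \;=\; A^* M_t\,\alpha^t + A^* m^t_\perp \;=\; (H_t + Q_t)\alpha^t + \widetilde{A}^* m^t_\perp,
\end{equation*}
where the first summand is $\mscrs_{t+1,t}$-measurable by \eqref{eq:XtYt_rel}, and the conditioning lemma gives $\widetilde{A}^* m^t_\perp \stackrel{d}{=} (\|m^t_\perp\|/\sqrt{n})(\mathsf{I} - \mathsf{P}_{Q_{t+1}}) Z_t$ with $Z_t \sim \mc{N}(0,\mathsf{I})$ independent of $\mscrs_{t+1,t}$. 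Substituting into $h^{t+1} = -A^* m^t - q^t$ and isolating the two ``main'' pieces --- the recursive term $(\tau_t^2/\tau_{t-1}^2) h^t$, obtained by splitting $\alpha^t_{t-1} = \hat\alpha^t_{t-1} + (\alpha^t_{t-1}-\hat\alpha^t_{t-1})$ and invoking the identity $h^t = -A^* m^{t-1} - q^{t-1}$ from the prior iteration, together with the fresh Gaussian $\tau_t^\perp Z_t$ --- collects the remainder into $\Delta_{t+1,t}$. Each summand of $\Delta_{t+1,t}$ in \eqref{eq:Dt1t} then corresponds to one of three mismatches: (a) empirical coefficients $\alpha^t_r$ differing from the concentrating values $\hat\alpha^t_r$, producing $\sum_{r \leq t-2}\alpha^t_r h^{r+1} + (\alpha^t_{t-1}-\tau_t^2/\tau_{t-1}^2)h^t$; (b) $\|m^t_\perp\|/\sqrt{n}$ differing from $\tau_t^\perp$, producing the $(\|m^t_\perp\|/\sqrt{n} - \tau_t^\perp)\mathsf{I} - (\|m^t_\perp\|/\sqrt{n})\mathsf{P}_{Q_{t+1}}$ factor on $Z_t$; and (c) a correction along $\mathrm{col}(Q_{t+1})$ restoring the component that $\widetilde{A}^*$ omits. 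The base case $t=0$ is the same argument with $M_0$ empty, and the representation of $b^t$ in \eqref{eq:Ba_dist} is obtained by the symmetric argument with the roles of $M_t, Q_t$ swapped and $\widetilde{A}$ replacing $\widetilde{A}^*$.

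The main technical obstacle will be the bookkeeping needed to identify the correction $Q_{t+1}(Q_{t+1}^*Q_{t+1}/n)^{-1}\bigl(B_{t+1}^* m^t_\perp/n - (Q_{t+1}^*/n)[q^t - \sum_i \alpha^t_i q^i]\bigr)$ in $\Delta_{t+1,t}$. This term reflects the fact that the fresh Gaussian $\widetilde{A}^*$ is supported on $\mathrm{col}(M_t)^\perp$ while the true $A^* m^t$ has a component along $\mathrm{col}(M_t)$ determined by the constraint $A^* M_t = H_t + Q_t$; tracking how the projection off $\mathrm{col}(Q_{t+1})$ interacts with this deficit requires several lines of careful linear algebra to express everything solely in terms of $\mscrs_{t+1,t}$-measurable quantities. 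The full-rank hypotheses on $M_t, Q_t$ (together with $n > T$) guarantee that all pseudo-inverses are well-defined, and the concentration of the deviation terms in $\Delta_{t+1,t}$ to zero is deferred to the companion Lemma \ref{lem:main_lem}, which closes the induction by showing simultaneously that the coefficients $\alpha^t, \gamma^t$ concentrate near $\hat\alpha^t, \hat\gamma^t$ and that $\|q^t\|^2/n$ concentrates near $P(1-x_t)$.
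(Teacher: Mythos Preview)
The paper does not actually prove this lemma; it is quoted verbatim as \cite[Lemma 4]{RushGV17} and used as a black box in the proof sketch of Theorem~\ref{thm:main_amp_perf}. Your proposal correctly identifies and outlines the Gaussian conditioning technique of Bolthausen and Bayati--Montanari that the cited reference uses: conditioning $A$ on the linear constraints \eqref{eq:XtYt_rel}, splitting $m^t$ (resp.\ $q^t$) into its parallel and perpendicular components, and collecting the mismatches between empirical and state-evolution quantities into $\Delta_{t+1,t}$ (resp.\ $\Delta_{t,t}$). One small slip: from \eqref{eq:bmq} one has $h^{t+1} = A^* m^t - q^t$, not $-A^* m^t - q^t$, but this does not affect the structure of the argument.
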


The next lemma uses the representation in Lemma \ref{lem:hb_cond} to show that for each $t \geq 0$, $h^{t+1}$ is the sum of an i.i.d.\ $\mc{N}(0, \tau_t^2)$ random vector plus a deviation term.  Similarly $b^t$ is the sum of an i.i.d.\ $\mc{N}(0, \sigma_t^2)$ random vector and a deviation term.  
\begin{lemma}
For $t \geq 0$, the conditional distributions in Lemma \ref{lem:hb_cond} can be expressed as 
\be
h^{t+1} \lvert_{\mscrs_{t+1, t}} \stackrel{d}{=} \tilde{h}^{t+1} + \tilde{\Delta}_{t+1} , \qquad  b^{t} \lvert_{\mscrs_{t, t}}\stackrel{d}{=} \breve{b}^{t} + \breve{\Delta}_{t},
\label{eq:htil_rep}
\ee
where
\begin{align}
& \tilde{h}^{t+1} := \tau_{t}^2 \sum_{i=0}^{t} \left(\frac{\tau^{\perp}_i}{\tau_{i}^2}\right) Z_i, 
\qquad \tilde{\Delta}_{t+1} := \tau_{t}^2 \sum_{i=0}^{t} \left(\frac{1}{\tau_{i}^2}\right) \Delta_{i+1, i},  \label{eq:htilde_def} \\
& \breve{b}^{t}:= \sigma_{t}^2 \sum_{i=0}^{t} \left(\frac{\sigma^{\perp}_i}{\sigma_{i}^2}\right) Z'_i,
\qquad \breve{\Delta}_{t} := \sigma_{t}^2 \sum_{i=0}^{t} \left(\frac{1}{\sigma_{i}^2}\right) \Delta_{i, i}.
\label{eq:btilde_def}
\end{align}
\label{lem:ideal_cond_dist}
Here $Z_i \in \reals^N$, $Z'_i \in \reals^n$ are the independent standard Gaussian vectors defined in Lemma \ref{lem:hb_cond}.

Consequently, $\tilde{h}^{t+1}  \stackrel{d}{=} \tau_t \tilde{Z}_t$, and $\breve{b}^{t}  \stackrel{d}{=} \sigma_t \breve{Z}_t$, where $\tilde{Z}_t \in \reals^N$ and $ \breve{Z}_t \in \reals^n$ are standard  Gaussian random vectors such that for any $j  \in [N]$ and  $i \in [n]$,  the vectors $(\tilde{Z}_{0,j}, \ldots, \tilde{Z}_{t,j})$ and  $(\breve{Z}_{0,i}, \ldots, \breve{Z}_{t,i})$ are each jointly Gaussian with 
\be \expec[\tilde{Z}_{r,j} \tilde{Z}_{s,j}] = \frac{\tau_s}{\tau_r}, \qquad  \expec[ \breve{Z}_{r,i} \breve{Z}_{s,i}] = \frac{\sigma_s}{\sigma_r} \qquad \text{ for }0 \leq r \leq s \leq t. \ee
\end{lemma}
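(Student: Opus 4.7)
The plan is to unwind the one-step recursions provided by Lemma \ref{lem:hb_cond} into the closed-form sums claimed in \eqref{eq:htilde_def}--\eqref{eq:btilde_def}, and then verify the Gaussianity and covariance structure of $\tilde{h}^{t+1}$ and $\breve b^{t}$ via a telescoping identity.

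First I would establish the representation \eqref{eq:htil_rep} for the $h$-sequence by induction on $t$; the $b$-sequence is completely analogous and can be handled in parallel.  The base case $t=0$ is immediate from Lemma \ref{lem:hb_cond} together with $\tau_0^\perp = \tau_0$, which gives the one-term sums $\tilde{h}^1 = \tau_0 Z_0$ and $\tilde{\Delta}_1 = \Delta_{1,0}$.  For the inductive step, I substitute the assumed representation of $h^t$ into the recursion $h^{t+1} \stackrel{d}{=} (\tau_t^2/\tau_{t-1}^2)\, h^t + \tau_t^\perp Z_t + \Delta_{t+1,t}$, and factor $\tau_t^2$ out of the sums.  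The key algebraic cancellation is that $(\tau_t^2/\tau_{t-1}^2)\cdot \tau_{t-1}^2 = \tau_t^2$, which propagates the leading coefficient from the $h^t$ representation into the sums for $\tilde{h}^{t+1}$ and $\tilde{\Delta}_{t+1}$.  Writing the freshly added terms as $\tau_t^\perp Z_t = \tau_t^2 \cdot (\tau_t^\perp/\tau_t^2) Z_t$ and $\Delta_{t+1,t} = \tau_t^2 \cdot (1/\tau_t^2)\Delta_{t+1,t}$ then produces exactly the $i=t$ term of each sum in \eqref{eq:htilde_def}, closing the induction.

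To derive the ``consequently'' statement, observe that by Lemma \ref{lem:hb_cond} the vectors $Z_0,\ldots,Z_t$ are mutually independent standard Gaussians, so $\tilde{h}^{t+1}$ is Gaussian with independent coordinates of common variance
\[
V_t \;:=\; \tau_t^4 \sum_{i=0}^{t} \frac{(\tau_i^\perp)^2}{\tau_i^4}.
\]
Using \eqref{eq:sigperp_defs}, the summand for $i\geq 1$ simplifies to $(1/\tau_i^2) - (1/\tau_{i-1}^2)$, while the $i=0$ term is $1/\tau_0^2$; the sum therefore telescopes to $1/\tau_t^2$, giving $V_t = \tau_t^2$.  Setting $\tilde{Z}_t := \tilde{h}^{t+1}/\tau_t$ yields $\tilde{h}^{t+1}\stackrel{d}{=}\tau_t \tilde{Z}_t$ with $\tilde{Z}_t$ standard Gaussian.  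The joint covariance structure follows from the same telescoping:  for $r \le s$ and any fixed coordinate $j$,
\[
\expec[\tilde{Z}_{r,j}\tilde{Z}_{s,j}] \;=\; \tau_r \tau_s \sum_{i=0}^{r} \frac{(\tau_i^\perp)^2}{\tau_i^4} \;=\; \frac{\tau_r \tau_s}{\tau_r^2} \;=\; \frac{\tau_s}{\tau_r},
\]
where I used that the overlap of the two sums stops at the smaller index $r$, so only the $i \le r$ telescoping terms contribute.  The identical argument applied to the $\sigma$-sequence (using $(\sigma_i^\perp)^2 = \sigma_i^2(1 - \sigma_i^2/\sigma_{i-1}^2)$) gives $\breve b^t \stackrel{d}{=} \sigma_t \breve Z_t$ with the claimed covariances.

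There is no significant obstacle here: the argument is essentially bookkeeping once one recognizes that $(\tau_i^\perp)^2/\tau_i^4$ is the telescoping difference $1/\tau_i^2 - 1/\tau_{i-1}^2$.  The only mild subtlety is tracking the deviation terms through the induction carefully, since the inductive factor $\tau_t^2/\tau_{t-1}^2$ must be absorbed into the leading constant of every $\Delta_{i+1,i}$ in $\tilde\Delta_t$ to yield the uniform $\tau_t^2/\tau_i^2$ weights in $\tilde\Delta_{t+1}$.
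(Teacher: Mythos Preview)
Your proposal is correct and follows essentially the same approach as the paper: recursively unwind the one-step relation from Lemma~\ref{lem:hb_cond} to obtain the closed-form sums, then verify the variance and cross-covariances of $\tilde Z_t$ via the telescoping identity $(\tau_i^\perp)^2/\tau_i^4 = 1/\tau_i^2 - 1/\tau_{i-1}^2$. The paper phrases the first part as ``recursively write $h^t$ in terms of $(h^{t-1}, Z_{t-1}, \Delta_{t,t-1})$, then $h^{t-1}$ in terms of \ldots'' rather than as a formal induction, but the computations are identical.
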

\begin{proof}
We give the proof for the distributional representation of $h^{t+1}$, with the proof for $b^t$ being similar. The representation in  \eqref{eq:htil_rep} can be directly obtained by using Lemma \ref{lem:hb_cond} Eq.\ \eqref{eq:Ha_dist} to recursively write $h^{t}$ in terms of $(h^{t-1}, Z_{t-1}, \Delta_{t, t-1})$, then $h^{t-1}$ in terms of $(h^{t-2}, Z_{t-2}, \Delta_{t-1, t-2})$, and so on. 

Using \eqref{eq:htilde_def}, we write $\tilde{h}^{t+1}= \tau_t  \tilde{Z}_t$, where $\tilde{Z}_t = \tau_{t} \sum_{i=0}^{t} \left(\frac{\tau^{\perp}_i}{\tau_{i}^2}\right) Z_i$ is n Gaussian random vector with i.i.d.\ entries, with zero mean and variance equal to
\begin{align}
 \tau_{t}^2\sum_{i=0}^{t} \frac{ (\tau^{\perp}_i)^2}{\tau_{i}^4}  = \frac{ \tau_{t}^2}{\tau_0^2} + \sum_{i=1}^{t} \left(\frac{ \tau_{t}^2}{\tau_{i}^2} \right) \left(1 - \frac{\tau_i^2}{\tau_{i-1}^2}\right) &=  \frac{\tau_{t}^2}{\tau_0^2} + \sum_{i=1}^{t} \left(\frac{\tau_{t}^2}{\tau_{i}^2} - \frac{\tau_{t}^2}{\tau_{i-1}^2}\right)  \nonumber \\
 & = 1.
\label{eq:barhvar}
\end{align}
For $j \in [N]$ the covariance between the $j$th entries of $\tilde{Z}_r$ and $\tilde{Z}_s$, for $0 \leq r \leq s \leq t$, is 
\begin{align}
 \expec[\tilde{Z}_{r,j} \tilde{Z}_{s,j}] 
 =  \tau_{r} \tau_{s} \sum_{u=0}^{r} \sum_{v=0}^{s} \left(\frac{\tau^{\perp}_u}{\tau_{u}^2} \right) \left(\frac{\tau^{\perp}_v}{\tau_{v}^2} \right)  
\mathbb{E}\left\{Z_{u_j} Z_{v_j} \right\} & \stackrel{(a)}{=} \tau_{r} \tau_{s} \sum_{u=0}^{r} \frac{(\tau^{\perp}_u)^2}{\tau_{u}^4} \nonumber   \\
& \stackrel{(b)}{=} \frac{\tau_s}{\tau_r},
\end{align}
where $(a)$ follows from the independence of $Z_{u_j}$ and $Z_{v_j}$ and $(b)$ from the calculation in \eqref{eq:barhvar}.
\end{proof}


The next lemma shows that the deviation terms in Lemma \ref{lem:hb_cond} are small, in the sense that their section-wise maximum absolute value and norm concentrate around $0$. It also shows that the mean-squared error 
$\| q^t \|^/n= \| \beta - \beta^t \|^2/n$ concentrates around $\sigma_t^2$ for  $0\leq t \leq T$.

 \begin{lemma} \cite{RushV17ErrExp}
With $C, K, c, \kappa$ denoting generic positive universal constants, the following large deviations inequalities hold for $0  \leq t < T$: 
 \begin{align}
& \hspace{-10pt} P\left( \left[ \frac{1}{L} \sum_{\ell = 1}^L \max_{j \in sec_{\ell}} \abs{[\Delta_{t+1,t}]_{j}} \right]^2 \geq \epsilon \right)  
 \leq  P \left(\frac{1}{L} \sum_{\ell = 1}^L \max_{j \in sec_{\ell}} ([\Delta_{t+1,t}]_{j})^2 \geq \epsilon \right)  
 \nonumber \\%
&  \qquad \leq   K C^{2t} (t!)^{11} \exp\left\{-  \frac{ \kappa   L \e}{(c\log M)^{2t} (t!)^{17}}\right\}  \label{eq:Ha1}, \\
\end{align}
\begin{align}
 P\left( \frac{1}{n}\norm{\Delta_{t,t}}^2 \geq \e \right) 
\leq   K C^{2t} (t!)^{11} \exp\left\{-  \frac{ \kappa   L \e^2}{(c\log M)^{2t-1} (t!)^{17}}\right\},
\label{eq:Ba}
\end{align}
 \be
 P\left( \abs{ \frac{\norm{q^{t+1}}^2}{n} - \sigma_{t+1}^2} \geq  \e \right) 
 \leq 
 K C^{2t} (t!)^{11} \exp\left\{-  \frac{ \kappa   L \e^2}{(c\log M)^{2t+1} (t!)^{17}}\right\}.
 \label{eq:Hc}
 \ee
\label{lem:main_lem}
\end{lemma}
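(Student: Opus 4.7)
The plan is to prove Lemma \ref{lem:main_lem} by strong induction on the iteration index $t$, following the Bayati--Montanari conditioning technique adapted to the SPARC setting. At each induction step we will establish a cluster of concentration statements simultaneously: the three statements in the lemma, plus auxiliary statements for the components of $\alpha^t$ and $\gamma^t$ around $\hat{\alpha}^t$ and $\hat{\gamma}^t$, for the normalized inner products $\langle q^r, q^s \rangle/n$, $\langle m^r, m^s \rangle/n$, $\langle h^r, h^s \rangle/n$, $\langle b^r, b^s \rangle/n$ (around $\sigma_{\max(r,s)}^2$ or $\tau_{\max(r,s)}^2$ as appropriate), and for $\langle \beta, \beta^t \rangle/n$ (around $P x_t$, using Proposition \ref{prop:se_cons}). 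The bundle is needed because the deviation terms $\Delta_{t,t}$ and $\Delta_{t+1,t}$ in \eqref{eq:Dtt}--\eqref{eq:Dt1t} mix all of these quantities together.

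The base case $t=0$ is direct from Lemma \ref{lem:hb_cond}: $b^0 \stackrel{d}{=} \sigma_0 Z_0'$ is literally i.i.d.\ Gaussian, and $\Delta_{1,0}$ in \eqref{eq:D10} involves only the single vector $q^0 = -\beta_0$ whose norm equals $\sqrt{nP}$ exactly (because of the power allocation), plus the scalar $\norm{m^0}/\sqrt{n} - \tau_0$ which concentrates by standard $\chi^2$ large deviations, and one Gaussian inner product term. The induction step is where the real work lies. Assuming the lemma and its auxiliary statements hold up through iteration $t-1$, we use Lemma \ref{lem:ideal_cond_dist} to replace $h^{t+1}$ by the ``ideal'' Gaussian $\tilde h^{t+1} = \tau_t \tilde Z_t$ plus the cumulative deviation $\tilde \Delta_{t+1}$. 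For concentration of $\norm{q^{t+1}}^2/n$ around $\sigma_{t+1}^2$, I would write
\[
\frac{\norm{q^{t+1}}^2}{n} = \frac{1}{n}\sum_{\ell=1}^{L} \norm{\eta_\ell^t(\beta_\ell + \tilde h^{t+1}_\ell + \tilde \Delta_{t+1,\ell}) - \beta_\ell}^2,
\]
split the argument into the Gaussian piece plus the deviation piece, apply Lipschitz continuity of $\eta_t$ (the soft-max denominator is bounded below), reduce the Gaussian piece to an i.i.d.\ average of bounded sectionwise random variables whose expectation equals $\sigma_{t+1}^2 = P(1-x_{t+1})$ by the state evolution identity in Proposition \ref{prop:se_cons}, and conclude by Hoeffding's inequality together with a concentration bound on $\norm{\tilde\Delta_{t+1}}^2/n$.

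Concentration of $\alpha^t$ and $\gamma^t$ around $\hat\alpha^t$, $\hat\gamma^t$ follows by expressing these coefficients as solutions of the small linear systems $(M_t^*M_t/n)\alpha^t = M_t^* m^t/n$ and $(Q_t^*Q_t/n)\gamma^t = Q_t^* q^t/n$, then applying the inductive concentration bounds to the entries of both matrices. A lower bound on the smallest eigenvalue of $M_t^*M_t/n$ (respectively $Q_t^*Q_t/n$), needed so the matrix inverse is well-behaved, comes from the fact that the expected Gram matrix has entries $\tau_r^2 \wedge \tau_s^2$ (respectively $\sigma_r^2 \wedge \sigma_s^2$) and is therefore strictly positive definite since the state evolution parameters strictly decrease with $t$ by Lemma \ref{lem:xt_taut_lb}. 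These estimates plug directly into the expressions \eqref{eq:Dtt}--\eqref{eq:Dt1t} for the deviation terms. Each summand in those expressions is a product of at most a constant number of factors (a matrix inverse entry, a concentrating scalar, a Gaussian-like vector) and hence concentrates; a union bound over $O(t)$ summands and over $0 \le r,s \le t$ inner products yields \eqref{eq:Ha1} and \eqref{eq:Ba}. The sectionwise-maximum form in \eqref{eq:Ha1} follows from the one for $\norm{\Delta_{t+1,t}}^2/n$ by replacing the bound $\max_j |v_j|^2 \le \norm v^2$ on the last factor of each term that is already coordinate-wise Gaussian by the standard Gaussian-maximum bound $\max_{j \in \sec(\ell)} |Z_j| = O(\sqrt{\log M})$ combined with a union bound over the $L$ sections.

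The main obstacle is the bookkeeping of the constants. Each induction step introduces a constant number of new factors, but each factor itself is the product of previously established quantities, so the concentration exponent degrades multiplicatively. The explosion of the prefactor as $C^{2t}(t!)^{11}$ and the denominator $(c\log M)^{2t-1}(t!)^{17}$ in the exponent is an artifact of this repeated compounding, together with the cost of handling the Lipschitz constant of $\eta_t$ (which depends on $\sqrt{\log M}$ through the maximum of the Gaussian noise in each section) and the need, at each iteration, to invert a $t\times t$ Gram matrix whose conditioning deteriorates as $t$ grows. Because $T = O(1/\Delta_R)$ from \eqref{eq:Tdef}, these factors remain under control for fixed $R<\mc{C}$, and substituting into the bound from iteration $T$ yields the large deviations estimate used in Theorem \ref{thm:main_amp_perf}.
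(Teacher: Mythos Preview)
Your plan is essentially the same as the paper's (which only sketches the argument and refers to \cite{RushV17ErrExp} for details): a strong induction on $t$ using the conditional-distribution Lemma~\ref{lem:hb_cond} and the ideal-plus-deviation decomposition of Lemma~\ref{lem:ideal_cond_dist}, carrying along a bundle of concentration statements for the inner products among $\{h^{r}, q^{r}, b^{r}, m^{r}\}$ and for the projection coefficients $\alpha^t,\gamma^t$ around $\hat\alpha^t,\hat\gamma^t$, then feeding these into the explicit formulas \eqref{eq:Dtt}--\eqref{eq:Dt1t}. Your account of the base case, the role of the Lipschitz behavior of $\eta^t$ (with its $\sqrt{\log M}$ dependence), the Gram-matrix inversion, and the compounding of constants through sums of $t$ terms and products of two concentrating factors all match the paper's description of how the $(t!)^{11}, (t!)^{17}, C^{2t}, (c\log M)^{2t}$ factors arise.

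Two minor points worth tightening: the sectionwise summands in $\tfrac1n\norm{q^{t+1}}^2$ are independent but not identically distributed (the bound on the $\ell$th term is $O(P_\ell)$, which varies with $\ell$), so you need the weighted Hoeffding/Azuma form rather than the plain i.i.d.\ one; and the positive-definiteness of the limiting Gram matrix with entries $\tau_{\max(r,s)}^2$ (equivalently $\sigma_{\max(r,s)}^2$) is correct, but you should note that its smallest eigenvalue can shrink with $t$, and this is one of the sources of the $(t!)$ degradation rather than an obstacle you avoid.
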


The proof of Lemma \ref{lem:main_lem} can be found in \cite[Sec. 5]{RushV17ErrExp}. The proof is inductive. To prove Theorem \ref{thm:main_amp_perf}, we only need the concentration result for the squared error $\| q^t \|^2/n$ in \eqref{eq:Hc}. But the proof of this result requires concentration results for various inner products and functions involving $\{h^{t+1}, q^t, b^t, m^t\}$, which are proved inductively. 

The dependence on $t$ of the probability bounds in Lemma \ref{lem:main_lem} is determined by the induction used in the proof:  the concentration results for step $t$ depend on those corresponding to all the previous steps. The $t!$ terms in the constants arise due to quantities that can be expressed as a sum of $t$ terms with step indices $1,\ldots, t$, e.g., $\Delta_{t,t}$ and $\Delta_{t+1,t}$ in \eqref{eq:Dtt} and \eqref{eq:Dt1t}.  The concentration results for such quantities have $1/t$ and $t$ multiplying the exponent and pre-factor, respectively, in each step $t$, which results in the $t!$ terms in the bound. Similarly, the $C^{2t}$ and $c^{2t}$ terms arise due to quantities  that are the \emph{product} of two terms, for each of which we have a concentration result available from the induction hypothesis.

\paragraph{Proof of  Theorem \ref{thm:main_amp_perf}.} \label{subsec:proof_thm1}
The event that the section error rate exceeds $\e$ is
$ \{ \mc{E}_{sec}(\mc{S}_n)  > \e \} =  \left\{ \sum_{\ell }  \mathbf{1} \{ \hat{\beta}_\ell \neq \beta_{0_\ell} \} > L \e \right\}$. 
Recall that the largest entry within each section of $\beta^T$ is chosen to produce $\hat{\beta}$. Therefore, 
when a section $\ell$ is decoded in error, the correct non-zero entry has no more than half the total mass of section
$\ell$ at the termination step $T$. That is, $\beta^{T}_{\textsf{sent}(\ell)} \leq  \frac{1}{2} \sqrt{n P_ \ell}$
where $\textsf{sent}(\ell)$ is the index of the non-zero entry in section $\ell$ of the true message $\beta_0$.  Since 
$\beta_{0_\textsf{sent}(\ell)} = \sqrt{n P_\ell}$, we have
\be
\mathbf{1} \{ \hat{\beta}_\ell \neq \beta_{0_\ell} \} \ \  \Rightarrow \ \ \norm{ \beta^{T}_\ell - \beta_{0_\ell}}^2 \geq  \frac{n P_\ell}{4}, \quad \ell \in [L].
\label{eq:sec_error_implies}
\ee
Hence when  $\{ \mc{E}_{sec}(\mc{S}_n)  > \e \}$, we  have
\be
\begin{split}
\norm{\beta^{T} - \beta_0}^2  = \sum_{\ell=1}^L \, \norm{\beta^{T}_\ell - \beta_{0_\ell}}^2 & \stackrel{(a)}{\geq}
\sum_{\ell=1}^L   \mathbf{1} \{ \hat{\beta}_\ell \neq \beta_{0_\ell} \}  \frac{nP_\ell}{4} \, \\
& \stackrel{(b)}{\geq} \, L\e \frac{nP_L}{4}  \stackrel{(c)}{\geq} \,
\frac{n \, \e \, \sigma^2 \ln(1 + \snr)}{4} =  \frac{n \e \sigma^2 \mc{C}}{2},
\label{eq:sec_error_chain}
\end{split}
\ee
where $(a)$ follows from   \eqref{eq:sec_error_implies}; $(b)$ is obtained using  the fact that
$P_\ell > P_L$ for $\ell \in [L-1]$ for the exponentially decaying power allocation in \eqref{eq:exp_power_alloc}; $(c)$ is obtained using the first-order Taylor series lower bound
$L P_L \geq \sigma^2 \ln(1+\tfrac{P}{\sigma^2})$. We therefore conclude that
\be
\{ \mc{E}_{sec}(\mc{S}_n)  > \e \} \ \Rightarrow \  \left\{ \frac{ \| \beta^{T} - \beta_0 \|^2 }{n}  \geq \frac{\e  \sigma^2 \mc{C}}{2} \right\},
\label{eq:sec_error_exp}
\ee
where $\beta^T$ is the AMP estimate at the termination step $T$. 

Now, from \eqref{eq:Hc} of Lemma \ref{lem:main_lem}, we know that for any $\tilde{\e} \in (0,1)$:
\be
\begin{split}
P\left( \frac{\| \beta^{T} - \beta_0 \|^2}{n} \geq \sigma_{T}^2 + \tilde{\e} \right) & = P\left( \frac{\| q^{T} \|^2}{n} \geq \sigma_{T}^2 + \tilde{\e} \right)  \\
& \leq  K_{T}\exp\left\{- \frac{\kappa_{T}L \tilde{\e}^2}{(\log M)^{2T-1}}\right\}.
\label{eq:betaTst_conc}
\end{split}
\ee
From  the definition of $T$ and \eqref{eq:tauT_bound}, we have  $\sigma_{T}^2 = \tau_{T}^2 - \sigma^2 \leq 
P f_R(M)$. 
Hence, \eqref{eq:betaTst_conc} implies
\begin{align}
  P\left( \frac{\| \beta^{T} - \beta_0 \|^2}{n}  \geq P f_R(M) + \tilde{\e} \right)   & \leq P\left(\frac{\| \beta^{T} - \beta_0 \|^2}{n} \geq \sigma_{T}^2 + \tilde{\e} \right) \nonumber \\
&    \leq  K_{T} \exp\left\{- \frac{\kappa_{T} L \tilde{\e}^2}{(\log M)^{2T-1}}\right\}.
\label{eq:norm_bound}
\end{align}
Now take  $\tilde{\e} = \frac{\e  \sigma^2 \mc{C}}{2} - Pf_R(M)$, noting that this $\tilde{\e}$ is strictly positive whenever $\e > 2 \snr f_R(M)/\mc{C}$, the condition specified in the theorem statement.   Finally, combining \eqref{eq:sec_error_exp} and \eqref{eq:norm_bound} we obtain
\ben
P\left( \mc{E}_{sec}(\mc{S}_n)  >  \e \right) \leq  K_{T} \exp\left\{- \frac{\kappa_{T} L}{(\log M)^{2T-1}}  \left(\frac{\e  \sigma^2 \mc{C}}{2} - P f_R(M) \right)^2\right\}. 
\een
\qed

  \chapter{Finite Length Decoding Performance}  \label{chap:emp_perf}

In this chapter, we investigate the empirical error performance of SPARCs with AMP decoding at finite block lengths. In Section \ref{sec:dec_comp}, we describe how decoding complexity can be reduced by using Hadamard-based design matrices, and how a key parameter of the AMP decoder can be estimated online.
In Section \ref{sec:pow_alloc}, we show  that the choice of power allocation can have a significant impact on  decoding performance, and describe a simple algorithm to design a good allocation for a given rate and $\snr$. Section \ref{sec:lvsm} discusses how the choice of the  code parameters $L,M$ influences finite length error performance. Finally, in Section \ref{sec:ldpc-outer} we show how partial outer codes can be used in conjunction with AMP decoding to obtain a steep waterfall in the error rate curves.  We compare the error rates of AMP-decoded sparse superposition codes  with coded modulation using LDPC codes from the WiMAX standard.

\section{Reducing AMP decoding complexity} \label{sec:dec_comp}

\subsection{Hadamard-based design matrices}  \label{subsec:fwht}
In the sparse regression codes described and analyzed thus far, the design matrix $A$ is chosen to have zero-mean i.i.d. entries, either Gaussian $\sim \mc{N}(0,\frac{1}{n})$ or Bernoulli entries drawn uniformly from  $\pm  \frac{1}{\sqrt{n}}$ as in Sec. \ref{sec:bern_dict}. As discussed in Sec. \ref{sec:dec_comparison}, with such matrices the computational complexity of the AMP decoder in \eqref{eq:amp1}--\eqref{eq:amp2}  is $O(LMn)$ when the matrix-vector multiplications $A\beta$ and $A^*z^t$ are performed in the usual way. Additionally, storing $A$ requires $O(LMn)$ memory, which is prohibitive for reasonable code lengths. For
example, $L=1024$, $M=512$, $n=9216$ ($R=1$ bit) requires 18 gigabytes of memory using a double-precision (4-byte) floating point representation, all of which must be
accessed twice per iteration. 

To reduce decoding complexity, we replace the i.i.d. design matrix  with a structured Hadamard-based design matrix, which we denote in this section by  $\Ah$.  With $\Ah$, the key matrix-vector multiplications can be performed  via a fast Walsh-Hadamard Transform (FWHT)\cite{Shanks1969}. Moreover, $\Ah$ can be implicitly defined which greatly reduces the memory required.

We denote the  Hadamard matrix of size $2^k \times 2^k$ by $H_k$. We recall that  $H_k$ is a  square matrix  with $\pm 1$ entries and mutually orthogonal rows, recursively defined as follows. Starting with $H_0=1$, for $k \geq 1$,
\ben
H_{k} = \begin{pmatrix} H_{k-1} & H_{k-1} \\ H_{k-1} & -H_{k-1}   \end{pmatrix}.
\een

To construct the design matrix $\Ah \in \reals^{n \times ML}$, one option is to take $k=\lceil\log_2(LM)\rceil$ and select $n$ rows uniformly at random from the Hadamard matrix $H_{k}$.  In this case, the matrix-vector multiplications are performed by embedding the vectors into $\mathbb{R}^{ML}$, and then multiplying by $H_k$ using a FWHT. A more efficient way is to construct each $n \times M$ section of $\Ah$ independently from a smaller Hadamard matrix.  This is done as follows.

Take $k=\lceil\log_2(\max(n+1, M+1))\rceil$. Each section of $\Ah$ is constructed independently by choosing a permutation of $n$ distinct rows  from $H_k$ uniformly at random.\footnote{To obtain the desired statistical properties for $A$, we do not pick the first row of $H_k$ as it is all-ones. The $n+1$ in the definition of $k$ ensures that we still have enough rows left to pick $n$ at random after removing the first,
all-one, row; the $M+1$ ensures that we can always have one leading $0$ when embedding
$\beta$ so that the first, all-one, column is also never picked.}
The multiplications $\Ah \beta^t$ and $\Ah^*z^t$ are performed by computing $A_{\mathsf{H} \ell}\beta_\ell^t$ and $A_{\mathsf{H} \ell}^* z^t$, for $\ell \in [L]$, where the $n \times M$ matrix $A_{\mathsf{H} \ell}$  is the $\ell$th section of $\Ah$, and $\beta_\ell^t \in \mathbb{R}^M$ is the $\ell$th section of $\beta^t$.
To compute $A_{\mathsf{H}\ell}\beta_\ell^t$, zero-prepend $\beta_\ell$ to length $2^k$, perform the FWHT, then choose $n$ entries corresponding to the rows in $A_{\mathsf{H} \ell}$.
Sum the $n$-length result from each section to obtain $\Ah\beta^t$. Note that we
prepend with $0$ because  the first column of $H_k$ must always be ignored as it is always all-ones.
For $A_{\mathsf{H}\ell}^* z^t$, embed entries from $z^t$ into a $2^k$ long
vector again corresponding to the rows in $A_{\mathsf{H}}$, with all other entries set to zero, perform the FWHT, and return the last $M$ entries. Concatenate
the result from each section to form $\Ah^*z^t$.

The  empirical error performance of the AMP decoder with $\Ah$ constructed as above is indistinguishable
from that of  a full i.i.d. matrix. The computational complexity  of the decoder is reduced to $O(Ln\log n)$ (in the common case where $n>M$, otherwise it is $O(LM\log M)$). The memory requirements are reduced to $O(LM)$, typically a few megabytes.   In comparison, for i.i.d. design matrices, the complexity and memory requirements scale as 
$O(LMn)$. For reasonable code lengths, this represents around a thousandfold improvement in both time
and memory.
Furthermore, the easily parallelized structure would enable a hardware
implementation to trade off between a slower and smaller series implementation
and a faster though larger parallel implementation, potentially leading to
significant practical speedups.

\subsection{Online computation of $\tau_t^2$ and early termination} \label{subsec:online_tau}
Recall that  these coefficients  $(\tau_t^2)_{t \geq 1}$ are required for the AMP update steps \eqref{eq:amp1} and \eqref{eq:amp2}. In the standard implementation, these are recursively computed  in advance via the SE equations \eqref{eq:tau_def} and \eqref{eq:xt_tau_def}.    The total number of iterations  $T$  is also determined in advance by computing the number of iterations required  the SE to converge to its fixed point  (to within a specified tolerance). This advance computation is slow as each
of the  $L$ expectations in \eqref{eq:xt_tau_def} needs to be computed
numerically via Monte-Carlo simulation, for each  $t$.

A simple way to estimate $\tau_t^2$ online during the decoding process is as follows. 
In each step $t$, after producing $z^t$ as in \eqref{eq:z_update}, we
estimate \be \widehat{\tau}_t^2 = \frac{\norm{z^t}^2}{n} = \frac{1}{n} \sum_{i=1}^n
z_i^2. \ee The justification for this estimate comes from the analysis of the
AMP decoder in \cite{RushV17ErrExp}, which  provides a concentration inequality that shows that for large $n$,
$\wh{\tau}_t^2$ is close to $\tau_t^2$  with high
probability.   We note that such a similar online estimate has been used previously in various AMP and GAMP algorithms \cite{barbKrzISIT14,BarbSK2015,BarbKrz15,Rangan11}. 

In addition to being fast, the online estimator permits an interpretation as a measure of SPARC decoding progress and  provides a flexible termination criterion for the decoder.  Recall from the previous chapter
(cf. Section \ref{subsec:iter_soft_dec}) that in each step we have 
\ben 
\statt = \beta^t + A^* z^t
\approx \beta + \tau_t Z, 
\een 
where $Z$ is a standard normal random vector independent of $\beta$.   The online estimator $\wh{\tau}_t^2$ is found to  track $\text{Var}(\statt - \beta)=\norm{\statt - \beta}^2/n$ very accurately, even when this variance deviates significantly from 
$\tau_t^2$. This indicates that we can use the final value $\wh{\tau}_T^2$ to
 accurately estimate the power of the undecoded sections --- and thus the
 number of sections decoded correctly --- at runtime. Indeed, $(\wh{\tau}^2_T -
 \sigma^2)$ is an accurate estimate of the total power in the incorrectly
 decoded sections. This, combined with the fact that the power allocation is
 non-increasing, allows the decoder to estimate the number of incorrectly decoded sections. 

Furthermore, we can use the change in $\wh{\tau}_t^2$ between iterations to terminate the decoder early. If the value $\wh{\tau}_t^2$ has not changed between successive iterations, or the change is within some small threshold, then the decoder has stalled and no further iterations are worthwhile. Empirically we find that a stopping criterion with a small threshold (e.g., stop when $\abs{\wh{\tau}^2_t - \wh{\tau}^2_{t-1}} < P_L$)  leads to no additional errors compared to running the decoder for the full iteration count, while giving a significant speedup in most trials.
Allowing a larger threshold for the stopping criterion  gives even better running time improvements.

All the simulation results reported in this chapter are obtained using Hadamard-based design matrices, the online estimate  $\wh{\tau_t^2}$, and a corresponding early termination criterion.

\section{Power allocation} \label{sec:pow_alloc}

Theorem \ref{thm:main_amp_perf} shows that for any fixed $R < \mc{C}$ and an exponentially decaying power allocation  $P_\ell \propto e^{-2 \mc{C}\ell/L}, \ \ell \in [L]$,  the probability of section error of the AMP decoder can be made arbitrarily small for sufficiently large values of the code parameters $(n,M,L)$.  However, the error rate of the exponentially decaying allocation is rather high at practical block lengths. This is illustrated in Fig. \ref{fig:amp_sec_err_rate0}. The black curve at the top shows the average section error rate with the exponentially decaying allocation for various rates $R$ with $\mc{C} = 2$ bits. The blue curve in the middle shows the average section error rate with two different power allocation schemes, with the code parameters $(n,M,L)$  at each rate.

\begin{figure}[t] \centering
    \includegraphics[width=4.5in, height=2.7in]{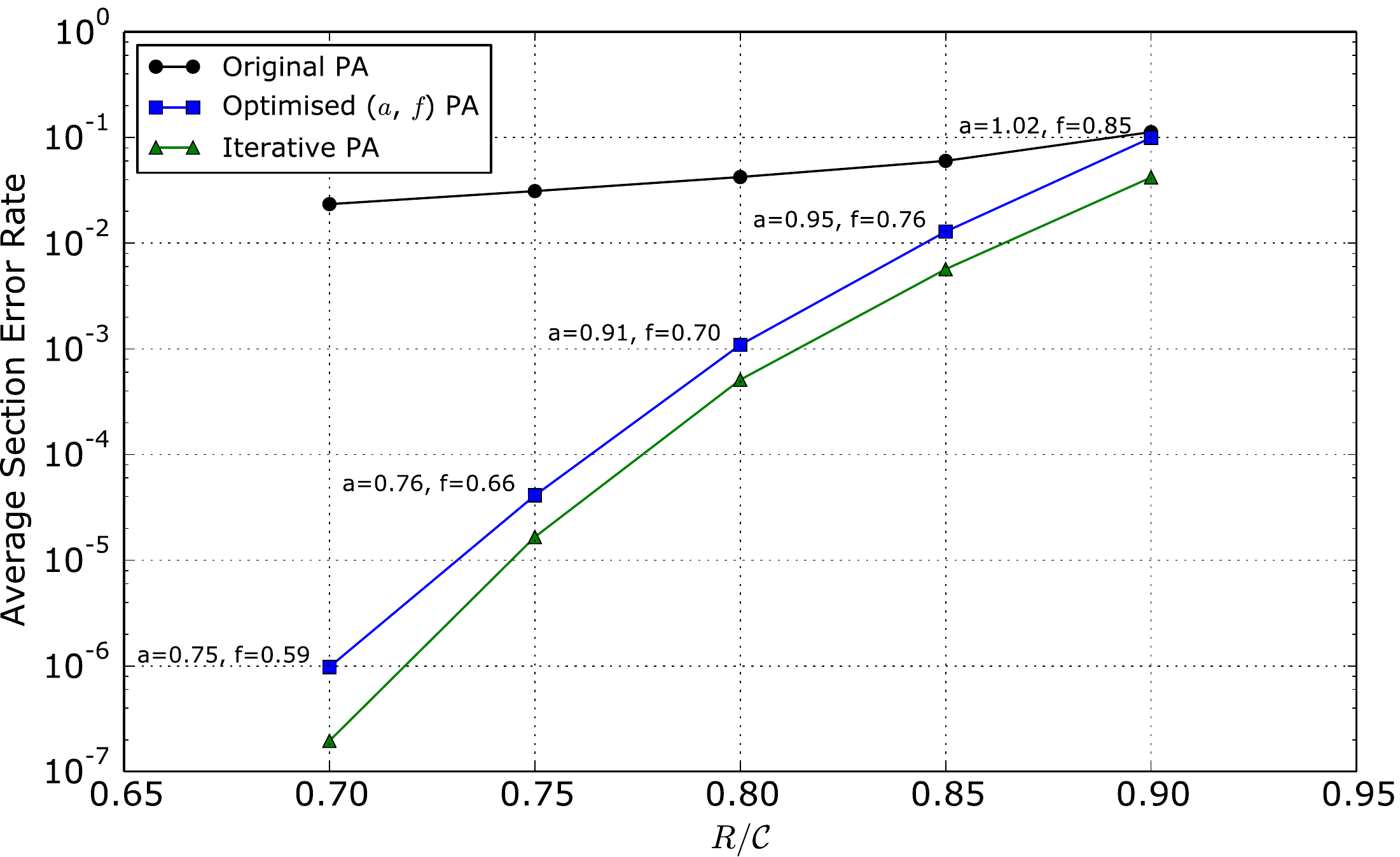}
    \caption{\small{Section error rate vs $R/\mc{C}$ at $\snr=15, \mc{C}=2$ bits.  The SPARC parameters for all the curves are $M=512, L=1024$.
        The top  curve shows the average section error rate of the AMP
        over $1000$ trials with $P_\ell \propto 2^{-2\mc{C}\ell/L}$ (with $\mc{C}$ in bits).
        The  curve in the middle shows the section error rate using the
        power allocation in \eqref{eq:mixed_power_alloc} with the $(a,f)$  values shown.
        The bottom curve shows the section error rate with the iterative power allocation scheme described in Section \ref{sec:pa:iterative}.
        }}
    \vspace{-7pt}
\label{fig:amp_sec_err_rate0} 
\end{figure}

It is evident that as we back off from capacity, the power allocation can crucially determine error performance.  The reason for the relative poor performance of the exponential allocation at lower rates such as $R=0.6\mc{C}$ and $0.7 \mc{C}$ is that that it allocates too much power to the initial sections, leaving too little for the final sections to decode reliably.  This motivates the following \emph{modified} exponential allocation  characterized by two parameters $a,f$.
For $f \in [0,1] $, let
\be
P_\ell = \begin{cases}
        \kappa \cdot 2^{-2a\mc{C} \ell/L}, & 1 \leq \ell \leq fL\\
        \kappa \cdot 2^{-2a\mc{C} f},   & fL+1 \leq \ell \leq L
\end{cases}
\label{eq:mixed_power_alloc}\ee
where 
the normalizing constant $\kappa$ ensures  that the total power across sections
is $P$.  For intuition, first assume that $f=1$. Then
\eqref{eq:mixed_power_alloc} implies that $P_\ell \propto 2^{-a2\mc{C} \ell/L}$
for $\ell \in [L]$. Setting $a=1$ recovers the original power allocation of
\eqref{eq:exp_power_alloc}, while $a=0$ allocates $\frac{P}{L}$ to each section.
Increasing $a$ increases the power allocated to the initial sections which
makes them more likely to decode correctly, which in turn helps by decreasing
the effective noise variance $\bar{\tau}^2_t$ in subsequent AMP iterations.
However, if $a$ is too large, the final sections may have too little power to
decode correctly.  
 
Hence we want the parameter $a$ to be large enough to ensure that the AMP gets
started on the right track, but not much larger.  This intuition can be made
precise in the large system limit using Lemma~\ref{lem:conv_expec}:  recall that for a section $\ell$ to be correctly decoded in step
$(t+1)$,  the limit of  $L P_\ell$ must exceed a threshold proportional to $R\bar{\tau}^2_t$. For rates close to $\mc{C}$, we need $a$ to be close to $1$ for the initial sections to cross this threshold and get decoding started
correctly. On the other hand, for rates such as $R=0.6 \mc{C}$,  $a=1$ allocates more power than necessary to the initial sections, leading to poor error performance in the final sections.  

In addition, we found that the section error rate can be further improved by \emph{flattening} the power allocation in the final sections.  For a given $a$, \eqref{eq:mixed_power_alloc} has an exponential power allocation until section $fL$, and constant power for the remaining $(1-f)L$ sections.  The allocation in \eqref{eq:mixed_power_alloc}  is continuous, i.e.\ each section in the flat
part is allocated the same power as the final section in the exponential part. Flattening boosts the power given to the final sections compared to an exponentially decaying allocation. The two parameters $(a,f)$ let us trade-off between the conflicting objectives of assigning enough power to the initial sections  and ensuring that the final sections have enough power to be decoded
correctly.

The middle  curve (blue) in Figure \ref{fig:amp_sec_err_rate0} shows the error performance with this modified allocation.  While this allocation improves the section error rate by a few orders of
magnitude, it requires costly
numerical optimization of $a$ and $f$. A good starting point is to use
$a=f=R/\mc{C}$, but further optimization is generally necessary. This motivates
the need for a fast power allocation algorithm with fewer tuning parameters.

\subsection{Iterative power allocation}
\label{sec:pa:iterative}
We now describe a simple  iterative algorithm to design a power allocation.   The starting point for our power allocation design is the asymptotic expression for the state evolution parameter $x(\tau)$ in Lemma \ref{lem:conv_expec} (see also the non-asymptotic lower bound in Lemma \eqref{lem:xtl_lemma}). Here, assuming $(L,M, n)$ are sufficiently large, we use the following approximation:
\be
x(\tau) \approx \,   \sum_{\ell=1}^{L} \frac{P_\ell}{P} \,
\mathbf{1}\left\{ LP_\ell >  2R \tau^2   \right\}.
\label{eq:lemma1b}
\ee
 We note that $R$ in \eqref{eq:lemma1b} is measured in nats. If the effective noise variance after step $t$ is $\tau_t^2$, then  \eqref{eq:lemma1b} says that any section $\ell$ whose normalized power $L P_\ell$ is larger than the threshold  $2 R \tau^2_t $ is likely to be decodable in step $(t+1)$, i.e., in  $\beta^{t+1}$, the probability mass within the section will be concentrated on the correct non-zero entry. 


The $L$ sections of the SPARC  are divided into $B$ \emph{blocks} of $L/B$ sections each. Each section within a block is allocated the same power.  For example, with $L=512$ and $B=32$, there are $32$ blocks  with $16$ sections per block. The algorithm sequentially allocates power to each of the $B$ blocks as follows. Allocate the minimum power to the first block of sections so that they can be decoded in the first iteration when $\tau_0^2=\sigma^2+P$. Using \eqref{eq:lemma1b}, we set the  power in each section of the first block to
\[ P_\ell = \frac{2 R\tau_0^2}{L}, \quad 1\le\ell\le\frac{L}{B}.\]  Using \eqref{eq:lemma1b}, we estimate $x_1= x(\tau_0)=BP_1$, and hence
$\tau_1^2=\sigma^2+(P -  B P_{1})$. Using this value, allocate the minimum
required power for the second block of sections to decode, i.e.,
$P_\ell=2 R\tau_1^2/L$ for $\frac{L}{B}+1\le\ell \le \frac{2L}{B}$.
If we sequentially allocate power in this manner to  each of the $B$ blocks, then the total power allocated by this scheme will be strictly less than $P$ whenever $R < \mc{C}$.  We  therefore modify the scheme as follows.  

\begin{figure}[t]
    \centering
    \includegraphics[width=0.3\columnwidth]{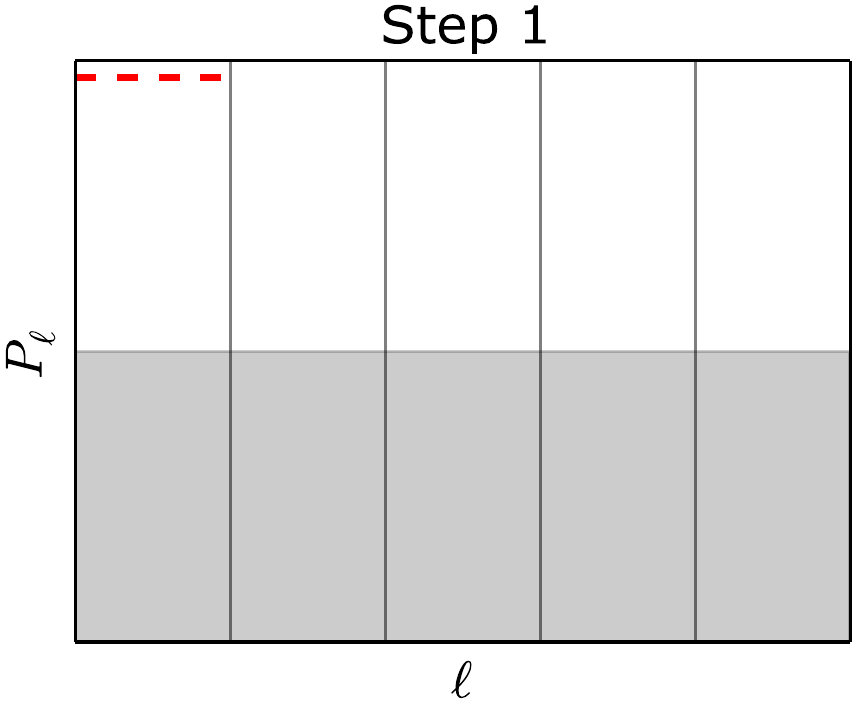}
    \includegraphics[width=0.3\columnwidth]{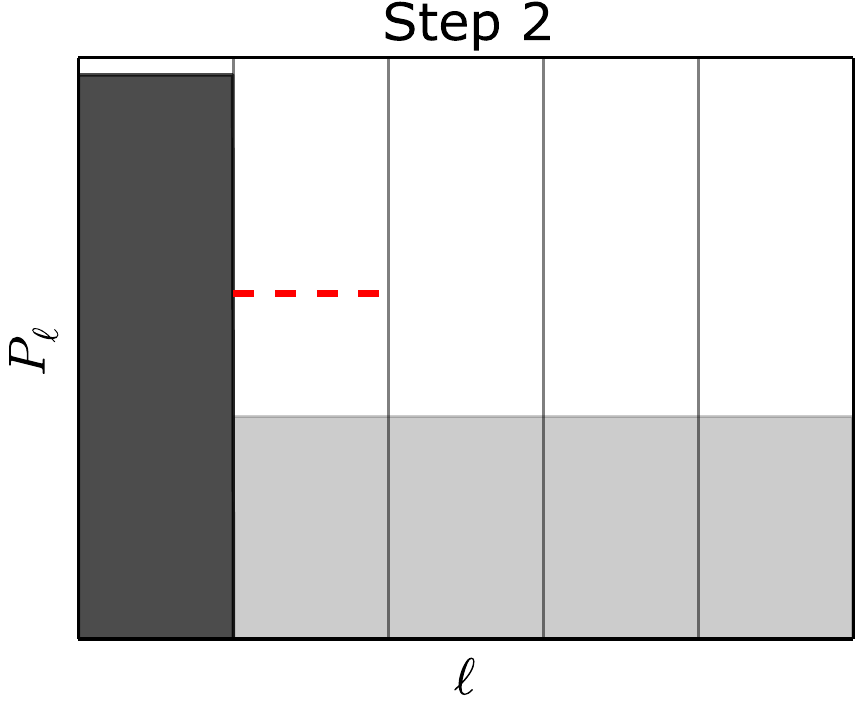}
    \includegraphics[width=0.3\columnwidth]{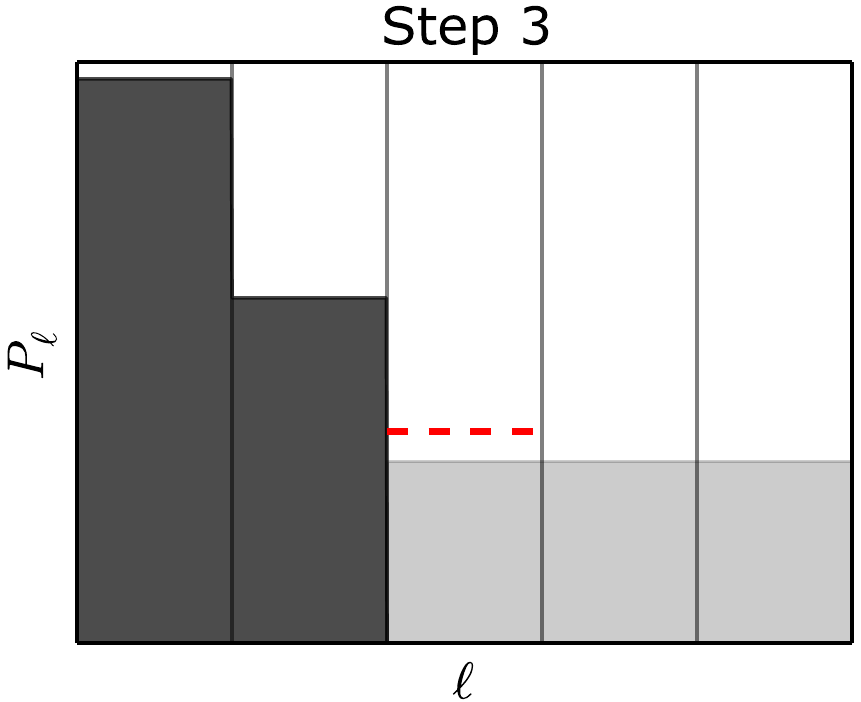}
    \includegraphics[width=0.3\columnwidth]{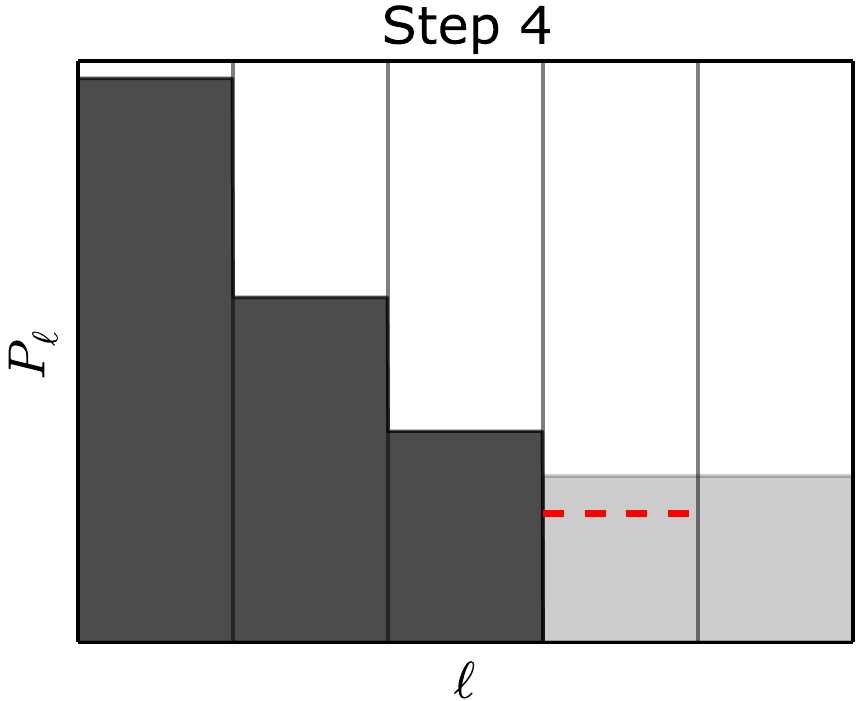}
    \includegraphics[width=0.3\columnwidth]{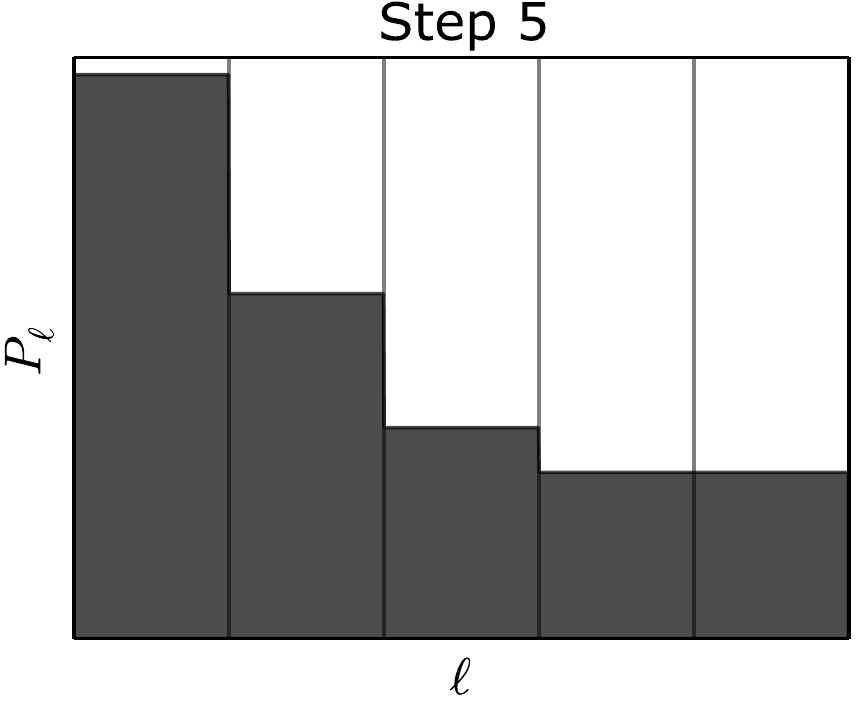}
    \caption{\small Example illustrating the iterative power allocation algorithm with $B=5$. In each step, 
        the height of the light gray region represents the allocation that distributes the remaining power equally
        over all the remaining sections. The dashed red line indicates the minimum power required for decoding the current block of sections.
       The dark gray bars represent the power that has been allocated at the beginning of the current step. }
    \label{fig:pa_lemma1_animation}
\end{figure}

\begin{algorithm}[t]
\caption{Iterative power allocation routine}%
\label{alg:iterative-pa}
\begin{algorithmic}
    \REQUIRE $L$, $B$, $\sigma^2$, $P$, $R$ such that $B$ divides $L$.
    \STATE Initialise $k \leftarrow  \frac{L}{B}$
    \FOR{$b=0$ to $B-1$}
        \STATE $P_{\text{remain}} \leftarrow P - \sum_{\ell=1}^{bk}P_\ell $
        \STATE $\tau^2 \leftarrow \sigma^2 + P_{\text{remain}}$
        \STATE $P_{\text{block}} \leftarrow 2  R \tau^2 / L$
        \IF{$P_{\text{remain}}/(L-bk) > P_{\text{block}}$}
            \STATE $P_{bk+1},\ldots,P_L \leftarrow P_{\text{remain}}/(L-bk)$
            \BREAK
        \ELSE
        \STATE $P_{bk+1},\ldots,P_{(b+1)k} \leftarrow P_{\text{block}}$
        \ENDIF
    \ENDFOR
    \RETURN $P_1,\ldots,P_L$
\end{algorithmic}
\end{algorithm}


For $1 \leq b \leq B$, to allocate power to the $b$th block of sections assuming that the first $(b-1)$ blocks have been allocated, we compare the two options and choose the one that allocates higher power to the block: i) allocating the minimum required power (computed as above) for the $b$th block of sections to decode; ii) allocating the remaining available power equally to sections in blocks $b, \ldots, B$, and terminating the algorithm. This gives a flattening in the final
blocks similar to the allocation in \eqref{eq:mixed_power_alloc}, but without requiring a specific parameter that determines where the flattening begins.  The iterative power allocation routine is described in Algorithm~\ref{alg:iterative-pa}. Figure~\ref{fig:pa_lemma1_animation} shows a toy example building up the power allocation for $B=5$, where flattening is seen to occur in step 4.

\emph{Choosing $B$}: By construction, the iterative power allocation scheme specifies the number of iterations of the AMP decoder in the large system limit.  This is given by the number of blocks with distinct powers; in particular the number of iterations (in the large system limit) is of the order of  $B$.  For finite code lengths, we find that it is better to use the  termination criterion described in \ref{subsec:online_tau} based on the online estimates $\wh{\tau_t^2}$. This  termination criterion allows us to choose the number of blocks $B$ to be as large as $L$. We found that choosing $B=L$, together with the termination criterion consistently gives a small improvement in error performance (compared to other choices of $B$), with no additional time
or memory cost.  

Additionally, with $B=L$, it is possible to quickly determine a pair
$(a,f)$ for the modified exponential allocation in \eqref{eq:mixed_power_alloc} which gives a nearly identical
allocation to the iterative algorithm. This is done by first setting $f$ to obtain the same
flattening point found in the iterative allocation, and then searching for an
$a$ which matches the first allocation coefficient $P_1$ between the iterative
and the modified exponential allocations. Consequently, any simulation results
obtained for the iterative power allocation could also be obtained using a
suitable $(a,f)$ with the modified exponential allocation, without having to
first perform a costly numerical optimization over $(a,f)$. 
\begin{figure}[t]
    \centering
    \includegraphics[width=0.85\columnwidth]{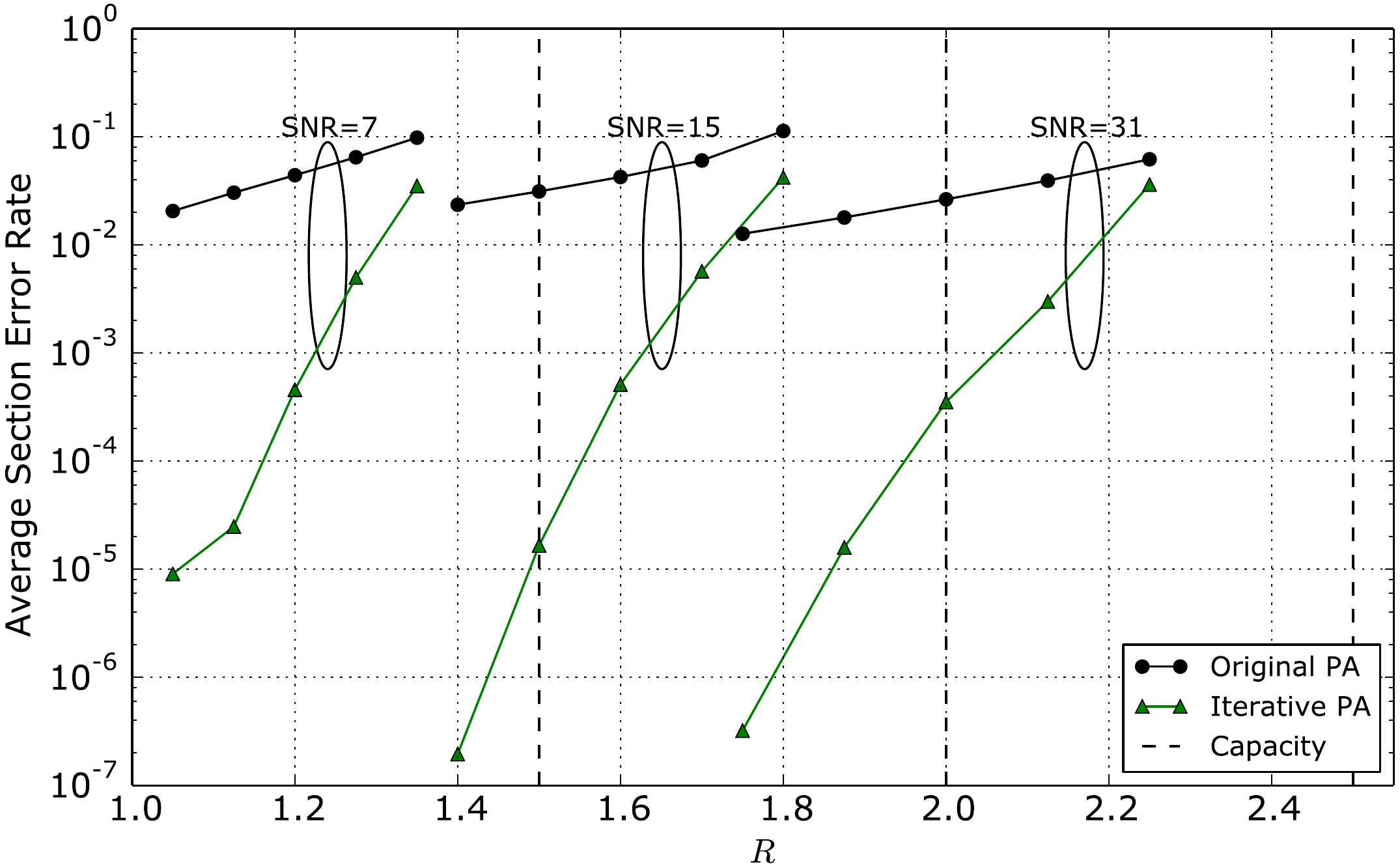}
    \caption{\small AMP section error rate  vs $R$ (in bits) at $\snr=7,15,31$,
    corresponding to $\mc{C}=1.5,2,2.5$ bits (shown with dashed vertical lines).
    At each $\snr$, the section error rate is reported for rates $R/\mc{C}=0.70,0.75,0.80,0.85,0.90$.
    The SPARC parameters are $M=512,L=1024$. The top black curve shows 
    the error rate with the exponential allocation $P_\ell \propto 2^{-2\mc{C}\ell/L}$ (with $\mc{C}$ in bits). The lower green curve
    shows the error rate with iterative power allocation, with $B=L$.}
    \label{fig:pa_perf_comparison}
    \vspace{-8pt}
\end{figure}

Figure~\ref{fig:pa_perf_comparison} compares the error performance of the exponential and iterative power allocation schemes discussed above for different values of $R$ at $\snr=7,15,31$. Compared to the original exponential power allocation, the iterative allocation has significantly improved error performance for rates away from capacity. It also generally  outperforms the modified exponential allocation results, as seen Figure \ref{fig:amp_sec_err_rate0}, where the bottom curve (green) shows the  error performance of the iterative allocation.

For the experiments in Figure~\ref{fig:pa_perf_comparison}, the value for $R$ used in constructing the iterative allocation (denoted by $R_{PA}$) was optimized numerically. Constructing an iterative allocation with  $R=R_{PA}$ yields good results, but due to finite length concentration effects,  the $R_{PA}$  yielding the smallest average error rate may be slightly different from the communication rate $R$. The effect of $R_{PA}$ on the  concentration of error rates  is discussed in Section~\ref{subsec:conc-pa}. We emphasize that this optimization over $R_{PA}$ is simpler than numerically optimizing the pair $(a,f)$ for the modified exponential allocation. Furthermore, guidelines for choosing $R_{PA}$ as a function of $R$ are given in Section~\ref{subsec:conc-pa}.

\section{Code parameter choices at finite code lengths} \label{sec:lvsm}

In this section, we discuss how the choice of SPARC design parameters  can influence finite length error performance with the AMP decoder. We will see that  the parameters $(L,M)$ and the power allocation both inducee a  trade-off between the `typical'  value of section error rate predicted by state evolution,  and concentration of actual error rates around the typical values.  

If the termination step is $T$, then we expect the test statistic in the final iteration to be $\text{stat}^T \approx \beta + \tau_T Z$, where $\tau_T$ is determined from the SE equations.  (For reliable decoding, we expect $\tau_T^2$ to be close to $\sigma^2$.) This leads to the following SE-based prediction for the section error rate \cite[Proposition 1]{GreigV18}:
\be
\bar{\mc{E}}_{\sec} = 1 -  \frac{1}{L} \sum_{\ell=1}^L  \expec_U \left[  \Phi\left(\frac{\sqrt{nP_\ell}}{\sigma} + U \right)
\right]^{M-1}. \label{eq:est_ser}
\ee

\subsection{Effect of $L$ and $M$ on concentration} 

\begin{figure}[t]
    \centering
    \includegraphics[width=0.8\columnwidth]{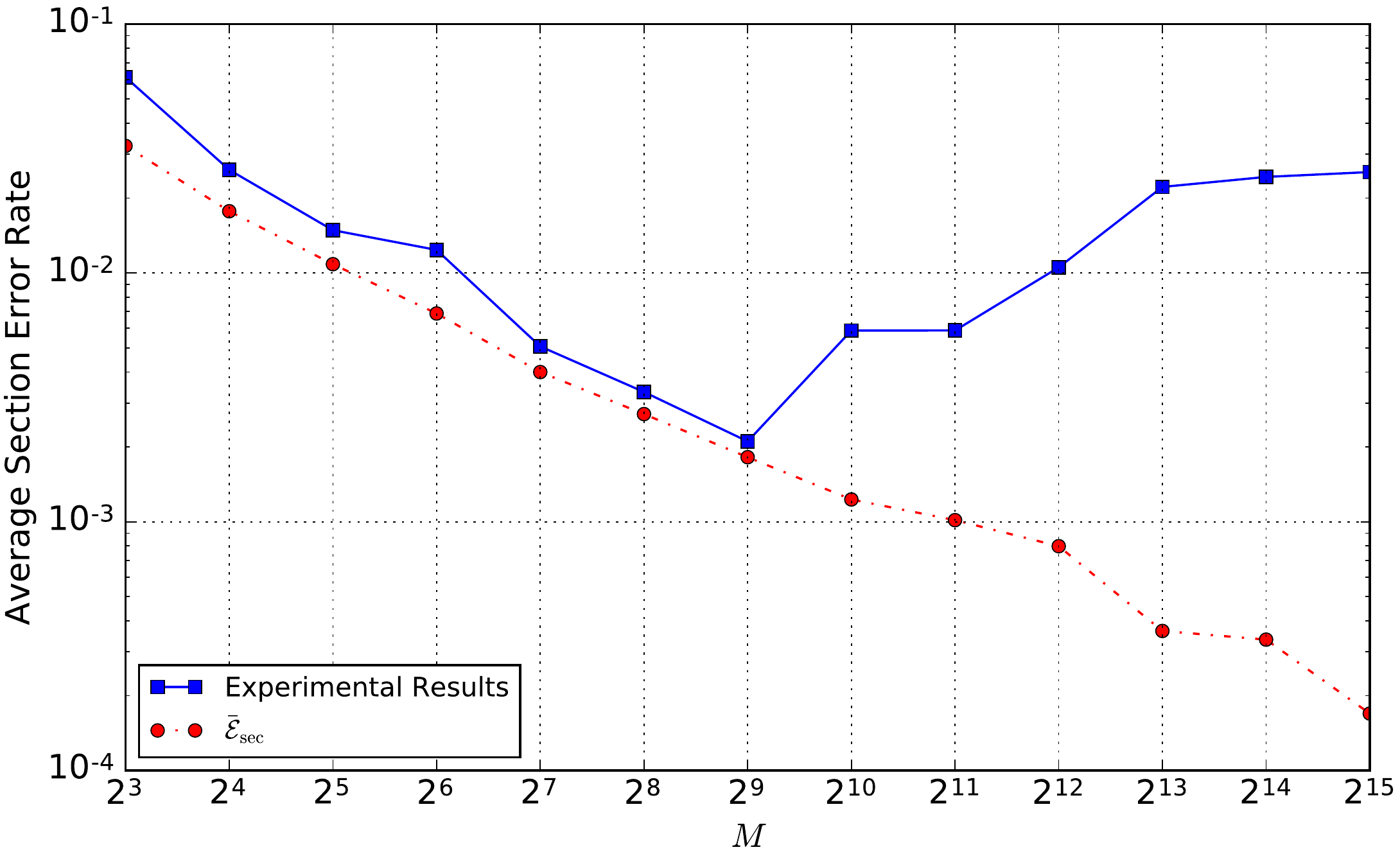}
    \caption{\small AMP error performance with increasing $M$, for
        $L=1024$, $R=1.5$ bits, and $\frac{E_b}{N_0}=5.7$ dB (2 dB from Shannon limit). 
    }
\label{fig:lvsm_ser}
\vspace{-8pt}
\end{figure}
\begin{figure}[h]
    \centering
    \includegraphics[width=0.95\columnwidth]{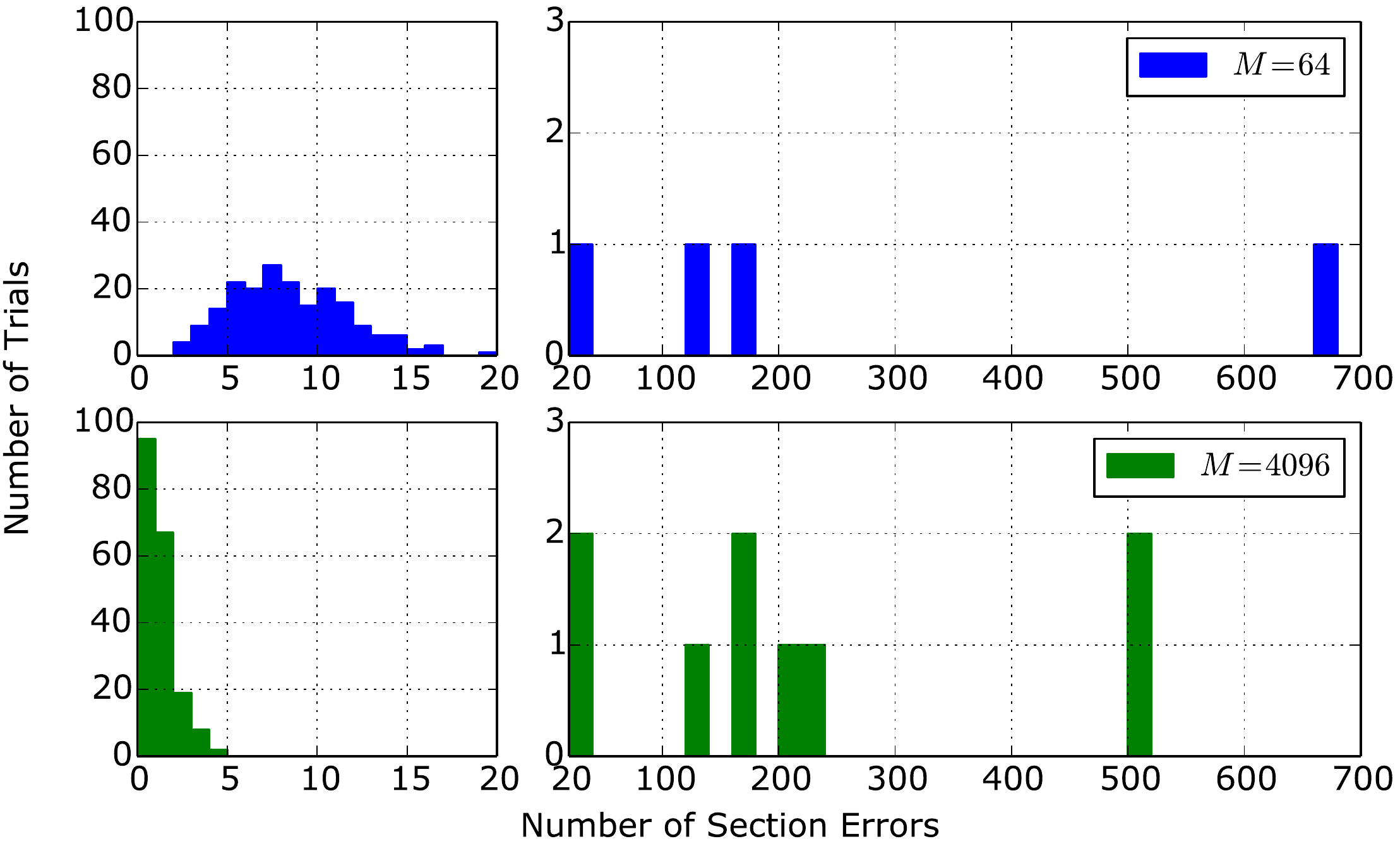}
    \caption{\small Histogram of AMP section errors over $200$ trials $M=64$ (top) and $M=4096$
        (bottom), with $L=1024$, $R=1.5$ bits, $\frac{E_b}{N_0}=5.7$dB. The left panels highlight
        distribution of errors around low section error counts, while the right
        panels show the distribution around high-error-count events. As shown
        in Figure~\ref{fig:lvsm_ser}, both cases have an average section error
        rate of around $10^{-2}$.}%
\label{fig:lvsm_hist}
\end{figure}

To understand the effect of  increasing $M$, consider Figure \ref{fig:lvsm_ser} which shows the error performance of a SPARC with $R=1.5, L=1024$, as we increase the value of $M$. Since $n= L \log M/R$, the code length
$n$ increases logarithmically with $M$ for a fixed $L$. We observe that the  section error rate (averaged over $200$ trials)
decreases with $M$ up to $M=2^9$, and then starts increasing. This is in sharp
contrast to the SE prediction \eqref{eq:est_ser} (dashed line in Figure \ref{fig:lvsm_ser}) which keeps decreasing as   $M$ is increased.  

This divergence between the actual section error rate and the SE prediction for large $M$  is
due to large fluctuations  in the number of section errors across trials. Theorem \ref{thm:main_amp_perf} shows how the concentration of section error rates near the SE prediction depends on $L$ and $M$.  Since the  probability bound in \eqref{eq:pezero} depends on the ratio $L / (\log M)^{2T-1}$, for a given $L$  the probability of large deviations from the SE
prediction  increases with $M$. 

 This leads to the situation shown in Figure~\ref{fig:lvsm_ser}, which shows that the SE
prediction  $\mc{E}^{\text{SE}}_{\sec}$ continues to decrease with $M$, but
beyond a certain value of $M$, the observed average section error rate becomes
progressively worse due to loss of concentration. This is caused by a small
number of trials with a very large number of section errors, even as the
majority of trials experience lower and lower error rates as $M$ is increased.
This effect can be clearly seen in  Figure~\ref{fig:lvsm_hist}, which 
compares the histogram of section error rates over $200$ trials for $M=64$ and
$M=4096$. The distribution of errors is clearly different, but both cases have
the same average section error rate due to the poorer concentration for
$M=4096$.

Therefore,  given $R, \snr$, and $L$, there is an optimal $M$ that minimizes
the empirical section error rate. Beyond this value of $M$, the benefit from
any further increase is outweighed by the loss of concentration. For a given
$R$, values of $M$ close to $L$ are a good starting point for optimizing the
empirical section error rate, but obtaining closed-form estimates of the
optimal $M$  for a given $L$ is still an open question.

\subsection{Effect of power allocation on concentration} \label{subsec:conc-pa}

The non-asymptotic result of Lemma \ref{lem:xt_taut_lb} indicates that at finite lengths, the minimum power required for a section $\ell$ to decode  in an iteration may be slightly different   than that indicated by the asymptotic approximation  in \eqref{eq:lemma1b}.  Recall that the iterative power allocation algorithm in Section \ref{sec:pa:iterative} was designed based on \eqref{eq:lemma1b}. We can compensate for the difference between the approximation  and the actual value of $x(\tau)$ by running the iterative power allocation in Algorithm \ref{alg:iterative-pa} using a modified rate $R_{\text{PA}}$ which may be slightly  different from the  communication rate $R$.

If we run the power allocation algorithm with $R_\text{PA}> R$,  from
\eqref{eq:lemma1b} we see that additional power is allocated to the initial
blocks, at the cost of less power for the final blocks (where the allocation is
flat). Conversely, choosing $R_\text{PA} < R$ allocates less power to the initial
blocks, and increases the power in the final  sections which have a flat
allocation. This increases the likelihood of the initial section being decoded
in error; in a trial when this happens, there will be a large number of section
errors. However, if the initial sections are decoded correctly, the additional
power in the final sections increases the probability of the trial being
completely error-free.  Thus choosing  $R_{PA}< R$  makes completely error-free
trials more likely, but also increases the likelihood of having trials with a
large number of sections in error.

To summarize, the larger the  $R_{PA}$, the better the concentration of section error rates of individual trials around the overall average.  However, increasing $R_{PA}$ beyond a point just increases the average section error rate because of too little power being allocated to the final sections. 

Through numerical experiments, we found that the value of $\frac{R_{PA}}{R}$ that minimizes the average section error rate increases with $R$. In
particular, the optimal  $\frac{R_{PA}}{R}$ was  $0$ for $R \leq 1$ bit; the
optimal $\frac{R_{PA}}{R}$ for $R=1.5$ bits  was close to 1, and for $R=2$ bits, the
optimal $\frac{R_{PA}}{R}$ was between $1.05$ and $1.1$. Though this provides a useful design
guideline, a deeper theoretical analysis of the role of $R_{PA}$ in optimizing
the finite length error performance  is an open question.
 


\section{Comparison with coded modulation}
\label{sec:coded_mod_comp}

\begin{figure}
    \centering
    \includegraphics[width=0.9\columnwidth]{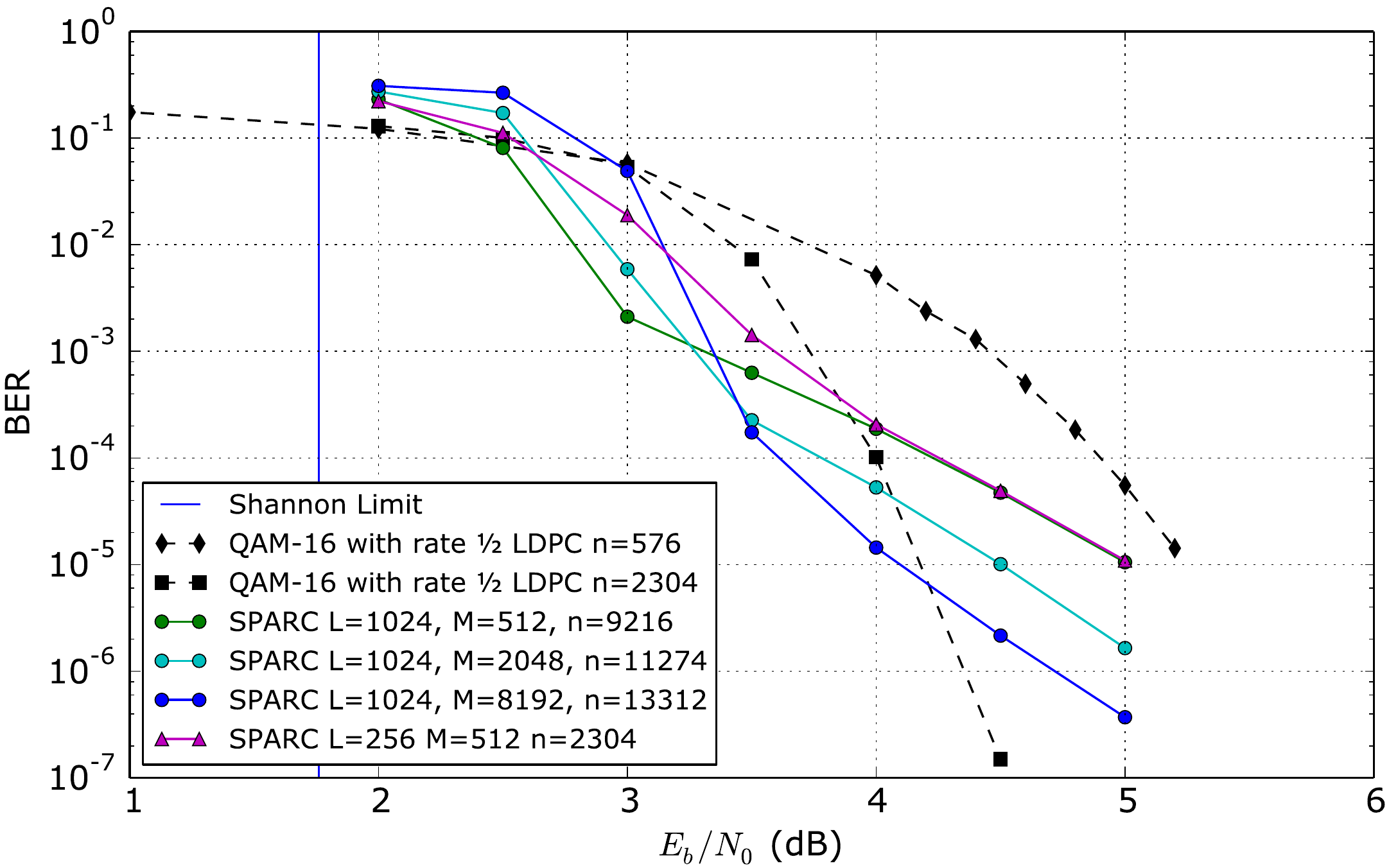}
    \caption{\small Comparison with LDPC coded modulation at $R=1$ bit}
    \label{fig:ldpc-comparison-r10}
\end{figure}
\begin{figure}
    \centering
    \includegraphics[width=0.9\columnwidth]{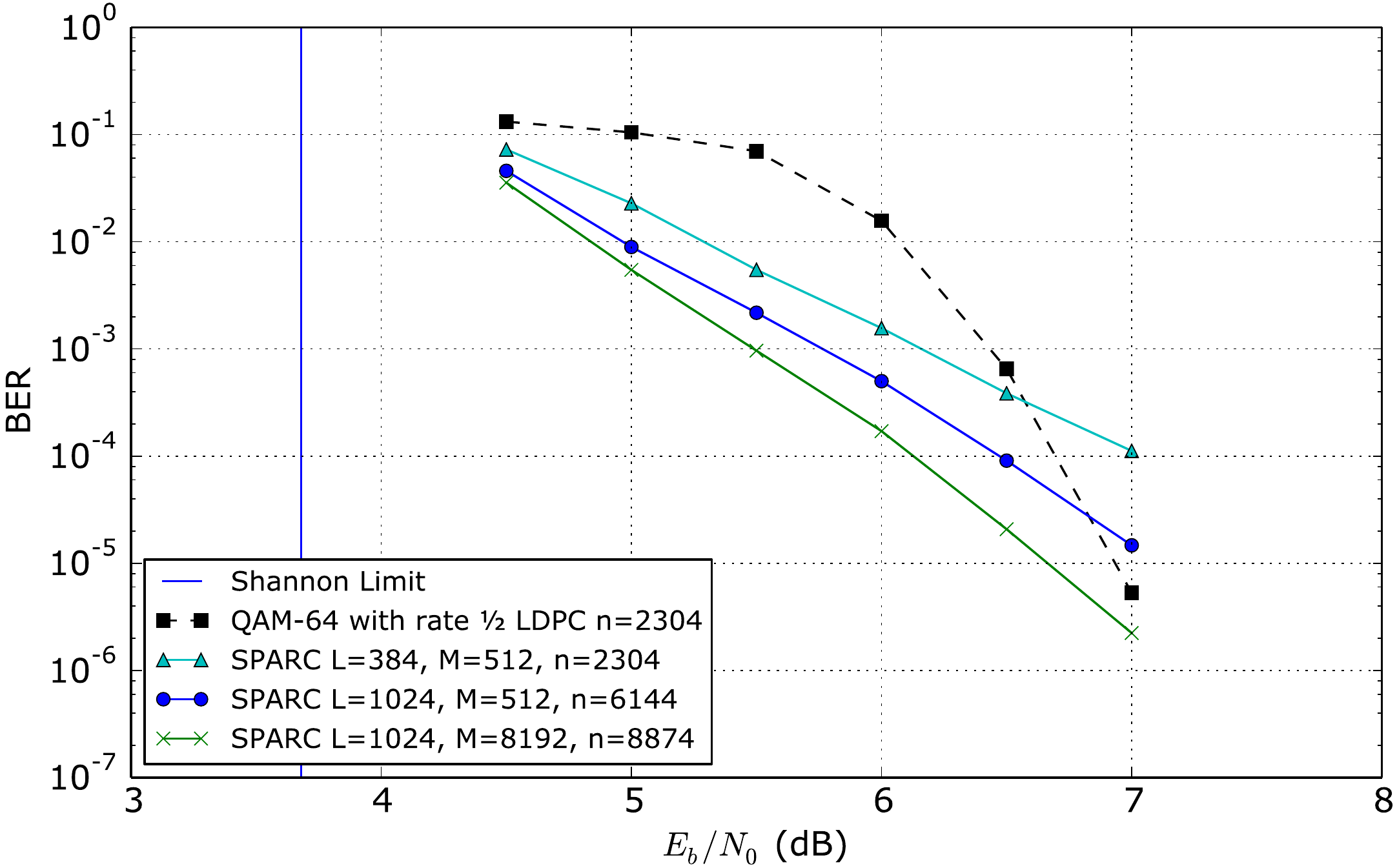}
    \caption{\small Comparison with LDPC coded modulation at $R=1.5$ bits}
    \label{fig:ldpc-comparison-r15}
\end{figure}

We compare the error performance of AMP-decoded SPARCs with coded modulation with LDPC codes from the WiMax standard IEEE 802.16e. For the latter, we consider: 1)  A $16$-QAM constellation with a rate $\frac{1}{2}$ LDPC code  for an overall rate  $R= 1$ bit/channel use/real dimension, and 2) A $64$-QAM constellation with a rate $\frac{1}{2}$ LDPC code  for an overall rate $R= 1.5$ bits/channel use/real dimension. (The spectral efficiency is $2R$ bits/s/Hz.)
The coded modulation results, shown in dashed lines in Figures \ref{fig:ldpc-comparison-r10} and \ref{fig:ldpc-comparison-r15}, are obtained using the CML toolkit \cite{cml} with LDPC code lengths $n=576$ and $n=2304$.

Throughout this section and the next,  rate will be measured in \emph{bits}.

Each figure compares the bit error rates (BER) of the coded modulation schemes with various SPARCs of the same rate, including a SPARC with a matching code length of $n=2304$.  Using $P=E_b R$ and $\sigma^2 = \frac{N_0}{2}$,
 the signal-to-noise ratio of the SPARC can be expressed as  $\frac{P}{\sigma^2} = \frac{2R E_b}{N_0}$.   The SPARCs are implemented using Hadamard-based design matrices,  power allocation designed using the iterative algorithm in Sec. \ref{sec:pa:iterative} with $B=L$, parameters $\widehat{\tau}_t^2$ computed online, and the early termination criterion described in \ref{subsec:online_tau}.  A Jupyter notebook detailing the SPARC implementation in Python is available at \cite{adamSPARC}.

 \section{AMP with partial outer codes}
\label{sec:ldpc-outer}

Figures~\ref{fig:ldpc-comparison-r10} and~\ref{fig:ldpc-comparison-r15} show
that for block lengths of the order of a few thousands, AMP-decoded SPARCs
do not exhibit a steep waterfall in section error rate.  Even at high $E_b/N_0$ values, it is still common to observe a small
number of section errors. If these could be corrected, we could hope to obtain
a sharp waterfall behavior similar to the LDPC codes.

 In the simulations of the AMP decoder described above, when $M$
and $R_\text{PA}$ are chosen such that the average error rates are well-concentrated around the state evolution prediction, the number of section errors observed is similar across
 trials.  Furthermore, we observe that the majority
of sections decoded incorrectly are those in the flat region of the power
allocation, i.e., those with the lowest allocated power. This suggests we could use a
high-rate outer code to protect just these sections, sacrificing some rate, but
less than if we na\"{\i}vely protected all sections. We call the sections covered by the outer code \emph{protected} sections, and conversely the earlier sections which are not covered by the outer code  are  \emph{unprotected}.   In \cite{AntonyML}, it was shown that  a Reed-Solomon outer code (that covered all the sections)  could be used to obtain a bound the probability of codeword error from  a bound on the probability of excess section error rate.

Encoding with  an outer code (e.g., LDPC or Reed-Solomon code) is straightforward:  just replace the message bits corresponding to the protected sections with coded bits generated using the usual encoder for the chosen outer code. To decode, we would like to obtain bit-wise posterior probabilities for each  codeword bit of the outer
code, and use them as inputs to a soft-information decoder, such as a
sum-product or min-sum decoder for LDPC codes. The output of the AMP decoding
algorithm permits this: it yields $\beta^T$, which contains weighted \emph{section-wise} posterior
probabilities; we can directly transform these into \emph{bit-wise}
posterior probabilities. See
Algorithm~\ref{alg:pp2bp} for details.

Moreover, in addition to correcting AMP decoding errors in the protected sections, successfully decoding the outer code also provides a way to correct  remaining errors in the unprotected sections of the SPARC codeword. Indeed, after decoding the outer code we can subtract the contribution of the protected sections from the channel output sequence 
$y$, and  re-run the AMP decoder on just the unprotected sections. The key point  is that subtracting the contribution of the later (protected) sections eliminates the interference due to these sections; then running  the AMP decoder on the unprotected sections is akin to operating at a much lower rate.

Thus the decoding procedure has three stages:  i)  first round of AMP  decoding, ii)  decoding the outer code using soft outputs from the AMP, and  iii)  subtracting the contribution of the sections protected by the outer code, and running the AMP decoder again for the unprotected sections. We find that the final stage, i.e., running the AMP decoder again after the outer code recovers errors in the protected sections of the SPARC, provides a significant advantage over a standard application of
an outer code, i.e., decoding the final codeword after the second stage.

We describe this combination of SPARCs with outer codes below, using an
LDPC outer code. The resulting error rate curves exhibit sharp waterfalls in
final error rates, even when the LDPC code
only covers a minority of the SPARC sections.

\begin{figure}[t]
    \centering
    \includegraphics[width=\columnwidth]{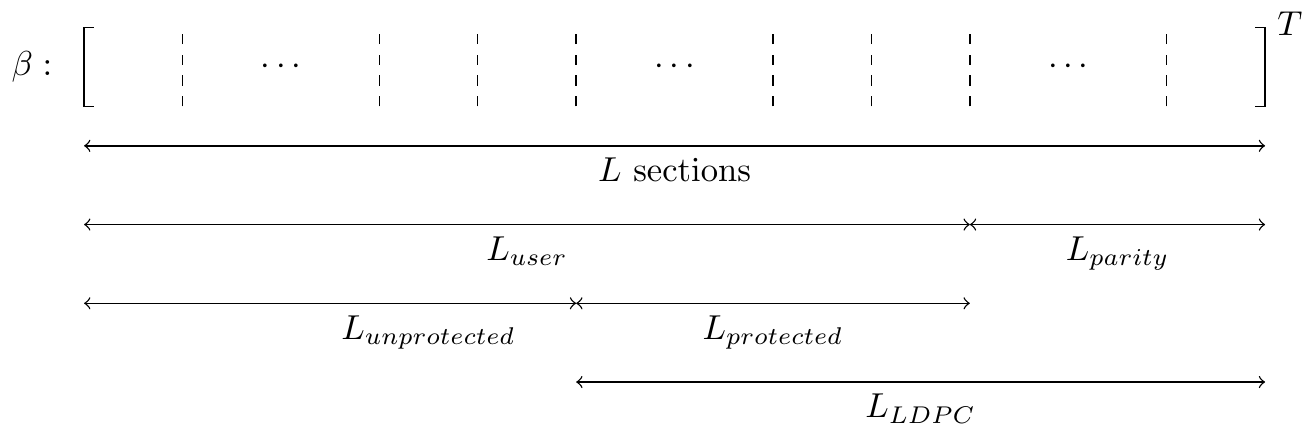}
    \caption{\small Division of the $L$ sections of $\beta$ for an outer LDPC code}
    \label{fig:ldpc-outer-code-division}
    \vspace{-10pt}
\end{figure}

We use a binary LDPC outer code with rate  $R_{LDPC}$, block
length $n_{LDPC}$ and code dimension $k_{LDPC}$, so that
$k_{LDPC}/n_{LDPC}=R_{LDPC}$. For clarity of exposition we assume that both
$n_{LDPC}$ and $k_{LDPC}$ are multiples of $\log M$ (and consequently that $M$
is a power of two). As each section of the SPARC corresponds to $\log M$ bits, if $\log M$ is an integer, then $n_{LDPC}$ and $k_{LDPC}$ bits represent an
integer number of SPARC sections, denoted by  $$L_{LDPC} = \frac{n_{LDPC}}{\log M}  \quad \text{and} \quad L_{protected} = \frac{k_{LDPC}}{\log M},$$ respectively. The assumption that $k_{LDPC}$ and  $n_{LDPC}$ are multiples of $\log M$ is not necessary in practice; the general case is discussed at the end of the next subsection.

We partition the $L$  sections of the SPARC codeword as shown in Fig~\ref{fig:ldpc-outer-code-division}. There are $L_{user}$ sections corresponding to the user (information) bits; these sections are divided into
\emph{unprotected} and \emph{protected} sections, with only the latter being covered by the outer LDPC code.  The parity bits of the LDPC codeword index the last $L_{parity}$ sections of the SPARC. For convenience, the \emph{protected} sections and the  \emph{parity} sections  together are referred to as the  \emph{LDPC} sections.

For a numerical example, consider the case where $L=1024$, $M=256$. There are
$\log M=8$ bits per SPARC section. For a $(5120, 4096)$ LDPC code ($R_{LDPC}=4/5$)
we obtain the following relationships between the number of the sections of each kind:
\begin{align*}
  &   L_{parity}=\frac{n_{LDPC}-k_{LDPC}}{\log M}=\frac{(5120-4096)}{8}=128,\\
    &     L_{user} =L-L_{parity}=1024-128=896, \\
  &   L_{protected}=\frac{k_{LDPC}}{\log M}= \frac{4096}{8}=512,\\
   & L_{LDPC} =L_{protected} +L_{parity}=512+128=640, \\
      &     L_{unprotected} = L_{user} -  L_{protected} = L-L_{LDPC}=384. 
\end{align*}
There are $L_{user}\log M=7168$ user bits, of which the final $k_{LDPC}=4096$
are encoded to a systematic $n_{LDPC}=5120$-bit LDPC codeword. The resulting
$L\log M=8192$ bits (including both the user bits and the LDPC parity bits) are
encoded to a SPARC codeword using the  SPARC encoder and power allocation described in previous sections.

We continue to use $R$ to denote the overall user rate, and $n$ to denote the SPARC code length so that  $nR=L_{user}\log M$.
The  underlying SPARC rate (including the overhead due to the outer code) is denoted by $R_{SPARC}$. We note that 
$nR_{SPARC}=L\log M$, hence $R_{SPARC} > R$. For example,  with $R=1$ and $L,M$ and the outer code parameters as chosen above, $n=L_{user}(\log M)/R=7168$, so $R_{SPARC}=1.143$.

\begin{algorithm}[t]
    \caption{Weighted position posteriors $\beta_\ell$ to bit posteriors $p_0,\ldots,p_{\log M - 1}$ for section $\ell\in [L]$}
    \label{alg:pp2bp}
    \begin{algorithmic}
        \REQUIRE $\beta_\ell=[\beta_{\ell,1},\ldots,\beta_{\ell,M}]$, for $M$ a power of 2
        \STATE Initialise bit posteriors $p_0,\ldots,p_{\log M -1} \leftarrow 0$
        \STATE Initialise normalization constant $c \leftarrow \sum_{i=1}^{M}\beta_{\ell,i}$
        \FOR{$\log i=0,1,\ldots,\log M - 1$}
            \STATE $b \leftarrow \log M - \log i - 1$
            \STATE $k \leftarrow i$
            \WHILE{$k < M$}
                \FOR{$j=k+1, k+2, \ldots, k+i$}
                    \STATE $p_b \leftarrow p_b + \beta_{\ell,j}/c$
                \ENDFOR
                \STATE $k \leftarrow k + 2i$
            \ENDWHILE
        \ENDFOR
        \RETURN $p_0,\ldots,p_{\log M -1}$
    \end{algorithmic}
\end{algorithm}

\subsection{Decoding SPARCs with LDPC outer codes}

At the receiver, we decode as follows:

\begin{enumerate}
    \item Run the AMP decoder to obtain $\beta^T$. Recall that entry $j$ within section $\ell$ of $\beta^T$
is proportional to  the posterior probability of the column $j$ being the
transmitted one for section $\ell$. Thus the AMP decoder gives section-wise posterior probabilities for each section $\ell \in [L]$. 

    \item Convert the section-wise posterior probabilities to bit-wise posterior probabilities using
Algorithm~\ref{alg:pp2bp}, for each of the $L_{LDPC}$ sections.  This requires
$O(L_{LDPC}M \log M)$ time complexity, of the same order as one iteration of
AMP.

    \item Run the LDPC decoder using the bit-wise posterior probabilities obtained in Step 2 as inputs.

    \item If the LDPC decoder fails to produce a valid LDPC codeword, terminate decoding here,
using $\beta^T$  to produce $\hat{\beta}$ by selecting the maximum value in each
section (as per usual AMP decoding).

    \item If the LDPC decoder succeeds in finding a  valid codeword, we use it to re-run AMP decoding on
        the unprotected sections. For this, first convert the LDPC codeword bits to a partial
        $\hat{\beta}_{LDPC}$ as follows, using a method similar to the original SPARC encoding:
        \begin{enumerate}
            \item Set the first $L_{unprotected}M$ entries of  $\hat{\beta}_{LDPC}$  to zero,
            \item The remaining $L_{LDPC}$ sections (with $M$ entries per section) of $\hat{\beta}_{LDPC}$ will have exactly one-non zero entry per section, with the LDPC codeword   determining the location of the non-zero in each section.     Indeed, noting that $n_{LDPC}= L_{LDPC} \log M$, we consider the LDPC codeword as a concatenation of $L_{LDPC}$ blocks of $\log M$ bits each, so that each block of bits indexes the location of the non-zero entry in one section of  $\hat{\beta}_{LDPC}$.  The value of the non-zero in section $\ell$ is set to  $\sqrt{n P_\ell}$, as per the power allocation.
        \end{enumerate}
        Now subtract the codeword corresponding to $\hat{\beta}_{LDPC}$ from
        the original channel output $y$, to obtain $y'=y-A\hat{\beta}_{LDPC}$.

    \item Run the AMP decoder again, with input $y'$, and operating only
        over the first $L_{unprotected}$ sections. As this operation is effectively at a much
        lower rate than the first decoder (since the interference contribution from
       all the protected sections is removed), it is more likely that the
        unprotected bits are decoded correctly than in the first AMP decoder.
        
     We note that instead of generating $y'$, one could run the AMP decoder directly on $y$, but enforcing that in each AMP iteration, each  of the $L_{LDPC}$ sections has all its non-zero mass on the entry determined by $\hat{\beta}_{LDPC}$, i.e., consistent with Step 5.b).

    \item Finish decoding, using the output of the final AMP decoder to find
        the first $L_{unprotected}M$ elements of $\hat{\beta}$, and using
        $\hat{\beta}_{LDPC}$ for the remaining $L_{LDPC}M$ elements.

\end{enumerate}


\subsection{Simulation results}

The combined AMP and outer LDPC setup described above was simulated using the
(5120, 4096) LDPC code ($R_{LDPC}=4/5$) specified in \cite{CCSDS131} with a
min-sum decoder. Bit error rates were measured only over the user bits, ignoring any bit
errors in the LDPC parity bits.

Figure~\ref{fig:ldpc-outer-r08} plots results at overall rate $R=\frac{4}{5}$, where
the underlying LDPC code (modulated with BPSK) can be compared to the 
SPARC with LDPC outer code, and to a plain SPARC  with rate $\frac{4}{5}$. In this case
$R_{PA}=0$, giving a flat power allocation. Figure~\ref{fig:ldpc-outer-r15} plots results at overall rate $R=1.5$, where
we can compare to the QAM-64 WiMAX LDPC code, and to the plain SPARC with rate 1.5  of
Figure~\ref{fig:ldpc-comparison-r15}.

The plots show that  protecting a fraction of sections with an outer code does provide a steep waterfall above a threshold value of $\frac{E_b}{N_0}$. Below this threshold, the combined SPARC + outer code has worse error performance than the plain rate $R$ SPARC without the outer code. This  can be explained as follows. The combined code has a higher SPARC rate
$R_{SPARC}>R$, which leads to  a  larger section error rate for the first AMP decoder, and consequently,  to worse bit-wise posteriors at the input of the LDPC decoder. For $\frac{E_b}{N_0}$ below the threshold, the noise level at the input of the LDPC decoder is beyond than the error-correcting capability of the LDPC code, so the LDPC code effectively does not correct any section errors. Therefore the overall error rate is worse with the outer code.

Above the threshold, we observe that the second AMP decoder (after subtracting the contribution of the LDPC-protected sections) is successful at
decoding the unprotected sections that were initially decoded incorrectly. This
is especially apparent in the  $R=\frac{4}{5}$ case (Figure \ref{fig:ldpc-outer-r08}), where the section errors are
uniformly distributed over all sections due to the flat power allocation;
errors are just as likely in the unprotected sections as in the protected
sections.

\begin{figure}[t]
    \centering
    \includegraphics[width=0.8\columnwidth]{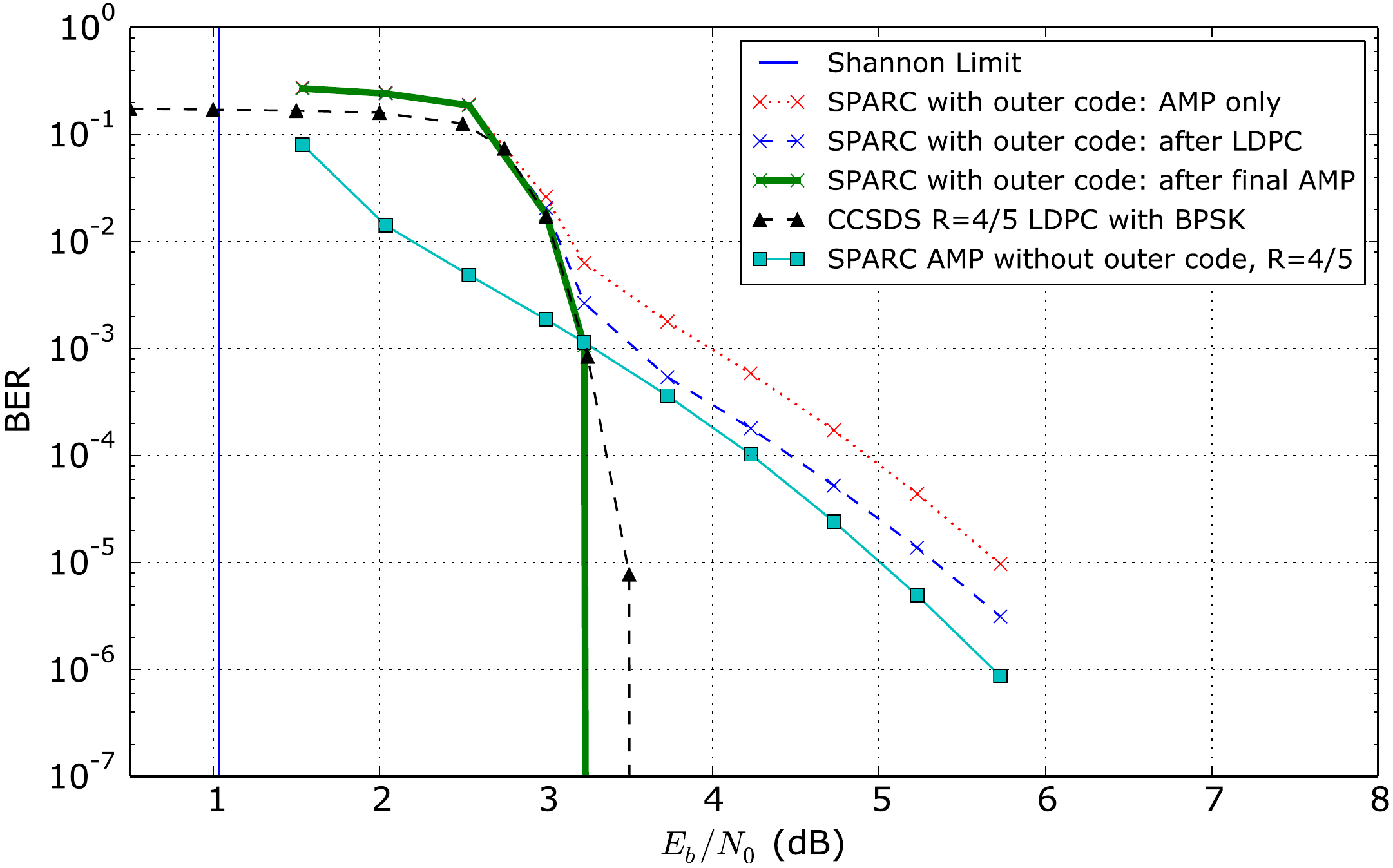}
    \caption{\small Comparison to plain AMP and to BPSK-modulated LDPC at overall rate $R=0.8$.
    The SPARCs are both $L=768$, $M=512$. The underlying SPARC rate when the outer code is included is $R_{SPARC}=0.94$. The BPSK-modulated LDPC is the same CCSDS LDPC
code \cite{CCSDS131} used for the outer code. For this configuration, $L_{user}=654.2$, $L_{parity}=113.8$,
$L_{unprotected}=199.1$, $L_{protected}=455.1$, and $L_{LDPC}=568.9$.}
    \label{fig:ldpc-outer-r08}
    \vspace{-8pt}
\end{figure}
\begin{figure}[h]
    \centering
    \includegraphics[width=0.8\columnwidth]{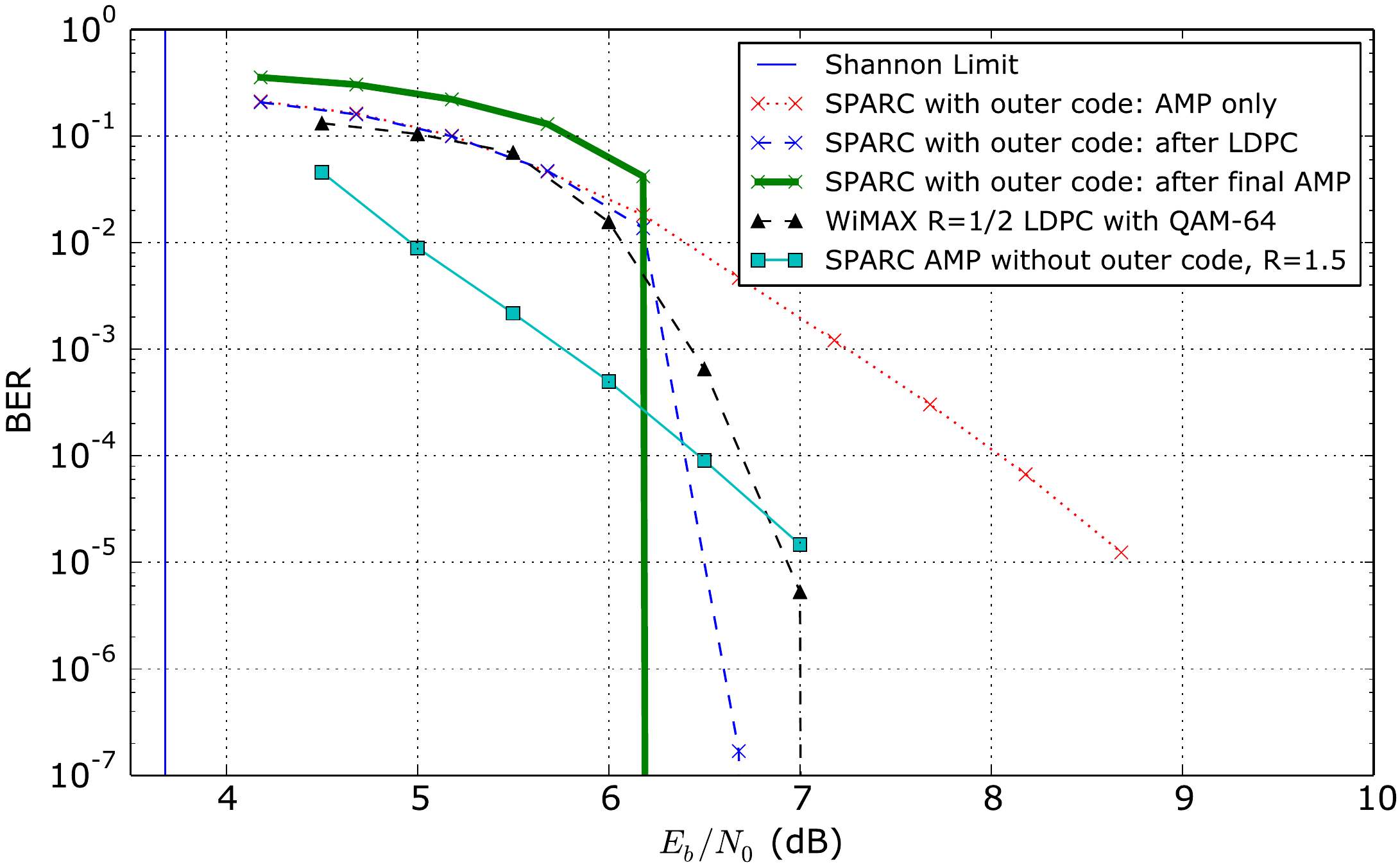}
    \caption{\small Comparison to plain AMP and to the QAM-64 WiMAX LDPC of Section~\ref{sec:coded_mod_comp} at overall rate $R=1.5$
    The SPARCs are both $L=1024$, $M=512$. The underlying SPARC rate including the
outer code is $R_{SPARC}=1.69$. For this configuration, $L_{user}=910.2$,
$L_{parity}=113.8$, $L_{unprotected}=455.1$, $L_{protected}=455.1$, and $L_{LDPC}=455.1$.}
    \label{fig:ldpc-outer-r15}
    \vspace{-8pt}
\end{figure}

\subsection{Outer code design choices}

The error performance with an outer code is sensitive to what fraction of sections are
protected by the outer code. When more sections are protected by the outer code, the overhead of using the outer code is also
higher, driving $R_{SPARC}$ higher for the same overall user rate $R$. This
leads to worse error performance in the initial AMP decoder, which has to operate
at the higher rate $R_{SPARC}$. As discussed above, if $R_{SPARC}$ is increased too much, the bit-wise 
posteriors input to the LDPC decoder are degraded beyond its ability to successfully decode,
giving poor overall error rates.

Since the number of sections covered by the outer code depends on both $\log M$ and $n_{LDPC}$,
various trade-offs are possible. For example,  given $n_{LDPC}$, choosing a larger value of $\log M$ corresponds to  fewer sections being covered by the outer code. This results in smaller rate overhead, but increasing $\log M$ may also affect concentration of the error rates around the SE predictions, as discussed in Section~\ref{sec:lvsm}. We conclude with two remarks about the choice  of parameters for the SPARC and the outer code. 
\begin{enumerate}
\item When using an outer code, it is highly beneficial to have good concentration of
the section error rates for the initial AMP decoder. This is because a small
number of errors in a single trial can usually be fully corrected by the outer code, while
occasional trials with a very large number of errors cannot.

\item Due to the second AMP decoder operation, it is not necessary for all sections
with low power to be protected by the outer code. For
example,  in Figure~\ref{fig:ldpc-outer-r08}, all sections have equal power, and around 30\% are not protected by the outer code. Consequently, these sections
are  often not decoded correctly by the first decoder. Only once
the protected sections are removed is the second decoder  able to correctly
decode these unprotected sections. In general the aim should be to cover all or most of the sections in the flat
region of the power allocation, but experimentation is necessary to determine
the best trade-off.
\end{enumerate}

{An interesting direction for future work would be to develop an EXIT chart  analysis \cite{ten1999convergence,ashikhmin2004extrinsic, RichUBook} to jointly optimize the design of the SPARC and the outer LDPC code.}

\chapter{Spatially Coupled SPARCs} \label{chap:sc_sparcs}

The efficient capacity-achieving decoders discussed in Chapter \ref{chap:AWGN_eff} all relied on power allocation across the sections. The design matrix was chosen with independent, identically distributed Gaussian entries, while the values of the non-zero coefficients varied across sections of the codeword. Equivalently, one can define the power allocation by changing the variance of the Gaussian entries in each section of the design matrix, while the non-zero coefficients of the codeword all have the same value. \emph{Spatial coupling} is a generalization of the latter view of power allocation. 

In a spatially coupled SPARC (SC-SPARC),  the design matrix is divided into multiple blocks, each with independent zero-mean Gaussian entries of a specified variance. The variance of the entries may vary across blocks, while the values of the non-zero entries in the message vector are all equal. Within this general framework, we will consider a simple construction with a band-diagonal spatially coupled design matrix, and show that it can achieve the AWGN capacity with AMP decoding without the need for power allocation. Furthermore, at finite code lengths,  numerical simulations indicate that SC-SPARCs have a much steeper decay of error rate than power allocated SPARCs  as we as we back off from the Shannon limit.  

The idea of spatial coupling was introduced in the context of LDPC codes \cite{felstrom1999time,lentmaier2010iterative, kudekar2011threshold,spatialCoup13,mitchell2015}, and later applied to compressed sensing in \cite{kudekar2010effect,krz12,DonSpatialC13}. Subsequently, spatially coupled SPARCs were studied by Barbier et al. in 
\cite{BarbKrz15,barbISIT16,barbITW16,BarbKrz17, BarbierDM17}. The discussion in this chapter is largely based on the spatially coupled SPARC construction and analysis presented in \cite{hsiehISIT18,rushITW18}.

\section{Spatially coupled SPARC construction} \label{sec:sc_AMP}

\begin{figure}[t]
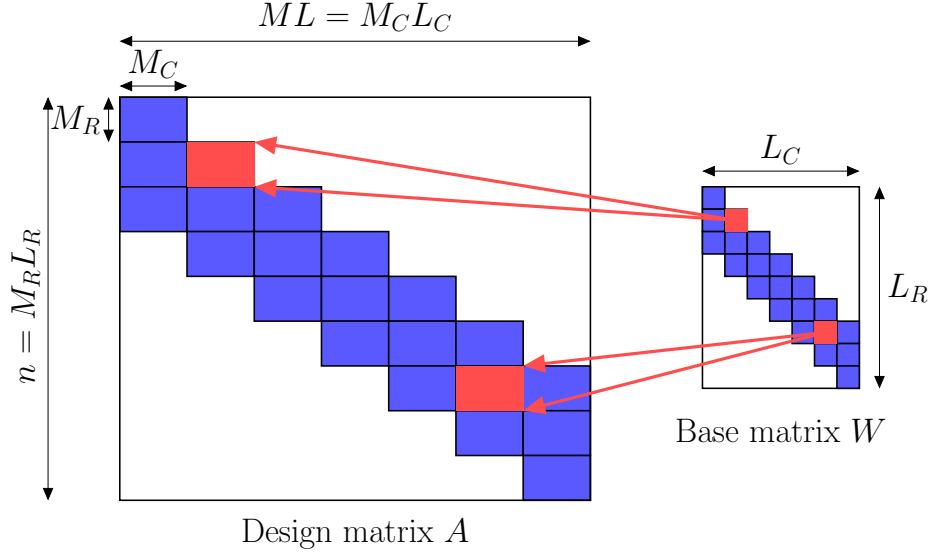

\centering
\includestandalone[width=.75\textwidth]{sparc_scmatrix}
\caption{\small A spatially coupled design matrix $A$ is divided into blocks of size $\Mr\times \Mc$. There are $\Lr$ and $\Lc$ blocks in each column and row respectively. The independent matrix entries are normally distributed, $A_{ij} \sim \mathcal{N}(0,\frac{1}{L}\Wricj)$, where $W$ is the base matrix. The base matrix shown here is an $(\omega, \Lambda)$ base matrix with parameters $\omega=3$ and $\Lambda=7$. The white parts of $A$ and $W$ correspond to zeros.}
\label{fig:sparc_scmatrix}
\vspace{-6pt}
\end{figure}

As in the standard construction, a spatially coupled (SC) SPARC  is defined by a design matrix $A$ of dimension  $n\times \M L$, where $n$ is the code length.  The codeword is $x=A\beta$, where $\beta$ has one non-zero entry in each of the $L$ sections. 

 In an SC-SPARC, the matrix $A$ consists of independent zero-mean normally distributed entries whose variances are specified by a \emph{base matrix} $W$ of dimension $\Lr \times \Lc$. The design matrix $A$ is obtained from the base matrix $W$ by replacing each entry $W_{rc}$, for $r\in[\Lr]$, $c\in[\Lc]$, by an $\Mr\times\Mc$ block with i.i.d. entries $\sim \mc{N}(0, W_{rc}/{L} )$.  This is analogous to the ``graph lifting'' procedure in constructing SC-LDPC codes from protographs \cite{mitchell2015}. See Fig. \ref{fig:sparc_scmatrix} for an example, and note that $n= \Lr \Mr$ and $ML = \Lc \Mc$. 

From the construction, the design matrix has independent normal entries
\be\label{eq:construct_Aij}
A_{ij} \sim \mc{N}\left(0,\frac{1}{L} \Wricj \right) \ \forall \ i \in [n], \ j\in[\M L].
\ee
The operators $\sfr(\cdot):[n]\rightarrow[\Lr]$ and $\sfc(\cdot):[\M L]\rightarrow[\Lc]$ in \eqref{eq:construct_Aij} map a particular row or column index to its corresponding \emph{row block} or \emph{column block} index. 
We require  $\Lc$ to divide $L$, resulting in $\frac{L}{\Lc}$ sections per column block.

The non-zero coefficients of $\beta$  are all set to $1$. Then,  to satisfy the power constraint $\frac{1}{n}\norm{x}^2=P$, it can be shown that the entries of the base matrix $W$ must satisfy 
\be
\label{eq:W_power_contraint}
\frac{1}{\Lr \Lc}\sum_{r=1}^{\Lr}  \sum_{c=1}^{\Lc} W_{rc} = P.
\ee

The trivial  base matrix with $\Lr=\Lc=1$  corresponds to a standard (non-SC) SPARC  with flat power allocation  (discussed in Chapter \ref{chap:AWGN_opt}), while a single-row base matrix $\Lr=1$, $\Lc=L$   is equivalent to a standard SPARC with power allocation  (Chapters \ref{chap:AWGN_eff} and \ref{chap:emp_perf}). Without loss of generality, we will assume that $\frac{1}{\Lc}  \sum_{c=1}^{\Lc} W_{rc}$ and $\frac{1}{\Lr}  \sum_{r=1}^{\Lr} W_{rc}$
are bounded above and below by strictly positive constants. 

Here, we will use the following base matrix  inspired by the coupling structure of SC-LDPC codes constructed from protographs \cite{mitchell2015}.
\begin{definition}
\label{def:ome_lamb}
An $(\omega , \Lambda)$ base matrix $W$ for SC-SPARCs is  described by two parameters: coupling width $\omega\geq1$ and coupling length $\Lambda\geq 2\omega-1$. The matrix has $\Lr=\Lambda+\omega-1$ rows,  $\Lc=\Lambda$ columns, with each column having $\omega$ identical non-zero entries. For an average power constraint $P$, the  $(r,c)$th entry of the base matrix, for  $r \in [\Lr], c\in[\Lc]$, is given by
\begin{equation}\label{eq:W_rc}
W_{rc} =
\begin{cases}
 	\ P \cdot \frac{\Lambda+\omega-1}{\omega} \quad &\text{if} \ c\leq r \leq c+\omega-1,\\
	\ 0 \quad &\text{otherwise}.
\end{cases}
\end{equation}
\end{definition}
For example, the base matrix in Fig. \ref{fig:sparc_scmatrix} has parameters $\omega=3$ and $\Lambda=7$.
This base matrix construction was also used in \cite{liang2017} for SC-SPARCs. Other base matrix constructions can be found in \cite{krz12,DonSpatialC13,barbISIT16,BarbKrz17}. 

Each non-zero entry in an $(\omega , \Lambda)$ base matrix $W$ corresponds to an $\Mr \times (\M L/\Lc)$ block in the design matrix $A$. Each of these blocks can be viewed as a standard (non-SC) SPARC with $\frac{L}{\Lc}$ sections (with $\M$ columns in each section), code length $\Mr$, and rate $R_\text{inner} = \frac{(L/\Lc) \ln \M}{\Mr}$ nats. Since $nR= L \ln M$, the overall rate of the SC-SPARC is related to $R_\text{inner}$ according to
\be\label{eq:R_Rsparc}
R = \frac{\Lambda}{\Lambda + \omega -1} R_\text{inner}.
\ee
The coupling width $\omega$ is usually an integer greater than 1, so $R < R_\text{inner}$. The difference $(R_\text{inner} - R)$ is often referred to as a rate loss. The rate loss depends on the ratio $(\omega-1)/\Lambda$, which becomes negligible when $\Lambda$ is large w.r.t. $\omega$.

\begin{remark}
SC-SPARC constructions generally have a `seed' to jumpstart decoding. In \cite{barbISIT16}, a small fraction of sections of $\beta$ are fixed a priori --- this pinning condition is used to analyze the state evolution equations via the potential function method. Analogously, in the construction in \cite{BarbKrz17}, additional rows are introduced in the design matrix for the blocks corresponding to the first row of the base matrix. In an $(\omega, \Lambda)$ base matrix,  the fact that the number of rows in the base matrix exceeds the number of columns by $(\omega - 1)$ helps decoding start from both ends. The asymptotic state evolution equations  in Sec. \ref{subsec:SE_asymp_analysis} show how AMP decoding progresses in an $(\omega, \Lambda)$ base matrix.
\end{remark}

\section{AMP decoder for spatially coupled SPARCs}\label{sec:AMP}

The decoder wishes to recover the message vector $\beta\in\mathbb{R}^{\M L}$ from the channel output sequence $y\in\mathbb{R}^n$, given by
\begin{equation}\label{eq:linear_model}
y = A\beta + w,
\end{equation}
where the noise vector $w \ \sim_{i.i.d.} \normal{N}(0, \sigma^2)$.

The  procedure to derive an Approximate Message Passing (AMP) decoding algorithm for SC-SPARCs is similar to that for standard SPARCs (Section \ref{sec:AMP_dec}, p. \pageref{eq:beta_update}), with modifications to account for the different variances for the blocks of $A$ specified by the base matrix.  The AMP decoder intitializes  $\beta^0$ to the all-zero vector, and for $t \geq 0$, iteratively computes
\begin{align}
z^t &= y - A\beta^t + \widetilde{\sfb}^t \odot z^{t-1}  \label{eq:scamp_decoder_z} \\
\beta^{t+1} &= \eta^t(\beta^t + (\widetilde{\Smat}^t \odot A)^* z^t).  \label{eq:scamp_decoder_beta}
\end{align}
Here $\odot$ denotes the Hadamard (entry-wise) product.  The denoising function $\eta^t$, and  $\widetilde{\sfb}^t \in \reals^n$, $\widetilde{\Smat}^t \in \reals^{n \times ML}$  are defined below in terms of the state evolution parameters. 

\subsection{State evolution for SC-SPARCs} 

We recall from Section \ref{subsec:iter_soft_dec} that state evolution is a scalar recursion (see \eqref{eq:tau_def}--\eqref{eq:xt_tau_def}) that captures the decoding progression of the AMP decoder for standard SPARCs.
The key difference in SC-SPARCs is that the decoding progression depends on the row block index $r \in [\Lr]$ and  column block  index $c \in [\Lc]$. Consequently,  the  state evolution parameters for SC-SPARCs will also depend on the row block index $r \in [\Lr]$ and the column block index $c \in [\Lc]$. In detail, the state evolution (SE) iteratively computes vectors $\phi^t\in\mathbb{R}^{\Lr}$ and $\psi^t\in\mathbb{R}^{\Lc}$ as follows. Initialize $\psi_c^{0} = 1$ for  $c\in[\Lc]$, and for $t=0,1,\ldots$, compute
\begin{align}
\phi_r^t &= \sigma^2 + \frac{1}{\Lc}\sum_{c=1}^{\Lc}W_{rc}\psi_c^t, \qquad r \in [\Lr], \label{eq:se_phi} \\
\psi_c^{t+1} &= 1 - \mathcal{E}(\tau_c^t), \qquad  \qquad \qquad  \ \ c \in [\Lc],  \label{eq:se_psi}
\end{align}
where \be 
\tau_c^t = \frac{R}{\ln{\M}}\left[\frac{1}{\Lr}\sum_{r} \frac{W_{rc}}{\phi_r^t}\right]^{-1}, 
\label{eq:tau_ct_def}
\ee and 
$\mathcal{E}(\tau_c^t)$ is defined as
\be
\label{eq:tau_asymp}
\mathcal{E}(\tau_c^t) 
= \mathbb{E}\left[ \frac{e^{U_1/\sqrt{\tau_c^t}}}{e^{U_1/\sqrt{\tau_c^t}} + e^{- \frac{1}{\tau_c^t}}\sum_{j=2}^{\M} e^{U_j/\sqrt{\tau_c^t}}}\right],
\ee
%
with $U_1,\ldots,U_{\M} \stackrel{\text{i.i.d.}}{\sim} \stdnorm$. 

We define the entries of the vector $\sfb^t\in\mathbb{R}^{\Lr}$ and the matrix  $\Smat^t \in \mathbb{R}^{\Lr \times \Lc}$ as
\be
\sfb^t_{r} = \frac{(\phi_{r}^{t} - \sigma^2)}{\phi_{r}^{t-1}}, \qquad \Smat^t_{rc} = 
\frac{\tau^t_{{c}}}{ \phi_{r}^t}, \qquad \text{ for } r \in [\Lr], \, c \in [\Lc].
\label{eq:btSmat}
\ee
The vector $\widetilde{\sfb}^t\in\mathbb{R}^{n}$ in \eqref{eq:scamp_decoder_z} is obtained by repeating  $\Mr$ times each entry of $\sfb^t$.  Similarly, $\widetilde{\Smat}^t \in \mathbb{R}^{n \times ML}$ in \eqref{eq:scamp_decoder_beta} is obtained by repeating each entry of $\Smat^t$
in an $\Mr \times \Mc$ matrix.

 The denoising function  $\eta^t= (\eta^t_1, \ldots, \eta^t_{ML}): \,  \mathbb{R}^{\M L}  \to \mathbb{R}^{\M L}$ in \eqref{eq:scamp_decoder_beta} is defined as follows.  For $j \in [\M L]$  such that $j\in \text{sec}(\ell)$ and the section $\ell$ is in the $c$th column block, 
\be\label{eq:eta_function}
\eta^t_j(s)
= \frac{e^{s_j/\tau^t_c}}{\sum_{j'\in \text{sec}(\ell)}e^{s_{j'}/\tau^t_c }}.
\ee
As in the case of standard SPARCs, $\eta^t_j(s)$ depends on all the components of $s$ in the section containing $j$.

\subsection{Interpretation of the AMP decoder}
The input to  $\eta^t(\cdot)$ in \eqref{eq:eta_function} can be viewed as a noisy version of $\beta$. In particular, the $c$th block of $s^t=s$ is approximately distributed as 
$\beta_c + \sqrt{\tau_c^t}Z_c$, where $Z_c \in \reals^{\Mr}$ is a standard normal random vector independent of $\beta$.  (Here $\beta_{c} \in\mathbb{R}^{\Mc}$ is the part of the message vector corresponding to   column block $c$ of the design matrix.)    Under the above distributional assumption, the denoising function $\eta_j$ in \eqref{eq:eta_function} is the  minimum mean squared error (MMSE) estimator  for  $\beta_{j}$, i.e., 
$$ \eta^t_j(s )=\mathbb{E}\left[\beta_j | s = \beta_c + \sqrt{\tau_c^t} \, Z_c \right], \qquad \text{ for } j \in [ML],$$
where the expectation is calculated over $\beta$ and  $Z$, with the location of the non-zero entry in each section of $\beta$ being uniformly distributed within the section. 

The entries of the modified residual $z^t$ in \eqref{eq:scamp_decoder_z} are approximately Gaussian and independent, with the variance determined by the block index. For $r\in[\Lr]$,  the SE parameter $\phi^t_r$ approximates the variance of the $r$th block of the residual $z^t_{r} \in \reals^{\Mr}$.  The `Onsager' term $\widetilde{\sfb}^t \odot z^{t-1}$  in \eqref{eq:scamp_decoder_z} reflects the block-wise structure of $z^t$. Finally, the parameter $\psi_c^t$ approximates the normalized mean-squared error in the estimate of $\beta_c$. This is discussed in the next section.

To summarize, the key difference from the  state evolution parameters for standard SPARCs  is that  the variances of the effective observation and the residual now depend on the column- and row-block indices, respectively.  These variances are captured by $\{ \tau_c^t \}_{c \in [\Lc]}$ and 
$\{ \phi_r^t \}_{r \in [\Lr]}$.

\section{Measuring the performance of the AMP decoder} \label{sec:AMP_perf}

The performance of a SPARC decoder is  measured by the \emph{section error rate},  defined as
\begin{equation}\label{eq:ser_def}
\mathcal{E}_{\text{sec}} := \frac{1}{L}\sum_{\ell=1}^L \mathbf{1}\{\widehat{\beta}_{\text{sec}(\ell)} \neq \beta_{\text{sec}(\ell)} \}.
\end{equation}
The section error rate can be shown to be bounded by the normalized mean squared error (NMSE) as follows.
\be
\mathcal{E}_{\text{sec}}  \leq   \frac{4}{L} \norm{\beta^T - \beta}^2  =4 \left[ \frac{1}{\Lc}\sum_{c=1}^{\Lc} 
\frac{\| \beta_{c}^T-\beta_{c} \|_2^2}{L/\Lc}  \right],
\ee
where in the last expression, we have written the total NMSE as an average over the NMSEs of the $\Lc$ blocks of the message vector.  

Figure  \ref{fig:ser_waveprop} shows that $\psi^t$ closely tracks the NMSE of each block of the message vector, i.e.,  $\psi^t_c \approx \frac{\|\beta_{c}^t-\beta_{c}\|_2^2}{L/\Lc}$ for $c\in[\Lc]$.   
We additionally observe from the figure that as AMP iterates, the NMSE  reduction propagates from the ends towards the center blocks.  This decoding propagation phenomenon can be explained using an asymptotic characterization of  the state evolution equations.

\begin{figure}[t]
\centering
\includegraphics[width=3.6in]{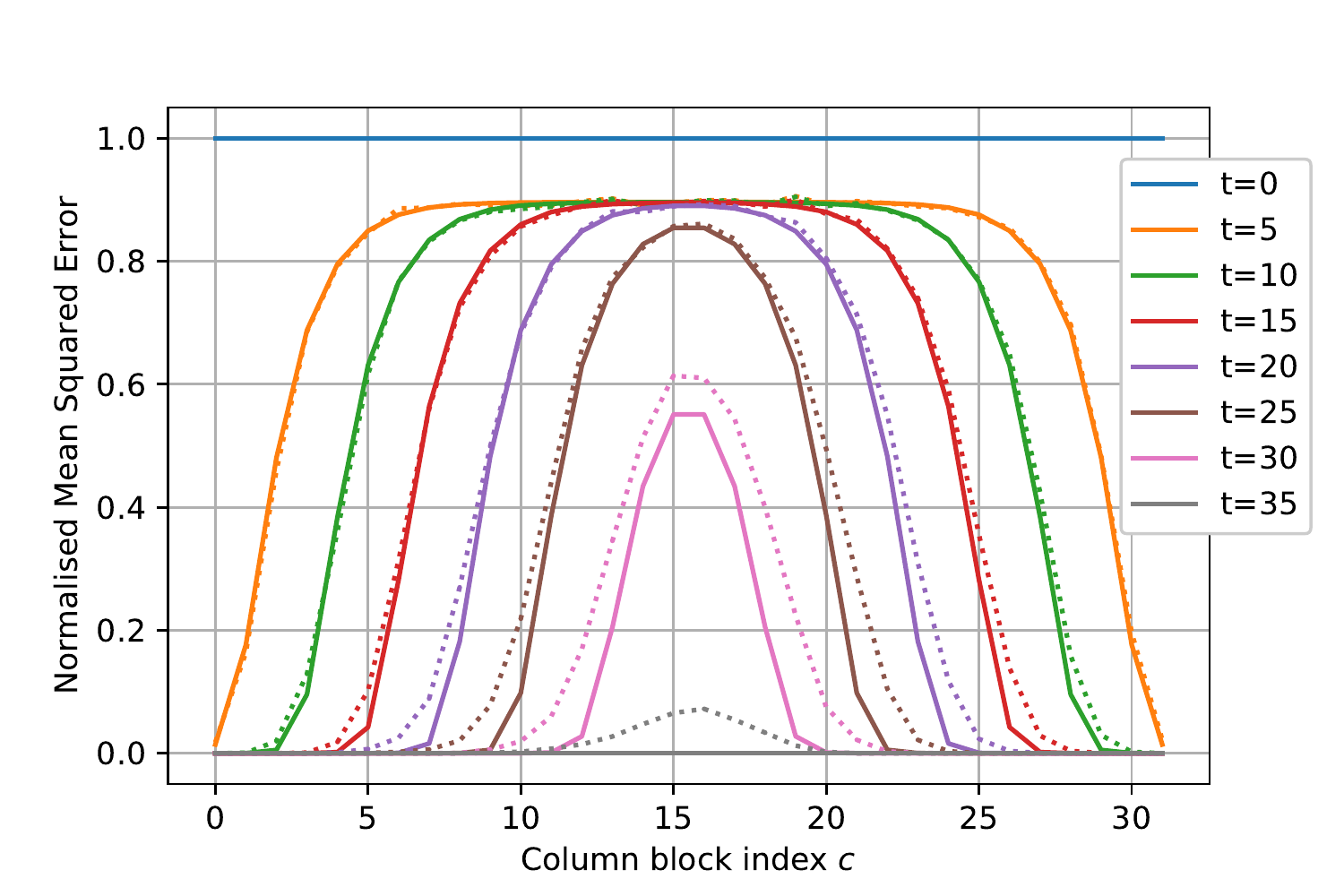}
\caption{ \small NMSE $\frac{\|\beta_{c}^t-\beta_{c}\|_2^2}{L/\Lc}$ vs. column block index $c\in [\Lc]$ for several iteration numbers. The SC-SPARC with an $(\omega,\Lambda)$ base matrix uses parameters: $R=1.5$ bits, $\mc{C}=2$ bits, $\omega=6$, $\Lambda=32$, $M=512$, $L=2048$ and $n=12284$. The solid lines are the SE predictions from \eqref{eq:se_psi}, and the dotted lines are the average NMSE over 100 instances of AMP decoding.}
\label{fig:ser_waveprop}
\vspace{-5pt}
\end{figure}


\subsection{Asymptotic State Evolution analysis} \label{subsec:SE_asymp_analysis}

Note that $\mathcal{E}(\tau_c^t)$ in \eqref{eq:tau_asymp}  takes a value in $[0,1]$. If $\mathcal{E}(\tau_c^t)=1 $, then $\psi^{t+1}_c =0$, which means that  the sections in column block $c$ are expected to decode correctly. If we terminate the AMP decoder at iteration $T$, we want  $\psi^{T}_c =0$, for $ c \in [\Lc]$, so that the entire message vector is expected to decode correctly. The condition under which $\mathcal{E}(\tau_c^t)$ equals 1 in the large system limit is specified by the following lemma.

\begin{lemma}\label{lemma:se_asymp}
In the limit as the section size $\M\to\infty$, the expectation $\mathcal{E}(\tau_c^t)$ in \eqref{eq:tau_asymp} converges to either 1 or 0 as follows.
\begin{equation} \label{eq:lim_Etauc}
\lim_{\M\to\infty} \mathcal{E}(\tau_c^t)
	=\begin{cases}
  		1 \quad \text{if} \ \frac{1}{\Lr}\sum_{r=1}^{\Lr} \frac{W_{rc}}{\phi_r^t}>2R\\
		0 \quad \text{if} \ \frac{1}{\Lr}\sum_{r=1}^{\Lr} \frac{W_{rc}}{\phi_r^t}<2R.
	\end{cases}
\end{equation}
This results in the following asymptotic state evolution recursion. Initialise $\bar{\psi}_c^{0} = 1$, for $c\in[\Lc]$, and for $t=0,1,2,\ldots$,
\begin{align}
\bar{\phi}_r^t &= \sigma^2 + \frac{1}{\Lc}\sum_{c=1}^{\Lc}W_{rc}\bar{\psi}_c^t, \qquad r \in [\Lr], \label{eq:se_asmyp_phi} \\
\bar{\psi}_c^{t+1} &= 1 - \mathbf{1}\left\{ \frac{1}{\Lr}\sum_{r=1}^{\Lr} \frac{W_{rc}}{\bar{\phi}_r^t}>2R \right\},  \qquad c \in [\Lc], \label{eq:se_asmyp_psi}
\end{align}
where $\bar{\phi}, \bar{\psi}$ indicate asymptotic values.
\end{lemma}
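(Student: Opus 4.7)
The key observation is that Lemma \ref{lemma:se_asymp} has essentially the same structure as Lemma \ref{lem:conv_expec}, which was proved for standard (non-SC) SPARCs in Section \ref{subsec:conv_exp_proof}. My plan is to reduce the SC-SPARC expectation in \eqref{eq:tau_asymp} to the form analyzed there and then import the argument almost verbatim.

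First, I will rewrite $\mathcal{E}(\tau_c^t)$ in a form convenient for taking $M\to \infty$. Define
\[
\nu_c \; := \;  \frac{1}{R}\cdot\frac{1}{\Lr}\sum_{r=1}^{\Lr} \frac{W_{rc}}{\phi_r^t},
\]
so that by \eqref{eq:tau_ct_def} we have $1/\tau_c^t = \nu_c \ln M$ and $1/\sqrt{\tau_c^t}=\sqrt{\nu_c \ln M}$. Substituting into \eqref{eq:tau_asymp} gives
\[
\mathcal{E}(\tau_c^t)
 = \mathbb{E}\left[ \frac{e^{\sqrt{\nu_c \ln M}\, U_1}}{e^{\sqrt{\nu_c \ln M}\, U_1} + M^{-\nu_c}\sum_{j=2}^{M} e^{\sqrt{\nu_c \ln M}\, U_j}}\right],
\]
which is exactly the form $\mathbb{E}_X[c/(c+X)]$ with $c = e^{\sqrt{\nu_c \ln M}\, U_1}$ and $X = M^{-\nu_c}\sum_{j=2}^M e^{\sqrt{\nu_c\ln M}\, U_j}$ that appears in the proof of Lemma \ref{lem:conv_expec}.

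Next, I will split into the two cases. For $\nu_c>2$ (hence $1/\tau_c^t>2R$ when the defining sum exceeds $2R$), convexity of $c/(c+X)$ in $X$ and Jensen's inequality give the lower bound $c/(c+\mathbb{E}X)$, and a direct Gaussian moment generating function computation yields $\mathbb{E}X \leq M^{1-\nu_c/2}$. Conditioning on the high-probability event $\{U_1 > -(\ln M)^{1/4}\}$ then forces the ratio to $1$ as $M\to\infty$. For $\nu_c<2$, I will lower-bound $X$ by the single largest term $M^{-\nu_c}\exp(\sqrt{\nu_c\ln M}\max_{j\geq 2}U_j)$ and use the standard concentration $\max_{j\geq 2}U_j \geq \sqrt{2\ln M}(1-\epsilon)$ with probability $1-e^{-M^{\epsilon(1-\epsilon)}}$. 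This gives $X \geq M^{\sqrt{2\nu_c}(1-\epsilon)-\nu_c}$ with overwhelming probability; since $\sqrt{2\nu_c}(1-\epsilon)-\nu_c>\delta>0$ for small enough $\epsilon$ whenever $\nu_c<2$, the inner ratio can be bounded above by $2/(1+c^{-1}M^\delta)$ and a further conditioning on $\{U_1 \leq (\ln M)^{1/4}\}$ drives the expectation to $0$. Both cases are exactly what is executed in Section \ref{subsec:conv_exp_proof}; only the notation ($\nu_c$ instead of $\nu_{\lfloor \xi L\rfloor}$, fixed $c$ instead of $\ell$) differs.

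Finally, having established \eqref{eq:lim_Etauc}, the asymptotic recursion \eqref{eq:se_asmyp_phi}--\eqref{eq:se_asmyp_psi} follows immediately: substitute $\lim_{M\to\infty}\mathcal{E}(\tau_c^t) = \mathbf{1}\{\frac{1}{\Lr}\sum_r W_{rc}/\phi_r^t > 2R\}$ into \eqref{eq:se_psi}, observing that \eqref{eq:se_phi} is already explicit in $\psi,\phi$ and does not involve an $M$-limit. The only subtlety worth flagging is the boundary case $\frac{1}{\Lr}\sum_r W_{rc}/\phi_r^t = 2R$, which has measure zero in the dynamics and can be absorbed into either branch without affecting subsequent iterations.

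The entire argument hinges on the Gaussian max concentration bound and the Jensen inequality step; there is no genuinely new obstacle beyond what was overcome for standard SPARCs. The only bookkeeping I expect to watch is that $\nu_c$ now depends on $t$ through $\phi_r^t$, but since the limit is taken with $\Lr,\Lc,\omega,\Lambda$ and $t$ held fixed while only $M\to\infty$, $\nu_c$ is a deterministic constant throughout the argument and the proof goes through unchanged.
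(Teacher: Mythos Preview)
Your proposal is correct and follows essentially the same approach as the paper: define $\nu_c = \frac{1}{R\Lr}\sum_r W_{rc}/\phi_r^t$ so that $1/\tau_c^t = \nu_c \ln M$, observe that $\mathcal{E}(\tau_c^t)$ then takes exactly the form analyzed in Section~\ref{subsec:conv_exp_proof}, and import the two-case argument from there. The paper's proof is in fact terser than yours---it simply points to \eqref{eq:Eell_iter} and says ``following the steps in Section~\ref{subsec:conv_exp_proof}''---so your more detailed unpacking of the Jensen and max-concentration steps is a faithful elaboration rather than a departure.
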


\begin{proof}
Recalling the definition of $\tau_c^t$ from \eqref{eq:tau_ct_def},  we  write $\frac{1}{\tau_c^t}=\nu_c^t \ln{\M}$, where
\begin{equation}\label{eq:nu_c}
\nu_c^t= \frac{1}{R \Lr}\sum_{r=1}^{\Lr} \frac{W_{rc}}{\phi_r^t}
\end{equation}
is an order 1 quantity because $\frac{1}{\Lr}\sum_{r=1}^{\Lr} W_{rc}$ is bounded above and below by positive constants. Therefore, 
\begin{equation}
\mathcal{E}(\tau_c^t) = \mathbb{E} \left[\frac{e^{\sqrt{\nu_c^t\ln{\M}}U_1}}{e^{\sqrt{\nu_c^t\ln{\M}}U_1} + {\M}^{-\nu_c^t} \sum_{j=2}^{\M} e^{\sqrt{\nu_c^t\ln{\M}}U_j}}\right],
\end{equation}
which is in the same form as the expectation in \eqref{eq:Eell_iter}. Therefore, following the steps in Section \ref{subsec:conv_exp_proof}, we conclude that
\begin{equation}\label{eq:lemma1_last}
\lim_{\M\to\infty} \mathcal{E}(\tau_c^t)
	=\begin{cases}
  		1 \quad \text{if} \ \nu_c^t>2\\
		0 \quad \text{if} \ \nu_c^t<2.
	\end{cases}
\end{equation}
The proof is completed by substituting the value of  $\nu_c^t$ from \eqref{eq:nu_c} in \eqref{eq:lemma1_last}.
\end{proof}
%

\begin{remark}
Using the definition of $\tau_c^t$ from \eqref{eq:tau_ct_def}, we can also write \eqref{eq:lim_Etauc} as 
\be
\lim_{\M\to\infty} \mathcal{E}(\tau_c^t)
	=\begin{cases}
  		1 \quad \text{if} \ \tau_c^t \ln \M < \frac{1}{2}\\
		0 \quad \text{if} \ \tau_c^t \ln \M > \frac{1}{2}.
	\end{cases}
\ee
\end{remark}

\begin{remark} Lemma \ref{lemma:se_asymp} is a generalization of Lemma \ref{lem:conv_expec}, the asymptotic SE result  for standard SPARCs
The term $\frac{1}{\Lr}\sum_r \frac{W_{rc}}{\phi^t_r}$ in \eqref{eq:lim_Etauc} represents the average signal to effective noise ratio at iteration $t$ for the column index $c$. If this quantity exceeds the prescribed threshold of $2R$, then the $c^{th}$ block of the message vector, $\beta_{c}$, will be decoded at the next iteration in the large system limit, i.e., $\psi^{t+1}_c=0$.
\end{remark}

The asymptotic SE recursion \eqref{eq:se_asmyp_phi}-\eqref{eq:se_asmyp_psi} is given for a general base matrix $W$. We now apply it to the $(\omega,\Lambda)$ base matrix introduced in Definition \ref{def:ome_lamb}.

\begin{lemma}\label{lemma:se_asymp_wLbasematrix}
The asymptotic SE recursion \eqref{eq:se_asmyp_phi}-\eqref{eq:se_asmyp_psi} for an $(\omega,\Lambda)$ base matrix $W$ is as follows. Initialise $\bar{\psi}_c^{0} = 1 \ \forall \ c\in[\Lambda ]$, and for $t=0,1,2,\ldots$,
\begin{align}
\bar{\phi}_r^t &= \sigma^2 \left(1 + \frac{\kappa\cdot \text{snr}}{\omega} \sum_{c= \underline{c}_r }^{\overline{c}_r} \bar{\psi}_c^t\right), \quad r \in [ \Lambda + \omega -1], \label{eq:se_asmyp_flat_phi} \\
\bar{\psi}_c^{t+1} &= 1 - \mathbf{1}\left\{ \frac{P}{\omega}\sum_{r=c}^{c+\omega-1} \frac{1}{\bar{\phi}_r^t}>2R \right\}, \quad c \in [ \Lambda ],  \label{eq:se_asmyp_flat_psi}
\end{align}
where $\kappa = \frac{\Lambda+\omega-1}{\Lambda}$, $\text{snr}=\frac{P}{\sigma^2}$, and
\be\label{eq:c_r}
(\underline{c}_r,\, \overline{c}_r)
	=\begin{cases}
  		(1,\, r) \ &\text{if} \ 1\leq r\leq\omega\\
		(r-\omega+1,\, r) \ &\text{if} \ \omega \leq r \leq \Lambda\\
	        (r-\omega+1,\, \Lambda) \ &\text{if} \ \Lambda \leq r \leq \Lambda + \omega - 1.
	\end{cases}
\ee
\end{lemma}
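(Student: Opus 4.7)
The plan is to specialize the general asymptotic recursion \eqref{eq:se_asmyp_phi}--\eqref{eq:se_asmyp_psi} of Lemma~\ref{lemma:se_asymp} by substituting in the precise structure of the $(\omega,\Lambda)$ base matrix from Definition~\ref{def:ome_lamb}. No new inequalities are needed; the entire argument is a bookkeeping computation about which entries $W_{rc}$ are non-zero.

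First, I would compute $\bar{\phi}_r^t$. Using \eqref{eq:W_rc}, the only indices $c$ contributing to the sum $\frac{1}{\Lambda}\sum_{c=1}^{\Lambda} W_{rc}\bar{\psi}_c^t$ are those with $W_{rc}\neq 0$, i.e.\ $c\le r\le c+\omega-1$, which is equivalent to $r-\omega+1\le c\le r$. Intersecting with $c\in[1,\Lambda]$ gives exactly the three cases in \eqref{eq:c_r} depending on whether $r\in[1,\omega]$, $r\in[\omega,\Lambda]$, or $r\in[\Lambda,\Lambda+\omega-1]$; for all these $c$ one has $W_{rc}=P\kappa/\omega$ where $\kappa=(\Lambda+\omega-1)/\Lambda$. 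Factoring out $\sigma^2$ and writing $\text{snr}=P/\sigma^2$ yields \eqref{eq:se_asmyp_flat_phi}.

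Next, I would compute the indicator in \eqref{eq:se_asmyp_psi}. Here the sum runs over $r\in[1,\Lambda+\omega-1]$, but again only the indices $r$ with $W_{rc}\neq 0$ contribute. For a fixed column $c\in[1,\Lambda]$, the condition $c\le r\le c+\omega-1$ is automatically compatible with the row range $[1,\Lambda+\omega-1]$ (this is exactly why the base matrix has $\omega-1$ extra rows), so the sum runs over the full window $r\in\{c,c+1,\ldots,c+\omega-1\}$ with no boundary truncation. Substituting $W_{rc}=P\kappa/\omega$ and $\Lr=\Lambda+\omega-1=\kappa\Lambda$ cancels the $\kappa$ factors and produces $\frac{P}{\omega}\sum_{r=c}^{c+\omega-1}\frac{1}{\bar{\phi}_r^t}$, matching \eqref{eq:se_asmyp_flat_psi}.

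There is no real obstacle here; the only step that requires care is the case analysis on $r$ in \eqref{eq:c_r}, since the row support of column $c$ is a full window of length $\omega$ (by construction), whereas the column support of row $r$ gets truncated near $r\approx 1$ and $r\approx \Lambda+\omega-1$. This asymmetry is precisely the ``seed'' that jumpstarts the decoding wave, and it is encoded in the piecewise definition of $(\underline{c}_r,\overline{c}_r)$. Once this is handled, the lemma follows by direct substitution.
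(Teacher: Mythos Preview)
Your proposal is correct and takes essentially the same approach as the paper: the paper's proof is the single sentence ``Substitute the value of $W_{rc}$ from \eqref{eq:W_rc}, and $\Lc=\Lambda$, $\Lr=\Lambda+\omega-1$ in \eqref{eq:se_asmyp_phi}--\eqref{eq:se_asmyp_psi},'' and your plan is simply a more detailed unpacking of that substitution, including the case analysis on $r$ that determines $(\underline{c}_r,\overline{c}_r)$ and the observation that the $\kappa$ factors cancel in the $\bar\psi$ update.
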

\begin{proof}
Substitute the value of $W_{rc}$ from \eqref{eq:W_rc}, and $\Lc=\Lambda$, $\Lr=\Lambda+\omega-1$ in \eqref{eq:se_asmyp_phi}-\eqref{eq:se_asmyp_psi}.
\end{proof}

Observe that the $\bar{\phi}_r^t$'s and $\bar{\psi}_c^t$'s are symmetric about the middle indices, i.e. $\bar{\phi}_r^t = \bar{\phi}_{\Lr-r+1}^t$ for $r\leq \lfloor \frac{\Lr}{2} \rfloor$ and $\bar{\psi}_c^t = \bar{\psi}_{\Lc-c+1}^t$ for $c\leq \lfloor \frac{\Lc}{2} \rfloor$.

Lemma \ref{lemma:se_asymp_wLbasematrix} gives insight into the decoding progression for a large SC-SPARC defined using an  $(\omega,\Lambda)$ base matrix. On initialization ($t=0$),  the value of $\bar{\phi}_r^0$  for each $r$ depends on the number of non-zero entries in row $r$ of $W$, which is equal to $\overline{c}_r - \underline{c}_r + 1$, with $\overline{c}_r, \underline{c}_r$ given by \eqref{eq:c_r}. Therefore, $\bar{\phi}_r^0$ increases from  $r=1$ until $r=\omega$, is constant for $\omega\leq r\leq \Lambda$, and then starts decreasing again after $r = \Lambda$. As a result,  $\bar{\psi}_c^{1}$ is smallest for $c$ at either end of the base matrix ($c\in\{1,\Lambda\}$) and increases as $c$ moves towards the middle, since the $\sum_{r=c}^{c+\omega-1} (\bar{\phi}_r^0)^{-1}$ term in \eqref{eq:se_asmyp_flat_psi} is largest  for $c\in\{1,\Lambda\}$, followed by $c\in\{2, \Lambda-1\}$, and so on. Therefore, we expect the blocks of the message vector corresponding to column index $c\in\{1,\Lambda\}$ to be decoded most easily, followed by $c\in\{2, \Lambda-1\}$, and so on. Fig. \ref{fig:ser_waveprop} shows that this is indeed the case. 

The decoding progression for subsequent iterations shown  in Fig. \ref{fig:ser_waveprop} can be explained using Lemma \ref{lemma:se_asymp_wLbasematrix} by tracking the evolution of the $\bar{\phi}_r^t$'s and $\bar{\psi}_c^t$'s. In particular, one finds that if column $c^*$ decodes in iteration $t$, i.e. $\bar{\psi}_{c^*}^t=0$, then columns within a coupling width away (i.e., columns $c\in\{c^*-(\omega-1), \ldots, c^*+ (\omega-1)\}$) will become easier to decode in iteration $(t+1)$.


In the following, with  a slight abuse of terminology, we will use the phrase ``column $c$ is decoded in iteration $t$''  to mean $\bar{\psi}_c^t=0$.
\begin{proposition}
\label{prop:decoding_LSL} \cite{hsiehISIT18}
Consider an SC-SPARC constructed using an $(\omega,\Lambda)$ base matrix with rate $R< \frac{1}{2\kappa}\ln(1+\kappa\cdot\text{snr})$, where $\kappa = \frac{\Lambda+\omega-1}{\Lambda}$. (Note that $\frac{1}{2\kappa}\ln(1+\kappa\cdot\text{snr}) \in [ {\mc{C}}/{\kappa},   \mc{C}]$.) Then, according to the asymptotic state evolution equations in Lemma \ref{lemma:se_asymp_wLbasematrix},  the following statements hold in the large system limit:
\begin{enumerate}
\item The AMP decoder will be able to start decoding if
\be\label{eq:omega_thresh}
\omega > \left( \frac{1}{e^{2R\kappa}-1} - \frac{1}{\kappa\cdot\text{snr}} \right)^{-1}.
\ee
\item If \eqref{eq:omega_thresh} is satisfied, then the  sections in the first and last $c^*$ blocks of the message vector will be decoded in the first iteration (i.e. $\bar{\psi}_c^1=0$ for $c\in\{1,2,\ldots,c^*\}\cup \{\Lambda-c^*+1,\Lambda-c^*+2,\ldots,\Lambda\}$), where $c^*$ is bounded from below as
\begin{align}
 c^* \geq  & \min \Bigg\{  (\omega -1),   \,
\left\lfloor\omega\cdot\frac{1 + \kappa\cdot\text{snr}}{(\kappa\cdot\text{snr})^2} \cdot 
 \left[ \ln\left(1 + \kappa\cdot\text{snr}\right) - 2R\kappa \right]\right\rfloor  \Bigg\}.\label{eq:c_star}
\end{align}

\item At least $2c^*$ additional columns will decode in each subsequent iteration until the message is fully decoded. Therefore, the AMP decoder will fully decode in at most $\left\lceil \frac{\Lambda}{2c^*} \right\rceil$ iterations.
\end{enumerate}
\end{proposition}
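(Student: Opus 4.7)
My plan is to analyze the asymptotic state evolution recursion in Lemma \ref{lemma:se_asymp_wLbasematrix} directly. The key structural observation is that $\bar{\psi}_c^t \in \{0,1\}$ for every $t \geq 1$, with $\bar{\psi}_c^t = 0$ indicating that column $c$ has been decoded; and that, by the symmetry $\bar{\phi}_r^t = \bar{\phi}_{\Lr-r+1}^t$ and $\bar{\psi}_c^t = \bar{\psi}_{\Lc-c+1}^t$, it suffices to track the decoding frontier propagating inward from the left end. The first step is to plug $\bar{\psi}_c^0 = 1$ into \eqref{eq:se_asmyp_flat_phi} with the ranges \eqref{eq:c_r} to obtain $\bar{\phi}_r^0 = \sigma^2(1 + \kappa\,\snr\, r/\omega)$ for $1 \leq r \leq \omega$ and $\bar{\phi}_r^0 = \sigma^2(1 + \kappa\,\snr)$ for $\omega \leq r \leq \Lambda$, with symmetric values at the right end.

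For part 1, I would show that column $c = 1$ decodes in iteration $1$. The condition \eqref{eq:se_asmyp_flat_psi} at $c = 1$ reduces to $\sum_{r=1}^{\omega}[1 + \kappa\,\snr\, r/\omega]^{-1} > 2R\omega/\snr$. Since $x \mapsto 1/(1+ax)$ is positive and decreasing, I would lower-bound the sum by $\int_1^{\omega+1}[1 + \kappa\,\snr\, x/\omega]^{-1}\,dx$, which evaluates in closed form. Writing $u = \kappa\,\snr$ and $\alpha = e^{2R\kappa}$, the decoding condition then rearranges to $\omega(1+u-\alpha) > u(\alpha-1)$. The rate hypothesis $R < (2\kappa)^{-1}\ln(1+u)$ gives $\alpha < 1+u$, so dividing is legitimate, and the identity $u(\alpha-1)/(1+u-\alpha) = [1/(\alpha-1) - 1/u]^{-1}$ produces exactly the threshold in \eqref{eq:omega_thresh}.

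For part 2, I would repeat the analysis for a general column $c \leq \omega$. The sum in \eqref{eq:se_asmyp_flat_psi} splits into a boundary piece $\sum_{r=c}^{\omega}[1 + \kappa\,\snr\, r/\omega]^{-1}$ and a flat-region piece $(c-1)/(1+u)$. Applying the same integral lower bound to the boundary piece and then the Taylor estimate $\ln(1 + uc/\omega) \leq uc/\omega$ to isolate the dependence on $c$, the decoding condition becomes $\ln(1+u) - 2R\kappa > u(uc+1)/[\omega(1+u)]$. Solving this for the largest permissible $c$ recovers the floor $\lfloor \omega(1+u)[\ln(1+u) - 2R\kappa]/u^2 \rfloor$, while the cap $c^* \leq \omega - 1$ arises because once $c \geq \omega$ the row-block window $\{c, \ldots, c+\omega-1\}$ contains no boundary indices and a strictly stronger rate condition would be needed.

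Part 3 follows by induction on $t$: suppose after iteration $t$ the decoded set is $\{1,\ldots,k\} \cup \{\Lambda-k+1,\ldots,\Lambda\}$ with $\bar{\psi}_c^t = 1$ on the complement. Substituting into \eqref{eq:se_asmyp_flat_phi} shows that, for $r = k+r'$ with $1 \leq r' \leq \omega$, only the undecoded columns $c \in \{k+1,\ldots,k+r'\}$ contribute, giving $\bar{\phi}_{k+r'}^t = \sigma^2(1 + \kappa\,\snr\, r'/\omega)$, which has the same form as $\bar{\phi}_{r'}^0$. Hence the iteration-$1$ argument transfers verbatim under the index shift $c \mapsto c - k$: at least $c^*$ further columns on each side meet the decoding condition in iteration $t+1$, contributing at least $2c^*$ new decodings per iteration, so all $\Lambda$ columns decode within $\lceil \Lambda/(2 c^*)\rceil$ iterations. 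The hard part will be making the derivation of \eqref{eq:c_star} fully rigorous: the Riemann-sum approximations must be tight enough at leading order in $\omega$ to extract the $(1+u)/u^2$ factor, the Taylor-type bound on $\ln(1 + uc/\omega)$ must not lose additive $O(1)$ terms that would weaken the floor, and the transition of the sum's support between the boundary region and the flat region of $\bar{\phi}_r^0$ as $c$ varies must be tracked carefully.
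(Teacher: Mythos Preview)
Your approach is correct and is essentially identical to the paper's: the same left-Riemann-sum lower bound on $\sum_r [1+ur/\omega]^{-1}$, the same linearization $\ln(1+uc/\omega)\le uc/\omega$, and the same induction that shifts the iteration-$1$ computation by $k$ once the first $k$ columns are decoded (with $\bar\phi_{k+r'}^t \le \sigma^2(1+u r'/\omega)$ rather than equality, since columns decoded from the right end can only help). The $O(1)$ concern you flagged in Part~2 disappears if you place the $r=\omega$ term in the flat piece rather than the boundary piece: with boundary sum $\sum_{r=c}^{\omega-1}[1+ur/\omega]^{-1}$ and flat piece $c/(1+u)$, the integral upper endpoint becomes exactly $u$, giving $\ln(1+u)-\ln(1+uc/\omega)$, and after the Taylor bound the $uc/\omega$ combines with $\frac{u}{\omega}\cdot\frac{c}{1+u}$ to yield exactly $\frac{c}{\omega}\frac{u^2}{1+u}$ with no leftover $-1/u$.
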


\begin{remark}
The proposition implies that for any rate $R < \mc{C}$, AMP decoding is successful in the large system limit, i.e., $\bar{\psi}_c^T=0$ for all $c \in [\Lambda]$. Indeed, consider a rate $R = \mc{C}/\kappa_0$, for any constant $\kappa_0 >1$. Then choose 
$\omega$ to satisfy \eqref{eq:omega_thresh} (with $\kappa$ replaced by $\kappa_0$), and  $\Lambda$ large enough  that $\kappa= \frac{\Lambda + \omega -1}{\Lambda} \leq \kappa_0$. With this choice of $(\omega, \Lambda)$ and rate $R$,  the conditions of the proposition are satisfied, and hence, all the columns decode in the large system limit. 
\end{remark}
\begin{remark}
The proof of the proposition shows that if  $R < \frac{\text{snr}}{ 2(1 + \kappa\cdot\text{snr})}$, then  $\bar{\psi}_c^1=0$, for all $c \in [\Lambda]$, i.e., the entire codeword decodes in the first iteration.
\end{remark}

\begin{remark}
The state evolution recursion was analyzed for a certain class of spatially coupled SPARCs by  Barbier  et al. \cite{barbISIT16} using the potential method introduced in \cite{yedla2014simple,kumar2014threshold,DonSpatialC13}.  It is shown in \cite{barbISIT16} that the fixed points of the state evolution recursion \eqref{eq:se_phi}--\eqref{eq:se_psi} coincide with the stationary points of a suitably defined potential function.  This is then used to show `threshold saturation'  for spatially coupled SPARCs with AMP decoding, i.e., for all rates $R < \mc{C}$,  state evolution predicts vanishing probability of decoding error  in the limit of large section size. In contrast, Proposition \ref{prop:decoding_LSL} establishes threshold saturation by directly characterizing the decoding progression in the large system limit. 
\end{remark}

\begin{remark}
A non-asymptotic version of Proposition \ref{prop:decoding_LSL}, which describes the decoding progression for large but finite $M$, can be found in \cite[Sec. IV]{rushITW18}.
\end{remark}

\begin{remark}
For a fixed rate $R < \mc{C}$, one can establish a bound similar to Theorem \ref{thm:main_amp_perf} on the probability of excess section error rate of an AMP decoded spatially coupled SPARC. This requires two technical ingredients in addition to Proposition \ref{prop:decoding_LSL}. The first is a 
conditional distribution lemma similar to Lemma \ref{lem:hb_cond}, but tailored to the spatially coupled design matrix. In particular, the conditional distributions of the vectors $h^{t+1}$ and $b^t$  now depend on the column block and row block indices, respectively. These conditional distributions are then used to
establish a concentration result similar to Lemma \ref{lem:main_lem}  which shows that the NMSE in each iteration $\frac{1}{L}\| \beta - \beta^t \|^2$ is tracked with high probability by the state evolution quantity $\frac{1}{\Lc}\sum_{c} \psi^t_c$.  Proposition \ref{prop:decoding_LSL}  guarantees that this  quantity is small after $\left\lceil \frac{\Lambda}{2c^*} \right\rceil$ iterations.  The rigorous performance analysis of AMP for spatially coupled SPARCs using the above ingredients will be detailed in a forthcoming paper. 
\end{remark}

\begin{proof}[Proof of Proposition \ref{prop:decoding_LSL}]
Since the  $\bar{\phi}_r^t$'s and $\bar{\psi}_c^t$'s in \eqref{eq:se_asmyp_flat_phi} and \eqref{eq:se_asmyp_flat_psi} are symmetric about the middle indices, we will only consider decoding the first half of the columns, $c\in  \{1, \ldots, \lfloor \frac{\Lambda+1}{2} \rfloor \}$, and the same arguments will apply to the second half by symmetry.

For column $c$  to decode in iteration 1, i.e., for $\bar{\psi}_c^1=0$, we require the argument of the indicator function in \eqref{eq:se_asmyp_flat_psi} to be satisfied for $t=0$, which corresponds to
\begin{align}\label{eq:se_asmyp_flat_t0}
\begin{split}
F_c := \frac{\kappa\cdot\text{snr}}{\omega} \sum_{r=c}^{c+\omega-1} \frac{1}{1 + \frac{\kappa\cdot \text{snr}}{\omega}\cdot(\overline{c}_r- \underline{c}_r+1)} &> 2R\kappa.
\end{split}
\end{align}

1) Since the $F_c$ is largest for column $c=1$, \eqref{eq:se_asmyp_flat_t0} must be satisfied with $c=1$ for \emph{any} column to start decoding. Moreover, using \eqref{eq:c_r}, we find
\begin{align}
F_1 = \frac{\kappa \cdot \text{snr}}{\omega}\sum_{r=1}^{\omega}  \frac{1}{1+ \frac{\kappa \cdot \text{snr}}{\omega}\cdot r}
& \stackrel{(\text i)}{>} \int_{\frac{\kappa \cdot \text{snr}}{\omega}}^{\frac{\kappa \cdot \text{snr}}{\omega}(\omega + 1)} \frac{1}{1 + x} \, dx \nonumber \\
& = \ln\left(1 + \frac{\kappa\cdot\text{snr}}{1+\kappa\cdot\text{snr}\cdot\frac{1}{\omega}}\right), \label{eq:se_asmyp_flat_t0_c1}
\end{align}
where the inequality (i) is obtained by using left Riemann sums on the  decreasing function $\frac{1}{1+x}$. Using \eqref{eq:se_asmyp_flat_t0_c1} in \eqref{eq:se_asmyp_flat_t0}, we conclude that if $\ln\left(1 + \frac{\kappa\cdot\text{snr}}{1+\kappa\cdot\text{snr}/\omega}\right) > 2R\kappa$, then column $c=1$ will decode in the first iteration. 
Rearranging this inequality yields \eqref{eq:omega_thresh}.

2) Given an $(\omega, \Lambda)$ pair that satisfies \eqref{eq:omega_thresh}, we can find a lower bound on the total number of columns that decode in the first iteration. In order to decode column $c$ (and column $\Lambda - c + 1$ by symmetry) in the first iteration, we require \eqref{eq:se_asmyp_flat_t0} to be satisfied. For $c < \omega$, this condition corresponds to
\be\label{eq:se_asmyp_flat_t0_c}
F_{c} = \frac{\kappa\cdot\text{snr}}{\omega} \left[ \left(\sum_{r=c}^{\omega-1} \frac{1}{1 + \frac{\kappa\cdot \text{snr}}{\omega}\cdot r} \right)+ \frac{c}{1 + \kappa\cdot \text{snr}} \right] > 2R\kappa,
\ee
and for columns $c\in\{\omega, \ldots, \Lambda - \omega +1\}$, the condition in \eqref{eq:se_asmyp_flat_t0} becomes
\be\label{eq:se_asmyp_flat_t0_omega}
\frac{\text{snr}}{1 + \kappa\cdot\text{snr}} > 2R,
\ee
where \eqref{eq:c_r} was used to find the values of $\underline{c}_r$ and $\overline{c}_r$ . Since $F_c$  defined in \eqref{eq:se_asmyp_flat_t0} is smallest for columns $c\in\{\omega, \ldots, \Lambda - \omega + 1\}$, all columns decode in the first iteration if \eqref{eq:se_asmyp_flat_t0_omega} is satisfied.

For columns $c < \omega$, we can obtain a lower bound for $F_{c}$:
\begin{align}
F_{c}&=\frac{\kappa\cdot\text{snr}}{\omega} \left[ \left(\sum_{r=c}^{\omega-1} \frac{1}{1 + \frac{\kappa\cdot \text{snr}}{\omega}\cdot r} \right)+ \frac{c}{1 + \kappa\cdot \text{snr}} \right] \nonumber \\
&\stackrel{(\text i)}{>} \int_{\frac{\kappa\cdot\text{snr}}{\omega} c }^{ \frac{\kappa\cdot\text{snr} }{\omega} \omega} \frac{1}{1+ x} \, dx + \frac{c}{\omega}\frac{\kappa\cdot\text{snr}}{(1 + \kappa\cdot\text{snr})} \nonumber \\
&= \ln\left(1 + \kappa\cdot\text{snr}\right)  - \ln\left(1+\kappa\cdot\text{snr}\cdot\frac{c}{\omega} \right) + \frac{c}{\omega}\frac{\kappa\cdot\text{snr}}{(1 + \kappa\cdot\text{snr})} \nonumber \\
&\stackrel{(\text{ii})}{>} \ln\left(1 + \kappa\cdot\text{snr}\right)  - \kappa\cdot\text{snr}\cdot\frac{c}{\omega} + \frac{c}{\omega}\frac{\kappa\cdot\text{snr}}{(1 + \kappa\cdot\text{snr})} \nonumber \\
&= \ln\left(1 + \kappa\cdot\text{snr}\right) - \frac{c}{\omega}\frac{(\kappa\cdot\text{snr})^2}{(1 + \kappa\cdot\text{snr})},
\label{eq:se_asmyp_flat_t0_c_LB} %
\end{align}
where (i) is obtained by using left Riemann sums on the decreasing function $\frac{1}{1+x}$, and (ii) from $\ln (x) \leq x-1$. Therefore, if the RHS of \eqref{eq:se_asmyp_flat_t0_c_LB} is greater than $2R\kappa$ then \eqref{eq:se_asmyp_flat_t0_c} is satisfied, and column $c$ will decode in the first iteration. This inequality corresponds to
\begin{equation}\label{eq:c_condition}
c < \omega\cdot\frac{1 + \kappa\cdot\text{snr}}{(\kappa\cdot\text{snr})^2} \cdot \left[ \ln\left(1 + \kappa\cdot\text{snr}\right) - 2R\kappa \right].
\end{equation}
In other words, all columns $c < \omega$ that also satisfy \eqref{eq:c_condition} will decode in the first iteration. Therefore, the number of columns (in the first half) that decode in the first iteration, denoted $c^*$, can be bounded from below by \eqref{eq:c_star}.

3) 
We want to prove that if the first (and last) $c^*$ columns decode in the first iteration, then at least the first (and last) $tc^*$ columns will decode by iteration $t$, for $t \geq 1$. We look at the $c^*<\omega$ case because all columns would have been decoded in the first iteration if $c^* \geq \omega$. We again only consider the first half of the columns (and rows) due to symmetry.

We prove by induction. The $t=1$ case holds by the previous statement that the first $c^*$ columns decode in the first iteration. From \eqref{eq:se_asmyp_flat_t0_c}, this corresponds to the following inequality being satisfied: 
\be\label{eq:se_asmyp_flat_t0_c_star}
\frac{\text{snr}}{\omega} \left[ \left(\sum_{r=c^*}^{\omega-1} \frac{1}{1 + \frac{\kappa\cdot \text{snr}}{\omega}\cdot r} \right)+ \frac{c^*}{1 + \kappa\cdot \text{snr}} \right] > 2R.
\ee

Assume that the statement holds for some $t \geq 1$, i.e. $\bar{\psi}^{t}_c = 0$ for $c\in[tc^*]$. We assume that 
$tc^* < \lfloor \frac{\Lambda +1}{2} \rfloor$, otherwise all the columns will have already been decoded.
Then, from \eqref{eq:se_asmyp_flat_phi}, we obtain
\begin{align}\label{eq:W_flat_phi_tT}
 \bar{\phi}^t_r 
& 	\leq \begin{cases}  
  		\sigma^2, & 1 \leq r\leq tc^*, \\
        		\sigma^2\left( 1 +  \frac{\kappa\cdot\text{snr}}{\omega}(r-tc^*)\right), &  tc^* < r < tc^*+\omega, \\
        		\sigma^2\left( 1 +  \kappa\cdot\text{snr}\right),  &  tc^*+\omega \leq r \leq \lfloor \frac{\Lambda+\omega-1}{2} \rfloor + 1.
	\end{cases}
\end{align}
(We have a $\leq$ sign in \eqref{eq:W_flat_phi_tT} rather than an equality because indices $r$ near  $\frac{\Lambda+\omega-1}{2}$ may have smaller values in the final iterations, due to  columns from the other half and within $\omega$ indices away having already been decoded.)

We now show that the statement holds for $(t+1)$, i.e., $\psi_c^{t+1} =0$ for columns $c\in [(t+1)c^*]$. In order for columns $c\in\{tc^*+1, \ldots, (t+1)c^*\}$ to decode in iteration $(t+1)$, the inequality in the indicator function in \eqref{eq:se_asmyp_flat_psi} must be satisfied when $c=(t+1)c^*$ (the LHS of the inequality is larger for $c\in\{tc^*+1, \ldots, (t+1)c^*-1\}$). This corresponds to
\be\label{eq:se_asmyp_flat_t0_c_starT}
\frac{\text{snr}}{\omega} \left[\left(\sum_{r=(t+1)c^*}^{tc^*+\omega-1} \frac{1}{1 +  \frac{\kappa\cdot\text{snr}}{\omega}(r-tc^*)} \right) + \frac{c^*}{1 +  \kappa\cdot\text{snr}}\right] > 2R,
\ee
which is equivalent to \eqref{eq:se_asmyp_flat_t0_c_star}, noting that $(t+1)c^* < tc^* +\omega$ since $c^*<\omega$. Therefore, \eqref{eq:se_asmyp_flat_t0_c_starT} holds by the condition, and the statement holds for $(t+1)$. Due to symmetry, the same arguments can be applied to the last $tc^*$ and $(t+1)c^*$ columns. Therefore, at least $c^*$ columns from each half will decode in every iteration.
\end{proof}

\begin{figure}[t]
\centering
\includegraphics[width=0.55\textwidth]{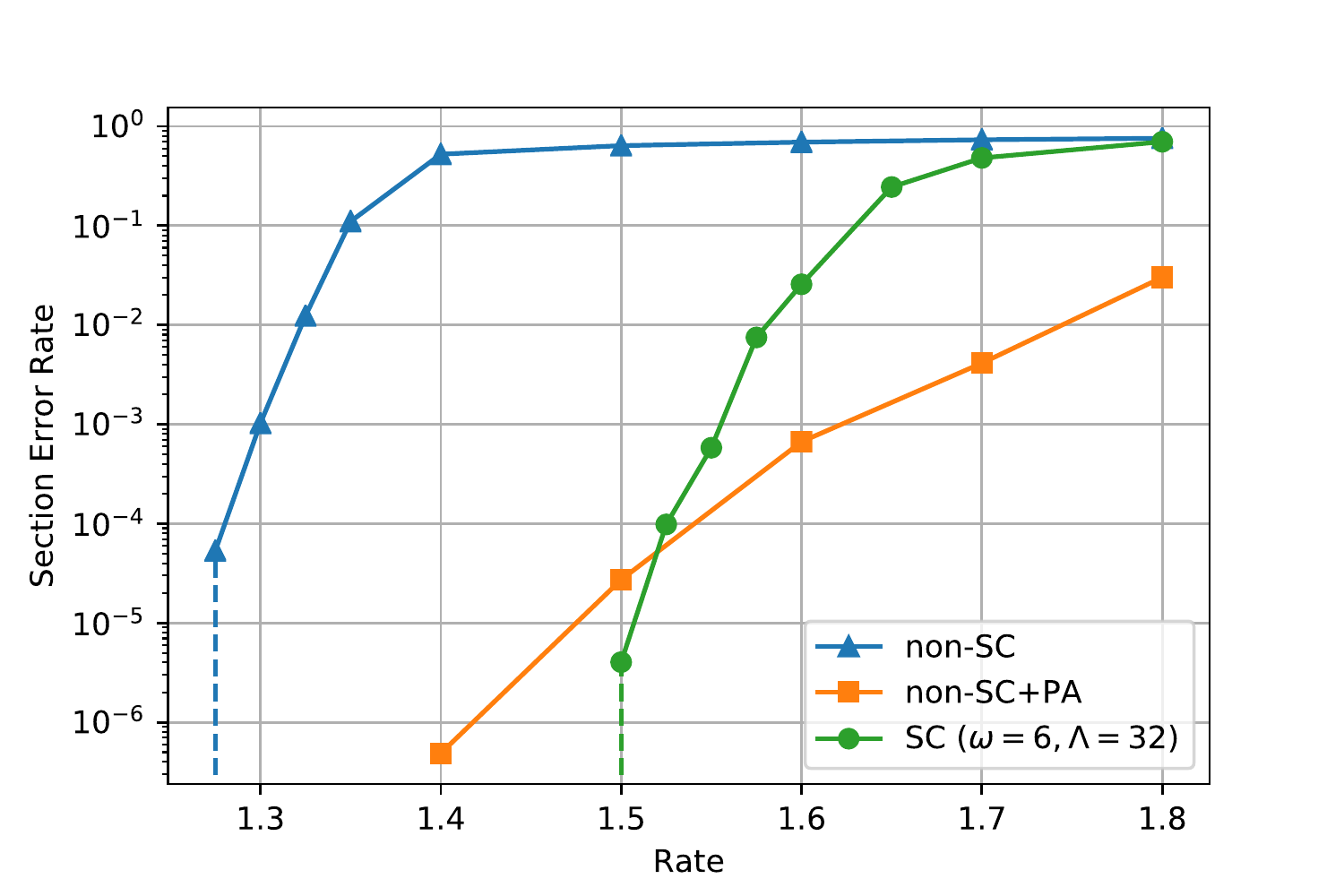}

\includegraphics[width=0.55\textwidth]{cer4sparcs.pdf}
\caption{ \small  Average section error rate (top) and codeword error rate (bottom) vs. rate at $\text{snr}=15$, $\mc{C}=2$ bits. The SPARC parameters are $\M=512$, $L=1024$, $n\in[5100,7700]$.  The graph at the top shows plots for non-SC SPARCs with  and without power allocation, and SC-SPARCs with an $(\omega,\Lambda)$ base matrix with $\omega=6,\Lambda=32$. The code length is the same for the three cases. The dotted vertical lines indicate that no section errors were observed over at least $10^4$ trials at  smaller rates. }
\label{fig:ser4sparcs}
\vspace{-5pt}
\end{figure}

\section{Simulation results} \label{sec:sims}

We evaluate the empirical performance of SC-SPARCs constructed from $(\omega, \Lambda)$ base matrices. As in Chapter \ref{chap:emp_perf}, we use a Hadamard based design matrix (instead of a Gaussian one), which gives  significant reductions in running time and required memory, with very similar error performance to Gaussian design matrices.

\begin{figure}[t]
\centering
\includegraphics[width=3.5in]{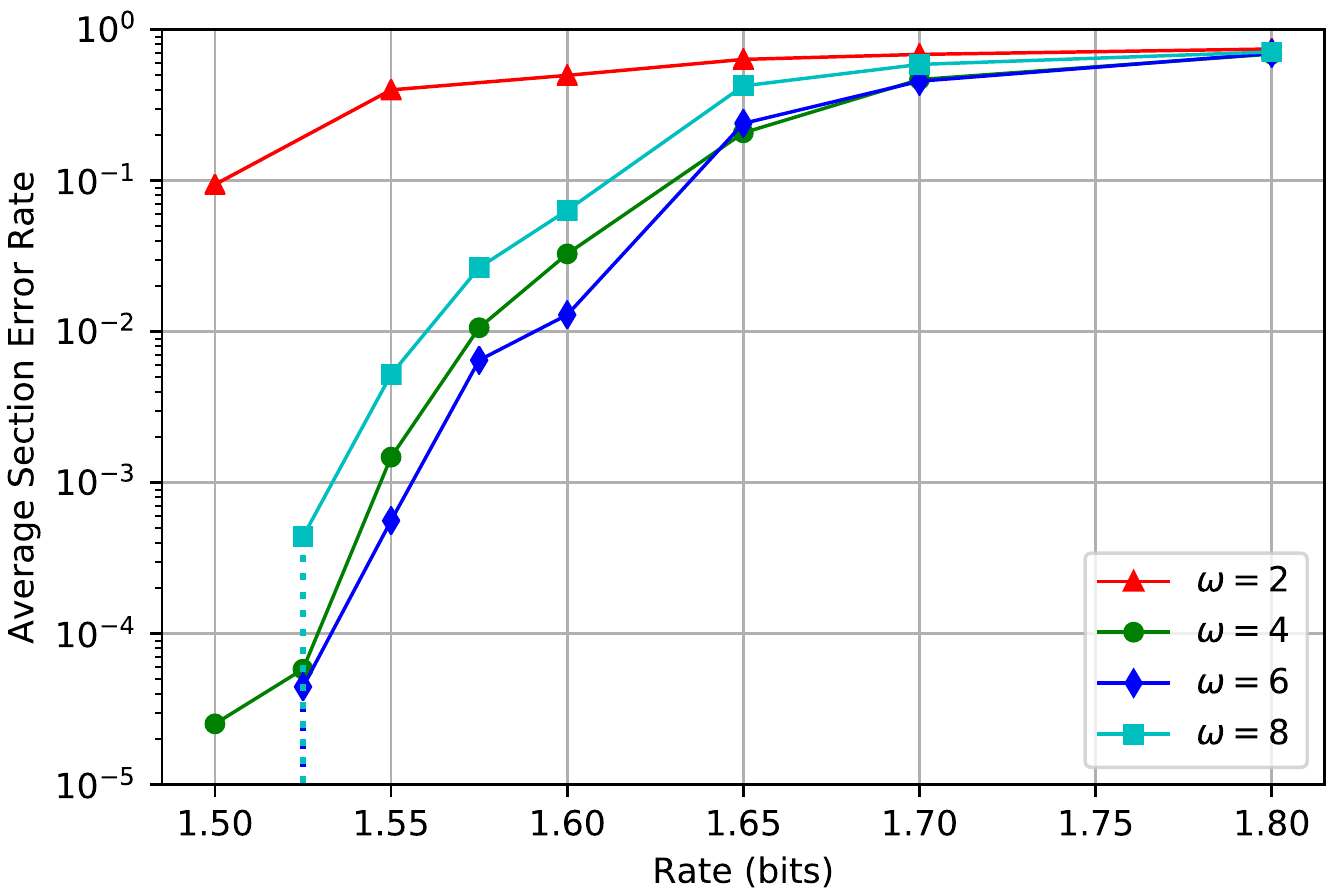}
\caption{\small Average section error rate vs. rate at $\text{snr}=15$, $\mc{C}=2$ bits, $\M=512$, $L=1024$, $n \in [5100, 6200]$. Plots are shown for SC-SPARCs with an $(\omega,\Lambda)$ base matrix with $\Lambda=32$ and $\omega\in\{2,4,6,8\}$. For a given rate, the code length is the same for different $\omega$ values.  The dotted vertical line indicates that for $\omega=6$ and 8, no section errors were observed over $10^4$ trials at $R=1.5$  bits.}
\label{fig:ser_omega}
\vspace{-5pt}
\end{figure}

Figure \ref{fig:ser4sparcs} compares the average section error rate (SER)  and the codeword error rate of spatially coupled SPARCs with  standard (non-SC) SPARCs, both with and without power allocation (PA). The code length is the same for all the codes, and the power allocation was designed using the iterative algorithm described in Section \ref{sec:pa:iterative}.  AMP decoding is used for all the codes. 
Comparing standard SPARCs without PA and SC-SPARCs, we see that spatial coupling significantly improves the error performance: the rate threshold below which the SER drops steeply to a negligible value is higher for SC-SPARCs. We also observe that at rates close to the channel capacity, standard SPARCs with PA have lower SER than SC-SPARCs. However, as the rate decreases, the drop in SER for standard SPARCs with PA is not as steep as that for SC-SPARCs. 

With respect to codeword error rate, we observe that SC-SPARCs significantly outperform non-SC SPARCs with power allocation. This is because power allocated SPARCs tend to have a much larger number of trials with  at least one section error; the number of section errors in such trials is typically small, the errors occur mostly in the sections with low power. 
In contrast, it was observed that the SC-SPARC had many fewer trials with codeword errors, but when a codeword error occurred, it often resulted a large number of sections were in error. 

Next, we examine the effect of changing the coupling width $\omega$. Fig. \ref{fig:ser_omega} compares the average SER of SC-SPARCs with $(\omega,\Lambda)$ base matrices with $\Lambda=32$ and varying $\omega$. For a fixed  $\Lambda$, we observe from \eqref{eq:R_Rsparc} that a larger $\omega$ requires a larger inner SPARC rate $R_\text{inner}$ for the same overall SC-SPARC rate $R$. A larger value of $R_\text{inner}$ makes decoding harder; on the other hand increasing the coupling width $\omega$  helps decoding. Thus  for a given rate $R$, there is a trade-off: as illustrated by Fig. \ref{fig:ser_omega}, increasing $\omega$  improves the SER up to a point, but the performance degrades for larger $\omega$.
In general, $\omega$ should be large enough so that coupling can benefit decoding, but not so large that $R_\text{inner}$ is very close to the channel capacity.  For example, for $R=1.6$ bits and $\Lambda=32$, the inner SPARC rate  $R_\text{inner} =1.65, 1.75, 1.85, 1.95$ bits for $\omega=2,4,6,8$, respectively. With  the capacity  being $\mc{C} =2$ bits,  the figure shows that $\omega=6$  is the best choice for $R=1.6$ bits, with $\omega=8$ being noticeably worse.   This  also indicates that smaller  values $\omega$ would be favored as the rate $R$ gets closer to $\mc{C}$.

%
\part{Lossy Compression with SPARCs}
\chapter{Optimal Encoding}  \label{chap:comp_opt}

In the second part of this monograph, we turn our focus to  SPARCs for lossy compression. Developing practical codes for lossy compression  at rates approaching Shannon's rate-distortion bound has been a long-standing goal in information theory.   A practical compression code requires  a codebook with low storage complexity  as well as encoding and decoding algorithms with low computational complexity.
 The storage complexity of a SPARC is proportional to the size of the size of the design matrix, which is polynomial is the code length $n$.

 In this chapter, we analyze the compression performance of SPARCs with optimal encoding. The performance is measured via the squared error distortion criterion. Though the complexity of the optimal encoder grows exponentially in the  code length, its performance sets a benchmark for efficient SPARC encoders (like the one discussed in the next chapter). 
 
SPARCs were first considered  for lossy compression by Kontoyiannis et al. in \cite{KontSPARC}, where some some preliminary results on their compression performance were presented.  Here we will discuss the analysis in \cite{RVGaussianML} and \cite{RVsparc_ml} which shows  that for i.i.d. Gaussian sources, SPARCs with minimum-distance encoding attain the optimal rate-distortion function and the optimal excess-distortion exponent. 

\section{Problem set-up} \label{sec:comp_setup}

The source sequence is denoted by $s = (s_1, \ldots, s_n)$, and the reconstruction sequence by $\hat{s} = (s_1, \ldots, s_n)$. The distortion is measured by the normalized squared error $\frac{1}{n}\| s -\hat{s} \|^2$.  Throughout this chapter, for any vector $x \in \reals^n$, we will use the notation 
$\abs{x}$ to denote the normalized norm $\| x\|/\sqrt{n}$.

\paragraph{Codebook construction} The sparse regression codebook is as described in Section \ref{sec:sparc_const}. Each codeword is of the form $A\beta$, where the design matrix $A$ has  entries $\sim_{i.i.d.} \normal(0, \frac{1}{n})$. The codeword is determined by the vector $\beta \in \mcb$, which has one non-zero in each section. 

The main difference from the channel coding construction is that the values of the non-zeros in $\beta$ do not have to satisfy a power constraint --- they can be chosen in any way to help the compression encoder. In this chapter, we set all the non-zero values to be equal:
\be
c_1= \ldots = c_L = \sqrt{\frac{n c^2}{L}},
\ee
where the value of $c$ is specified later in \eqref{eq:cval_compr}

As there are $M^L$ codewords, to obtain a compression rate of $R$ nats/sample we need
\be
M^L = e^{nR}.
\label{eq:ml_nR_comp}
\ee
In this chapter, we choose $M=L^b$ for some $b>1$ so that \eqref{eq:ml_nR_comp} implies
\be   L \log L = \frac{nR}{b}. \label{eq:rel_nL} \ee
Thus $L$ is $\Theta\left( n/\log n \right)$, and the number of columns $ML$ in the dictionary $A$ is  
$\Theta\left(\left(n/\log n \right)^{b+1}\right)$, a {polynomial}  in $n$.

\paragraph{Minimum-distance encoder} The optimal encoder for squared-error distortion is the minimum-distance encoder. For the SPARC, it is defined by a mapping $g: \mathbb{R}^n \to \mcb$, which produces the $\beta$ that produces the codeword closest to the source sequence  in Euclidean distance, i.e.,
 \[ \hat{\beta} = g(s) = \underset{\beta \in \mcb}{\operatorname{argmin}} \ \norm{s - A\beta}.\]

\paragraph{Decoder} This is a mapping $h: \mcb \to \mathbb{R}^n$. On receiving $\hat{\beta} \in \mcb$ from the encoder, the decoder produces the reconstruction $h(\hat{\beta}) = A \hat{\beta}$.

\paragraph{Performance measures}
For a rate-distortion code $\mathcal{C}_n$ with code length $n$ and encoder and decoder mappings $g,h$, the probability of excess distortion at distortion level $D$   is
\be P_{e}(\mathcal{C}_n, D) = P\left(\abs{s - h(g(s))}^2 > D\right). \label{eq:pedef} \ee
For a SPARC, the probability measure in \eqref{eq:pedef} is with respect to the random source sequence $s$ and the random design matrix $A$.

\begin{definition}
A rate $R$ is achievable at distortion level $D$  if there exists a sequence of rate $R$ codes $\{\mathcal{C}_n\}_{n=1,2,\ldots}$ such that $\lim_{n \to \infty} P_{e}(\mathcal{C}_n, D) =0$. The infimum of all  rates achievable at distortion level $D$ by any sequence of codes is the Shannon rate-distortion function, denoted by $R^*(D)$. 

A rate $R$ is achievable by SPARCs if there exists a sequence of rate $R$ SPARCs $\{\mathcal{C}_n\}_{n=1,2,\ldots}$, with $\mc{C}_n$ defined by an $n \times L_n M_n$ design matrix whose parameter $L_n$ satisfies \eqref{eq:rel_nL} with a fixed $b$ and $M_n=L_n^b$.
\end{definition}

 For an i.i.d. Gaussian source where $s_1, s_2, \ldots$ are $\sim_{i.i.d.} \normal(0, \sigma^2)$, the Shannon rate-distortion function is \cite{coverT12}
\be
R^*(D) = 
\begin{cases}
\frac{1}{2} \log \frac{\sigma^2}{D} & D < \sigma^2, \\
0 & D \geq \sigma^2.
\end{cases}
\ee

The excess-distortion exponent at distortion-level $D$ of a sequence of rate $R$ codes $\{\mathcal{C}_n\}_{n=1,2,\ldots}$ is given by
\be r(D,R) = - \limsup_{n \to \infty} \frac{1}{n} \log  P_{e}(\mathcal{C}_n,D), \ee
where $P_{e}(\mathcal{C}_n,D)$ is defined in \eqref{eq:pedef}. The optimal excess-distortion exponent for a rate-distortion pair $(R,D)$ is the supremum of the excess-distortion exponents over all sequences of codes with rate  $R$ at distortion-level $D$.

The optimal excess-distortion exponent for discrete memoryless sources was obtained by Marton \cite{MartonRD74}, and for memoryless Gaussian  sources by Ihara and Kubo \cite{IharaKubo00}. 

\begin{theorem}\cite{IharaKubo00}
\label{fact:ihara}
For an i.i.d. Gaussian source distributed as $\mathcal{N}(0, \sigma^2)$ and  squared-error distortion criterion, the optimal excess-distortion exponent at rate $R$ and distortion-level $D$ is
\be
r^*(D,R) = \left\{
\begin{array}{ll}
 \frac{1}{2} \left( \frac{a^2}{\sigma^2} - 1 - \log \frac{a^2}{\sigma^2} \right) & \quad R> R^*(D) \\
 0 & \quad R \leq R^*(D)
\end{array}
\right.
\label{eq:opt_exp}
\ee
where  $a^2 = D e^{2R}$.
\end{theorem}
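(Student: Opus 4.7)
My plan is to establish the theorem in three logical steps: a Cram\'er-type large-deviation identity for the source norm, matching achievability via a random Gaussian codebook, and a sphere-covering converse. The unifying insight is that, in exponent, the excess-distortion event coincides with the source-norm tail $\{|s|^2 > a^2\}$ where $a^2 = De^{2R}$ and $|s|^2 := \norm{s}^2/n$. Applying Cram\'er's theorem to $\norm{s}^2 = \sum_i s_i^2$ with $s_i \sim \mc{N}(0,\sigma^2)$, and using the MGF $\expec[e^{\lambda s_i^2}] = (1-2\lambda\sigma^2)^{-1/2}$ for $\lambda < 1/(2\sigma^2)$, the Legendre transform yields
\[
\lim_{n\to\infty} -\tfrac{1}{n}\log P(|s|^2 > a^2) = \tfrac{1}{2}\bigl(a^2/\sigma^2 - 1 - \log(a^2/\sigma^2)\bigr), \qquad a^2 > \sigma^2.
\]
Taking $a^2 = De^{2R}$, which exceeds $\sigma^2$ precisely when $R > R^*(D)$, reproduces the exponent claimed in \eqref{eq:opt_exp}.

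For achievability I would use a Shannon-style random codebook of $e^{nR}$ i.i.d.\ codewords $\hat{s}_i \sim \mc{N}(0,(\sigma^2-D)I_n)$. Conditioning on $|s|^2 = \alpha^2$, a direct Gaussian density computation shows that a single random codeword lies within Euclidean distance $\sqrt{nD}$ of $s$ with probability of order $\exp(-\tfrac{n}{2}\log(\alpha^2/D) + o(n))$. Independence of the $e^{nR}$ codewords then gives
\[
P(\text{no codeword within distortion } D \mid |s|^2 = \alpha^2) \leq \exp\bigl(-e^{n(R - \frac{1}{2}\log(\alpha^2/D)) - o(n)}\bigr),
\]
which is doubly exponentially small whenever $\alpha^2 < De^{2R}$. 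Decomposing over $|s|^2$ and bounding the contribution from $\alpha^2 > De^{2R} - \varepsilon$ by the Cram\'er tail yields, on average over codebooks, $P(\text{excess distortion}) \leq P(|s|^2 > De^{2R}-\varepsilon) + o(e^{-nr^*(D,R)})$, and a Markov-type derandomization extracts a deterministic code whose exponent equals $r^*(D,R)$ after letting $\varepsilon \to 0$.

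For the converse, fix any code with codewords $\{\hat{s}_i\}_{i=1}^{e^{nR}}$ and let $V = \bigcup_i B(\hat{s}_i, \sqrt{nD})$ be the set covered within distortion $D$. By the spherical symmetry of $s \sim \mc{N}(0,\sigma^2 I_n)$, the conditional law of $s$ given $|s|^2=\alpha^2$ is uniform on the radius-$\sqrt{n}\alpha$ sphere. Each ball meets this sphere in a spherical cap, and optimizing the cap's angular extent over the codeword's radial distance $\rho$ produces the maximizer $\rho^\star = \sqrt{n(\alpha^2-D)}$ with $\sin^2 \theta_{\max} = D/\alpha^2$. Hence each cap occupies a fraction at most $\mathrm{poly}(n)\cdot (D/\alpha^2)^{(n-1)/2}$ of the sphere, so the union-of-caps covers at most $\mathrm{poly}(n)\cdot e^{n(R - \frac{1}{2}\log(\alpha^2/D))}$ of it. For $\alpha^2 > De^{2R}$ this is exponentially small, giving $P(\text{excess distortion}) \geq (1-o(1))\, P(|s|^2 > De^{2R})$ and hence the matching upper bound on the exponent. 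The case $R \leq R^*(D)$ follows from the classical strong converse, which forces $P(\text{excess distortion}) \to 1$ and thus exponent zero.

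The main technical obstacle will be the spherical-cap analysis underlying the converse: identifying the optimizer $\rho^\star$, verifying it via a second-order check, and carefully tracking the polynomial prefactors in the cap-area asymptotics so that integrating over shells $\alpha^2 \in (De^{2R}, De^{2R}+\delta)$ does not leak into the exponent. The achievability step is more routine, but still requires care with the $o(n)$ correction in the per-codeword probability and in the passage from the ensemble-average bound to a single deterministic code attaining the optimal exponent.
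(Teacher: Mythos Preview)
The paper does not prove this theorem: it is stated as a known result from the literature, attributed to Ihara and Kubo \cite{IharaKubo00}, and no proof or proof sketch is given in the monograph. Your proposal is therefore not comparable to any argument in the paper itself. That said, your three-step outline (Cram\'er tail for $\norm{s}^2/n$, achievability via an i.i.d.\ Gaussian codebook yielding a doubly-exponential covering failure probability conditional on $\abs{s}^2<a^2$, and a sphere-cap converse) is the standard route and matches what the monograph implicitly relies on; indeed, the monograph explicitly invokes the double-exponential decay of the covering-failure term for the i.i.d.\ Gaussian codebook, citing \cite{IharaKubo00}, when discussing the bound in Theorem~\ref{thm:err_exp_rd}. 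If you want to flesh the argument out, the converse is where the real work lies, exactly as you anticipate: the cap-area estimate and the shell integration need to be handled with enough uniformity in $\alpha$ to survive the limit, and the derandomization in the achievability step should be stated carefully (the usual argument is that an ensemble-average excess-distortion probability of $e^{-n r}$ forces at least one code in the ensemble to attain exponent $r$).
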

For $R> R^*(D)$, the exponent in \eqref{eq:opt_exp} is the  Kullback-Leibler divergence between two zero-mean Gaussians, distributed as $\mc{N}(0,a^2)$ and $\mc{N}(0,\sigma^2)$, respectively.

\section{Performance of the optimal decoder}

The key result in this chapter (Theorem \ref{thm:err_exp_rd}) is a large deviations bound on the excess distortion probability of a SPARC. This result is then used to show that  SPARCs attain the optimal rate-distortion function and excess-distortion exponent for i.i.d. Gaussian sources.

For $x >1$, let  
\be
\bmin(x) = \frac{28 R \, x^4}{ \left(1 + \frac{1}{x} \right)^2\left(1 -\frac{1}{x}\right) \left[ -1 + \left( 1 +  
\frac{2 \sqrt{x}}
{(x -1)} \left(R -\frac{1}{2}(1-\frac{1}{x}) \right)\right)^{1/2}\right]^2}
\label{eq:bmin_def}
\ee

\begin{theorem} \cite{RVsparc_ml}
Let the source sequence  $s=(s_1, \ldots,s_n)$ be drawn from an ergodic source with mean zero and variance $\sigma^2$. Let $D \in (0, \sigma^2)$, $R > \frac{1}{2} \log \frac{\sigma^2}{D}$,  and  
$\gamma^2 \in (\sigma^2, De^{2R})$. Let
\be b> \max \left\{ 2, \   \bmin\left( {\gamma^2}/{D} \right) \right\}, \label{eq:bmin_exp} \ee 
where $\bmin(.)$ is defined in \eqref{eq:bmin_def}.
 Let $\mc{C}_n$ be  SPARC  of rate $R$ defined via an  $n  \times L_n M_n$ design matrix with $M_n =L_n^b$ and $L_n$ determined by \eqref{eq:rel_nL}.  Then the probability of excess distortion for $\mc{C}_n$ at distortion level $D$ satisfies 
 \be
P_e(\mc{C}_n, D)  \leq P \left(\frac{\norm{s}^2}{n} \geq \gamma^2 \right) + \exp\left( - \kappa n^{1+c} \right),
\label{eq:ld_bnd}
\ee
where  $\kappa, c$ are strictly positive universal constants.
\label{thm:err_exp_rd}
\end{theorem}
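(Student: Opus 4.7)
My plan is to split the analysis into the source-typicality event and the conditional encoding-failure event, and to control the latter by means of a tractable auxiliary encoder rather than by a direct analysis of the minimum over $\mcb$.

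First I would decompose trivially:
\[
P_e(\mc{C}_n, D) \leq \pr\!\left(\tfrac{\norm{s}^2}{n} \geq \gamma^2 \right) + \pr\!\left( \min_{\beta \in \mcb} \tfrac{1}{n}\norm{s - A\beta}^2 > D,\ \tfrac{\norm{s}^2}{n} < \gamma^2 \right),
\]
so that the first term appears verbatim on the right-hand side of \eqref{eq:ld_bnd}, and the remaining task is the bound
\[
\sup_{s:\, \norm{s}^2 \leq n\gamma^2} \pr\!\left( \min_{\beta \in \mcb} \norm{s - A\beta}^2 > nD \,\Big|\, s \right) \leq \exp(-\kappa\, n^{1+c}),
\]
uniformly in $s$, over the randomness of $A$ only.

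Because the SPARC codewords $A\beta$ and $A\beta'$ are dependent whenever their supports share columns, I would \emph{not} attempt a direct second-moment or union-bound analysis of the minimum. Instead I would follow the standard SPARC strategy of upper-bounding the minimum-distance encoder's distortion by the distortion of a \emph{successive} encoder that, in section $\ell = 1, \ldots, L$, greedily picks the column index $j_\ell^\star \in \text{sec}(\ell)$ minimizing the residual $r^\ell = r^{\ell-1} - c_\ell A_{j_\ell^\star}$, with $r^0 = s$. This reduces the problem to showing that $\norm{r^L}^2 \leq nD$ except on an event of probability $\leq \exp(-\kappa n^{1+c})$. The residual recursion gives $\norm{r^\ell}^2 = \norm{r^{\ell-1}}^2 - 2 c_\ell \langle r^{\ell-1}, A_{j_\ell^\star}\rangle + c_\ell^2 \norm{A_{j_\ell^\star}}^2$, and the per-section improvement is governed by the maximum of $M$ approximately Gaussian inner products $\langle r^{\ell-1}, A_j \rangle$, $j \in \text{sec}(\ell)$, which concentrates around $\sqrt{2 (\log M) \norm{r^{\ell-1}}^2 / n}$ by standard Gaussian extreme-value theory. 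Induction on $\ell$, together with a Chernoff bound for each per-section deviation, would yield a deterministic recursion $\norm{r^\ell}^2 \leq \varphi_\ell$ for a geometrically decaying sequence $\varphi_\ell$ driven by $(\log M)/L$, and a choice of $c^2$ (setting the scale of the non-zeros via \eqref{eq:cval_compr}) matched to $\gamma^2$ and $D$ would ensure $\varphi_L \leq nD$.

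The super-exponential exponent $n^{1+c}$ is the delicate part and drives the condition $b > b_{\min}(\gamma^2/D)$. The key observation is that with $M = L^b$ and $L \asymp n/\log n$, one has $\log M \asymp \log n$, so the per-section concentration inequality for the best-of-$M$ Gaussian projection can be made to decay like $\exp(-\kappa' n \log M) = \exp(-\kappa' n \log n)$, which is already super-exponential in $n$; summing the failure probabilities across the $L$ sections and taking a union bound over the $LM$ column candidates costs only a polynomial-in-$n$ factor, which is absorbed. The precise form of $b_{\min}$ in \eqref{eq:bmin_def} would emerge from optimizing the Chernoff exponent for $\pr( \max_j \langle r^{\ell-1}, A_j\rangle < \text{threshold} )$ against the required contraction factor per step: one needs $b$ large enough that the per-step concentration exponent dominates both the $R$-nats of code-rate cost per step (via $L \log M = nR$) and the required distortion level $D$ relative to the starting variance $\gamma^2$.

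The main obstacle will be the bookkeeping in Step~3: propagating the residual trajectory through all $L \asymp n/\log n$ sections so that the bound $\norm{r^\ell}^2 \leq \varphi_\ell$ persists uniformly, while keeping the accumulated error controlled. Concretely, the induction hypothesis should state both a norm bound and an orthogonality-style condition on $r^\ell$ relative to the previously chosen columns, so that at each step the ``effective input'' to the next section is still approximately a Gaussian vector of the right variance; the delicate point is that the $A_j$ in section $\ell$ must be treated as independent of $r^{\ell-1}$ (which is built from columns in earlier sections), and this independence of sections is exactly what the section-partitioned SPARC structure affords. Once the uniform residual bound $\norm{r^L}^2 \leq nD$ is established on an event of probability $\geq 1 - \exp(-\kappa n^{1+c})$, the theorem follows since the minimum-distance encoder achieves distortion no larger than the greedy encoder.
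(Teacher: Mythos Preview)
Your proposal takes a fundamentally different route from the paper, and contains a genuine gap.

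The paper does \emph{not} bound the minimum-distance encoder by the successive-cancellation encoder. Instead it runs a direct covering argument: with $X=\sum_i U_i$ counting codewords within distortion $D$ of $\tilde s$, it shows $P(X=0)\le\exp(-\kappa n^{1+c})$ via a second-moment method combined with Suen's correlation inequality. Because SPARC codewords sharing columns are dependent, the naive second moment method fails whenever $R\in(\tfrac12\log\tfrac{\rho^2}{D},\,1-\tfrac{D}{\rho^2}]$ (the conditional expectation $\expec[X\mid U_1=1]/\expec X$ blows up); the paper repairs this by restricting to ``$\epsilon$-good'' solutions (Definition~\ref{def:good}, Lemma~\ref{lem:good}) for which the number of nearby solutions is controlled, and then applies Suen's inequality to the dependency graph on the good-solution indicators. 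The specific threshold $b_{\min}(\gamma^2/D)$ in \eqref{eq:bmin_def} comes from Lemma~\ref{lem:good}, and the super-exponential rate comes from the $\lambda/\delta$ and $\lambda^2/\Delta$ terms in Suen's bound.

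Your approach via the greedy encoder cannot deliver the claimed bound, for two reasons. First, your key quantitative claim that the per-section best-of-$M$ concentration ``can be made to decay like $\exp(-\kappa' n\log M)$'' is incorrect: for i.i.d.\ standard normals one has $P(\max_{j\le M}T_j<(1-\delta)\sqrt{2\log M})\approx\exp(-cM^{2\delta-\delta^2})$, which with $M=L^b$ is of order $\exp\bigl(-c(n/\log n)^{b(2\delta-\delta^2)}\bigr)$, not $\exp(-\kappa' n\log n)$. Second, and more decisively, the monograph itself analyzes precisely the successive encoder you propose in Chapter~\ref{chap:comp_eff_enc} (Theorem~\ref{thm:rd_feasible}, Corollary~\ref{corr:Gauss}) and obtains only \emph{exponential} decay in $n$ for the excess-distortion probability. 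To force the per-section lower-tail bound to be super-exponential you would need $b(2\delta-\delta^2)>1$, i.e., a fixed positive slack $\delta$; but that slack degrades the achieved distortion to roughly $\gamma^2 e^{-2R(1-\delta)^2}$, and matching this to $D$ produces a condition on $b$ that does not coincide with $b_{\min}(\gamma^2/D)$ as defined in \eqref{eq:bmin_def}. So even a corrected greedy analysis would at best prove a weaker statement with a different threshold on $b$, not the theorem as stated.
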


The proof of the theorem is given in Section \ref{sec:proof_err_exp}. 

The first term on the RHS of \eqref{eq:ld_bnd} is the probability that the empirical second moment of the source exceeds $\gamma^2$.  This probability does not depend on the codebook. The second term is a bound on the conditional probability of not finding a SPARC codeword within distortion $D$ given that $\frac{\norm{s}^2}{n} < \gamma^2$. Since the second term decays faster than exponentially in $n$, for large $n$ the excess distortion probability in \eqref{eq:ld_bnd} is dominated by the first term. 

Let us compare the bound in \eqref{eq:ld_bnd} with the excess distortion probability of a Shannon-style random i.i.d. codebook with optimal encoding. The first term remains unchanged as it does not depend on the codebook. The second term, which is the probability of not finding a codeword within distortion $D$ for a source sequence with  $\frac{\norm{s}^2}{n} < \gamma^2$, decays \emph{double exponentially} in $n$ \cite{IharaKubo00} for the random i.i.d. codebook. Though the second term decays much faster for an i.i.d. codebook than for  SPARCs, for large $n$ the excess distortion probability is  still dominated by the first term. We therefore expect the excess-distortion exponent of a SPARC to be the same as that of a random i.i.d. codebook. We also know that a sequence of random i.i.d. codebooks attains the optimal exponent in \eqref{eq:opt_exp}; hence, based on the previous claim a sequence of SPARCs would also attain the optimal exponent. This is made precise in the following corollary.

\begin{corollary}
Let $s$ be drawn from an i.i.d. Gaussian source with mean zero and variance $\sigma^2$. Fix rate $R > \frac{1}{2} \log \frac{\sigma^2}{D}$, and let $a^2=De^{2R}$. Fix any $\e \in (0, a^2 -\sigma^2)$, and 
\be
b > \max \left\{2, \, b_{min}\left(\frac{a^2- \e}{D} \right) \right \}.
\label{eq:b_a2e}
\ee 
There exists a sequence of rate $R$ SPARCs with  parameter $b$ that achieves the excess-distortion exponent 
\[  \frac{1}{2} \left( \frac{a^2 - \e}{\sigma^2} -1 - \log \frac{a^2-\e}{\sigma^2} \right). \]
Consequently:
\begin{enumerate}
\item SPARCs attain the Shannon rate-distortion function  of an i.i.d. Gaussian source.
\item The supremum of excess-distortion exponents  achievable by SPARCs for i.i.d. Gaussian sources sources is equal to the optimal one, given by \eqref{eq:opt_exp}.
\end{enumerate}
\label{corr:err_exp}
\end{corollary}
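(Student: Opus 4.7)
The plan is to derive the corollary as a direct consequence of Theorem \ref{thm:err_exp_rd}, by identifying an appropriate choice of $\gamma^2$ and bounding the residual source-deviation probability. Concretely, I would instantiate Theorem \ref{thm:err_exp_rd} with $\gamma^2 := a^2 - \e$, where $a^2 = De^{2R}$. Note that $R > R^*(D) = \tfrac{1}{2}\log(\sigma^2/D)$ implies $a^2 > \sigma^2$, so the prescribed $\e \in (0, a^2 - \sigma^2)$ yields $\gamma^2 \in (\sigma^2, De^{2R})$ as required by the hypothesis of the theorem. The assumption \eqref{eq:b_a2e} on $b$ is exactly \eqref{eq:bmin_exp} evaluated at $\gamma^2/D = (a^2 -\e)/D$, so all hypotheses of Theorem \ref{thm:err_exp_rd} are satisfied.

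Next I would bound the first term on the RHS of \eqref{eq:ld_bnd}. Since the source is i.i.d.\ $\mc{N}(0,\sigma^2)$, the quantity $\|s\|^2/\sigma^2$ is $\chi^2_n$-distributed, so by the standard Chernoff/Cram\'er bound for sums of i.i.d.\ squared Gaussians,
\be
P\!\left(\tfrac{1}{n}\|s\|^2 \geq \gamma^2\right) \;\leq\; \exp\!\left(-\tfrac{n}{2}\Big(\tfrac{\gamma^2}{\sigma^2} - 1 - \log\tfrac{\gamma^2}{\sigma^2}\Big)\right).
\ee
Substituting $\gamma^2 = a^2 - \e$ and combining with the $\exp(-\kappa n^{1+c})$ term from \eqref{eq:ld_bnd}, the second (super-exponential) term is dominated by the first for all sufficiently large $n$. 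Hence, for the sequence of SPARCs $\{\mc{C}_n\}$ with parameter $b$ satisfying \eqref{eq:b_a2e},
\be
-\liminf_{n \to \infty} \tfrac{1}{n}\log P_e(\mc{C}_n, D)  \;\geq\; \tfrac{1}{2}\Big(\tfrac{a^2 - \e}{\sigma^2} - 1 - \log\tfrac{a^2 - \e}{\sigma^2}\Big),
\ee
which is the claimed excess-distortion exponent.

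To deduce consequence (1), I would note that for any fixed $R > R^*(D)$ the above exponent is strictly positive (since $a^2 - \e > \sigma^2$ for small enough $\e$ and the map $u \mapsto u - 1 - \log u$ is strictly positive on $u > 1$). Thus $P_e(\mc{C}_n, D) \to 0$, so every $R > R^*(D)$ is achievable by SPARCs; taking the infimum over such $R$ recovers the Shannon rate-distortion function. For consequence (2), I would take a sequence $\e_k \downarrow 0$, and for each $k$ pick $b_k \geq \max\{2, b_{min}((a^2-\e_k)/D)\}$ and a corresponding sequence of SPARCs achieving exponent $\tfrac{1}{2}\bigl(\tfrac{a^2-\e_k}{\sigma^2} - 1 - \log\tfrac{a^2-\e_k}{\sigma^2}\bigr)$. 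A standard diagonalization across $k$ produces a single sequence of SPARCs whose excess-distortion exponent is at least the supremum over $k$, which equals $\tfrac{1}{2}(a^2/\sigma^2 - 1 - \log(a^2/\sigma^2))$; combined with the converse of Theorem \ref{fact:ihara}, this matches the optimum.

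The only subtlety, and the main thing to verify carefully, is that $b_{min}((a^2-\e)/D)$ remains finite as $\e \downarrow 0$. Inspecting \eqref{eq:bmin_def} with $x = (a^2-\e)/D \to e^{2R} > 1$, the denominator's square-root argument involves $R - \tfrac12(1 - 1/x)$, which tends to $R - \tfrac12(1 - e^{-2R}) > 0$ for all $R > 0$ by monotonicity in $R$; so $b_{min}(x)$ has a finite limit as $\e \downarrow 0$, and the diagonalization is legitimate. Beyond this bookkeeping, the argument is routine given Theorem \ref{thm:err_exp_rd}, so I would not expect any other technical obstacle.
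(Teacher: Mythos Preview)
Your proposal is correct and follows essentially the same route as the paper: instantiate Theorem \ref{thm:err_exp_rd} with $\gamma^2 = a^2-\e$, bound $P(\|s\|^2/n \geq \gamma^2)$ via Cram\'er/Chernoff for the Gaussian source, absorb the super-exponential term, and let $\e \downarrow 0$ for consequence (2). One minor remark: your diagonalization step is unnecessary (and, strictly speaking, the diagonalized sequence would have $b$ varying with $n$, which falls outside the paper's fixed-$b$ definition of a SPARC sequence); since consequence (2) concerns the \emph{supremum} of achievable exponents over all SPARC sequences, it suffices that for each fixed $\e>0$ some sequence with fixed $b$ attains the exponent $\tfrac{1}{2}\bigl(\tfrac{a^2-\e}{\sigma^2}-1-\log\tfrac{a^2-\e}{\sigma^2}\bigr)$, and then take the supremum over $\e$.
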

\begin{proof}
From Theorem \ref{thm:err_exp_rd}, we know that for any $\e \in (0, a^2-\sigma^2)$, there exists a sequence of rate $R$ SPARCs $\{C_n\}$ for which
\be
P_e(\mc{C}_n, D)  \leq P(\abs{s}^2 \geq a^2 - \e) \left(1 + \frac{\exp(-\kappa n^{1+c})}{P(\abs{s}^2 \geq a^2-\e)}\right)
\label{eq:pe_rew}
\ee
for sufficiently large $n$, as long as the  parameter $b$ satisfies \eqref{eq:b_a2e}.
For $s$ that is  i.i.d. $\mc{N}(0, \sigma^2)$, Cram{\'e}r's large deviation theorem \cite{Den2008LD} yields
\be 
\begin{split} 
& \lim_{n \to \infty} - \frac{1}{n} \log P(\abs{s}^2 \geq a^2 - \e ) 
 =  \frac{1}{2} \left( \frac{a^2- \e}{\sigma^2} -1 - \log \frac{a^2-\e}{\sigma^2} \right)
\end{split}
 \label{eq:gaussian_ld} 
\ee
for $(a^2 - \e) > \sigma^2$. Thus $P(\abs{s}^2 \geq a^2- \e)$ decays exponentially with $n$; in comparison $\exp( - \kappa n^{1+c})$ decays \emph{faster} than exponentially with $n$. Therefore, from \eqref{eq:pe_rew},  the excess-distortion exponent satisfies
\be
\begin{split}
& \liminf_{ n \to \infty} \, \frac{-1}{n} \log P_e(\mc{C}_n, D)  \\
& \geq  \liminf_{ n \to \infty}  \frac{-1}{n}\left[ \log P(\abs{s}^2 \geq a^2 - \e)    + \log \left(1 + \frac{\exp(- \kappa n^{1+c})}{P(\abs{s}^2 \geq a^2 - \e)} \right) \right] \\
& = \frac{1}{2} \left( \frac{a^2 - \e}{\sigma^2} -1 - \log \frac{a^2-\e}{\sigma^2} \right).
\end{split}
\ee
Since $\e >0$ can be chosen arbitrarily small, the supremum of all achievable excess-distortion exponents is  $\frac{1}{2} \left( \frac{a^2}{\sigma^2} - 1 - \log \frac{a^2}{\sigma^2} \right)$, which is optimal from Fact \ref{fact:ihara}.
\end{proof}

Theorem \ref{thm:err_exp_rd} and Corollary \ref{corr:err_exp} together show that sparse regression codes are essentially as good as random i.i.d Gaussian codebooks  in terms of rate-distortion function, excess-distortion exponent, and robustness. By robustness, we mean that a SPARC designed to compress an i.i.d Gaussian source with variance $\sigma^2$ to distortion $D$  can compress any ergodic source with variance at most $\sigma^2$ to distortion $D$. This property is also satisfied by random i.i.d Gaussian codebooks \cite{Lapidoth97,SakMismatch1,SakMismatch2}. Moreover, Lapidoth \cite{Lapidoth97} also showed that for any ergodic source,  with an i.i.d. Gaussian random codebook one cannot attain a mean-squared distortion smaller than the distortion-rate function of an i.i.d Gaussian source with the same variance.

To sum up, the sparse regression ensemble has  good covering properties, with the advantage of much smaller codebook storage complexity than the i.i.d random ensemble (polynomial vs. exponential  in block-length).


The remainder of this chapter is devoted to proving Theorem \ref{thm:err_exp_rd}.  The proof involves using the second moment method and Suen's inequality \cite{JansonBook} to show that if $\abs{s}^2 \leq \gamma^2$, then with high probability there exists at least one codeword within distortion $D$ of the source sequence. Proving the result turns out to be significantly easier in the regime where $R > R_0(D)$ where
\be 
R_0(D) := \max \left\{ \frac{1}{2} \log  \frac{\sigma^2}{D}, \, \left(1 - \frac{D}{\sigma^2} \right)\right\}.\label{eq:rsp_def} 
\ee
The  rate $R_0(D)$ in \eqref{eq:rsp_def} is equal to $R^*(D)$ when $\frac{D}{\sigma^2} \leq x^*$, but is strictly larger than $R^*(D)$ when $\frac{D}{\sigma^2} > x^*$, where $x^* \approx 0.203$; see Fig. \ref{fig:rd_ml_perf}. 

The reason for the result being harder to prove for $R \in (R^*(D), R_0(D)]$ is discussed on p.\pageref{eq:x|u1b} after introducing the key elements of the proof. Roughly speaking, at these low rates the probability of the SPARC codebook having an atypically large number of codewords within distortion $D$ of the source sequence is high, and  so a standard application of second moment method fails. %

\begin{figure}[t]
\centering
\includegraphics[width=3.7in]{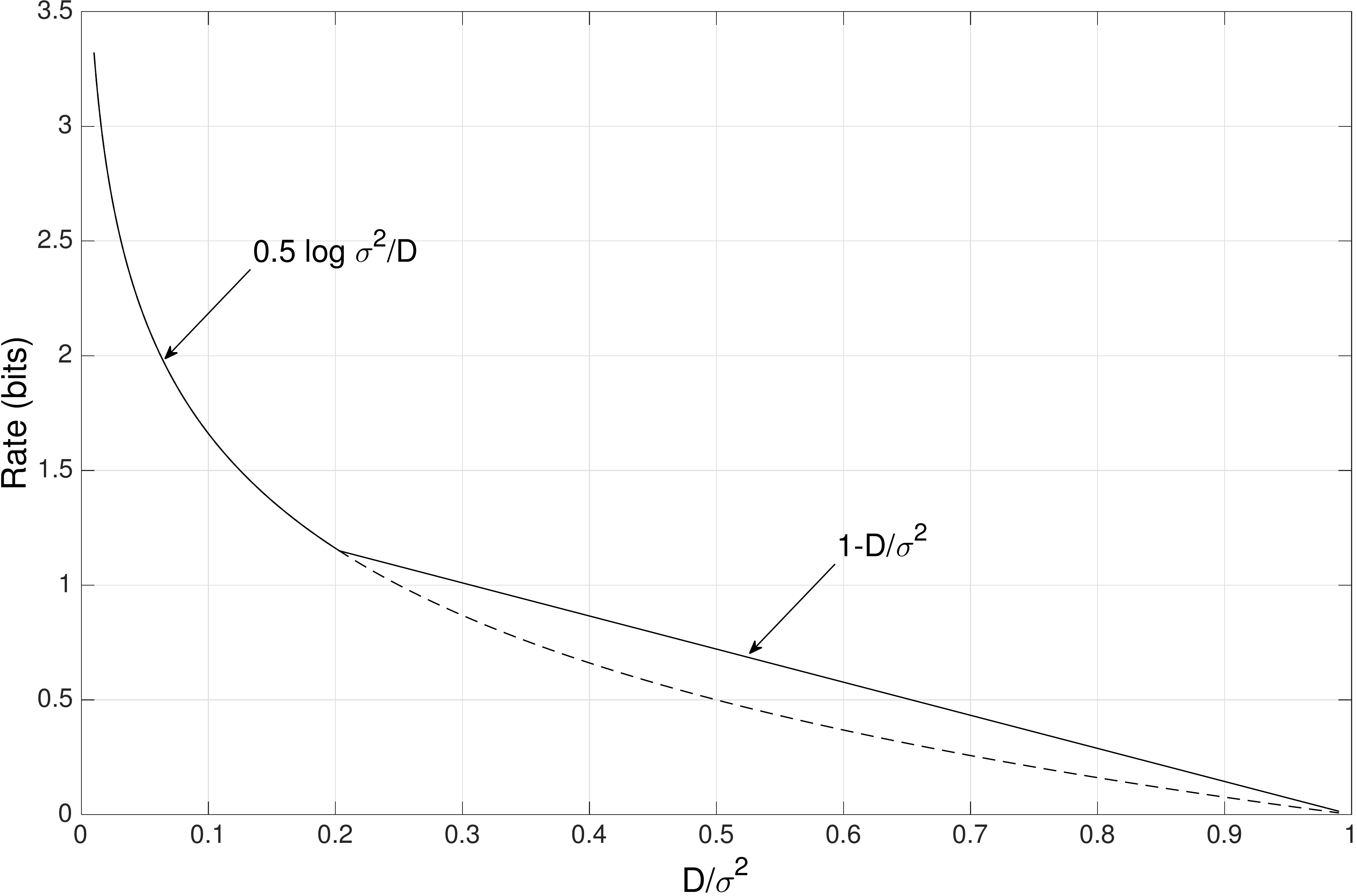}
\caption{\small{The solid line shows the previous achievable rate $R_0(D)$, given  in  \eqref{eq:rsp_def}. The rate-distortion function $R^*(D)$ is shown in dashed lines. It coincides with $R_0(D)$ for $D/\sigma^2 \leq x^*$, where $x^* \approx 0.203$.} }
\label{fig:rd_ml_perf}
\end{figure}

\section{Proof of Theorem \ref{thm:err_exp_rd}} \label{sec:proof_err_exp}

Fix a rate $R > R^*(D)$, and $b$ greater than the minimum value specified by the theorem. Note that $D e^{2R} > \sigma^2$ since $R > \tfrac{1}{2} \log \tfrac{\sigma^2}{D}$. Let $\gamma^2$ be any number such that $\sigma^2 < \gamma^2 < D e^{2R}$.

\emph{Code Construction}:
 Fix block length $n$ and section parameter $b$. Then pick $L$ as specified by \eqref{eq:rel_nL} and $M=L^b$.
Construct an $n \times ML$ design matrix $A$ with entries drawn i.i.d. $\mathcal{N}(0, 1/n)$. The codebook consists of all vectors $A \beta$ such that $\beta \in \mcb$. The non-zero entries of $\beta$ are all set equal to a value specified below.

\emph{Encoding and Decoding}: If the source sequence $\bfs$ is such that $\abs{\bfs}^2 \geq \gamma^2$, then the encoder declares an error. If $\abs{\bfs}^2 \leq D$, then $\bfs$ can be trivially compressed to within distortion $D$ using the all-zero codeword. The addition of this extra codeword to the codebook affects the rate in a negligible way.

 If $\abs{\bfs}^2 \in (D, \gamma^2)$, then $\bfs$ is compressed in two steps. First, quantize $\abs{\bfs}^2$ with an $n$-level uniform scalar quantizer $Q(.)$ with  support in the interval $(D, \gamma^2]$.  Conveying the scalar quantization index to the decoder (with an additional $\log n$ nats)  allows us to adjust the codebook variance according to the norm of the observed source sequence.\footnote{The scalar quantization step is only included to simplify the analysis. In fact, we could use the same codebook variance $(\gamma^2 - D)$ for all $\bfs$ that satisfy 
 $\abs{\bfs}^2 \leq (\gamma^2-D)$, but this would make the forthcoming large deviations analysis quite cumbersome.} The non-zero entries of $\beta$ are each set to $\sqrt{n c^2/L}$, where
 \be 
c^2=  Q(\abs{\bfs}^2)-D.
 \label{eq:cval_compr}
 \ee 
 so that each SPARC codeword has variance $c^2=(Q(\abs{\bfs}^2)-D)$. Define  a `quantized-norm' version of $\bfs$ as
 \be
 \tilde{\bfs} := \sqrt{\frac{Q(\abs{s}^2)}{\abs{s}^2}} \, s.
 \ee
 Note that $\abs{\tilde{\bfs}}^2= Q(\abs{s}^2)$. We  use the SPARC to compress $\tilde{\bfs}$. The encoder finds 
\[ \hat{\beta} := \underset{\beta \in \mcb}{\operatorname{argmin}} \ \norm{\tilde{\bfs} - A\beta}^2. \]
The decoder receives $\hat{\beta}$ and reconstructs $\bfsh = A \hat{\beta}$. Note that  for block length $n$, the total number of bits transmitted by encoder is 
$\log n + L \log M$, yielding an overall rate of $R + \tfrac{\log n}{n}$ nats/sample.

Let $\mc{E}(\tilde{\bfs})$ be the event that the minimum of $\abs{\tilde{\bfs} - A\beta}^2$ over $\beta \in \mcb$ is greater than $D$. The encoder declares an error if  $\mc{E}(\tilde{\bfs})$ occurs. If $\mc{E}(\tilde{\bfs})$  \emph{does not} occur, it can be verified that
the overall distortion  can be bounded as 
\be
\abs{\bfs - A \hat{\beta}}^2 \leq D + \frac{\kappa}{n},
\label{eq:good_dist_bnd}
\ee
 for some positive constant $\kappa$. The overall rate (including that of the scalar quantizer) is $R + \frac{\log n}{n}$.
 
Denoting the probability of excess distortion for this  code by $P_{e,n}$, we have
\be
\begin{split}
P_{e,n}  &  \leq P(\abs{\bfs}^2 \geq \gamma^2) +  \max_{\rho^2 \in (D, \gamma^2)} P(\mc{E}(\tilde{\bfs}) \mid \abs{\tilde{\bfs}}^2 = \rho^2).
\end{split}
\label{eq:err_bound}
\ee

To bound the second term in \eqref{eq:err_bound}, without loss of generality we can assume  that the source sequence 
is  $\tbfs = (\rho, \ldots, \rho)$. This is because the codebook distribution is rotationally invariant, due to the  i.i.d.  $\mc{N}(0,1)$ design matrix 
$A$. For any $\beta$, the entries of $A \beta(i)$ i.i.d. $\mc{N}(0,\rho^2-D)$.
We enumerate the codewords as $A \beta(i)$, where $\beta(i) \in \mcb$ for $i=1,\ldots, e^{nR}$.  

Define the indicator random variables
\be
U_i(\tbfs) = \left\{
\begin{array}{ll}
1 & \text{ if } \abs{A \beta(i) - \tbfs}^2 \leq D,\\
0 & \text{ otherwise}.
\end{array} \right.
\label{eq:ui_def}
\ee
We can then write
\be
P( \mc{E}(\tbfs)  ) = P\left(\sum_{i=1}^{e^{nR}} U_i(\tbfs) =0 \right).
\label{eq:sum_ui}
\ee
For a fixed $\tbfs$,  the $U_i(\tbfs)$'s are dependent.  Indeed,  if ${\beta}(i)$ and ${\beta}(j)$ overlap in $r$ of their non-zero positions, then the column sums forming codewords $\bfsh(i)$ and $\bfsh(j)$ will share $r$ common terms, and consequently $U_i(\tbfs)$ and $U_j(\tbfs)$ will be dependent.

For brevity, we henceforth denote $U_i(\tilde{\bfs})$ by just $U_i$. We also write $X := \sum_{i=1}^{e^{nR}} U_i$. We refer to $\beta_i$ as a \emph{solution} if $U_i=1$. Hence $X$ is the number of solutions.

 To highlight the main ideas in the proof, before obtaining a non-asymptotic bound for the probability in \eqref{eq:sum_ui}, we will first prove the following asymptotic result. 
 \be
P(X > 0) =P\left( \sum_{i=1}^{e^{nR}} U_i >0 \right) \to 1 \text{ as } n \to \infty.
\label{eq:asymp_result}
 \ee
We will first apply the second moment method (second MoM)  to prove \eqref{eq:asymp_result}, and then use Suen's correlation inequality  to  prove the non-asymptotic result in the statement of the theorem.

For any non-negative random variable $X$,  the second MoM bounds the probability of the event $X > 0$ from below as 
\be
P(X > 0) \geq \frac{(\expec X)^2}{\expec [ X^2]}.
\label{eq:2nd_mom}
\ee
The inequality \eqref{eq:2nd_mom} follows from the Cauchy-Schwarz inequality $$ (\expec[X Y])^2 \leq \expec X^2 \, \expec Y^2$$ by substituting $Y=\mathbf{1}_{\{X>0\}}$. To apply it to our setting, we first observe that
\begin{align}
\expec[X^2]  = \expec\Bigg[X \sum_{i=1}^{e^{nR}} U_i \Bigg] = \sum_{i=1}^{e^{nR}} \expec[X U_i] 
&   = \sum_{i=1}^{e^{nR}} P(U_i =1) \expec [X | U_i =1] \nonumber  \\
& = \expec X \cdot  \expec [X | \, U_1 =1]. \label{eq:symm_EX2}
\end{align}

Using \eqref{eq:symm_EX2} in \eqref{eq:2nd_mom}, we obtain 
\be
P(X > 0) \geq \frac{\expec X}{ \expec [X | \, U_1 =1]}.  \label{eq:2mom_cond_expec}
\ee

\subsection{Second moment method computations}  \label{eq:X_asymp}

To compute $\expec X$, we derive a general lemma specifying the probability that a randomly chosen i.i.d $\mc{N}(0, y)$ codeword  is within distortion $z$ of a source sequence $\bfs$ with $\abs{\bfs}^2=x$. This lemma will be used in other parts of the proof as well. 
\begin{lemma}
Let $\bfs$ be a vector with $\abs{\bfs}^2=x$. Let $\bfsh$ be an  i.i.d. $\mc{N}(0, y)$ random vector  that is  independent of $\bfs$. Then for $x,y,z >0$ and sufficiently large $n$, we have
\be 
\frac{\kappa}{\sqrt{n}} e^{-n f(x,y,z)} \leq P \left( \abs{\bfsh - \bfs}^2 \leq  z   \right) \leq  e^{-n f(x,y,z)},
\label{eq:Pfxyz}
\ee
where $\kappa$ is a universal positive constant and for $x,y,z >0$,  the large-deviation rate function $f$ is 
\be
f(x, y, z) = \left\{
\begin{array}{l l}
 \frac{x+z}{2y} - \frac{xz}{A y} - \frac{A}{4y} -\frac{1}{2} \ln\frac{A}{2x} & \text{ if }  z \leq x+y,  \\
 0 & \text{ otherwise}, \\
\end{array}
\right.
\label{eq:fdef}
\ee
and
\be
A = \sqrt{y^2 + 4 x z } - y.
\label{eq:Adef}
\ee
\label{lem:gen_sc}
\end{lemma}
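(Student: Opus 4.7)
The plan is to reduce to a one-dimensional problem via rotational invariance and then apply sharp large-deviation estimates. Since the distribution of the i.i.d.\ $\mc{N}(0,y)$ vector $\bfsh$ is invariant under orthogonal transformations, I may assume without loss of generality that $\bfs = (\sqrt{nx}, 0, \ldots, 0)$, so that $\abs{\bfs}^2 = x$. Then
\[
\norm{\bfsh - \bfs}^2 \;=\; (\hat{s}_1 - \sqrt{nx})^2 + \sum_{i=2}^n \hat{s}_i^2,
\]
and $W := \norm{\bfsh-\bfs}^2/y$ is distributed as a non-central $\chi^2$ with $n$ degrees of freedom and non-centrality $\lambda = nx/y$, with MGF $\mathbb{E}[e^{sW}] = (1-2s)^{-n/2}\exp\!\left(\lambda s/(1-2s)\right)$ for $s<1/2$. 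The target probability equals $P(W \leq nz/y)$.

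\textbf{Upper bound (Chernoff).} For any $t\geq 0$,
\[
P(W \leq nz/y) \;\leq\; e^{tnz/y}\,\mathbb{E}[e^{-tW}] \;=\; \exp\!\left\{\tfrac{tnz}{y} - \tfrac{n}{2}\ln(1+2t) - \tfrac{nxt}{y(1+2t)}\right\}.
\]
Setting $v = 1+2t$, the first-order condition in $v$ reduces to the quadratic $zv^2 - yv - x = 0$, whose positive root is $v^* = (y + \sqrt{y^2+4xz})/(2z)$. Using the algebraic identity $A^2 + 2Ay = 4xz$ (immediate from $A = \sqrt{y^2+4xz}-y$), one checks $v^* = 2x/A$. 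Substituting back and simplifying with the same identity yields the exponent $-n f(x,y,z)$ in the stated form. The optimizer is feasible ($t^*\geq 0$, i.e.\ $v^*\geq 1$) precisely when $z\leq x+y$; when $z>x+y$, the infimum is attained at $t=0$ and gives only the trivial bound, consistent with setting $f=0$ in that regime. This establishes the upper inequality in \eqref{eq:Pfxyz}.

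\textbf{Lower bound with the $\kappa/\sqrt{n}$ prefactor.} For the lower bound, I would invoke a sharp (Bahadur--Rao-type) refinement of Cram\'er's theorem. Write $W = W_1 + W_2$, where $W_1 = (\hat s_1 - \sqrt{nx})^2/y$ is a single bounded-exponent contribution and $W_2 = \sum_{i=2}^n \hat s_i^2/y$ is a sum of $n-1$ i.i.d.\ positive random variables with finite MGF in a neighbourhood of the origin. Consider the exponentially tilted measure $d\tilde{P}/dP \propto e^{-t^* W}$, where $t^*$ is the Chernoff optimizer identified above. Under $\tilde{P}$, the mean of $W$ equals $nz/y$ (by the stationarity condition) and the variance is of order $n$. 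Laplace/saddle-point expansion of the density of $W$ (equivalently, a local central limit theorem for the i.i.d.\ sum $W_2$ applied after conditioning on $W_1$) then yields, for $z<x+y$ and all $n$ sufficiently large,
\[
P(W \leq nz/y) \;\geq\; \frac{\kappa(x,y,z)}{\sqrt{n}}\,e^{-n f(x,y,z)},
\]
with $\kappa(x,y,z) = (t^*\sqrt{2\pi \tilde\sigma^2})^{-1} (1+o(1))$ where $\tilde\sigma^2$ is the per-coordinate tilted variance; all these quantities are bounded away from $0$ and $\infty$ on compact subsets of the feasibility region.

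\textbf{Main obstacle.} The upper bound is a textbook Chernoff computation. The genuine difficulty is the lower bound with the explicit polynomial prefactor $\kappa/\sqrt n$: a naive tilted-measure argument combined with a CLT on an interval of width $\Theta(\sqrt n)$ only gives a multiplicative correction of $e^{-\Theta(\sqrt n)}$, which is much weaker than required. Obtaining the correct $n^{-1/2}$ polynomial factor requires a refined saddle-point/Laplace asymptotic expansion (or equivalently, Bahadur--Rao for i.i.d.\ sums applied to $W_2$), together with some care in controlling the non-central contribution $W_1$ uniformly; the standard route is via Fourier inversion of the density of $W$ and steepest-descent evaluation at the saddle $v^*$.
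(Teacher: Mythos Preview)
Your upper-bound computation is fine, but your choice of rotation creates the very obstacle you flag in the lower bound. By sending $\bfs$ to $(\sqrt{nx},0,\ldots,0)$ you make $\norm{\bfsh-\bfs}^2$ a non-central $\chi^2$ with one anomalous coordinate, so the summands are \emph{not} i.i.d.\ and Bahadur--Rao does not apply off the shelf; you then have to do a bespoke saddle-point analysis, handling the $W_1$ term separately.

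The paper avoids this entirely by rotating $\bfs$ to the \emph{constant} vector $(\sqrt{x},\ldots,\sqrt{x})$ instead. Then
\[
\abs{\bfsh-\bfs}^2 \;=\; \frac{1}{n}\sum_{k=1}^n (\hat s_k - \sqrt{x})^2
\]
is a genuine sample mean of $n$ i.i.d.\ random variables (each distributed as the square of a $\mc{N}(-\sqrt{x},y)$ variable), and the sharp form of Cram\'er's theorem due to Bahadur and Rao gives both the upper bound $e^{-nI}$ and the lower bound $(\kappa/\sqrt{n})\,e^{-nI}$ in one stroke, with the same rate function $I(x,y,z)=f(x,y,z)$ you computed. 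The Legendre-transform calculation of $f$ is identical to yours; only the reduction differs. So the missing idea is simply: rotate to the direction that makes the problem an i.i.d.\ sum, rather than a single heavy coordinate plus a central $\chi^2$.
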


\begin{proof}
We have
\be 
\begin{split}
&P \left( \abs{\bfsh - \bfs}^2 \leq  z  \right) = P\left(\frac{1}{n}\sum_{k=1}^n (\hat{s}_k- s_k)^2 \leq  z  \right)  =  P\left(\frac{1}{n}\sum_{k=1}^n (\hat{s}_k- \sqrt{x} )^2 \leq  z  \right),
\end{split}
\ee
where the last equality is due to the rotational invariance of the distribution of $\bfsh$, i.e.,  $\bfsh$ has the same joint distribution as $O\bfsh$ for any orthogonal (rotation) matrix $O$. In particular,  we choose $O$ to be the matrix that rotates $\bfs$ to the vector $(\sqrt{x}, \ldots, \sqrt{x})$, and note that $ \abs{\bfsh - \bfs}^2 =  \abs{O \bfsh -  O \bfs}^2$. Then, using the strong version of Cram{\'e}r's large deviation theorem due to Bahadur and Rao \cite{Den2008LD, BahadurR60},  we have
\be
 \frac{\kappa}{\sqrt{n}} e^{-nI(x,y,z)}  \leq P\left(\frac{1}{n}\sum_{k=1}^n (\hat{s}_k- x)^2 \leq  z  \right) \leq e^{-nI(x,y,z)},
\ee
where the large-deviation rate function $I$ is given by 
\be
I(x,y,z) = \sup_{\lambda \geq 0 } \left\{ \lambda z  - \log \expec e^{\lambda(\hat{S} -\sqrt{x})^2} \right\}.
\label{eq:Idef}
\ee
The expectation on the RHS of \eqref{eq:Idef} is computed with $\hat{S}  \sim \mc{N}(0,y)$. Using standard calculations, we obtain
\be
 \log \expec e^{\lambda(\hat{S} -\sqrt{x})^2} = \frac{\lambda x}{1-2y\lambda} - \frac{1}{2} \log (1-2y \lambda), \qquad  \lambda <2y.
 \label{eq:logmomgen}
\ee
Substituting the expression in \eqref{eq:logmomgen} in \eqref{eq:Idef} and maximizing over $\lambda \in [0, 2y)$ yields $I(x,y,z) =f(x,y,z)$, where $f$ is given by \eqref{eq:fdef}.
\end{proof}

The expected number of solutions is given by
\be \expec X = e^{nR} P(U_1 =1) =  e^{nR} P\left( \abs{A \beta(1) - \tbfs}^2 \leq D \right).  \label{eq:EX} \ee
Since $\tbfs=(\rho, \rho, \ldots, \rho)$, and $A \beta(1)$ is i.i.d. $\mc{N}(0, \rho^2-D)$,  applying Lemma \ref{lem:gen_sc} we obtain the bounds
\be \frac{\kappa}{\sqrt{n}} e^{nR} e^{-n f(\rho^2, \rho^2-D,D)}  \leq \expec X   \leq e^{nR} e^{-n f(\rho^2, \rho^2-D,D)},  \label{eq:EXub} \ee
Note that \be f(\rho^2, \rho^2-D, D) = \frac{1}{2} \log \frac{\rho^2}{D}. \label{eq:fnice} \ee 

Next consider $ \expec [X | \, U_1 =1]$. If $\beta(i)$ and $\beta(j)$ overlap in $r$ of their non-zero positions, the column sums forming codewords $\bfsh(i)$ and $\bfsh(j)$ will share $r$ common terms. Therefore,
\begin{align}
&  \expec [X | \, U_1 =1]   = \sum_{i=1}^{e^{nR}} P(U_i=1 | \, U_1 =1)  = \sum_{i=1}^{e^{nR}} \frac{P(U_i=1, \,  U_1 =1)}{P(U_1=1)} \nonumber  \\
&  \stackrel{(a)}{=} \sum_{r=0}^{L} {L \choose r} (M-1)^{L-r} \frac{P(U_2=U_1=1 | \, \mc{F}_{12}(r))}{P(U_1=1)}
\label{eq:x|u1a}
\end{align}
where $\mc{F}_{12}(r)$ is the event that the codewords corresponding to ${U}_1$ and ${U}_2$ share $r$ common terms.  In \eqref{eq:x|u1a}, $(a)$ holds because for each codeword $\bfsh(i)$, there are a total of ${L \choose r} (M-1)^{L-r}$  codewords which share exactly $r$ common terms with $\bfsh(i)$, for $0\leq r \leq L$.  

From \eqref{eq:x|u1a} and \eqref{eq:EX}, the key ratio in \eqref{eq:symm_EX2} is
\be
\begin{split}
  \frac{\expec [X | \, U_1 =1]}{\expec X}  
   & = \sum_{r=0}^{L} {L \choose r} (M-1)^{L-r} \frac{P(U_2=U_1=1 | \, \mc{F}_{12}(r))} { e^{nR}\, (P(U_1=1))^2}  \\
 & \stackrel{(a)} \sim  1 +  \sum_{\alpha =  \frac{1}{L}, \ldots, \frac{L}{L}} {L \choose L\alpha} \frac{P(U_2=U_1=1 | \, \mc{F}_{12}(\alpha))}{ M^{L\alpha} \, (P(U_1=1))^2}  \\
& \stackrel{(b)}{=} 1 +  \sum_{\alpha =  \frac{1}{L}, \ldots, \frac{L}{L}} e^{n\Delta_\alpha}
\end{split}
\label{eq:alph_summands}
\ee
where $(a)$ is obtained by substituting $\alpha = \tfrac{r}{L}$ and $e^{nR}=M^L$. The notation $x_L\sim y_L$ means that $x_L/y_L \to 1$  as $L \to \infty$. The equality $(b)$ is from \cite[Appendix A]{RVGaussianML}, where it is shown that
\be
\Delta_{\alpha} \leq \frac{\kappa}{L}+  \frac{R}{b} \min\{ \alpha, \, \bar{\alpha}, \, \tfrac{\log2}{\log L}\}  - h(\alpha)
\label{eq:delalph_bound}
\ee
where $\kappa >0$ is a universal constant, and 
\be
h(\alpha) := \alpha R - \frac{1}{2} \log \left( \frac{1+ \alpha}{1 - \alpha(1-\frac{2D}{\rho^2})} \right).
\ee
The term $e^{n \Delta_\alpha}$ in \eqref{eq:alph_summands} may be interpreted as follows. Conditioned on $U_1=1$, i.e. $\beta(1)$ is a solution, the expected number of solutions that share $\alpha L$ common terms with $\beta(1)$ is $\sim e^{n \Delta_\alpha} \expec X$. Recall that we require the left side of \eqref{eq:alph_summands} to tend to $1$ as $n \to \infty$. Therefore, we need  $\Delta_\alpha <0$ for $\alpha =\tfrac{1}{L}, \ldots, \tfrac{L}{L}$.  From \eqref{eq:delalph_bound}, we need $h(\alpha)$ to be positive in order to guarantee that $\Delta_\alpha <0$. 

It can be shown \cite[Appendix A]{RVGaussianML} that  for $R> (1- D/\rho^2)$, the function $h(\alpha) = \alpha R  - g(\rho^2)$ is strictly positive in the interval $[\tfrac{1}{L}, \tfrac{L-1}{L}]$. Further, for all sufficiently large $L$  its minimum in the interval is attained at $\alpha=1/L$ where it equals
\be h( {1}/{L})  = \frac{1}{L} \left( R - (1- D/\rho^2)\right) + \frac{\kappa}{L^2}, \quad \kappa >0.\ee

Using this bound for $h(\alpha)$ in \eqref{eq:delalph_bound}, we find that if  $R> (1- D/\rho^2)$, and  
\be 
b > \frac{2.5 R + \frac{\kappa}{\log L}}{R - (1- D/\rho^2) + \frac{\kappa}{L}}, 
\ee
then the exponent $\Delta_\alpha$ in \eqref{eq:alph_summands} is strictly negative  for 
$\frac{1}{L} \leq \alpha \leq \frac{(L-1)}{L}$.  Consequently, the key ratio
$\frac{\expec [X | \, U_1 =1]}{\expec X}   \to 1$ as $n \to \infty$.

However, when $\frac{1}{2} \log \frac{\rho^2}{D} <  R < (1-\tfrac{D}{\rho^2})$,  it can be verified that  $h(\alpha)<0$  for $\alpha \in (0, \alpha^*)$ where $\alpha^* \in (0,1)$ is the solution to $h(\alpha)=0$.  Thus $\Delta_\alpha$ is \emph{positive} for $\alpha \in (0, \alpha^*)$ when $\frac{1}{2} \log \frac{\rho^2}{D} < R \leq (1-\tfrac{D}{\rho^2})$. Consequently,  \eqref{eq:alph_summands} implies that
\be
\frac{\expec [X | \, U_1 =1]}{\expec X} \sim  \sum_{\alpha} e^{n\Delta_\alpha} \to \infty \ \text{ as } \ n \to \infty, 
\label{eq:x|u1b}
\ee
so the second MoM fails.  

The reason for the failure of the second MoM in the regime $R < (1-\tfrac{D}{\rho^2})$ is due to the size-biasing induced by  conditioning on $U_1=1$.  Indeed, for any $s$, there are atypical realizations of the design matrix that yield a very large number of solutions.  The total probability of these matrices is small enough that $\expec X$ in not significantly affected by these realizations. However, conditioning on $\beta$ being a solution increases the probability that the realized design matrix is one that yields an unusually large number of solutions. 
At low rates, the conditional probability of the design matrix being atypical is large enough to make $\expec [ X| U_1=1] \gg \expec X$, causing the second MoM to fail.

The key to rectifying the second MoM failure is to show that $X(\beta) \approx \expec X$ {with high probability} \emph{although} $\expec [ X | U_1=1]  \gg \expec X$. We then apply the second MoM to count just the `good' solutions, i.e., solutions $\beta(i)$ for which $X \lvert_{U_i=1} \approx \expec X$. This approach was first used in the  work of Coja-Oghlan and Zdeborov\'{a} \cite{CojaZdeb12} on finding sharp thresholds for  two-coloring of random hypergraphs.

\subsection{Refining the second moment method}

Given that $\beta \in \mcb$ is a solution, for $\alpha =0, \tfrac{1}{L}, \ldots, \tfrac{L}{L}$ define  $X_{\alpha}(\beta)$ to  be the number of solutions that share $\alpha L$ non-zero terms with $\beta$. The \emph{total} number of solutions given that $\beta$ is a solution is
\begin{align}
X(\beta) & = \sum_{\alpha=0, \frac{1}{L}, \ldots, \frac{L}{L}} X_{\alpha}(\beta)
\end{align}
Using this notation,  we have  
\be
\begin{split}
& \frac{\expec [X | \, U_1 =1]}{\expec X} \stackrel{(a)}{=} \frac{\expec[X(\beta)]}{\expec X} \\
&  = \sum_{\alpha = 0, \frac{1}{L}, \ldots, \frac{L}{L}} \frac{\expec[X_{\alpha}(\beta)]}{\expec X} \ \stackrel{(b)}{\sim} \  1+  \sum_{\alpha = \frac{1}{L}, \ldots, \frac{L}{L}}  e^{n\Delta_\alpha},
\end{split}
\label{eq:exbeta}
\ee
where ($a$) holds because  the symmetry of the code construction allows us to condition on a generic $\beta \in \mcb$ being a solution; ($b$) follows from \eqref{eq:alph_summands}.

The key ingredient in the proof is the following lemma, which shows that $X_{\alpha}(\beta)$ is much smaller than $\expec X$ w.h.p  $\forall \alpha \in \{ \frac{1}{L}, \ldots, \frac{L}{L}\}$. In particular, $X_\alpha(\beta) \ll \expec X$ \emph{even} for $\alpha$ for which
\[ \frac{\expec[X_{\alpha}(\beta)]}{\expec X}  \sim e^{n\Delta_\alpha} \to  \infty \ \text{ as } n \to \infty.   \]

\begin{lemma} \cite[Lemma 4]{RVsparc_ml}
Let $R> \tfrac{1}{2}\log \frac{\rho^2}{D}$. If $\beta \in \mc{B}_{M,L}$ is a solution, then for sufficiently large $L$ 
\be 
P\left( X_\alpha(\beta)  \leq   L^{-5/2} \, \expec X, \   \text{ for } \tfrac{1}{L}\leq \alpha \leq  \tfrac{L-1}{L} \right) \geq 1- \eta
\ee
where
\be \eta = L^{-3.5 \left(\frac{b}{b_{min}(\rho^2/D)} - 1 \right)}.  \label{eq:eta_def} \ee
The function $b_{min}(.)$ is defined in \eqref{eq:bmin_def}.
\label{lem:good}
\end{lemma}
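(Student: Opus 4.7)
My approach is to apply Markov's inequality to each $X_{\alpha}(\beta)$ separately, and then take a union bound over $\alpha \in \{\tfrac{1}{L}, \ldots, \tfrac{L-1}{L}\}$. By the symmetry of the SPARC construction, the conditional law of $X_{\alpha}(\beta)$ given that $\beta$ is a solution does not depend on the particular choice of $\beta \in \mcb$, so I can work with $\beta = \beta(1)$ throughout. Observing that $X_{\alpha}(\beta(1)) = \sum_{i \,:\, |\beta(i) \cap \beta(1)| = \alpha L} U_i$, taking expectation conditionally on $U_1=1$, and comparing to the decomposition already used to derive \eqref{eq:alph_summands}, I get
\begin{equation*}
\frac{\expec\!\left[X_{\alpha}(\beta) \mid \beta \text{ is a solution}\right]}{\expec X} \ \sim \ e^{n\Delta_{\alpha}}, \qquad \alpha = \tfrac{1}{L}, \ldots, \tfrac{L-1}{L},
\end{equation*}
where $\Delta_{\alpha}$ is exactly the quantity controlled by \eqref{eq:delalph_bound}.

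With this in hand, Markov's inequality gives, for each $\alpha$,
\begin{equation*}
P\!\left( X_{\alpha}(\beta) > L^{-5/2}\, \expec X \,\big|\, \beta \text{ solution}\right) \ \leq \ L^{5/2} \cdot \frac{\expec[X_{\alpha}(\beta)\mid \beta \text{ solution}]}{\expec X} \ \leq \ 2\, L^{5/2}\, e^{n\Delta_{\alpha}}
\end{equation*}
for all sufficiently large $L$. A union bound over the $L-1$ values of $\alpha$ then yields a failure probability bounded by $2 L^{7/2} \sum_{\alpha} e^{n \Delta_{\alpha}}$. My remaining task is to show that with the hypothesis $b > b_{\min}(\rho^2/D)$, this sum is at most $\eta = L^{-3.5(b/b_{\min}-1)}$.

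To bound $\sum_{\alpha} e^{n\Delta_{\alpha}}$, I would use the explicit bound \eqref{eq:delalph_bound} and split the range of $\alpha$ according to which of the three quantities achieves $\min\{\alpha, \bar{\alpha}, \tfrac{\log 2}{\log L}\}$: small $\alpha \in [1/L, \tfrac{\log 2}{\log L}]$, bulk $\alpha \in [\tfrac{\log 2}{\log L},\, 1-\tfrac{\log 2}{\log L}]$, and large $\alpha$ near $1$ handled symmetrically via $\bar{\alpha}$. Recalling that $nR = bL\log L$, the term $\tfrac{R}{b}\min\{\cdots\}$ multiplied by $n$ becomes at most $L\log L\cdot \tfrac{\log 2}{\log L} = L\log 2$ in the bulk, while the corrections from $h(\alpha)$ grow at least linearly in $L$ in this range because $h(\alpha)$ stays bounded away from zero. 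In the small-$\alpha$ regime, where $h(\alpha)$ may be negative (the difficult case when $R < 1-D/\rho^2$), I would Taylor expand $h$ near $\alpha=0$ and use the explicit form of $b_{\min}(\rho^2/D)$ in \eqref{eq:bmin_def}: the definition is precisely what is needed to force the quadratic term of $-h(\alpha) + \tfrac{R\alpha}{b}$, aggregated via $n$, to be dominated by the $\log L$ factors, yielding an exponent proportional to $-3.5(b/b_{\min}-1)\log L$ at the worst $\alpha$. Combining the three regimes produces the advertised $\eta$.

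The main obstacle will be the careful analysis in the small-$\alpha$ regime, where $\Delta_{\alpha}$ can be positive and one must extract the exact polynomial power of $L$ in the failure probability: this is where the concrete form \eqref{eq:bmin_def} of $b_{\min}$ enters, and it requires a quantitative Taylor expansion of $h(\alpha)$ together with a careful optimization over $\alpha$. The handling of $\alpha$ close to $1$ is analogous by the symmetry $\alpha \leftrightarrow \bar{\alpha}$ in the bound \eqref{eq:delalph_bound} and is expected to be easier since the combinatorial coefficient ${L \choose L\alpha}$ is small there. Once these pieces are in place, the union bound immediately delivers the stated probability at least $1-\eta$.
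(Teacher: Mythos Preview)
Your approach has a genuine gap that cannot be repaired by sharper estimates: Markov's inequality is structurally incapable of proving this lemma in the regime $R < 1 - D/\rho^2$. The paper states explicitly (just before \eqref{eq:x|u1b}) that for such rates $\Delta_\alpha > 0$ on a \emph{constant-size} interval $\alpha \in (0,\alpha^*)$, and consequently
\[
\frac{\expec\!\left[X_\alpha(\beta)\mid \beta\ \text{solution}\right]}{\expec X}\ \sim\ e^{n\Delta_\alpha}\ \longrightarrow\ \infty.
\]
Markov's inequality then gives $P(X_\alpha(\beta)>L^{-5/2}\expec X)\le L^{5/2}e^{n\Delta_\alpha}$, which diverges; no polynomial factor in $L$ can beat a term growing like $e^{cL\log L}$. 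Your claim that in the bulk ``$h(\alpha)$ stays bounded away from zero'' is misleading here: it is bounded away from zero on the \emph{wrong} side, and $\alpha^*$ does not shrink into your ``small-$\alpha$'' window $[1/L,\log 2/\log L]$ as $L$ grows. The whole point of the lemma (see the paragraph following \eqref{eq:x|u1b}) is that $X_\alpha(\beta)$ is typically much smaller than its conditional mean, a fact that first-moment methods cannot establish.

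The proof in \cite{RVsparc_ml} (following the Coja-Oghlan--Zdeborov\'a approach cited by the paper) does not use Markov on $X_\alpha(\beta)$ directly. It conditions on finer information about the shared part of the codeword --- essentially the contribution $A\beta_{\text{shared}}$ from the $\alpha L$ columns common to $\beta$ and a competing $\beta'$ --- and shows that, given $\beta$ is a solution, this shared contribution is \emph{typically} well-behaved. The atypical realizations that inflate $\expec[X_\alpha(\beta)\mid U_1=1]$ are then isolated and shown to have small probability, after which a first-moment bound on the remaining (fresh) columns suffices. You need this extra layer of conditioning; the Taylor expansion of $h$ near $0$ and the specific form of $b_{\min}$ enter only after that decomposition, not as a substitute for it.
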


We refer the reader to \cite{RVsparc_ml} for the proof.  The probability measure in  Lemma \ref{lem:good} is the conditional distribution on the space of design matrices $A$ given that $\beta$ is a solution.

\begin{definition}
For $\e >0$,  call a solution $\beta$ ``$\e$-good"  if
\be \sum_{\alpha = \frac{1}{L}, \ldots, \frac{L}{L}}X_\alpha (\beta) < \e \, {\expec X}. \label{eq:good_def} \ee
\label{def:good}
\end{definition}
Since we have fixed $\tbfs= (\rho, \ldots, \rho)$, whether a solution $\beta$ is $\e$-good or not is determined by the design matrix. Lemma \ref{lem:good} guarantees that w.h.p. any solution $\beta$ will be $\e$-good, i.e., if $\beta$ is a solution, w.h.p. the design matrix is such that  the number of solutions sharing any common terms with $\beta$ is less $\e \expec[X]$.

The key to proving  the asymptotic result in \eqref{eq:asymp_result} is  to apply the second MoM only to $\e$-good solutions.
 Fix $\epsilon = L^{-1.5}$. For $i=1,\ldots, e^{nR}$, define the indicator random variables
 \be
V_i  = \left\{
\begin{array}{ll}
1 & \text{ if } \abs{A \beta(i) - \tbfs}^2 \leq D \ and  \  \beta(i) \text{ is $\e$-good} ,\\
0 & \text{ otherwise}.
\end{array} \right.
\label{eq:vi_def}
\ee
The number of $\e$-good solutions, denoted by $X_g$, is given by
\be
X_g = V_1 + V_2 + \ldots + V_{e^{nR}}.
\label{eq:xg_count}
\ee
We will apply the second MoM to $X_g$ to show that $P(X_g > 0) \to 1$ as $n \to \infty$.   We have
\be
P(X_g > 0) \geq \frac{(\expec X_g)^2}{\expec [ X_g^2]} \, = \, \frac{\expec X_g}{\expec [ X_g | \, V_1 =1]}
\label{eq:2mom_v}
\ee
where the second equality is obtained by writing  $\expec [ X_g^2] = (\expec X_g) \expec [ X_g | \, V_1 =1]$, similar to \eqref{eq:symm_EX2}.

\begin{lemma}
a) $\expec X_g \geq (1- \eta) \expec X$, where $\eta$ is defined in \eqref{eq:eta_def}.

b) $\expec [ X_g | \, V_1 =1] \leq (1 + L^{-0.5}) \expec X$.
\label{lem:Xg}
\end{lemma}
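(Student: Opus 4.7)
Plan for proving Lemma~\ref{lem:Xg}. For part (a), by symmetry across the $e^{nR}$ codewords, I would write
\[
\expec X_g = e^{nR}\, P(V_1 = 1) = e^{nR}\, P(U_1 = 1)\, P(\beta(1)\text{ is $\e$-good} \mid U_1 = 1),
\]
reducing the task to showing $P(\beta(1)\text{ is $\e$-good} \mid U_1 = 1) \geq 1 - \eta$. Recall $\e = L^{-1.5}$ and the definition \eqref{eq:good_def}. I would invoke Lemma~\ref{lem:good} at the $L-1$ values $\alpha \in \{1/L, \ldots, (L-1)/L\}$ and separately note that $X_1(\beta(1)) = 1$ trivially (the only $\beta \in \mcb$ sharing all non-zero positions with $\beta(1)$ is $\beta(1)$ itself), yielding on that event
\[
\sum_{\alpha = 1/L}^{L/L} X_\alpha(\beta(1)) \leq (L-1)\,L^{-5/2}\,\expec X + 1.
\]
Since $R > \tfrac{1}{2}\log(\rho^2/D)$, the lower bound in \eqref{eq:EXub} makes $\expec X$ grow exponentially in $n$, so the constant $+1$ is absorbed and the right-hand side is strictly less than $L^{-3/2}\expec X = \e\, \expec X$ for sufficiently large $n$. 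Therefore $P(\beta(1)\text{ good}\mid U_1 = 1) \geq 1 - \eta$, and multiplying through gives $\expec X_g \geq (1-\eta)\expec X$.

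For part (b), the plan is to use $\expec[X_g \mid V_1 = 1] \leq \expec[X \mid V_1 = 1]$ and the splitting $X = X_0(\beta(1)) + \sum_{\alpha \geq 1/L} X_\alpha(\beta(1))$. Given $V_1 = 1$, the definition of ``good'' forces the second sum to be $< \e\,\expec X = L^{-1.5}\,\expec X$, so the only nontrivial contribution is $\expec[X_0(\beta(1)) \mid V_1 = 1]$. The key observation I will exploit is that every $\beta$ counted in $X_0(\beta(1))$ shares no non-zero positions with $\beta(1)$, so its codeword $A\beta$ uses columns of $A$ disjoint from $\beta(1)$'s support. Since $A$ has independent entries, $X_0(\beta(1))$ is measurable with respect to the columns outside $\beta(1)$'s support and is therefore independent of $U_1$. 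Using $V_1 = 1 \Rightarrow U_1 = 1$ together with the bound from part (a),
\begin{align*}
\expec[X_0(\beta(1)) \mid V_1 = 1]
& \leq \frac{\expec[X_0(\beta(1))\,\mathbf{1}\{U_1 = 1\}]}{P(V_1 = 1)} \\
& = \frac{\expec[X_0(\beta(1))]\, P(U_1 = 1)}{P(V_1 = 1)} \;\leq\; \frac{\expec[X_0(\beta(1))]}{1-\eta}.
\end{align*}
A direct count gives $\expec[X_0(\beta(1))] = (M-1)^L P(U_1 = 1) = (1-1/M)^L\,\expec X \leq \expec X$, so combining,
\[
\expec[X_g \mid V_1 = 1] \;\leq\; \frac{\expec X}{1-\eta} + L^{-1.5}\,\expec X.
\]
Under the hypothesis $b > \bmin(\rho^2/D)$ of Theorem~\ref{thm:err_exp_rd}, $\eta = L^{-3.5(b/\bmin - 1)}$ decays polynomially in $L$, and for $b$ exceeding $\bmin$ by the slack already built into \eqref{eq:bmin_def} the exponent exceeds $1/2$, so that $1/(1-\eta) + L^{-1.5} \leq 1 + L^{-0.5}$ for large $L$, as required.

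The main obstacle will be the independence argument for $X_0(\beta(1))$ and $U_1$: the event $V_1 = 1$ does not separate cleanly into a function of $\beta(1)$'s columns alone, because ``goodness'' depends jointly on $\beta(1)$'s columns and on every other codeword that overlaps $\beta(1)$ in at least one section. My workaround is to pass to the weaker event $U_1 = 1$ in the numerator, losing only the factor $P(U_1 = 1)/P(V_1 = 1) \leq 1/(1-\eta)$ controlled by part (a), after which $X_0(\beta(1))$ truly is independent of $U_1$. The remaining verification, namely that $\eta = o(L^{-1/2})$ under the hypothesis on $b$, is an elementary inequality check using \eqref{eq:bmin_def} and the fact that the constant $28$ in the definition of $\bmin$ supplies ample slack.
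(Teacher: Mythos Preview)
Your argument for part (a) matches the paper's, and is in fact slightly more careful in handling the $\alpha = L/L$ term explicitly.

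For part (b), you take a genuinely different route. The paper simply asserts that $X_0(\beta(1))$ is \emph{independent} of the event $\{V_1=1\}$ (since the $(M-1)^L$ codewords counted by $X_0$ share no columns with $\beta(1)$), and concludes $\expec[X_0(\beta(1))\mid V_1=1]=\expec[X_0(\beta(1))]\le\expec X$. You correctly flag that this independence is not obvious: the goodness portion of $\{V_1=1\}$ involves $X_\alpha(\beta(1))$ for $\alpha\ge 1/L$, and those partially overlapping codewords use columns from \emph{outside} $\text{supp}(\beta(1))$, so $\{V_1=1\}$ does carry information about exactly the columns on which $X_0(\beta(1))$ depends. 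Your fix---replacing $\mathbf{1}\{V_1=1\}$ by the larger $\mathbf{1}\{U_1=1\}$ in the numerator and using the genuine independence of $X_0(\beta(1))$ and $U_1$---is a clean and valid workaround, at the cost of the factor $(1-\eta)^{-1}$ from part (a). So your route is more rigorous than the paper's on this point.

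The one loose end is your final inequality $(1-\eta)^{-1}+L^{-3/2}\le 1+L^{-1/2}$. This needs $\eta=o(L^{-1/2})$, i.e.\ $3.5(b/\bmin-1)>1/2$, which is \emph{not} implied by the bare hypothesis $b>\bmin$; your appeal to ``slack in the constant $28$'' does not furnish this. Fortunately this is cosmetic: the lemma is invoked only in \eqref{eq:pgx0_lb} and in the Suen's inequality term $\lambda^2/(8\Delta)$, and in both places $(1+o(1))\expec X$ suffices, which your bound $(1-\eta)^{-1}\expec X + L^{-3/2}\expec X$ certainly delivers since $\eta\to 0$. So state the conclusion of (b) as $\expec[X_g\mid V_1=1]\le\big((1-\eta)^{-1}+L^{-3/2}\big)\expec X$, and note this is $(1+o(1))\expec X$; that is what is actually used downstream.
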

\begin{proof}
Due to the symmetry of the code construction, we have
\be
\begin{split}
\expec X_g &= e^{nR} P(V_1=1) \stackrel{(a)}{=} e^{nR} P(U_1=1) P(V_1=1|U_1=1) \\
&  = \expec X \cdot P(  \beta(1) \text{ is $\e$-good } \mid \beta(1) \text{ is a solution}  ).
\end{split}
\label{eq:pvu}
\ee
In \eqref{eq:pvu},  $(a)$ follows from the definitions of $V_i$ in \eqref{eq:vi_def} and $U_i$ in \eqref{eq:ui_def}.
Given that $\beta(1)$  is a solution, Lemma \ref{lem:good} shows that
\be
 \sum_{\alpha =  \frac{1}{L}, \ldots, \frac{L}{L}} X_{\alpha}(\beta(1)) < (\expec X )  L^{-1.5}.
\label{eq:b1good}
\ee
with probability at least $1 - \eta$. As $\e= L^{-1.5}$,   $\beta(1)$ is $\e$-good according to Definition \ref{def:good}  if \eqref{eq:b1good} is satisfied. Thus $\expec X_g$ in \eqref{eq:pvu} can be lower bounded as
\be
\expec X_g \geq  (1 - \eta) \expec X. 
\ee
For part (b),  first observe that the total number of solutions $X$ is an upper bound for the number of $\e$-good solutions $X_g$. Therefore
\be
\expec [ X_g | \, V_1 =1] \leq \expec [ X | \, V_1 =1]. 
\label{eq:exg_v1}
\ee
  Given that $\beta(1)$ is an $\e$-good solution, the expected number of solutions can be expressed as
  \be
  \begin{split}
 & \expec [ X | \, V_1 =1]  \\
  & =   \expec[X_0 (\beta(1)) \mid  V_1=1]   + \expec[\sum_{\alpha = \frac{1}{L}, \ldots, \frac{L}{L}} \hspace{-4pt} X_\alpha (\beta(1)) \mid  V_1=1].
  \end{split}
  \label{eq:ex_v1_split}
  \ee
  There are $(M-1)^L$ codewords that share no common terms with $\beta(1)$, and are thus independent of the event $V_1=1$.
  \be
  \begin{split}
 &  \expec [X_0 (\beta(1)) \mid V_1=1]  = \expec[X_0(\beta(1))]  = (M-1)^L \, P(\abs{\tbfs - A \beta}^2 \leq D)  \\
  & \leq M^L \, P(\abs{\tbfs - A \beta}^2 \leq D)  = \expec X.
  \end{split}
  \label{eq:ex0_bound}
  \ee
  Next,  note that conditioned on $\beta(1)$ being an $\e$-good solution (i.e., $V_1=1$),
  \be
  \sum_{\alpha = \frac{1}{L}, \ldots, \frac{L}{L}} \hspace{-4pt} X_\alpha (\beta(1)) < \e \, \expec X
  \label{eq:exalph_bound}
  \ee
  \emph{with certainty}. This follows from the definition of $\e$-good  in \eqref{eq:good_def}.  Using \eqref{eq:ex0_bound} and   \eqref{eq:exalph_bound} in \eqref{eq:ex_v1_split}, we conclude that
  \be
  \expec [ X | \, V_1 =1]  < (1 + \e)\expec X.
  \label{eq:x_v1_bnd}
  \ee
  Combining \eqref{eq:x_v1_bnd} with \eqref{eq:exg_v1} completes the proof of Lemma \ref{lem:Xg}.
  \end{proof}
Using Lemma \ref{lem:Xg} in \eqref{eq:2mom_v}, we obtain
\be
\begin{split}
& P(X_g > 0) \geq  \frac{\expec X_g}{\expec [ X_g | \, V_1 =1]} \geq \frac{(1-\eta)}{1+\e}   =  \frac{1-L^{-3.5 (\frac{b}{b_{min}(\rho^2/D)} - 1)}}{1+L^{-3/2}},
\end{split}
\label{eq:pgx0_lb}
\ee
where the last equality is obtained by using the definition of $\eta$ in \eqref{eq:eta_def} and $\e=L^{-0.5}$. Hence the probability of the existence of at least one good solution tends to $1$ as $L \to \infty$. Therefore $P(X >0)$ in \eqref{eq:asymp_result} also tends to one.

\subsection{A non-asymptotic bound for $P(X=0)$ }
We now prove the result \eqref{eq:ld_bnd}  by obtaining a non-asymptotic bound for $P(X_g=0)$. In contrast to \eqref{eq:pgx0_lb} which proves that $P(X_g=0)$ decays polynomially in $L$, we will use Suen's inequality to show that this probability decays \emph{super-exponentially} in $L$.

We begin with some definitions.
\begin{definition}
[Dependency Graphs  \cite{JansonBook}] Let $\{V_i\}_{i \in \mc{I}}$ be a family of random variables (defined on a common probability space). A dependency graph for $\{V_i\}$ is any graph $\Gamma$ with vertex set $V(\Gamma)= \mc{I}$ whose set of edges satisfies the following property: if $A$ and $B$ are two disjoint subsets of  $\mc{I}$ such that there are no edges with one vertex in $A$ and the other in $B$, then the families $\{V_i\}_{i \in A}$ and $\{V_i\}_{i \in B}$
are independent.
\end{definition}

\begin{remark}
\cite[Example $1.5$, p.11]{JansonBook}
Suppose $\{Y_\alpha\}_{\alpha \in \mc{A}}$ is a family of independent random variables, and each $V_i, i\in \mc{I}$ is a function of the variables
$\{Y_\alpha\}_{\alpha \in A_i}$ for some subset $A_i \subseteq \mc{A}$. Then the graph with vertex set $\mc{I}$ and edge set
$\{ij : A_i \cap A_j \neq \emptyset\}$ is a dependency graph for  $\{U_i\}_{i \in \mc{I}}$.
\label{fact:depgraph_ex}
\end{remark}

In our setting, we fix $\e =L^{-3/2}$, let $V_i$ be the  indicator the random variable defined in \eqref{eq:vi_def}. Note that  $V_i$ is one if and only if  $\beta(i)$ is an $\e$-good solution. The set of codewords that share at least one common term with $\beta(i)$ are the ones that play a role in determining whether $\beta(i)$ is an $\e$-good solution or not. Hence, the graph $\Gamma$ with vertex set $V(\Gamma) = \{1,\ldots,e^{nR} \}$  and edge set $e(\Gamma)$ given by
\ben
\begin{split}  
&  \{ ij:  i\neq j \text{ and the codewords } \beta(i), \beta(j) \\
& \quad  \text { share at least one common term} \} 
 \end{split} \een 
 is a dependency graph for the family
$\{ V_i \}_{i=1}^{e^{nR}}$. 

 For a given codeword $\beta(i)$, there are ${L \choose r} (M-1)^{L-r}$ other codewords that have exactly $r$ terms in common with $\beta(i)$, for $0 \leq r \leq (L-1)$.   Therefore each vertex in the dependency graph for the family $\{V_i\}_{i=1}^{e^{nR}}$ is connected to
\[ \sum_{r=1}^{L-1} {L \choose r} (M-1)^{L-r} = M^L - 1 - (M-1)^L \]
other vertices.

\begin{proposition}[Suen's Inequality \cite{JansonBook}] Let $V_i \sim \text{Bern}(p_i),  i\in \mc{I}$, be a finite family of Bernoulli random variables having a dependency graph $\Gamma$.  Write $i \sim j$ if $ij$ is an edge in $\Gamma$. Define
\ben
\begin{split}
\lambda  = \sum_{i\in \mc{I}} \expec V_i,
 \  \, \Delta = \frac{1}{2} \sum_{i \in \mc{I}} \sum_{j \sim i} \expec(V_i V_j),
\  \, \delta =  \max_{i\in \mc{I}} \sum_{k \sim i}  \expec V_k.
\end{split}
\een
Then
\be P\left(\sum_{i \in \mc{I}} V_i =0\right) \leq \exp\left(-\min \left\{ \frac{\lambda}{2}, \frac{\lambda}{6\delta}, \frac{\lambda^2}{8\Delta}  \right\} \right).
 \label{eq:suens_ineq} \ee
\end{proposition}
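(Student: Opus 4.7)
The plan is to establish Suen's inequality via an exponential moment argument that carefully exploits the dependency graph $\Gamma$, producing bounds far stronger than a naive Chernoff bound that ignores the dependencies. Let $W = \sum_{i \in \mc{I}} V_i$, so that $\{W=0\} = \bigcap_i \{V_i = 0\}$ and $P(W=0) = \mathbb{E}\bigl[\prod_i (1-V_i)\bigr]$. The minimum of three terms in the stated bound is a symptom that the argument must cover three different regimes, and I would organize the proof around producing a ``master'' decoupling inequality that, when optimized in a free parameter $t$, yields each of the three terms as a limiting case.

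First, I would derive the core decoupling lemma: for all $t \in [0, 1/\delta]$,
\begin{equation*}
    \log \mathbb{E}[e^{-tW}] \;\leq\; -\lambda\, t\, e^{-t\delta} \;+\; \Delta\, t^{2} e^{-2t\delta}.
\end{equation*}
The linear $-\lambda t$ term is what a fully independent Chernoff bound would produce; the attenuation factor $e^{-t\delta}$ accounts for the fact that conditioning on a neighbor being zero shifts the conditional mean of $V_i$ downward by a factor controlled by $\delta$. The quadratic $\Delta t^{2}$ term captures the ``penalty'' for non-independence through the pairwise moments of adjacent indicators, with the $e^{-2t\delta}$ factor again tracking neighborhood conditioning. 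This inequality is proved by an induction that peels one variable $V_i$ at a time off the product $\prod_i (1 - V_i + t V_i \cdot \text{correction})$, replacing $V_i$ by its conditional expectation given its graph neighborhood and bookkeeping the cumulative multiplicative error through a telescoping identity.

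Second, with the core inequality in hand, I would use $P(W=0) \leq \mathbb{E}[e^{-tW}]$ (valid for all $t \geq 0$) and optimize the upper bound over $t$. Three regimes emerge: when $\Delta$ dominates, the optimum $t^{*} = \lambda/(2\Delta)$ (provided $t^{*} \leq 1/\delta$) produces the $\exp(-\lambda^{2}/(8\Delta))$ term; when the constraint $t \leq 1/\delta$ binds first, taking $t = 1/\delta$ and using the attenuation factor carefully yields $\exp(-\lambda/(6\delta))$; and when $\Delta$ is negligible compared to $\lambda$, the purely independent Chernoff bound with a suitable large $t$ gives the $\exp(-\lambda/2)$ term. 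Taking the minimum over the three optimizations yields the stated bound.

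The hard part will be the inductive peeling argument used to establish the decoupling lemma. The technical subtlety is ensuring that each removal of a $V_i$ does not over-inflate the moments of its neighbors, so that the cumulative error after eliminating all $|\mc{I}|$ variables remains bounded by $\Delta t^{2} e^{-2t\delta}$ rather than growing with $|\mc{I}|$. This requires a careful ordering of the variables in the induction (or, equivalently, a symmetric exchange argument) and the use of a refined conditional FKG-type inequality to control the sign of the error term at each step. Following the telescoping scheme in Janson's treatment would be the cleanest route; the alternative cumulant-expansion approach gives essentially the same bound but requires bounding arbitrarily high-order cumulants in terms of $\delta$ and $\Delta$, which is more cumbersome.
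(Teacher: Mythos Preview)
The paper does not prove this proposition; it is quoted as a known result from \cite{JansonBook} (Janson, {\L}uczak, Ruci\'nski, \emph{Random Graphs}) and is used as a black box in the proof of Theorem~\ref{thm:err_exp_rd}. So there is no ``paper's own proof'' to compare against.

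That said, your sketch has a real gap. The ``master decoupling lemma''
\[
\log \expec[e^{-tW}] \leq -\lambda t e^{-t\delta} + \Delta t^{2} e^{-2t\delta}
\]
is not a standard intermediate result, and the justification you give for it is heuristic at best: the claim that ``conditioning on a neighbor being zero shifts the conditional mean of $V_i$ downward by a factor controlled by $\delta$'' is not generally true without a positive-association assumption, and Suen's inequality does not assume one. The actual proof in Janson's book does not pass through the Laplace transform of $W$ at all. It works directly with the product $\prod_i (1-V_i)$ via an interpolation/telescoping scheme: one introduces a parameter $s\in[0,1]$ multiplying the $V_i$'s, differentiates $\log \expec \prod_i (1-sV_i)$ in $s$, and bounds the derivative by splitting off the neighborhood of each index and controlling the resulting conditional expectations using the definitions of $\Delta$ and $\delta$. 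The three-way minimum arises from three different ways of bounding the integrated derivative, not from three regimes of a single Chernoff optimization. If you want to supply a proof here, follow that interpolation argument; the exponential-moment route you outline would need a substantially different (and currently missing) lemma to get off the ground.
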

We apply Suen's inequality  with the dependency graph specified above for $\{V_i\}_{i=1}^{e^{nR}}$ to compute an upper bound for $P(X_g =0)$, where $X_g= \sum_{i=1}^{e^{nR}} V_i$ is the total number of $\e$-good solutions for $\e =L^{-3/2}$.

\textbf{First Term $\frac{\lambda}{2}$}: We have
\be
\begin{split}
& \lambda = \sum_{i =1}^{e^{nR}} \expec V_i = \expec X_g   \geq   \expec X \,  (1 - \eta).
 \end{split}
\label{eq:lambda_expand}
\ee
where the last inequality follows from Lemma \ref{lem:good}, with $\eta$ defined in \eqref{eq:eta_def}. Using the expression from \eqref{eq:EX} for the expected number of solutions $\expec X$, we have
\be
\lambda \geq (1 - \eta) \frac{\kappa}{\sqrt{n}} e^{n (R - \frac{1}{2} \log \frac{\rho^2}{D})},
\label{eq:lambda_lb1}
\ee
where $\kappa >0$ is a universal  constant. For $b > b_{min}(\rho^2/D)$,  \eqref{eq:eta_def} implies  that $\eta$ approaches $1$ with growing $L$.

\textbf{Second term ${\lambda}/(6\delta)$}: Due to the symmetry of the code construction, we have
\begin{align}
 \delta   =  \max_{i\in \{1, \ldots, e^{nR} \}} \sum_{k \sim i}  P\left(V_k  =1 \right)  
 & = \sum_{k \sim i}  P\left(V_k =1  \right) \quad \forall i \in \{1, \ldots, e^{nR} \} \nonumber \\
%
& = \left(M^L - 1 - (M-1)^L\right) P\left(V_1 =1 \right).
\end{align}
Combining this together with the fact that $\lambda =  M^L \, P(V_1 =1)$,
we obtain
\be
\frac{\lambda}{\delta}  = \frac{M^L}{M^L - 1 - (M-1)^L} = \frac{1}{1- L^{-bL} - (1- L^{-b})^L},
\label{eq:lamb_over_del}
\ee
where the second equality is obtained by substituting $M=L^b$. Using a Taylor series bound for the denominator of \eqref{eq:lamb_over_del} (see \cite[Sec. V]{RVGaussianML} for details) yields the following lower bound for sufficiently large $L$:
\be
\frac{\lambda}{\delta} \geq \frac{L^{b-1}}{2}.
\label{eq:lamb_Del_lb}
\ee

\textbf{Third Term $\lambda^2/(8\Delta)$}: We have
\be
\begin{split}
 \Delta  = {\frac{1}{2} \sum_{i=1}^{M^L} \sum_{j \sim i} \expec\left[V_i V_j \right] }   &  =   {\frac{1}{2} \sum_{i=1}^{M^L} P(V_i=1) \sum_{j \sim i}  P(V_j=1 \mid V_i=1)}  \\
& \stackrel{(a)}{=}  \frac{1}{2} \,  \expec X_g \sum_{j \sim 1}  P(V_j=1 \mid V_1=1)  \\
& =   \frac{1}{2} \,  \expec X_g  \,  \expec\Big [ \sum_{j \sim 1} \mathbf{1}\{V_j=1\}  \mid V_1=1\Big ] \\
&  \stackrel{(b)}{\leq} \frac{1}{2} \,  \expec X_g  \, \expec\Bigg [\sum_{\alpha = \frac{1}{L}, \ldots, \frac{L-1}{L}} \hspace{-4pt} X_{\alpha} (\beta(1)) \mid  V_1=1 \Bigg].
\end{split}
\label{eq:Delta_expand}
\ee
In \eqref{eq:Delta_expand}, $(a)$ holds because of the symmetry of the code construction. The inequality $(b)$ is obtained as follows. The number of $\e$-good solutions that share common terms with $\beta(1)$ is bounded above by the total number of solutions sharing common terms with $\beta(1)$.  The latter quantity can be expressed as the sum of the number of solutions sharing  exactly $\alpha L$ common terms with $\beta(1)$, for $\alpha \in \{\tfrac{1}{L}, \ldots, \tfrac{L-1}{L} \}$.

Conditioned on $V_1=1$, i.e., the event that $\beta(1)$ is a $\e$-good solution, the total number of solutions that share common terms with $\beta(1)$ is bounded by $\e \, \expec X$. Therefore, from \eqref{eq:Delta_expand} we have
\be
\begin{split}
& \Delta \leq   \frac{1}{2} \expec X_g  \, \expec\Bigg [\sum_{\alpha = \frac{1}{L}, \ldots, \frac{L-1}{L}} \hspace{-4pt} X_{\alpha} (\beta(1)) \mid  V_1=1 \Bigg]  \\
& \leq \frac{1}{2} \left(\expec X_g \right) (L^{-3/2}\,  \expec X) \leq \frac{L^{-3/2}}{2} (\expec X)^2,
\end{split}
\label{eq:Delta_ub}
\ee
where we have used $\e = L^{-3/2}$, and the fact that $X_g \leq X$. Combining \eqref{eq:Delta_ub} and \eqref{eq:lambda_expand}, we obtain
\be
\frac{\lambda^2}{8 \Delta} \geq \frac{ (1-\eta)^2 (\expec X)^2}{4 L^{-3/2}  (\expec X)^2} \geq \kappa L^{3/2},
\label{eq:lamb_Del2_lb}
\ee
where $\kappa$ is a strictly positive constant.

\textbf{Applying Suen's inequality}:  Using the lower bounds obtained in \eqref{eq:lambda_lb1},  \eqref{eq:lamb_Del_lb}, and  \eqref{eq:lamb_Del2_lb}  in  \eqref{eq:suens_ineq}, we obtain
\be
\begin{split}
& P\left( \sum_{i=1}^{e^{nR}} V_i \right)  \leq \exp \left( - \kappa \, \min \left\{ e^{n (R - \frac{1}{2} \log \frac{\rho^2}{D} -\frac{\log n}{2 n})}, \, L^{b-1}, \, L^{3/2} \right\} \right),
\end{split}
\label{eq:sum_vi_0}
\ee
where $\kappa$ is a positive constant. Recalling from \eqref{eq:rel_nL} that $L = \Theta( \tfrac{n}{\log n})$ and  $R > \frac{1}{2} \ln \frac{\rho^2}{D}$, we see that for $b >2$,
\be P\left( \sum_{i=1}^{e^{nR}} V_i \right) \leq \exp \left( - \kappa n^{1+c} \right),  \label{eq:sum_vi_bound} \ee  
Note that the condition $b >  b_{min}(\rho^2/D)$  is also  required for $\eta$ in Lemma \ref{lem:good} to go to $0$ with growing $L$.

Using \eqref{eq:sum_vi_bound} in \eqref{eq:err_bound},  we conclude that for any $\gamma^2 \in (\sigma^2, D^{e^2R})$ the probability of excess distortion can be bounded as  
\be
\begin{split}
P_{e,n} & \leq P(\abs{\bfs}^2 \geq \gamma^2) +  \max_{\rho^2 \in (D, \gamma^2)} P(\mc{E}(\tilde{\bfs}) \mid \abs{\tilde{\bfs}}^2 = \rho^2) \\
& \leq P(\abs{\bfs}^2 \geq \gamma^2)  + \exp(-\kappa n^{1+c}),
\end{split}
\label{eq:Pen_exp_bound}
\ee
provided the parameter $b$ satisfies
\be
b > \max_{\rho^2 \in (D, \gamma^2)} \max \left\{  2, \, b_{min}\left( \rho^2/D \right)  \right \}.
\label{eq:b_bound}
\ee
It can be verified from the definition in \eqref{eq:bmin_def} that $b_{min}(x)$ is strictly increasing in $x \in (1, e^{2R})$.  Therefore, the maximum on the RHS of \eqref{eq:b_bound} is  bounded by $\max \left\{  2, \, b_{min}\left( \gamma^2/D \right)  \right \}$. Choosing $b$ to be larger than this value will guarantee that \eqref{eq:Pen_exp_bound} holds. This completes the proof of the theorem.

\chapter{Computationally Efficient Encoding}  \label{chap:comp_eff_enc}
   
   In this chapter, we discuss an efficient SPARC encoder for lossy compression with squared-error distortion. The encoding algorithm is based on successive cancellation: in each iteration, one column from a section of $A$ is chosen to be part of the codeword. The column is chosen based on a test statistic that measures the correlation of each column in the section with a residual  vector.

 For any ergodic source with variance $\sigma^2$, it is shown that the encoding algorithm attains the optimal Gaussian distortion-rate function $D^*(R) = \sigma^2e^{-2R}$, for any rate $R>0$. Furthermore, for any fixed distortion level above $D^*(R)$, the probability of excess distortion decays exponentially in the block length $n$. We note that for finite alphabet memoryless sources, several coding techniques have been proposed to approach the rate-distortion bound with computationally feasible encoding and decoding \cite{Kont99,GuptaVerduWeiss,KontGioran,JalaliWeiss, GuptaVerdu09, WainManeva10,Polarrd,aref2015approaching}.
 
 We first give a heuristic derivation of the encoding algorithm and then state the main result (Theorem  \ref{thm:rd_feasible}), a large deviations bound on the excess distortion probability. We also present numerical results to illustrate the empirical compression performance of the algorithm.

 \emph{Notation}: As in the previous chapter, for any vector $x$ we write $\abs{x}$ to denote $\norm{x}/\sqrt{n}$. We also write $\langle x,y \rangle$ for the Euclidean inner product between vectors $x,y \in \reals^n$.

   \section{Computationally efficient encoding algorithm} \label{sec:alg_desc}

   Consider a source sequence $s \in \reals^n$ generated by an ergodic source with zero mean and variance $\sigma^2$. 
   The SPARC is defined via an $n \times ML$ design matrix $A$ with entries drawn i.i.d. $\mathcal{N}(0, 1/n)$. The codebook consists of all vectors $A \beta$ such that $\beta \in \mcb$. The non-entry of $\beta$ in section $\ell$ is set to 
\be
c_\ell =  \sqrt{ 2 (\ln M) \sigma^2 \left( 1- \frac{2R}{L} \right)^{\ell-1}}, \quad  \ell  \in [L].
\label{eq:ci_def}
\ee

The encoding algorithm is intialized with $r_0 = \bfs$, and consists of $L$ steps, defined as follows.

\emph{Step} $\ell$, $\ell=1, \ldots, L$: Pick
\be  m_\ell =  \underset{j: \ (\ell-1)M < \; j \; \leq \ell M }{\operatorname{argmax}}
\left\langle \sqrt{n} A_j,  \frac{r_{\ell-1}}{\norm{r_{\ell-1}}} \right\rangle. \label{eq:max_stepi} \ee
Set
\begin{equation}
r_{\ell} = r_{\ell-1} - c_\ell A_{m_\ell},
\label{eq:gen_residue}
\end{equation}
where $c_\ell$ is given by \eqref{eq:ci_def}.

The codeword $\hat{\beta}$  has  non-zero values in positions $m_\ell, \ 1 \leq \ell \leq L$. The value of the non-zero in section $\ell$ given by $c_\ell$.

The algorithm chooses  the non-zero locations $\{ m_\ell \}$  in a greedy manner  (section by section) to minimize the norm of the residual $r_\ell$. In the next section, we give a heuristic derivation of the algorithm, which also explains the choice of coefficients in \eqref{eq:ci_def}.

\paragraph{Computational complexity} There are $L$ stages in the algorithm, where each stage involves computing $M$ inner products followed by finding the maximum among them. The complexity therefore scales as $O(nML)$; the number of operations per source sample is $O(ML)$.   If we choose $M=L^b$ for some $b>0$,  then $L=\Theta\left( \frac{n}{\log n}\right)$, and the  per-sample complexity is   $O\left( n/ \log n \right)^{b+1}$.

When we have several source sequences to be encoded in succession, the encoder can have a  pipelined architecture with $L$  modules.  The first module computes the inner product of the source sequence  with each column in the first section of $A$ and determines the maximum; the second module computes the inner product of the first-step residual with each column in the second section of $A$, and so on. Each module has $M$ parallel units, with each unit consisting of a multiplier and an accumulator to compute an inner product in a pipelined fashion. After an initial delay of $L$ source sequences, all the modules work simultaneously. This encoder architecture requires computational space (memory) of the order $nLM$ and has constant computation time per source symbol.

\section{Heuristic derivation of the algorithm} \label{subsec:heur}

We now  present a non-rigorous analysis of the encoding algorithm based on the following observations.
\begin{enumerate}

\item For $1 \leq j \leq ML$,  by standard concentration of measure arguments, $\abs{A_j}^2$ is close to $1$ for large $n$.

\item Similarly, for an ergodic source $\abs{s}^2$ is close  to $\sigma^2$ for large $n$.

\item For random variables $X_1,X_2\ldots, X_M \sim_{i.i.d.} \mc{N}(0,1)$, the maximum $\max\{ X_1,\ldots, X_M\}$  concentrates on $\sqrt{2\ln M}$ for large $M$ \cite{DavidNagaraja}.
\end{enumerate}
The deviations of these quantities from their typical values above are precisely characterized in  the proof of the main result  (Section \ref{sec:thm_feasible_proof}).

We begin with the following lemma about projections of  standard normal vectors. \begin{lemma}
Let $A_1, \ldots, A_N \in \reals^n$ be $N$ mutually independent random vectors with i.i.d. $\mc{N}(0,1/n)$ entries. 
Then, for any unit norm random vector $r \in \reals^n$ which is independent of the collection $\{A_j\}_{j=1}^N$, the inner products
\[ T_j  :=  \left\langle   \sqrt{n} A_j,  r \right\rangle, \quad j=1,\ldots,N \]
are i.i.d. $\mc{N}(0,1)$ random variables that are independent of $r$.
\label{lem:inner_prod}
\end{lemma}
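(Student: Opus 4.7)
My plan is to reduce everything to the standard computation for a fixed deterministic unit vector by conditioning on $r$, and then observe that the resulting conditional law does not depend on the value of $r$, which yields independence.

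First, I would condition on the event $\{r = \rho\}$ for an arbitrary unit vector $\rho \in \reals^n$. Since the random vectors $A_1, \ldots, A_N$ are (jointly) independent of $r$ by hypothesis, their joint conditional distribution given $\{r = \rho\}$ is the same as their unconditional distribution, namely $N$ independent copies of an $\mc{N}(0, \tfrac{1}{n}\msf{I}_n)$ vector. Consequently, conditional on $\{r=\rho\}$, the collection $T_j = \sqrt{n}\,A_j^* \rho$, $j=1,\ldots,N$, is jointly independent because each $T_j$ is a measurable function of the independent $A_j$'s. For each fixed $\rho$ with $\|\rho\|=1$, $T_j$ is a linear combination of i.i.d.\ Gaussians and is therefore Gaussian with mean zero and variance
\[ \mathrm{Var}(T_j \mid r = \rho) = n \sum_{i=1}^n \rho_i^2 \cdot \tfrac{1}{n} = \|\rho\|^2 = 1. \]

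Next, since this conditional joint distribution, namely $\bigotimes_{j=1}^N \mc{N}(0,1)$, does not depend on the particular unit vector $\rho$, the conditional law of $(T_1,\ldots,T_N)$ given $r$ is almost surely equal to the product of standard Gaussians. Marginalizing over $r$ (which a.s.\ lies on the unit sphere by hypothesis) preserves this joint law, so $(T_1,\ldots,T_N)$ are unconditionally i.i.d.\ $\mc{N}(0,1)$. Moreover, because the conditional distribution of $(T_1,\ldots,T_N)$ given $r$ is constant in $r$, the vector $(T_1,\ldots,T_N)$ is independent of $r$.

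I do not anticipate any real obstacle: the only subtle point is  being careful that $r$ is a random unit vector rather than a deterministic one, so we have to justify passing from ``for each fixed $\rho$'' to a statement about the random $r$. This is handled cleanly by the conditioning argument above, using the fact that $r$ is independent of $\{A_j\}_{j=1}^N$ so that conditioning on $r$ does not alter the joint distribution of the $A_j$'s. Alternatively, one could invoke the rotational invariance of the i.i.d.\ Gaussian law: there exists an orthogonal transformation (measurable in $r$) mapping $r$ to a canonical unit vector, and this transformation leaves the distribution of each $A_j$ invariant; either route yields the claim.
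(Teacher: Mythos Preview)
Your proposal is correct and essentially matches the paper's approach: the paper simply remarks that the lemma is a straightforward consequence of the rotational invariance of the standard normal distribution (deferring details to a reference), and your conditioning argument is precisely the clean way to unpack that remark. The alternative you mention at the end via an orthogonal transformation is exactly the rotational-invariance phrasing the paper invokes.
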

The lemma is a straightforward consequence of the rotational invariance of the distribution of a  standard normal vector. A proof can be found in \cite[Appendix I]{RVGaussFeasible}.

\emph{Step} $1$: Consider the statistic
\be T^{(1)}_{j} \triangleq \left\langle  \sqrt{n} A_j, \frac{r_{0}}{\norm{r_{0}}} \right\rangle,  \quad  1 \leq  j  \leq M. \ee
Since  $r_0 = s$ is independent of each $A_j$,  by Lemma \ref{lem:inner_prod},  the random variables $T^{(1)}_{j}, \, 1 \leq j \leq M$ are i.i.d. $N(0,1)$. Hence
\be
\max_{1 \leq j \leq M} T^{(1)}_{j} = \left\langle  \sqrt{n} A_{m_1}, \frac{r_{0}}{\norm{r_{0}}}  \right\rangle \approx \sqrt{2 \log M}.
\label{eq:max_T1j}
\ee
The normalized norm of the residual $r_1 = r_0 - c_1 A_{m_1}$ is
\be
\begin{split}
\abs{r_1}^2  & = \abs{r_0}^2 + \frac{c_1^2}{n} \abs{A_{m_1}}^2  - \frac{2 c_1 \norm{r_0}}{n}
\left\langle A_{m_1}, \frac{r_0}{\norm{r_0}} \right\rangle\\
& \stackrel{(a)}{\approx} \abs{r_0}^2  +  \frac{c_1^2}{n}- \frac{2 c_1}{n} \frac{\norm{r_0}}{\sqrt{n}} \sqrt{2\log M}  \\
& \stackrel{(b)}{\approx} \sigma^2 + \frac{c_1^2}{n}  - \frac{2 c_1 \sigma}{n}\sqrt{2\log M}  \stackrel{(c)}{=} \sigma^2\left( 1 - \frac{2R}{L}\right).
\end{split}
\label{eq:R0_R1}
\ee
Here $(a)$ and $(b)$ follow from \eqref{eq:max_T1j} and the observations listed at the beginning of this section, while $(c)$ follows by substituting for $c_1$ from \eqref{eq:ci_def} and using $n= L \log M/R$.

\emph{Step $\ell$, $\ell=2,\ldots,L$}: We show  that if
 $\abs{r_{\ell-1}}^2 \approx \sigma^2\left( 1 - \frac{2R}{L}\right)^{\ell-1}$,
then
\be \abs{r_{\ell}}^2 \approx \sigma^2\left( 1 - \frac{2R}{L}\right)^{\ell}.  \label{eq:Res_i}\ee
We already showed that \eqref{eq:Res_i} is true for $\ell=1$.

For each $j \in \{ (\ell-1)M+1, \ldots, \ell M \}$, consider the statistic
\be T^{(\ell)}_{j} \triangleq \left\langle \sqrt{n} A_j,  \frac{r_{\ell-1}}{\norm{r_{\ell-1}}} \right\rangle.   \label{eq:stat_Tij}\ee
Note that $r_{\ell-1}$ is {independent} of $A_j$ because $r_{\ell-1}$ is a \emph{function} of the source sequence $s$ and the columns $\{A_j\}, \ 1\leq j \leq (\ell-1)M$, which are all independent of  $A_j$ for $j \in \{ (\ell-1)M+1, \ldots, \ell M \}$.  Therefore, by Lemma \ref{lem:inner_prod}, the $T^{(\ell)}_{j}$'s are i.i.d. $\mc{N}(0,1)$ random variables for $j \in \{ (\ell-1)M+1, \ldots, \ell M\}$. Hence, we have
\be
\max_{(\ell-1)M+1 \, \leq j \, \leq \ell M} \; T^{(\ell)}_{j} = \left\langle  \sqrt{n} A_{m_\ell}, \frac{r_{\ell-1}}{\norm{r_{\ell-1}}} \right\rangle \approx \sqrt{2 \log M}.
\label{eq:max_Tij}
\ee
From the expression for $r_\ell$ in \eqref{eq:gen_residue}, we have
\be
\begin{split}
&\abs{r_\ell}^2   = \abs{r_{\ell-1}}^2 + \frac{c_\ell^2}{n} \abs{A_{m_\ell}}^2  - \frac{2 c_\ell}{n} \frac{\norm{r_{\ell-1}}}{\sqrt{n}}
\left\langle \sqrt{n} A_{m_\ell}, \frac{r_{\ell-1}}{\norm{r_{\ell-1}}} \right\rangle\\
& \stackrel{(a)}{\approx} \abs{r_{\ell-1}}^2  + \frac{c_\ell^2}{n}- \frac{2 c_\ell \abs{r_{\ell-1}} }{n}\sqrt{2\log M} \\
& \stackrel{(b)}{\approx} \sigma^2\left( 1 - \frac{2R}{L}\right)^{\ell-1} + \frac{c_\ell^2}{n} - \frac{2 c_\ell \sigma}{n}  \left( 1 - \frac{2R}{L}\right)^{(\ell-1)/2} \sqrt{2\log M}   \\
&   \stackrel{(c)}{=} \sigma^2\left( 1 - \frac{2R}{L}\right)^\ell.
\end{split}
\label{eq:Ri1_Ri}
\ee
For $(a)$ and $(b)$ we have used \eqref{eq:max_Tij} and the induction assumption on $\abs{r_{\ell-1}}$. The equality $(c)$ is obtained by substituting for $c_\ell$ from \eqref{eq:ci_def} and for $n$ from \eqref{eq:ml_nR}. It can be verified that the chosen value of $c_\ell$ minimizes the third line in \eqref{eq:Ri1_Ri}.

Therefore, when the algorithm terminates the final residual satisfies 
\be
\begin{split}
\abs{r_L}^2 = \abs{\bfs - A\hat{\beta}}^2 & \approx  \sigma^2\left( 1 - \frac{2R}{L}\right)^L  {\leq} \ \sigma^2 e^{-2R}
\end{split}
\ee
where we have used the inequality $(1+x) \leq e^{x}$ for  $x \in \mathbb{R}$.

Thus the encoding algorithm picks a codeword $\hat{\beta}$ that yields squared-error distortion approximately equal to $\sigma^2 e^{-2R}$, the Gaussian distortion-rate function at rate $R$. The heuristic analysis above is made rigorous (in the proof of Theorem \ref{thm:rd_feasible}) by bounding the deviation of the residual distortion each stage from its typical value.

\section{Main result} \label{sec:main_result}

\begin{theorem}
Consider a length $n$  source sequence $\bfs$ generated by an ergodic source with mean $0$ and variance $\sigma^2$.
Let $\delta_0, \delta_1, \delta_2$ be any positive constants such that
\be \Delta \triangleq \delta_0 + 5R (\delta_1 + \delta_2) < \frac{1}{2}. \label{eq:del0del1del2}\ee
Let $A$ be an $n \times ML$ design matrix  with i.i.d. $\mc{N}(0,1/n)$ entries and $M,L$ satisfying \eqref{eq:ml_nR}. With the SPARC defined by $A$, the proposed encoding algorithm produces a codeword $A\hat{\beta}$ that satisfies the following for  sufficiently large $M,L$.
\be P\left( \; \abs{\bfs - A\hat{\beta}}^2 \;  >  \; \sigma^2 e^{-2R}(1 + e^R \Delta)^2  \; \right) < p_0+ p_1 + p_2  \label{eq:p_err_ub} \ee
where
\begin{equation}
\begin{split}
p_0  & = P\left( \;  \left| \frac{\abs{\bfs}}{\sigma} -1 \right| > \delta_0 \; \right),  \quad  
p_1  = 2ML \exp\left( -n{\delta_1^2}/{8}\right), \\
p_2 & = \left( \frac{8 \log M}{M^{2\delta_2}} \right)^{L}.
\end{split}
\label{eq:p0p1p2}
\end{equation}
\label{thm:rd_feasible}
\end{theorem}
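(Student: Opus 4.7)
Proof proposal:

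The plan is to define three ``good'' events whose intersection guarantees the claimed distortion bound, and then to bound the failure probability of each separately. The three events correspond exactly to the three heuristic approximations made in Section \ref{subsec:heur}:
\begin{itemize}
\item $\mc{E}_0 = \{|\,|s|/\sigma - 1|\leq \delta_0\}$, so that the source norm is close to its nominal value;
\item $\mc{E}_1 = \{|\,|A_j|^2 - 1|\leq \delta_1 \text{ for all } j\in[ML]\}$, so every column of $A$ has near-unit squared norm;
\item $\mc{E}_2 = \{\max_{j\in\text{sec}(\ell)} T^{(\ell)}_j \geq \sqrt{2\log M}\,(1-\delta_2)\text{ for all } \ell\in[L]\}$, where $T^{(\ell)}_j$ is defined in \eqref{eq:stat_Tij}.
\end{itemize}
Then by a union bound $P(\mc{E}_0^c \cup \mc{E}_1^c \cup \mc{E}_2^c)\leq p_0+p_1+p_2$, so it suffices to show that on $\mc{E}_0\cap\mc{E}_1\cap\mc{E}_2$ the final residual $|r_L|^2$ is at most $\sigma^2 e^{-2R}(1+e^R\Delta)^2$.

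Bounding the failure probabilities is largely mechanical. The bound on $P(\mc{E}_0^c)$ is $p_0$ by definition. For $\mc{E}_1$, each $n|A_j|^2$ is a $\chi^2_n$ variable, so a standard Chernoff bound gives $P(|\,|A_j|^2-1|>\delta_1)\leq 2\exp(-n\delta_1^2/8)$ for $\delta_1\leq 1/2$, and a union bound over the $ML$ columns yields $p_1$. The key observation for $\mc{E}_2$ is the one already highlighted in the heuristic derivation: for each $\ell$ the residual $r_{\ell-1}$ is a measurable function of $\bfs$ and the columns in sections $1,\ldots,\ell-1$, all independent of the columns in section $\ell$. By Lemma \ref{lem:inner_prod}, conditional on $r_{\ell-1}$ the statistics $\{T^{(\ell)}_j\}_{j\in\text{sec}(\ell)}$ are i.i.d.\ $\mc{N}(0,1)$, so the normalized maxima $M^*_\ell := \max_{j\in\text{sec}(\ell)} T^{(\ell)}_j$ are in fact mutually \emph{independent} across $\ell$, each distributed as the maximum of $M$ i.i.d.\ standard normals. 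One then bounds $P(\mc{E}_2^c)$ using this independence together with the Mills-ratio estimate $P(M^*_\ell<\sqrt{2\log M}(1-\delta_2))\leq 8\log M/M^{2\delta_2}$ for large $M$, yielding the claimed form of $p_2$.

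The main work is the deterministic induction on the event $\mc{E}_0\cap\mc{E}_1\cap\mc{E}_2$. Expanding $\|r_\ell\|^2 = \|r_{\ell-1}-c_\ell A_{m_\ell}\|^2$ and dividing by $n$ gives
\begin{equation*}
|r_\ell|^2 = |r_{\ell-1}|^2 + \tfrac{c_\ell^2}{n}\|A_{m_\ell}\|^2 - 2\tfrac{c_\ell}{n}|r_{\ell-1}|\,M^*_\ell.
\end{equation*}
On $\mc{E}_1\cap\mc{E}_2$ we have $\|A_{m_\ell}\|^2/n\leq 1+\delta_1$ and $M^*_\ell\geq \sqrt{2\log M}(1-\delta_2)$. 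Substituting the choice of $c_\ell$ from \eqref{eq:ci_def} and using $L\log M=nR$ turns the recursion into
\begin{equation*}
|r_\ell|^2 \leq |r_{\ell-1}|^2 + \tfrac{2R}{L}\sigma^2(1-\tfrac{2R}{L})^{\ell-1}(1+\delta_1) - \tfrac{4R}{L}\sigma(1-\tfrac{2R}{L})^{(\ell-1)/2}(1-\delta_2)|r_{\ell-1}|.
\end{equation*}
The plan is to maintain the two-sided induction hypothesis $\sigma\sqrt{\alpha_{\ell-1}}(1-\epsilon_{\ell-1})\leq |r_{\ell-1}|\leq \sigma\sqrt{\alpha_{\ell-1}}(1+\epsilon_{\ell-1})$, with $\alpha_{\ell-1}=(1-2R/L)^{\ell-1}$. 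The lower bound on $|r_{\ell-1}|$ is needed for the negative cross-term, and the upper bound for the squared term; a parallel inductive argument for the lower bound on $|r_\ell|$ is needed to keep propagating the hypothesis. Plugging in and normalizing by $\alpha_\ell$ leads to a recursion of the form $(1-2R/L)(1+\epsilon_\ell)^2\leq (1+\epsilon_{\ell-1})^2+(2R/L)\cdot O(\delta_1+\delta_2+\epsilon_{\ell-1})$, with initial value $\epsilon_0=\delta_0$ from $\mc{E}_0$. Iterating this for $L$ steps (and absorbing $(1+x/L)^L\to e^x$ type factors) yields $1+\epsilon_L\leq 1+e^R(\delta_0+5R(\delta_1+\delta_2))=1+e^R\Delta$, so $|r_L|^2\leq \sigma^2\alpha_L(1+e^R\Delta)^2\leq \sigma^2 e^{-2R}(1+e^R\Delta)^2$, as required.

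The hard part is this last inductive step: carrying a \emph{two-sided} control of $|r_\ell|$ through $L$ iterations while keeping the error $\epsilon_\ell$ bounded by $e^R\Delta$ at $\ell=L$. The delicate point is that each step amplifies the error by a factor roughly $(1+2R/L)/(1-2R/L)\approx 1+4R/L$, and the condition $\Delta<1/2$ in \eqref{eq:del0del1del2} is exactly what ensures the accumulated error stays strictly smaller than the ``signal'' $\sqrt{\alpha_\ell}$ so that the lower bound on $|r_\ell|$ remains nontrivial and the induction does not collapse. The explicit constants $5R$ in the definition of $\Delta$ come from bookkeeping this amplification.
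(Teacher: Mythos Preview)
Your high-level architecture (isolate three good events, bound their complements by $p_0,p_1,p_2$, then run a deterministic induction on their intersection) is exactly the paper's. The handling of $\mc{E}_0$ and $\mc{E}_1$ matches the paper and yields the stated $p_0,p_1$. The gap is in $\mc{E}_2$.

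With your definition $\mc{E}_2=\{M^*_\ell\ge\sqrt{2\log M}(1-\delta_2)\text{ for every }\ell\}$ and independent $M^*_\ell$'s, you get $P(\mc{E}_2^c)=1-(1-q)^L\le Lq$, not $q^L$; the form $(8\log M/M^{2\delta_2})^L$ cannot come out of a union bound over $L$ events, however sharp the per-step bound $q$ is. The paper obtains the exponential-in-$L$ shape of $p_2$ by replacing the pointwise condition with the \emph{averaged} one: writing $M^*_\ell=\sqrt{2\log M}\,(1+\epsilon_\ell)$, the good event is $\frac{1}{L}\sum_{\ell}|\epsilon_\ell|<\delta_2$, and $P(\frac{1}{L}\sum|\epsilon_\ell|>\delta_2)$ is controlled by a Chernoff/MGF bound on the i.i.d.\ sum (Lemma~\ref{lem:eps_i}), which is what gives $(8\log M/M^{2\delta_2})^L$. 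The same averaging is stated for the $\gamma$'s, though there the bound $p_1$ is in the end still a union bound over all $ML$ columns, so your $\mc{E}_1$ matches.

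A second, related issue: your $\mc{E}_2$ is one-sided (only a lower bound on each maximum), yet you invoke a two-sided induction on $|r_\ell|$. Propagating the lower bound on $|r_\ell|$ requires an \emph{upper} bound on $M^*_\ell$, which your event does not supply. The paper's averaged condition on $|\epsilon_\ell|$ is automatically two-sided and feeds directly into its two inductive lemmas: a lower bound $\Delta_\ell\ge\Delta_0-\frac{4R}{1-2R/L}\sum_{j\le\ell}\frac{|\gamma_j|+|\epsilon_j|}{L}$ (in particular $\Delta_\ell>-\tfrac12$), and then $|\Delta_\ell|\le|\Delta_0|w^\ell+\frac{4R/L}{1-2R/L}\sum_{j\le\ell}w^{\ell-j}(|\gamma_j|+|\epsilon_j|)$ with $w=1+\frac{R/L}{1-2R/L}$; evaluating at $\ell=L$ and using the averaged bounds gives $|\Delta_L|\le e^R\Delta$. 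So both the probability estimate and the deterministic part hinge on working with the cumulative sums $\sum|\gamma_j|,\sum|\epsilon_j|$ rather than per-step uniform control.
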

\begin{remark}
For a given rate $R$, Theorem \ref{thm:rd_feasible} guarantees that with high probability, the  proposed  encoder   achieves distortion close to $D^*(R)= \sigma^2 e^{-2R}$ for any ergodic sources with variance $\sigma^2$. This complements the result in Theorem \ref{thm:err_exp_rd} for minimum-distance encoding.
\end{remark}
\begin{corollary}
Let  $\{ \mc{S}_n\}_{n \geq 1}$ be a sequence of rate $R$ SPARCs, indexed by block length $n$, with  $M = L^b$, for $b >0$. Then, for an  i.i.d. $\mc{N}(0, \sigma^2)$ source, the sequence $\{ \mc{S}_n\}_{n \geq 1}$ attains the optimal distortion-rate function$D^*(R) = \sigma^2 e^{-2R}$ with the proposed encoder. Furthermore, for any fixed distortion-level above $D^*(R)$, the probability of excess distortion decays exponentially with the block length $n$ for sufficiently large $n$.
\label{corr:Gauss}
\end{corollary}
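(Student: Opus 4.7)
Proof plan for Corollary \ref{corr:Gauss}:

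The plan is to deduce both statements directly from Theorem \ref{thm:rd_feasible} by carefully choosing the parameters $\delta_0, \delta_1, \delta_2$ as a function of the target distortion $D > D^*(R)$, and then showing that each of the three terms $p_0, p_1, p_2$ on the right-hand side of \eqref{eq:p_err_ub} decays exponentially in $n$ under the assumption $M = L^b$. Fix any $D > \sigma^2 e^{-2R}$ and set $\Delta^* := e^{-R}\bigl(\sqrt{D/(\sigma^2 e^{-2R})} - 1\bigr)$, which is strictly positive. The first step is to pick positive constants $\delta_0, \delta_1, \delta_2$ such that $\Delta := \delta_0 + 5R(\delta_1+\delta_2) \leq \min\{\Delta^*, \tfrac{1}{2} - \epsilon\}$ for some small $\epsilon > 0$. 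Then $\sigma^2 e^{-2R}(1+e^R\Delta)^2 \leq D$, so the event whose probability is bounded in \eqref{eq:p_err_ub} contains $\{|\bfs - A\hat\beta|^2 > D\}$.

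Next, I would bound the three terms. For $p_0 = P(||\bfs|/\sigma - 1| > \delta_0)$: for an i.i.d.\ $\mc{N}(0,\sigma^2)$ source, $|\bfs|^2 = \tfrac{1}{n}\sum_{i=1}^n s_i^2$ is a normalized sum of i.i.d.\ chi-squared random variables, so Cram\'er's large deviation theorem yields $p_0 \leq e^{-n\kappa_0(\delta_0)}$ for some $\kappa_0(\delta_0) > 0$. For $p_1 = 2ML \exp(-n\delta_1^2/8)$: using $M = L^b$ and \eqref{eq:rel_nL}, we have $L = \Theta(n/\log n)$, so $ML = L^{b+1}$ is polynomial in $n$, and hence $p_1 \leq e^{-n\delta_1^2/16}$ for all sufficiently large $n$. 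For $p_2 = (8\log M / M^{2\delta_2})^L$: taking logarithms and using $L\log M = nR$ from \eqref{eq:ml_nR}, one obtains
\begin{equation*}
\log p_2 = L\log(8b\log L) - 2\delta_2 \cdot nR,
\end{equation*}
and since $L\log\log L = o(n)$ (because $L = \Theta(n/\log n)$), the first term is dominated by the second, yielding $p_2 \leq e^{-\delta_2 nR}$ for large $n$. Combining the three bounds gives
\begin{equation*}
P(|\bfs - A\hat\beta|^2 > D) \leq e^{-n\kappa_0(\delta_0)} + e^{-n\delta_1^2/16} + e^{-\delta_2 nR},
\end{equation*}
which decays exponentially in $n$, proving the exponential excess-distortion bound. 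Letting $n \to \infty$ then shows $D$ is achievable, and since $D > D^*(R)$ was arbitrary, the sequence attains the Gaussian distortion-rate function.

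There is no serious obstacle here: the main subtlety is simply the bookkeeping in verifying that the rate at which $p_1$ and $p_2$ shrink can be made large enough while keeping the combined deviation $\Delta$ small enough that $\sigma^2 e^{-2R}(1+e^R\Delta)^2 \leq D$. This is an easy convex trade-off because the three $\delta_i$ can all be taken to be strictly positive constants not depending on $n$, and only the constraint $\Delta < 1/2$ (guaranteed once $\Delta^* < 1/2$; otherwise just pick $\Delta$ as small as needed) and the polynomial growth $ML = L^{b+1}$ actually matter.
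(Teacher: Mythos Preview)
Your proposal is correct and follows essentially the same approach as the paper's proof: fix a target distortion above $D^*(R)$, choose the $\delta_i$ as fixed positive constants making $\Delta$ small enough, and then bound $p_0$ via chi-squared concentration, $p_1$ by absorbing the polynomial prefactor $ML=L^{b+1}$ into the exponential, and $p_2$ using $L\log M = nR$ together with $L\log\log L = o(n)$. The paper's argument differs only in cosmetic details (it writes the target distortion as $\sigma^2 e^{-2R}+\gamma$ and uses a direct Chernoff bound for $p_0$), and your handling of the edge case $\Delta^*\geq 1/2$ by simply taking $\Delta$ smaller is equivalent to the paper's ``without loss of generality $\gamma$ is small enough'' remark.
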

\begin{proof}
For a fixed distortion-level $\sigma^2 e^{-2R} + \gamma$ with $\gamma >0$, we can find  $\Delta >0$ such that
$\sigma^2 e^{-2R} + \gamma = \sigma^2 e^{-2R}(1 + e^R \Delta)^2$. Equivalently, $\Delta >0$ satisfies
\be
 \gamma = \sigma^2 \Delta^2 + 2 \Delta e^R \sigma^2.
\label{eq:gam_Delta}
\ee
Without loss of generality, we may assume that $\gamma$ is small enough that  $\Delta$ satisfying \eqref{eq:gam_Delta} lies in the interval $(0, \tfrac{1}{2})$.   For positive constants $\delta_0, \delta_1, \delta_2$ chosen to satisfy \eqref{eq:del0del1del2}, Theorem \ref{thm:rd_feasible} implies that
\be
P\left( \abs{\bfs - A\hat{\beta}}^2 \;  >  \; \sigma^2 e^{-2R} + \gamma \right) < p_0 + p_1 + p_2.
\label{eq:excess_gam}
\ee
 We now obtain upper bounds for $p_0, p_1, p_2$.
 
 For an i.i.d. $\mc{N}(0, \sigma^2)$ source, $\|S\|^2/\sigma^2$ is a $\chi^2_n$ random variable. A standard Chernoff bound yields 
 \be p_0  <  2 \exp(- 3n \delta_0^2/4). \label{eq:p0_bound} \ee
Since $ML=L^{b+1}$  grows polynomially in $n$, the term $p_1$ in \eqref{eq:p0p1p2} can be expressed as 
\be
p_1 = \exp\left(-n\left(\tfrac{\delta_1^2}{8} - O( \tfrac{\log n}{n}) \right)\right).
 \label{eq:p1_bound}
\ee
Finally using $M=L^b= \Theta((n/\log n)^{b})$, we have
\be
 p_2 = \exp\left( -n \left( 2 \delta_2 R - O\left( \tfrac{ \log \log n}{\log n} \right)\right) \right).
 \label{eq:p2_bound}
\ee
 Using \eqref{eq:p0_bound}, \eqref{eq:p1_bound} and \eqref{eq:p2_bound} in \eqref{eq:excess_gam}, we conclude that for any fixed distortion-level $D^*(R) + \gamma$, the probability of excess distortion decays exponentially in $n$ when $n$ is sufficiently large.
\end{proof}

\subsection{Gap from $D^*(R)$}  \label{sec:gapDR}

For a fixed $R$, to achieve distortions close to the optimal distortion-rate function $D^*(R)=\sigma^2 e^{-2R}$, we need $p_0,p_1,p_2$ to all go to $0$. Ergodicity of the source ensures that that $p_0 \to 0$ as $n \to \infty$ (at a rate depending only on the source distribution). For $p_2$ to tend to  $0$ with growing $L$, from \eqref{eq:p0p1p2}  we require that
$ {M^{2\delta_2}} > {8 \log M}$.
Or,
\be
\delta_2 > \frac{\log \log M}{2 \log M} + \frac{\log 8}{2 \log M}.
\label{eq:delta_min}
\ee
To approach $D^*(R)$, we need $n, L, M$ to all go to $\infty$ while satisfying \eqref{eq:ml_nR}: $n,L$ need to be large
for the probability of error in  \eqref{eq:p0p1p2} to be small, while $M$ needs to be large in order to allow $\delta_2$ to be small according to \eqref{eq:delta_min}.

When $M=L^b$,  both $L, M$ grow polynomially in $n$, and \eqref{eq:delta_min} implies that the gap from the optimal distortion $D^*(R)$ is  $\Theta\left( \frac{\log \log n}{\log n} \right)$. On the other hand, if we choose $M = \kappa \log n$ for $\kappa >0$, we have $L = \frac{nR}{\log (\kappa \log n)}$. In this case,  the gap $\delta_2$ from  \eqref{eq:delta_min} is approximately $\frac{ \log \log  \log n}{ \log \log n}$, i.e., the convergence to $D^*(R)$ with $n$ is much slower. However, the per-sample computational complexity is $\Theta \left( \frac{n \log n}{\log \log n}\right)$,  lower than the previous case, where the per-sample complexity was $\Theta\left(({n}/{\log n})^{b+1}\right)$.

At the other extreme,  $L=1, M=e^{nR}$ reduces to the Shannon-style random codebook with. In this case, the SPARC consists of only one section and the proposed algorithm is essentially minimum-distance encoding. The computational complexity is $O(e^{nR})$, while the gap $\delta_2$ from \eqref{eq:delta_min} is approximately $\frac{\log n}{n}$. The gap $\Delta$ from $D^*(R)$ is now dominated by $\delta_0$ and $\delta_1$ which are $\Theta(1/\sqrt{n})$, consistent with the results in \cite{SakrisonFin, IngberKochman, KostinaV12}.\footnote{ For $L=1$,  the factor $ML$ that multiplies the exponential term in $p_2$ can be eliminated via a sharper analysis.}

An interesting direction for future work is to design encoding algorithms with faster convergence to $D^*(R)$ while still having complexity that is polynomial in $n$.

\subsection{Successive refinement interpretation}

The  encoding algorithm may be interpreted in terms of successive refinement source coding \cite{EqCover91, Rimoldi94}. We can think of each section of the design matrix $A$ as a lossy codebook of rate $R/L$. For each section $\ell$, $i=1, \ldots, L$, the residual $r_{\ell-1}$ acts as the `source' sequence, and the algorithm attempts to find the column \emph{within} the section that minimizes the distortion. The distortion after section $\ell$  is the variance of the residual $r_\ell$; this residual  acts as the source sequence for section $\ell-1$.  Recall  that the minimum mean-squared distortion achievable with a Gaussian codebook  at rate $R/L$ is  \cite{Lapidoth97}
\be
\begin{split}
D^*_\ell & = \abs{r_{\ell-1}}^2 \exp(-2R/L)   \approx  \abs{r_{\ell-1}}^2 \left( 1 - \frac{2R}{L} \right), \   \text{ for } R/L \ll 1. 
\end{split}
\label{eq:Di_star}
\ee
This minimum distortion can be attained with a codebook with elements chosen $\sim_{i.i.d.} \mc{N}(0, \abs{r_{\ell-1}}^2 - D^*_\ell)$. From \eqref{eq:ci_def}, recall that the codeword variance in section $i$ of the codebook is
\be
 c_\ell^2 = \frac{2R \sigma^2}{L} \left( 1- \frac{2R}{L} \right)^{\ell-1} \approx \abs{r_{\ell-1}}^2 - D^*_\ell,
\ee
    where the approximate equality follows from \eqref{eq:Di_star} and \eqref{eq:Res_i}. Therefore, the typical value of the distortion in Section $i$ is close to $D^*_\ell$ since the algorithm is equivalent to minimum-distance encoding within each section. However, since the rate $R/L$ is infinitesimal, the deviations from $D^*_\ell$  in each section can be significant.  Despite this,  when the number of sections $L$ is large, Theorem \ref{thm:rd_feasible} guarantees that the {final} distortion $\abs{r_L^2}$ is close to the typical value $\sigma^2 e^{-2R}$. 

A similar successive refinement approach was used in \cite{no2016rateless} to construct a lossy compression scheme that shares some similarities with the successive cancellation encoder.

\section{Simulation results}

\begin{figure}
\begin{center}
\includegraphics[width=4.5in]{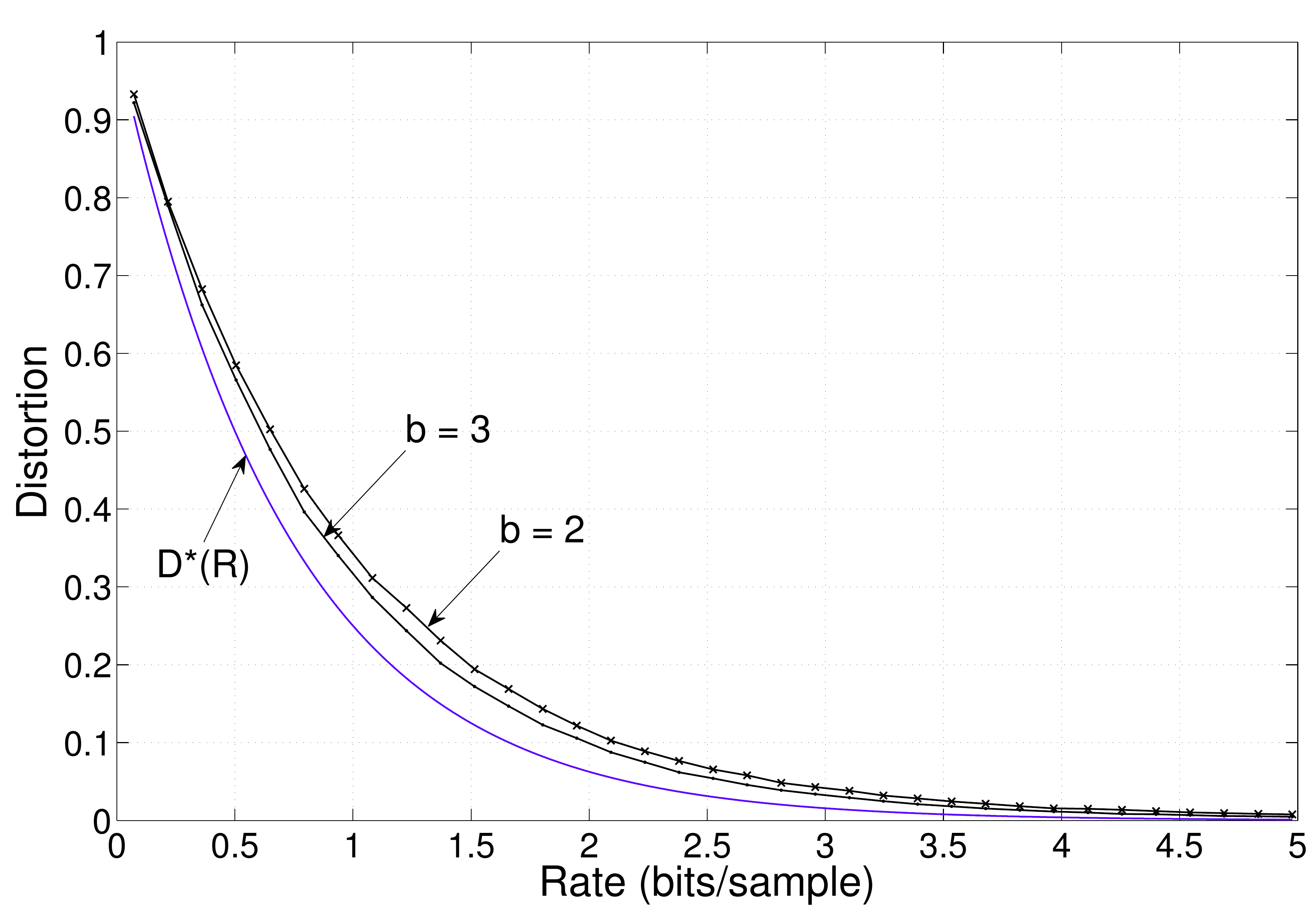}
\includegraphics[width=4.49in]{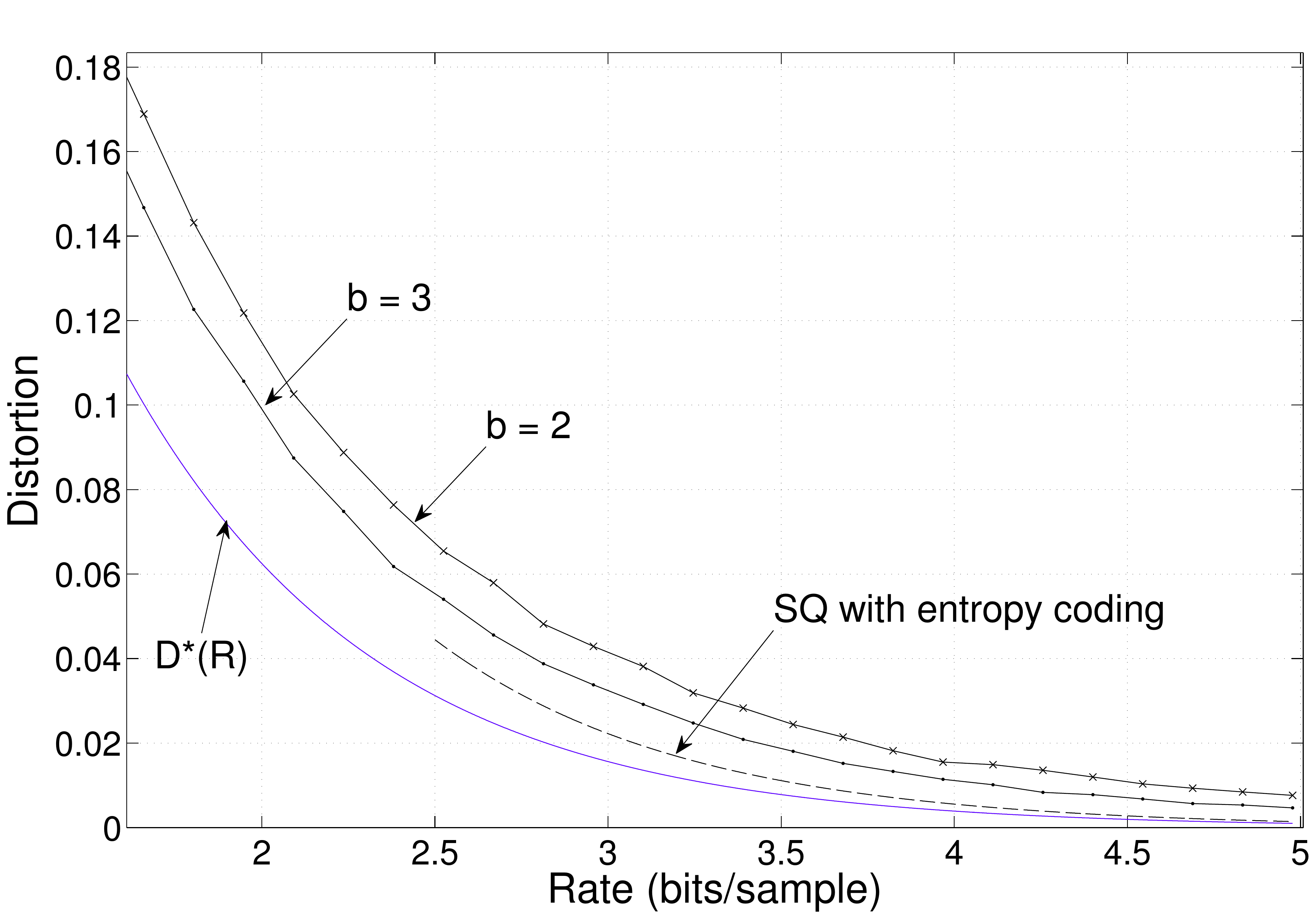}
\caption{\small{Top: Average distortion of the proposed encoder for i.i.d $\mc{N}(0,1)$ source. The design matrix has dimension $n \times ML$ with $M=L^b$. The distortion-rate performance is shown for $b=2$ and $b=3$ along with $D^*(R)=e^{-2R}$.
Bottom: Focusing on the higher rates. The dashed line is the high-rate approximation for the distortion-rate function of an optimal entropy-coded scalar quantizer. }}
\label{fig:sparc_perf}
\end{center}
\end{figure}

In this section, we examine the empirical rate-distortion performance  of the encoder via numerical simulations.  The top graph in Fig. \ref{fig:sparc_perf} shows the performance on a unit variance i.i.d Gaussian source. The dictionary dimension is $n \times M L$ with $M=L^b$.  The curves show the average distortion at various rates for $b=2$ and $b=3$. The average was obtained from $70$ random trials at each rate. Following convention, rates are plotted in bits rather than nats. The value of $L$ was increased with rate in order to keep the total computational complexity ($\propto nL^{b+1}$) similar across different rates. Recall from \eqref{eq:ml_nR} that the block length is determined by
\[ n = \frac{b L \log L}{R}. \]
For example, for the rates  $1.082, 2.092, 3.102$ and $4.112$ bits/sample, $L$ was chosen to be $46,66,81$ and $97$, respectively.   The corresponding values for the block length are $n=705,573,497,468$ for $b=3$, and $n=470,382,331,312$ for $b=2$. The graph shows the reduction in distortion obtained by increasing $b$ from $2$ to $3$. This reduction comes at the expense of an increase in computational complexity by a factor of $L$.
Simulations were also performed for a unit variance Laplacian source. The resulting distortion-rate curve was virtually identical to Fig. \ref{fig:sparc_perf}, which is consistent with Theorem \ref{thm:rd_feasible}.

For the simulations, a slightly modified version of the algorithm in Section \ref{sec:alg_desc} was used: the column selected in each iteration was based on minimum distance from the residual, rather than on maximum correlation as in \eqref{eq:max_stepi}. That is, 
\be
\begin{split}
 m_\ell  & =  \underset{j: \ (\ell-1)M < \; j \; \leq \ell M }{\operatorname{argmin}} \norm{r_{\ell-1} - c_\ell A_j }^2. 
\end{split}
\label{eq:mod_col_sel}
\ee
Though the two rules are similar for large $n$ (since $\| A_j \| \approx 1$ for all $j$), we found the distance-based rule to give slightly better empirical performance. 

Gish and Pierce \cite{GishPierce68} showed that uniform quantizers with entropy coding are nearly optimal at high rates and that their distortion for a unit variance source is well-approximated by $\frac{\pi e}{6} e^{-2R}$. ($R$ is the entropy of the quantizer in nats.) The bottom graph of Fig. \ref{fig:sparc_perf} zooms in on the higher rates and shows the above high-rate approximation for the distortion of an optimal entropy-coded scalar quantizer (EC-SQ). Recall from \eqref{eq:delta_min} that the distortion gap from $D^*(R)$ is of the order of \footnote{The constants in Theorem \ref{thm:rd_feasible} are not optimized, so the theorem does not give a very precise estimate of the excess distortion in the high-rate, low-distortion regime.}
 \[ \delta_2 \approx \frac{\log \log M}{2 \log M} = \frac{\log b  + \log \log L}{2 b \log L}, \] which is comparable to the optimal $D^*(R)= e^{-2R}$ in the high-rate region. (In fact, $\delta_2$ is larger than $D^*(R)$ at rates greater than $3$ bits for the values of $L$ and $b$ we have used.) This explains the large  {ratio} of the empirical distortion to $D^*(R)$ at higher rates. 
 
  In summary, the proposed encoder has good empirical performance, especially at low to moderate rates even with modest values of $L$ and $b$. At high rates, there are a few other compression schemes including EC-SQs and the shape-gain quantizer of \cite{HamkZ02} whose empirical rate-distortion performance is close to optimal (see \cite[Table III]{HamkZ02}).

\section{Proof of Theorem \ref{thm:rd_feasible}}  \label{sec:thm_feasible_proof}

The proof involves analyzing the deviation from the typical values of the residual distortion at each step of the encoding algorithm.  In particular, we have to deal with atypicality concerning the source sequence, the design matrix, and the maximum  computed in each step of the algorithm.

We introduce some notation to capture the deviations from the typical values. Define $\Delta_0$ via 
\be
\abs{s}^2 = \abs{r_0}^2 = \sigma^2(1+\Delta_0)^2.
\ee
The deviation of the norm of the residual at stage $i=1, \ldots, L$ from its typical value is captured by $\Delta_i$, defined via
\be
\abs{r_i}^2 = \sigma^2 \left( 1- \frac{2R}{L} \right)^i (1+ \Delta_i)^2.
\label{eq:Deli_def}
\ee

The deviation in  the norm of $A_{m_i}$ (the column  chosen in step $i$) is captured by $\gamma_i$, defiend as 
\be
\abs{A_{m_i}}^2 = 1+ \gamma_i, \quad i=1,\ldots, L.
\label{eq:Ami_dev}
\ee
Recall that the statistics $T^{(i)}_j$ defined  in \eqref{eq:stat_Tij} are i.i.d $\mc{N}(0,1)$  for $j \in \{(i-1)M+1, \ldots, iM \}$.
We write
\be
\begin{split}
\max_{(i-1)M+1 \leq \, j \, \leq iM} \, T^{(i)}_{j} & = \left\langle A_{m_i}, \frac{r_{i-1}}{\norm{r_{i-1}}} \right\rangle \\
& = \sqrt{2 \log M}(1+ \epsilon_i), \quad i=1, \ldots, L.
\end{split}
\ee
The $\epsilon_i$ measure the deviations of the maximum from $\sqrt{2\log M}$ in each step.

With this notation, using the expression for $r_i$ from \eqref{eq:gen_residue} we  have
\be
\begin{split}
& \abs{r_i}^2  =   \sigma^2\left(1 - \frac{2R}{L}\right)^{i}(1+\Delta_i)^2 \\
& = \abs{r_{i-1}}^2 + c_i^2 \abs{A_{m_i}}^2  - \frac{2 c_i \norm{r_{i-1}}}{n}
\left\langle A_{m_i}, \frac{r_{i-1}}{\norm{r_{i-1}}} \right\rangle\\
&= \sigma^2\left(1 - \frac{2R}{L}\right)^{i-1} (1 + \Delta_{i-1})^2 + c_i^2 (1+ \gamma_i) \\
& \quad - {2c_i \sigma \left(1 - \frac{2R}{L}\right)^{\frac{i-1}{2}} (1+\Delta_{i-1})} \sqrt{\frac{2\log M}{n}} (1+\epsilon_{i}) \\
&= \sigma^2\left(1 - \frac{2R}{L}\right)^{i} \Bigg( (1+\Delta_{i-1})^2   +  \frac{\frac{2R}{L}}{1 - \frac{2R}{L}}( \Delta_{i-1}^2 + \gamma_i -2\epsilon_i(1+\Delta_{i-1})) \Bigg).
\end{split}
\label{eq:expand_R1}
\ee
From \eqref{eq:expand_R1}, we obtain 
\be
\begin{split}
 (1+\Delta_i)^2   = (1+\Delta_{i-1})^2 + \frac{\frac{2R}{L}}{1 - \frac{2R}{L}}( \Delta_{i-1}^2 + \gamma_i -2\epsilon_i(1+\Delta_{i-1}) ), \ \  i \in [L].
\end{split}
\label{eq:Deli_Deli1}
\ee
The goal is to bound the final distortion  given by
\be \abs{r_L}^2 = \sigma^2 \left( 1 - \frac{2R}{L}\right)^L (1 + \Delta_L)^2 .\ee
We would like to find an upper bound for $(1 + \Delta_L)^2$ that holds under an event whose probability is close to $1$.
Accordingly, define $\mc{A}$ as the event where \emph{all} of the following hold:
\begin{enumerate}
\item $\abs{\Delta_0} < \delta_0$,
\item $\sum_{i=1}^L \frac{\abs{\gamma_i}}{L}< \delta_1$,
\item $\sum_{i=1}^L \frac{\abs{\epsilon_i}}{L} < \delta_2$,
\end{enumerate}
for $\delta_0, \delta_1, \delta_2$ that satisfy \eqref{eq:del0del1del2}. We upper bound the probability of the event $\mc{A}^c$  using the following  large deviations bounds.

\begin{lemma}
For $\delta \in (0,1]$,  $P\left( \frac{1}{L} \sum_{i=1}^L \abs{\gamma_i}  > \delta \right) < 2 ML \; \exp\left( -n{\delta^2}/{8} \right).$
\label{lem:gamma_i}
\end{lemma}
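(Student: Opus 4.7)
The plan is to reduce the tail bound on the average deviation $\frac{1}{L}\sum_i |\gamma_i|$ to a tail bound on the maximum column-norm deviation over all $ML$ columns of $A$, and then apply a chi-squared concentration inequality together with a union bound.

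First I would observe that since the average of nonnegative quantities is bounded above by the maximum,
\[
\Bigl\{\tfrac{1}{L}\sum_{i=1}^L |\gamma_i| > \delta\Bigr\} \subseteq \Bigl\{\max_{1\le i\le L} |\gamma_i| > \delta\Bigr\}
\subseteq \Bigl\{\max_{1\le j\le ML} \bigl| \|A_j\|^2 - 1 \bigr| > \delta\Bigr\},
\]
where in the last step I use the definition $|A_{m_i}|^2 = \|A_{m_i}\|^2/n = 1+\gamma_i$ (recall $|\cdot|$ denotes the normalized norm), together with the fact that each chosen index $m_i$ lies in $[ML]$. This move is crucial because it removes the data dependence of the index $m_i$ on $A$: the right-hand event depends only on the marginal distribution of each column's norm, not on the encoding dynamics.

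Next, since the entries of $A_j$ are i.i.d.\ $\mc{N}(0,1/n)$, the random variable $n\|A_j\|^2$ is distributed as $\chi^2_n$, so $\|A_j\|^2$ has mean $1$ and its tails are controlled by a standard Chernoff / Laurent--Massart bound on the chi-squared distribution. In particular, for $\delta \in (0,1]$,
\[
P\bigl(\bigl|\|A_j\|^2 - 1\bigr| > \delta\bigr) \le 2\exp(-n\delta^2/8).
\]
I would simply cite (or quickly derive) this from the moment generating function of the chi-squared distribution.

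Finally, applying a union bound over the $ML$ columns of $A$ yields
\[
P\Bigl(\max_{1\le j\le ML} \bigl| \|A_j\|^2 - 1 \bigr| > \delta\Bigr) \le 2ML\exp(-n\delta^2/8),
\]
which, combined with the inclusion of events above, gives the claimed inequality. There is no real obstacle here; the only subtlety worth flagging explicitly is that although $A_{m_i}$ depends on the source and on earlier columns through the greedy selection rule \eqref{eq:max_stepi}, the event $\{|\gamma_i|>\delta\}$ is contained in $\{\max_j |\|A_j\|^2-1|>\delta\}$, so this dependence is harmless. This is the same idea used repeatedly in the analysis: passing from algorithm-dependent quantities to worst-case bounds over the underlying i.i.d.\ Gaussian columns.
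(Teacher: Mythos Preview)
Your approach is correct and matches the argument the paper points to: reduce to the maximum over all $ML$ columns, apply a two-sided chi-squared tail bound $P(|\,\|A_j\|^2-1\,|>\delta)\le 2\exp(-n\delta^2/8)$ for $\delta\in(0,1]$, and union bound, which is exactly what yields the factor $2ML$ in the statement. One minor slip: with entries $\mc{N}(0,1/n)$ it is $\|A_j\|^2$ itself (not $\|A_j\|^2/n$) that has mean $1$ and satisfies $n\|A_j\|^2\sim\chi^2_n$, so the correct identification is $\gamma_i=\|A_{m_i}\|^2-1$; your inclusions and tail bound already use this correctly, so the argument goes through unchanged.
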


\begin{lemma} \label{lem:eps_i}
For $\delta>0$, $P\left( \frac{1}{L} \sum_{i=1}^L \abs{\epsilon_i}  > \delta \right) <  \left( \frac{M^{2\delta}}{8 \log M} \right)^{-L}$.
\end{lemma}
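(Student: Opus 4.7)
The plan has three parts: establish the mutual independence of $\epsilon_1, \ldots, \epsilon_L$, apply an exponential Markov (Chernoff) inequality to the i.i.d.\ sum, and bound the resulting moment generating function.

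For independence, I would argue by induction. Fix $i$; the residual $r_{i-1}$ is a measurable function of the source $s$ and of the columns of $A$ in sections $1,\ldots,i-1$. Conditional on $r_{i-1}$, Lemma \ref{lem:inner_prod} applied to the fresh columns of section $i$ (which are independent of $r_{i-1}$) shows that the statistics $\{T^{(i)}_j\}_{j \in \text{sec}(i)}$ are i.i.d.\ $\mc{N}(0,1)$ and independent of $r_{i-1}$. Thus $Y_i := \max_{j \in \text{sec}(i)} T^{(i)}_j$ has the distribution of the maximum of $M$ i.i.d.\ standard normals and is independent of the $\sigma$-algebra generating $r_{i-1}$, and in particular of $Y_1, \ldots, Y_{i-1}$. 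Iterating yields that $Y_1,\ldots,Y_L$ are i.i.d., and so are the variables $\epsilon_i = Y_i/\sqrt{2\log M}-1$.

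By i.i.d.\ Chernoff, for any $\lambda > 0$,
\[
P\!\left(\frac{1}{L}\sum_{i=1}^L |\epsilon_i| > \delta\right) \leq e^{-\lambda L\delta}\bigl(E[e^{\lambda|\epsilon_1|}]\bigr)^L.
\]
I would choose $\lambda = 2\log M$, so that $e^{-\lambda L\delta} = M^{-2L\delta}$ reproduces the $M^{2\delta}$ factor in the target bound. It then remains to prove the moment generating function estimate $E[e^{(2\log M)|\epsilon_1|}] \leq 8\log M$.

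This MGF estimate is the main technical obstacle. Writing $t := \sqrt{2\log M}$, the task becomes $E[e^{t|Y-t|}] \leq 8\log M$ where $Y$ is the maximum of $M$ i.i.d.\ standard normals. I would split on the sign of $Y-t$. The upper tail is easy: using $e^{tY} = \max_j e^{tX_j} \leq \sum_j e^{tX_j}$ gives $E[e^{tY}] \leq Me^{t^2/2}$, so the upper-tail contribution is at most $e^{-t^2}\cdot Me^{t^2/2} = Me^{-t^2/2} = 1$. For the lower tail $E[e^{t(t-Y)}\mathbf{1}_{Y<t}]$, naive bounds using $Y \geq X_1$ are far too wasteful (they give $M^3$); instead I would integrate against the density $f_Y(y) = M\phi(y)\Phi(y)^{M-1}$, complete the square to obtain $e^{t(t-y)}\phi(y) = e^{3t^2/2}\phi(y+t)$, and exploit the sharp decay of $\Phi(y)^{M-1}$ as $y$ falls below $t$. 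Controlling $\Phi(y)^{M-1}$ via $(1-Q(y))^{M-1}\leq e^{-(M-1)Q(y)}$ together with Mills-ratio estimates for the normal tail $Q(y):=1-\Phi(y)$ concentrates the mass of the integrand in an $O(1/t)$ window near $y \approx t$; a saddle-point calculation produces a lower-tail contribution of order $\log M$, while the contribution from $y \leq 0$ is negligible because $\Phi(y)^{M-1} \leq 2^{-(M-1)}$ there. Combining the two tails gives the stated MGF bound, and the lemma follows.
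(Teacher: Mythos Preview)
Your approach is correct: establish that the $\epsilon_i$ are i.i.d.\ via Lemma~\ref{lem:inner_prod}, apply Chernoff with $\lambda=2\log M$, and reduce to the MGF estimate $E[e^{t|Y-t|}]\le 8\log M$ (with $t=\sqrt{2\log M}$), which you handle by the upper/lower tail split using $e^{tY}\le\sum_j e^{tX_j}$ on one side and density integration with $\Phi(y)^{M-1}\le e^{-(M-1)Q(y)}$ on the other. The paper does not give its own proof but defers to Appendix~III of \cite{RVGaussFeasible}; your Chernoff-plus-MGF argument is the natural route and matches what is done there (in fact the lower tail comes out $O(\sqrt{\log M})$, comfortably inside the $8\log M$ budget).
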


The proofs of these lemmas can be found in Appendix II and III of \cite{RVGaussFeasible}, respectively.  

Using these lemmas, we have
\be
P(\mc{A}^c) < p_0 + p_1 + p_2
\label{eq:PAc}
\ee
where $p_0, p_1, p_2$ are given by \eqref{eq:p0p1p2}.
The remainder of the proof consists of obtaining a bound for  $(1 + \Delta_L)^2$  under the condition that $\mc{A}$ holds.


\begin{lemma}
For all sufficiently large $L$, when $\mc{A}$ holds  we have
\be
\Delta_{i} \geq \Delta_0  - \frac{4R}{1-2R/L} \left( \sum_{j=1}^{i} \frac{\abs{\gamma_j} + \abs{\epsilon_j}}{L}\right), \quad i=1, \ldots,L.
\label{eq:Deli_ind}
\ee
In particular, $\Delta_i > -\frac{1}{2}, \quad i=1, \ldots ,L$
\label{lem:Deli_LB}
\end{lemma}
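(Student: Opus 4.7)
The plan is a straightforward induction on $i$, using the key recursion \eqref{eq:Deli_Deli1}. The base case $i=0$ is immediate since the claimed bound reduces to $\Delta_0 \geq \Delta_0$, and $|\Delta_0| < \delta_0 \leq \Delta < 1/2$ by condition \eqref{eq:del0del1del2}, so $\Delta_0 > -1/2$.

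For the inductive step, my approach would be to factor the recursion \eqref{eq:Deli_Deli1} as a difference of squares,
\[
(2 + \Delta_i + \Delta_{i-1})(\Delta_i - \Delta_{i-1}) \;=\; \frac{2R/L}{1-2R/L}\bigl(\Delta_{i-1}^2 + \gamma_i - 2\epsilon_i(1+\Delta_{i-1})\bigr),
\]
and obtain a lower bound on the increment $\Delta_i - \Delta_{i-1}$. Dropping the non-negative term $\Delta_{i-1}^2$ yields
\[
\Delta_i - \Delta_{i-1} \;\geq\; \frac{1}{2+\Delta_i+\Delta_{i-1}}\cdot\frac{2R/L}{1-2R/L}\bigl(-|\gamma_i| - 2|\epsilon_i|(1+|\Delta_{i-1}|)\bigr).
\]
By the inductive hypothesis $\Delta_{i-1} > -1/2$, and I would set up the induction so that I simultaneously prove \eqref{eq:Deli_ind} and the property $\Delta_i > -1/2$ at step $i$; under this combined hypothesis the denominator satisfies $2+\Delta_i+\Delta_{i-1} > 1$, and the induction hypothesis also forces $|\Delta_{i-1}|$ to be small (controlled by the telescoped sum), so $1+|\Delta_{i-1}|$ is close to $1$. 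This lets me bound $|\gamma_i| + 2|\epsilon_i|(1+|\Delta_{i-1}|) \leq 2(|\gamma_i|+|\epsilon_i|)$ for large enough $L$ under event $\mc{A}$.

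Telescoping the increment bound from $j=1$ to $j=i$ then gives
\[
\Delta_i \;\geq\; \Delta_0 \;-\; \frac{4R}{1-2R/L}\sum_{j=1}^{i}\frac{|\gamma_j|+|\epsilon_j|}{L},
\]
which is the claimed inequality \eqref{eq:Deli_ind}. To close the induction and conclude $\Delta_i > -1/2$, I would combine this with the event $\mc{A}$, under which $\sum_j |\gamma_j|/L < \delta_1$ and $\sum_j |\epsilon_j|/L < \delta_2$, together with $\Delta_0 > -\delta_0$, to obtain
\[
\Delta_i \;>\; -\delta_0 \;-\; \frac{4R}{1-2R/L}(\delta_1 + \delta_2).
\]
For all sufficiently large $L$, the factor $4R/(1-2R/L)$ is less than $5R$, so the right-hand side exceeds $-(\delta_0 + 5R(\delta_1+\delta_2)) = -\Delta > -1/2$ by hypothesis \eqref{eq:del0del1del2}.

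The main technical nuisance will be the circular dependence in the denominator bound: the step-$i$ increment involves $\Delta_i$ itself via $2+\Delta_i+\Delta_{i-1}$. I would handle this by strengthening the induction hypothesis to simultaneously carry the lower bound \eqref{eq:Deli_ind} and the uniform smallness statement $|\Delta_j| < 1/2$ (equivalently a matching upper bound on $\Delta_j$, which falls out of a similar telescoping argument applied to the upper bound side of the recursion, again using that $\Delta_{j-1}^2$ stays small under the induction). This bootstrap---smallness of past $\Delta$'s implies smallness of the current one---is the real crux, and once it is set up cleanly the stated bound on $\Delta_i$ follows by dropping $\Delta_{i-1}^2$ and telescoping.
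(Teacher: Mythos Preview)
Your overall plan—induct on the recursion \eqref{eq:Deli_Deli1}, extract an increment bound, telescope, then invoke event $\mc{A}$ to conclude $\Delta_i>-\tfrac{1}{2}$—matches the paper's. But there are two concrete problems with your execution.

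First, the circularity you flag is real and your proposed fix does not close it. Strengthening the induction hypothesis to include $|\Delta_j|<\tfrac{1}{2}$ for $j\le i-1$ still says nothing about $\Delta_i$, yet your denominator bound $2+\Delta_i+\Delta_{i-1}>1$ requires $\Delta_i>-\tfrac{1}{2}$ at step $i$. The paper sidesteps this entirely: since $1+\Delta_i\ge 0$ by definition (it is a normalized residual norm), one bounds $(1+\Delta_i)^2$ from below, takes the nonnegative square root, and applies $\sqrt{1-x}\ge 1-x$ to obtain
\[
\Delta_i \;\ge\; \Delta_{i-1} - \frac{2R/L}{1-2R/L}\left(\frac{|\gamma_i|}{1+\Delta_{i-1}} + 2|\epsilon_i|\right),
\]
with no $\Delta_i$ on the right-hand side. (Equivalently, in your factorization you could have used $2+\Delta_i+\Delta_{i-1}\ge 1+\Delta_{i-1}$, which follows from $1+\Delta_i\ge 0$ and recovers exactly this bound without circularity.)

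Second, your inequality $|\gamma_i| + 2|\epsilon_i|(1+|\Delta_{i-1}|) \le 2(|\gamma_i|+|\epsilon_i|)$ is false: it rearranges to $2|\epsilon_i|\,|\Delta_{i-1}|\le |\gamma_i|$, which has no reason to hold. Notice that in the paper's display above, $1+\Delta_{i-1}$ sits only in the \emph{denominator} of the $|\gamma_i|$ term, so the induction hypothesis $\Delta_{i-1}>-\tfrac{1}{2}$ alone (giving $\tfrac{1}{1+\Delta_{i-1}}<2$) yields $\tfrac{|\gamma_i|}{1+\Delta_{i-1}}+2|\epsilon_i|<2(|\gamma_i|+|\epsilon_i|)$ and hence the constant $4R$ in \eqref{eq:Deli_ind}. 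Your factorization instead places $1+\Delta_{i-1}$ in the \emph{numerator} multiplying $|\epsilon_i|$, which would require a separate upper bound on $\Delta_{i-1}$ and in any case would not reproduce the stated constant.
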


\begin{proof}
We first show that $\Delta_i > -\frac{1}{2}$ follows from \eqref{eq:Deli_ind}. Indeed, \eqref{eq:Deli_ind} implies that
\be
\begin{split}
\Delta_{i} & \geq \Delta_0  - \frac{4R}{1-2R/L} \left( \sum_{j=1}^{i} \frac{\abs{\gamma_j} + \abs{\epsilon_j}}{L}\right) \stackrel{(a)}{>} - \delta_0 - {5R}\left(\delta_1  + \delta_2 \right) \stackrel{(b)}{>}  -\frac{1}{2} \label{eq:final_deli}
\end{split}
\ee
where $(a)$ is obtained from the conditions of $\mc{A}$ while $(b)$ holds due to \eqref{eq:del0del1del2}.

The statement \eqref{eq:Deli_ind} trivially holds for $i=0$. Towards induction, assume \eqref{eq:Deli_ind} holds for $i-1$ for some $i \in \{1, \ldots, L\}$. From \eqref{eq:Deli_Deli1}, we obtain
\begin{align}
& (1+\Delta_i)^2 = (1+\Delta_{i-1})^2 + \frac{2R/L}{1 - 2R/L}( \Delta_{i-1}^2 +  \gamma_i - 2\epsilon_i(1+\Delta_{i-1}) ) \nonumber \\
& \geq (1+\Delta_{i-1})^2  -  \frac{2R/L}{1 - 2R/L}(\abs{\gamma_i} + 2\abs{\epsilon_i}(1+\Delta_{i-1}) ).
\label{eq:st1}
\end{align}
For $L$ large enough, the right side above is positive and we therefore have
\begin{align}
& (1 + \Delta_i)  \geq (1+ \Delta_{i-1})
\left[ 1 - \frac{2R/L}{1 - 2R/L} \left[ \frac{\abs{\gamma_i}}{(1+\Delta_{i-1})^2} + \frac{2\abs{\epsilon_i}}{1+\Delta_{i-1}} \right)  \right]^{\frac{1}{2}} \nonumber \\
& \geq   1 + \Delta_{i-1} - \frac{2R/L}{1 - 2R/L} \left( \frac{\abs{\gamma_i}}{(1+\Delta_{i-1})} +  {2\abs{\epsilon_i}} \right),
\label{eq:st2}
\end{align}
where the second  inequality  is obtained using $\sqrt{1-x} \geq 1-x$  for $x\in(0,1)$.  We therefore have
\be
\begin{split}
\Delta_i & \geq \Delta_{i-1} - \frac{2R/L}{1-2R/L} \left( \frac{\abs{\gamma_i}}{(1+\Delta_{i-1})} +  {2\abs{\epsilon_{i}}} \right) \\
& \stackrel{(a)}{\geq}  \Delta_{i-1} - \frac{2R/L}{1-2R/L} \left( 2{\abs{\gamma_i}} +  {2\abs{\epsilon_{i}}} \right) \\
& \stackrel{(b)}{\geq}  \Delta_0  - \frac{4R}{1-2R/L} \left( \sum_{j=1}^{i-1} \frac{\abs{\gamma_j} + \abs{\epsilon_j}}{L}\right)  - \frac{4R/L}{1-2R/L}( \abs{\gamma_i} + \abs{\epsilon_i}).
\end{split}
\label{eq:Del1_Del0_rel}
\ee
In the chain above, $(a)$ holds because $\Delta_{i-1} > \frac{1}{2}$, a consequence of the induction hypothesis as shown in \eqref{eq:final_deli}. $(b)$ is obtained by using the induction hypothesis for $\Delta_{i-1}$.
\end{proof}


\begin{lemma}
When $\mc{A}$ is true and $L$ is large enough that Lemma \ref{lem:Deli_LB} holds,
\be  \abs{\Delta_i} \leq \abs{\Delta_0} w^i + \frac{4R/L}{1 - 2R/L} \sum_{j=1}^i w^{i-j} (\abs{\gamma_j} + \abs{\epsilon_j}) \ee
for $i=1,\ldots, L$, where $w= \left( 1 + \frac{R/L}{1-2R/L}\right)$.
\label{lem:Ri_UB}
\end{lemma}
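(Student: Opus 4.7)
\textbf{Proof plan for Lemma \ref{lem:Ri_UB}.}

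The natural strategy is to establish a one-step recursion of the form
\[ |\Delta_i| \;\leq\; w\,|\Delta_{i-1}| \;+\; \frac{4R/L}{1-2R/L}\bigl(|\gamma_i|+|\epsilon_i|\bigr), \qquad i=1,\ldots,L, \]
and then unroll it to obtain the closed-form bound stated in the lemma. Writing $r := \tfrac{2R/L}{1-2R/L}$, note that $w = 1 + r/2$ and the additive coefficient above equals $2r$, so the target step bound is $|\Delta_i| \le (1+r/2)|\Delta_{i-1}| + 2r(|\gamma_i|+|\epsilon_i|)$.

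First I would derive an upper bound on $\Delta_i$. Starting from the exact recursion \eqref{eq:Deli_Deli1}, which reads $(1+\Delta_i)^2 = (1+\Delta_{i-1})^2 + r\bigl(\Delta_{i-1}^2 + \gamma_i - 2\epsilon_i(1+\Delta_{i-1})\bigr)$, I would factor out $(1+\Delta_{i-1})^2$, appeal to Lemma \ref{lem:Deli_LB} to guarantee $1+\Delta_{i-1} > 1/2$ (so the factored expression is positive and taking a square root is legitimate), and apply the inequality $\sqrt{1+y} \le 1 + y/2$ to obtain
\[ \Delta_i \;\le\; \Delta_{i-1} + \frac{r\,\Delta_{i-1}^2}{2(1+\Delta_{i-1})} + \frac{r\,\gamma_i}{2(1+\Delta_{i-1})} - r\,\epsilon_i. \]

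The key technical step (and the only nontrivial one) is controlling the quadratic term $\Delta_{i-1}^2/(1+\Delta_{i-1})$ without assuming a priori that $|\Delta_{i-1}|$ is small. A short case analysis shows that whenever $\Delta_{i-1} > -1/2$ one has $\Delta_{i-1}^2/(1+\Delta_{i-1}) \le |\Delta_{i-1}|$: for $\Delta_{i-1}\ge 0$ this reduces to $0 \le \Delta_{i-1}$, and for $\Delta_{i-1}\in(-1/2,0)$ it is equivalent to $\Delta_{i-1}(2\Delta_{i-1}+1) \le 0$, which holds since the two factors have opposite signs. Combining this with $(1+\Delta_{i-1})^{-1} < 2$ yields
\[ \Delta_i \;\le\; \Delta_{i-1} + \frac{r}{2}|\Delta_{i-1}| + r(|\gamma_i| + |\epsilon_i|) \;\le\; w\,|\Delta_{i-1}| + r(|\gamma_i|+|\epsilon_i|). \]

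For a matching lower bound, I would reuse the estimate already extracted in the proof of Lemma \ref{lem:Deli_LB} (see the first line of \eqref{eq:Del1_Del0_rel}): $\Delta_i \ge \Delta_{i-1} - 2r(|\gamma_i|+|\epsilon_i|) \ge -|\Delta_{i-1}| - 2r(|\gamma_i|+|\epsilon_i|) \ge -w|\Delta_{i-1}| - 2r(|\gamma_i|+|\epsilon_i|)$, where the last inequality uses $w\ge 1$. Putting the upper and lower bounds together gives the desired one-step inequality
\[ |\Delta_i| \;\le\; w\,|\Delta_{i-1}| \;+\; \frac{4R/L}{1-2R/L}\bigl(|\gamma_i|+|\epsilon_i|\bigr). \]
Iterating this from $i$ down to $0$ produces the geometric sum $|\Delta_i| \le w^i|\Delta_0| + \tfrac{4R/L}{1-2R/L}\sum_{j=1}^i w^{i-j}(|\gamma_j|+|\epsilon_j|)$, completing the proof. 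The main obstacle is the quadratic-in-$\Delta_{i-1}$ term, which I expect to handle by the elementary case analysis above rather than by any global bound on $|\Delta_{i-1}|$ (since an a priori such bound is exactly what the lemma is trying to establish).
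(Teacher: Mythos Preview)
Your proposal is correct and follows essentially the same approach as the paper: establish the one-step recursion $|\Delta_i|\le w|\Delta_{i-1}|+\frac{4R/L}{1-2R/L}(|\gamma_i|+|\epsilon_i|)$ by combining an upper bound on $\Delta_i$ (from \eqref{eq:Deli_Deli1} via $\sqrt{1+y}\le 1+y/2$) with the lower bound already obtained in the proof of Lemma~\ref{lem:Deli_LB}, and then iterate. The only cosmetic difference is that the paper factors out $(1+|\Delta_{i-1}|)^2$ rather than $(1+\Delta_{i-1})^2$, which makes the bound $\Delta_{i-1}^2/(1+|\Delta_{i-1}|)\le|\Delta_{i-1}|$ immediate and avoids your case analysis; either organization gives the same one-step inequality.
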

\begin{proof}
We prove the lemma by induction. For $i=1$, we have from \eqref{eq:Deli_Deli1}
\be
\begin{split}
& (1+\Delta_1)^2  = (1+\Delta_0)^2 + \frac{\tfrac{2R}{L}}{1 - \tfrac{2R}{L}}(\Delta_0^2 + \gamma_1 -2\epsilon_1(1+\Delta_0)) \\
& = (1 + \abs{\Delta_0})^2
\Bigg[  1 +   \frac{\tfrac{2R}{L}}{1 - \tfrac{2R}{L}}\left( \frac{\Delta_0^2}{(1 +\abs{\Delta_0})^2}  + \frac{\abs{\gamma_1}}{(1+\abs{\Delta_0})^2} + \frac{2\abs{\epsilon_1}}{(1 + \abs{\Delta_0})} \right)   \Bigg].
\end{split}
\label{eq:ub_step1}
\ee
Therefore,
\be
\begin{split}
 1 + \Delta_1 & \leq (1 + \abs{\Delta_0})\Bigg[  1 +   \frac{ \frac{2R}{L}}{1 - \frac{2R}{L}}\left( \frac{\Delta_0^2}{(1 +\abs{\Delta_0})^2}  + \frac{\abs{\gamma_1}}{(1+\abs{\Delta_0})^2} + \frac{2\abs{\epsilon_1}}{(1 + \abs{\Delta_0})} \right)   \Bigg]^{\frac{1}{2}} \\
& \leq (1 + \abs{\Delta_0})\Bigg[  1 +   \frac{ \frac{R}{L}}{1 - \frac{2R}{L}}\left( \frac{\Delta_0^2}{(1 +\abs{\Delta_0})^2}  + \frac{\abs{\gamma_1}}{(1+\abs{\Delta_0})^2} + \frac{2\abs{\epsilon_1}}{(1 + \abs{\Delta_0})} \right)   \Bigg]
\end{split}
\label{eq:ub_step2}
\ee
where we have used the inequality $\sqrt{1+x} \leq 1 + \frac{x}{2}$ for $x>0$. We therefore have
\begin{align}
\Delta_1 & \leq \abs{\Delta_0} + \frac{R/L}{1 - 2R/L}\left( \frac{\Delta_0^2}{(1 +\abs{\Delta_0})} + \frac{\abs{\gamma_1}}{(1+\abs{\Delta_0})} + {2\abs{\epsilon_1}} \right) \nonumber \\
& \stackrel{(a)}{\leq}  \abs{\Delta_0} + \frac{R/L}{1 - 2R/L} ( \abs{\Delta_0} + \abs{\gamma_1} + 2\abs{\epsilon_1}) \nonumber \\
& \leq \abs{\Delta_0}\left( 1 +  \frac{R/L}{1 - 2R/L} \right) + \frac{2R/L}{1 - 2R/L}(\abs{\gamma_1} +  \abs{\epsilon_1}),
\label{eq:Del1UB}
\end{align}
where $(a)$ is obtained using $\abs{\Delta_0}/(1 + \abs{\Delta_0}) < 1$. From Lemma \ref{lem:Deli_LB}, we have
\be
\begin{split}
\Delta_{1} & \geq \Delta_0  - \frac{4R/L}{1-2R/L} \left(\abs{\gamma_1} + \abs{\epsilon_1}\right) \\
&  \geq  - \abs{\Delta_0}  - \frac{4R/L}{1-2R/L} \left(\abs{\gamma_1} + \abs{\epsilon_1}\right).
\end{split}
\label{eq:Del1LB}
\ee
Combining \eqref{eq:Del1UB} and \eqref{eq:Del1LB}, we obtain
\be
\abs{\Delta_1} \leq \abs{\Delta_0}\left( 1 +  \frac{R/L}{1 - 2R/L} \right) + \frac{4R/L}{1 - 2R/L}(\abs{\gamma_1} + \abs{\epsilon_1}).
\label{eq:AbsDel1UB}
\ee
This completes the proof for $i=1$. 

Towards induction, assume that the lemma holds for $i-1$. From \eqref{eq:Deli_Deli1}, we obtain
\be
\begin{split}
(1+\Delta_i)^2  &  \leq 1   + \Delta_{i-1}^2 + 2\abs{\Delta_{i-1}}  \\
& \  +  \frac{2R/L}{1 - 2R/L}( \Delta_{i-1}^2 + \abs{\gamma_i} + 2\abs{\epsilon_i}(1 + \abs{\Delta_{i-1}}) ).
\end{split}
\ee
Using arguments identical to those in \eqref{eq:ub_step1}--\eqref{eq:Del1UB}, we get
\be
\Delta_i \leq \abs{\Delta_{i-1}}\left( 1 +  \frac{R/L}{1 - 2R/L} \right) + \frac{2R/L}{1 - 2R/L}(\abs{\gamma_i} + \abs{\epsilon_i}).
\label{eq:DeliUB}
\ee

From the proof of Lemma \ref{lem:Deli_LB} (see \eqref{eq:Del1_Del0_rel}), we have
\be
\begin{split}
\Delta_{i} & \geq  \Delta_{i-1} - \frac{4R/L}{1-2R/L} \left( {\abs{\gamma_i}} +  {\abs{\epsilon_{i}}} \right) \\
&  \geq -\abs{\Delta_{i-1}} - \frac{4R/L}{1-2R/L} \left( {\abs{\gamma_i}} +  {\abs{\epsilon_{i}}} \right).
\end{split}
\label{eq:DeliLB}
\ee
Combining \eqref{eq:DeliUB} and \eqref{eq:DeliLB}, we obtain
\be
\abs{\Delta_i} \leq \abs{\Delta_{i-1}}\left( 1 +  \frac{R/L}{1 - 2R/L} \right) + \frac{4R/L}{1 - 2R/L}(\abs{\gamma_i} + \abs{\epsilon_i}).
\label{eq:AbsDeliUB}
\ee
Using the induction hypothesis to bound $\abs{\Delta_{i-1}}$ in \eqref{eq:AbsDeliUB}, we obtain
\ben
\begin{split}
\abs{\Delta_i}  \leq & \left(\abs{\Delta_0} w^{i-1}  +  \frac{4R/L}{1 - 2R/L} \sum_{j=1}^{i-1} w^{i-1-j} (\abs{\gamma_j} + \abs{\epsilon_j}) \right)  \\
& \cdot \left(1 +  \frac{R/L}{1 - 2R/L} \right)  +   \frac{4R/L}{1 - 2R/L}(\abs{\gamma_i} + \abs{\epsilon_i}) \\
& = \abs{\Delta_0} w^i + \frac{4R/L}{1 - 2R/L} \sum_{j=1}^i w^{i-j} (\abs{\gamma_j} + \abs{\epsilon_j}),
\end{split}
\een
as required.
\end{proof}

Lemma \ref{lem:Ri_UB} implies that when $\mc{A}$ holds and $L$ is sufficiently large,
\be
\begin{split}
\abs{\Delta_L} & \leq  \abs{\Delta_0} w^L + \frac{4R/L}{1 - 2R/L} \sum_{j=1}^L w^{L-j} (\abs{\gamma_j} + \abs{\epsilon_j}) \\
& \leq w^L \left[ \abs{\Delta_0} + \frac{4R}{(1-2R/L)w}  \left(\sum_{j=1}^L \frac{\abs{\gamma_j}}{L}  +  \sum_{j=1}^L  \frac{\abs{\epsilon_j}}{L} \right)  \right] \\
& \stackrel{(a)}{\leq}  w^L \left[  {\delta_0} + \frac{4R}{(1-R/L)} (\delta_1 + \delta_2) \right] \\
& \stackrel{(b)}{\leq} \exp\left(\frac{R}{1 - 2R/L}\right) \left[  \delta_0 + \frac{4R}{(1-R/L)} (\delta_1 + \delta_2)  \right] \\
& \leq e^R \left(\delta_0 + 5R(\delta_1 + \delta_2) \right) \quad \text{ for  large enough } L.
\end{split}
\label{eq:Del_L_UB}
\ee
In the above chain, $(a)$ is true because $\mc{A}$ holds, and $(b)$ is obtained by applying the inequality $1 + x \leq e^x$ with $x=\frac{R/L}{1 - 2R/L}$.

Hence when $\mc{A}$ holds and $L$ is sufficiently large, the distortion can be bounded as
\be
\begin{split}
\abs{R_L}^2  = \sigma^2 e^{-2R}(1 + \Delta_L)^2 & \leq \sigma^2 e^{-2R} (1 + \abs{\Delta_L})^2 \\
&  \stackrel{(c)}{\leq} \sigma^2 e^{-2R} (1 + e^R {\Delta})^2 
\end{split}
\label{eq:finalRL}
\ee
where $(c)$ follows from \eqref{eq:Del_L_UB} by defining $\Delta = \delta_0 + 5R(\delta_1 + \delta_2)$. Combining \eqref{eq:finalRL} with \eqref{eq:PAc}
completes the proof of the theorem.

%
%
\part{Multiuser Communication and Compression with SPARCs}
\chapter{Broadcast and Multiple-access Channels}  \label{chap:bcmac}

 In the final part of the monograph, we discuss the use of SPARCs for multiuser channel and source coding models.  It is well known \cite{elGKBook} that the optimal rate regions for several multiuser channel and source coding problems can be achieved using the following ingredients: i)  rate-optimal point-to-point source and channel codes, and ii) combining or splitting these point-to-point codes via superposition or random binning.  In this chapter, we show  how superposition  coding can be implemented using SPARCs for Gaussian broadcast and multiple-access channels.  In the next chapter, we describe how random binning can be implemented using SPARCs.
 
  All rates within the capacity region of the Gaussian broadcast and multiple access channels can be achieved by combining codes designed for point-to-point Gaussian channels \cite{coverT12,elGKBook}. Therefore the SPARC construction for point-to-point channels, where codewords are defined as the superposition of columns of a matrix, can be easily extended  to these multiuser channels.

\section{The Gaussian broadcast channel}

The $K$-user AWGN broadcast channel has a single transmitter and $K$ output sequences, one for each user. The input sequence $x=(x_1, \ldots, x_n)$ transmitted over the broadcast channel  has to satisfy an average power constraint: $\frac{1}{n}\sum_{j}x_j^2 \leq P$.  The channel output sequence of user $i \in [K]$ is denoted by 
$y^{(i)}=(y^{(i)}_1, \ldots, y^{(i)}_n)$, where the $j$th output symbol is produced as 
$y^{(i)}_j=x_j + w^{(i)}_j$, for $j \in [n]$.  The noise variables $(w^{(i)}_j)_{j \in [n]}$  are i.i.d. $\sim \normal(0, \sigma_i^2)$, where $\sigma_i^2$ is the noise variance of the $i$th receiver.

We will focus on the two-user broadcast channel for simplicity, although the coding schemes can be extended to $K >2$ users in a straightforward way. Throughout, we will assume $\sigma_1^2 < \sigma_2^2$, i.e., the noise at the first receiver has a  lower variance than the noise at the second. 

If we denote the rates of the two users by $R_1$ and $R_2$, then the capacity region \cite[Chapter 5]{elGKBook}
 is the union of all rate pairs $(R_1, R_2)$ over $\alpha \in [0,1]$ which satisfy 
\begin{align}
R_1 &  \leq \frac{1}{2} \log \left( 1 + \frac{\alpha P}{\sigma_1^2} \right), \label{eq:R1BC}\\
R_2 &  \leq \frac{1}{2} \log \left( 1 + \frac{(1-\alpha)P}{\alpha P+ \sigma_2^2} \right).  \label{eq:R2BC}
\end{align}
 It is well known that any rate pair within the capacity region can be achieved using superposition coding \cite{cover1972broadcast}. In superposition coding, the transmitted codeword $x$ is generated as  the sum of two independent codewords $x^{(1)}, x^{(2)}$,   with powers $\alpha P, (1-\alpha)P$, drawn from codebooks of size $2^{nR_1}$ and $2^{nR_2}$, respectively. 
 
Receiver $2$  has to decode $x^{(2)}$ from its output sequence $y^{(2)}= x^{(2)}+ x^{(1)} +w^{(2)}$. Treating  $x^{(1)}$ as interference (with average power $\alpha P$), receiver $2$   has an effective point-to-point channel with signal to noise ratio  
$\frac{(1-\alpha)P}{\alpha_P+ \sigma_2^2}$.  If receiver $2$ can reliably decode $x^{(2)}$,  receiver $1$ will be also able to first decode $x^{(2)}$ (with high probability) from $y^{(1)}= x^{(2)}+ x^{(1)} +w^{(1)}$. This is because  $\sigma_1^2 \leq \sigma_2^2$.  After subtracting the decoded $x^{(2)}$ from $y^{(1)}$, receiver 1 has an effective point-to-point channel with $\snr= \frac{\alpha P}{\sigma^2}$ to decoder $x^{(1)}$.

We now implement this superposition coding scheme with SPARCs.

\section{SPARCs for the Gaussian broadcast channel} \label{sec:SPARC_BC}

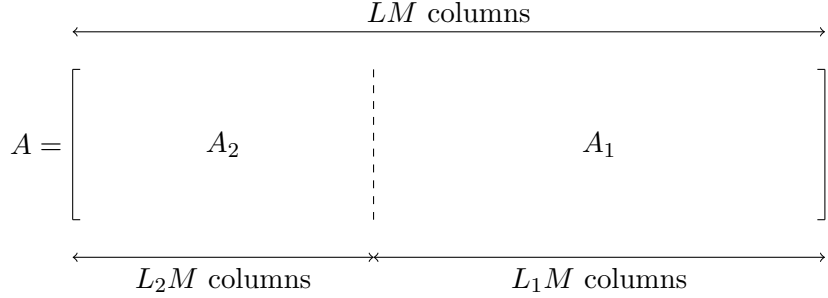
\begin{figure}
    \centering
    \begin{tikzpicture}
        \draw (0, 0) node[anchor=east] {$A=$};
        \draw (0, -1) -- (0, 1);
        \draw (0, -1) -- (0.1, -1);
        \draw (0, 1) -- (0.1, 1);
        \draw[dashed] (4, 1) -- (4, -1);
        \draw (10, -1) -- (10, 1);
        \draw (10, -1) -- (9.9, -1);
        \draw (10, 1) -- (9.9, 1);

        \draw (2, 0) node {$A_2$};
        \draw (7, 0) node {$A_1$};

        \draw[<->] (0, 1.5) -- (10, 1.5);
        \draw (5, 1.5) node[anchor=south] {$LM$ columns};

        \draw[<->] (0, -1.5) -- (4, -1.5);
        \draw[<->] (4, -1.5) -- (10, -1.5);
        \draw (2, -1.5) node[anchor=north] {$L_2M$ columns};
        \draw (7, -1.5) node[anchor=north] {$L_1M$ columns};
    \end{tikzpicture}
    \caption{Division of the SPARC design matrix $A$ for two users in the Gaussian broadcast channel. The second user is allocated the first $L_2$ sections, and the first user is allocated the remaining $L_1$ section.}
\label{fig:mu-bc-a-split}
\end{figure}

Fix rates $R_1, R_2$ that lie within the capacity region \eqref{eq:R1BC}--\eqref{eq:R2BC}.  The two users' codebooks  are defined via  SPARC design matrices $A_1$ and $A_2$ with parameters $(n, M, L_1)$ and $(n, M, L_2)$, respectively.  The parameters are chosen such that 
\be
nR_1 = L_1 \log M, \qquad nR_2 = L_2 \log M.
\label{eq:bcsparc_relations}
\ee
The entries of $A_1, A_2$ are chosen $\sim_{\text{i.i.d}} \mc{N}(0,1/n)$.

By concatenating the two design matrices, we obtain a  SPARC defined by $A=[A_2 \ A_1]$ with $L=L_1+L_2$ sections. This combined SPARC, shown in Figure \ref{fig:mu-bc-a-split}, has rate $R_1+R_2$, and from \eqref{eq:bcsparc_relations} we see that its parameters satisfy 
\[ n(R_1+R_2) = L \log M. \]

\paragraph{Power allocation} \label{page:palloc} 
The non-zero coefficients in the sections of users $1$ and $2$ are set to $\{ \sqrt{nP_{1\ell}}\}_{\ell \in [L_1]}$ and $\{\sqrt{nP_{2\ell}}\}_{\ell \in [L_2]}$, respectively. 
For optimal (ML) decoding, we use a flat power allocation, i.e.,  
$P_{1\ell} = \sqrt{\frac{(1-\alpha)P}{L_1}}$ and $P_{2\ell}= \sqrt{\frac{\alpha P}{L_2}}$, respectively.  For AMP decoding, the two power allocations $\{P_{1,\ell}\}_{\ell \in [L_1]}$ and $\{P_{2,\ell}\}_{\ell \in [L_2]}$ are chosen in the same way as for point-to-point SPARCs. For example, one could use the power allocation determined by the  iterative algorithm in Chapter \ref{chap:emp_perf} using the  parameters $(L_i, R_i, P_i)$ for user $i \in \{1, 2 \}$.

\paragraph{Encoding} The message of each user $i \in \{1,2\}$ mapped to a message vector $\beta^{(i)} \in \mc{B}_{M,L_i}$. The concatenated message vector is denoted by $\beta  \in \mcb$. The transmitted codeword is 
\[ x = A \beta = [A_1 \quad A_2] \begin{bmatrix}  \beta^{(1)}  
\\ \beta^{(2)} \end{bmatrix} = A_1 \beta^{(1)} + A_2 \beta^{(2)}.  
\]

\paragraph{Optimal decoding}   Receiver $2$ (with the higher noise variance) decodes
\be    
 \hat{\beta}^{(2)}_{\textsf{opt}} = \argmin_{\hat{\beta}^{(2)} \in \mc{B}_{M, L_2} }  \, \norm{y^{(2)}-A_2\hat{\beta}^{(2)}  }^2.  
\label{eq:rx2_opt}
\ee
Receiver $1$ first decodes the concatenated message vector $\beta$ as 
\be   
\hat{\beta}_{\textsf{opt}} = \argmin_{\hat{\beta} \in \mc{B}_{M, L} }  \, \norm{y^{(1)}-A\hat{\beta}  }^2, 
\label{eq:rx1_opt}
\ee
 and then reconstructs $\beta^{(1)}_{\textsf{opt}}$ by taking the last $L_1$ sections.

\paragraph{AMP decoding} Receiver $2$  decodes $ \hat{\beta}^{(2)}$ by running a standard SPARC AMP decoding
routine (as described  in Section \ref{sec:AMP_dec}), using the design matrix $A_2$, i.e., the first $L_2$ columns in $A$. 

Receiver $1$  decodes $\hat{\beta}$ via AMP decoding on the concatenated design matrix $A$, with the combined power allocation $\{ P_\ell \}_{\ell \in [L]}$ given by 
\begin{equation*}
    P_\ell =
\begin{cases}
    P_{2,\ell} & \quad \ell\le L_2\\
    P_{1,\ell-L_2} & \quad  L_2 < \ell \leq L= L_1+L_2. 
    \end{cases}
\end{equation*}
The last $L_1$ sections of  $\hat{\beta}$ represent $\hat{\beta}^{(1)}$.

\section{Bounds on error performance}

\subsection{Optimal decoding}

As seen from \eqref{eq:rx2_opt} and \eqref{eq:rx1_opt}, each receiver uses a point-to-point SPARC decoder, using design matrix $A_2$  for receiver $2$ and $A$ for  receiver $1$. Therefore, Theorem \ref{thm:MLresult} can be directly applied to obtain bounds 
on the probability of excess section error rate at each receiver with optimal decoding. 
\begin{theorem}
Let $M = L^{\msf{a}}$, with $\msf{a}$ such that $\msf{a} \geq \max\{ \msf{a}^*_L(\snr), \, \msf{a}^*_{L_2}(\snr) \}$, where $\msf{a}^*_L(\snr)$ is defined in \eqref{rem:al}. Let $(R_1, R_2)$ be a rate pair within the capacity region given by   \eqref{eq:R1BC}--\eqref{eq:R2BC}. Let 
\be \Delta =   \frac{1}{2} \log \left( 1 + \frac{ P}{\sigma_1^2} \right) - (R_1+R_2), \quad \Delta_2 =  \frac{1}{2} \log \left( 1 + \frac{(1-\alpha)P}{\alpha P+ \sigma_2^2} \right) - R_2 \label{eq:DelDelta2} \ee 
be strictly positive distances (of  $(R_1+R_2)$ and $R_2$, respectively) from  a  point on the boundary parametrized by some $\alpha \in [0,1]$. Then with optimal decoding, for any $\e_1,\e_2 >0$ the section error rates $\Esec^1$ and $\Esec^2$ at the two receivers satisfy
\be
\label{eq:P_SER_ML_BC}
\pr\left( \Esec^1 \geq \e_1 \right) = e^{-nE_1(\e_1, \Delta)}, \quad \pr\left( \Esec^2 \geq \e_2 \right) = e^{-nE_2(\e_2, \Delta_2 )}
\ee
where 
\[
E_1(\e_1 , \Delta) \geq h(\e_1, \Delta) - \frac{\log 2L}{n},  \quad   E_2(\e_2 , \Delta_2) \geq h(\e_2, \Delta_2) - \frac{\log 2L_2}{n},
\]
where $h(\cdot, \cdot)$ is defined in \eqref{eq:hedel_def}.
\end{theorem}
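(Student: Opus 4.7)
The plan is to reduce each receiver's decoding problem to a point-to-point SPARC decoding problem so that Theorem \ref{thm:MLresult} can be invoked directly; the structure of the broadcast construction in Section \ref{sec:SPARC_BC} is tailored so that this reduction is essentially transparent.

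First I would analyze Receiver $2$. Its observation $y^{(2)} = A_2\beta^{(2)} + A_1\beta^{(1)} + w^{(2)}$ can be rewritten as $y^{(2)} = A_2\beta^{(2)} + \tilde{w}$, where $\tilde{w} := A_1\beta^{(1)} + w^{(2)}$ plays the role of an effective noise. The flat power allocation on user $1$ forces $\|\beta^{(1)}\|^2 = n\alpha P$ \emph{deterministically} for every valid $\beta^{(1)} \in \mc{B}_{M,L_1}$. Since $A_1$ has i.i.d.\ $\mc{N}(0, 1/n)$ entries and is independent of $A_2$, $w^{(2)}$, and $\beta^{(2)}$, a direct calculation shows that the entries of $A_1\beta^{(1)}$ are i.i.d.\ $\mc{N}(0,\alpha P)$ and independent of $(A_2, w^{(2)}, \beta^{(2)})$. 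Hence $\tilde{w}$ is i.i.d.\ $\mc{N}(0, \alpha P + \sigma_2^2)$ and independent of the codebook $A_2$ and message $\beta^{(2)}$, so \eqref{eq:rx2_opt} coincides precisely with the ML decoder of a rate-$R_2$ SPARC with parameters $(n,M,L_2)$ over an AWGN channel with SNR equal to $(1-\alpha)P/(\alpha P + \sigma_2^2)$. Capacity equals $\mc{C}_2 := \tfrac12 \log\bigl(1 + (1-\alpha)P/(\alpha P + \sigma_2^2)\bigr)$ and the gap is $\Delta_2$; the condition $\mathsf{a} \geq \mathsf{a}^*_{L_2}(\snr)$ ensures the hypothesis of Theorem \ref{thm:MLresult} holds (using $M = L^{\mathsf{a}} \geq L_2^{\mathsf{a}}$), and applying that theorem produces the stated bound on $\pr(\Esec^2 \geq \e_2)$.

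Next I would analyze Receiver $1$. It sees $y^{(1)} = A\beta + w^{(1)}$ with $A = [A_2 \ A_1]$, $\beta \in \mc{B}_{M,L}$, and $L = L_1 + L_2$. The entries of $A$ are i.i.d.\ $\mc{N}(0,1/n)$ by construction and, under the concatenated flat allocation, $\|\beta\|^2 = nP$ for every valid $\beta$, so the expected codeword variance equals $P$. Thus \eqref{eq:rx1_opt} is precisely the ML decoder of a point-to-point rate-$(R_1 + R_2)$ SPARC with parameters $(n,M,L)$ on an AWGN channel with SNR $P/\sigma_1^2$ and capacity $\tfrac12\log(1 + P/\sigma_1^2)$, capacity gap $\Delta$. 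Since $\mathsf{a} \geq \mathsf{a}^*_L(\snr)$, Theorem \ref{thm:MLresult} applies and yields the claimed bound on $\pr(\Esec^1 \geq \e_1)$ (interpreting $\Esec^1$ as the section error rate of $\hat\beta$ relative to $\beta$ over all $L$ sections, consistent with the $\log 2L$ term in the exponent correction).

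The only conceptual hurdle, which I expect to be quite minor, is justifying the Gaussianity and independence of the effective noise at Receiver $2$: the argument above works because the flat power allocation eliminates any dependence of $\|\beta^{(1)}\|^2$ on the realization of $\beta^{(1)}$, so conditioning on $\beta^{(1)}$ does not couple $\tilde{w}$ to $\beta^{(2)}$. Once this is in place, the rest is bookkeeping: both bounds are direct specializations of Theorem \ref{thm:MLresult}, with the SNR and rate substituted appropriately, and the rate pair lying in the capacity region \eqref{eq:R1BC}--\eqref{eq:R2BC} exactly corresponds to both gaps $\Delta, \Delta_2$ in \eqref{eq:DelDelta2} being strictly positive, which is the hypothesis required by Theorem \ref{thm:MLresult}.
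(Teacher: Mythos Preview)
Your proposal is correct and follows exactly the approach the paper takes: reduce each receiver's problem to a point-to-point SPARC decoding problem and invoke Theorem~\ref{thm:MLresult}. The paper's own proof is a one-liner stating precisely this reduction; your elaboration on why the effective noise $\tilde{w}=A_1\beta^{(1)}+w^{(2)}$ at Receiver~2 is exactly i.i.d.\ Gaussian and independent of $(A_2,\beta^{(2)})$ (via the flat power allocation fixing $\|\beta^{(1)}\|^2$ deterministically) is a useful detail the paper leaves implicit.
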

\begin{proof}
The result follows by applying Theorem \ref{thm:MLresult} to decoder $2$ which performs point-to-point decoding on a rate $R_2$ SPARC defined by $A_2$, and to decoder $1$ which decodes a rate $R_1 + R_2$ SPARC of defined by $A=[A_1 \ A_2]$.
\end{proof}

As in Proposition \ref{thm:msg_error}, one can bound the probabilities of message error, i.e., $P(\hat{\beta}^{(1)} \neq \beta^{(1)})$ and $P(\hat{\beta}^{(2)} \neq \beta^{(2)})$, by using an outer Reed-Solomon code. For any $\e>0$ and rate pair $(R_1, R_2)$ inside the capacity region,  by using an outer RS code of rate $(1-2\e)$ for each of the component SPARCs, one  obtains a code with rates $(R_1(1-2\e), R_2(1-2\e)) $ with message error probabilities of the two users bounded by $ e^{-nE_1(\e, \Delta)}$ and  $e^{-nE_2(\e, \Delta_2 )}$, respectively.

\subsection{AMP decoding}

For AMP decoding, we can apply Theorem \ref{thm:main_amp_perf} to obtain a bound on the probability of excess section error rate with an exponentially decaying allocation for each user. That is, with $P_1= \alpha P$ and $P_2 = (1-\alpha)P$, the power allocation for design matrix $A_1$ is  
\be P_{1\ell} = \kappa_1 \left(1 + \frac{\alpha P}{\sigma_1^2} \right)^{-\ell/L_1}, \quad \ell \in [L_1]. \label{eq:user1bc} \ee
The power allocation for $A_2$ is 
\be P_{2, \ell} = \kappa_2 \left(1 + \frac{(1-\alpha) P}{ \alpha P + \sigma_2^2} \right)^{-\ell/L_2}, \quad \ell \in [L_2]. \label{eq:user2bc} \ee
Here $\kappa_1, \kappa_2$ are normalizing constants chosen to satisfy the two power constraints. 
\begin{theorem}
Fix any rate pair $(R_1, R_2)$ within the capacity region  \eqref{eq:R1BC}--\eqref{eq:R2BC}. Consider a broadcast SPARC  defined by design matrix $A=[A_1 \, A_2]$  with parameters $n, M, L_1, L_2$,  that satisfy \eqref{eq:bcsparc_relations}, and  a power allocation given by \eqref{eq:user1bc} and \eqref{eq:user2bc}.  

Fix $\e >0$. Then for sufficiently large $L_1, L_2,M$, the section error rate of the AMP decoder satisfies
\be
\begin{split}
& P \left( \mc{E}_{sec}^1  >  \e \right) \leq K_{T} \exp\left\{\frac{-\kappa_{T} L}{(\log M)^{2T-1}}   
\left(\frac{\e  \sigma_1^2 \mc{C}}{2} - P f_R(M) \right)^2 \right\}, \\ 
& P \left( \mc{E}_{sec}^2  >  \e \right) \leq K_{T} \exp\left\{\frac{-\kappa_{T} L_2}{(\log M)^{2T-1}}   
\left(\frac{\e  \sigma_2^2 \mc{C}_2}{2} - P f_R(M) \right)^2 \right\},
\end{split}  
\ee
where $\mc{C} = \frac{1}{2}\log(1+ \frac{P}{\sigma_1^2})$,  $\mc{C}_2 = \frac{1}{2}\log(1+ \frac{(1-\alpha)P}{ \alpha P+\sigma_2^2})$, $T$ is the maximum number of iterations of the two AMP decoders, and $\kappa_T, K_T$ are defined in Theorem  \ref{thm:main_amp_perf}.  Furthermore, $T$ is inversely proportional to the minimum of $\Delta, \Delta_2$, which are defined in \eqref{eq:DelDelta2}.
\end{theorem}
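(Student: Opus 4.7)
The plan is to reduce each receiver's decoding task to the setting of Theorem \ref{thm:main_amp_perf} and then apply its non-asymptotic tail bound (or a straightforward generalization of it).

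\emph{Reduction for Receiver 2.} Conditional on any fixed $\beta^{(1)}$, each entry of $A_1\beta^{(1)}$ is a linear combination of $L_1$ independent Gaussians of total variance $\sum_\ell P_{1,\ell}=\alpha P$, and entries with different row indices involve disjoint rows of $A_1$ and are independent. Thus $A_1\beta^{(1)}\mid\beta^{(1)}\sim\mc{N}(0,\alpha P\, I_n)$ i.i.d., and since this conditional law does not depend on $\beta^{(1)}$, marginally $A_1\beta^{(1)}\sim\mc{N}(0,\alpha P\, I_n)$ and is independent of $(A_2,\beta^{(2)},w^{(2)})$. Receiver 2 therefore faces the equivalent point-to-point channel
\[ y^{(2)} \;=\; A_2\beta^{(2)} + \tilde w^{(2)}, \qquad \tilde w^{(2)}\sim\mc{N}\bigl(0,(\alpha P+\sigma_2^2)I_n\bigr) \text{ i.i.d.}, \]
with SNR $\snr_2=(1-\alpha)P/(\alpha P+\sigma_2^2)$, capacity $\mc{C}_2$, and power allocation \eqref{eq:user2bc} --- which is precisely the exponential allocation \eqref{eq:exp_power_alloc} for this effective channel. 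Since $R_2<\mc{C}_2$, Theorem \ref{thm:main_amp_perf} applies with the substitutions $(\sigma,\snr,R,L,\mc{C})\mapsto(\sigma_2,\snr_2,R_2,L_2,\mc{C}_2)$ and $T_2=O(1/\Delta_2)$, yielding the stated bound for user 2.

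\emph{Analysis for Receiver 1.} Receiver 1 runs AMP on $A=[A_2\ A_1]$ with the concatenated two-phase allocation, viewing the channel as a point-to-point rate-$(R_1+R_2)$ SPARC over AWGN with SNR $P/\sigma_1^2$ and capacity $\mc{C}_1$. The conditional-distribution result (Lemma \ref{lem:hb_cond}) and the induction-based concentration bounds (Lemma \ref{lem:main_lem}) depend on the power allocation only through the state-evolution quantities $\{\tau_t^2,x_t\}$ and are otherwise insensitive to its shape; what must be re-derived is an analog of Lemma \ref{lem:xt_taut_lb}, namely a uniform per-iteration progress bound $x_{t+1}-x_t\geq \chi>0$. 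For this I would apply Lemma \ref{lem:conv_expec} in two phases: since $P_{2,\ell}$ dominates in the combined allocation, the SE recursion first decodes $A_2$-sections (with effective interference variance $\alpha P$, effective noise $\alpha P+\sigma_1^2$, and capacity $\mc{C}_2':=\tfrac{1}{2}\log(1+(1-\alpha)P/(\alpha P+\sigma_1^2))\geq\mc{C}_2>R_2$); once $\tau_t^2$ drops to $\sigma_1^2+\alpha P$, the allocation $P_{1,\ell}$ takes over and the $A_1$-sections decode at SNR $\alpha P/\sigma_1^2$ and capacity $\mc{C}_1-\mc{C}_2'>R_1$ (this last inequality is precisely the user-1 capacity-region constraint). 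This implicit successive-interference-cancellation trajectory converges in $T_1=O(1/\min(\Delta,\Delta_2))$ iterations, at which point $x_{T_1}\geq 1-f_R(M)$ in the sense of Section \ref{subsec:del_vs_CR}. Plugging this $T_1$ into the Theorem \ref{thm:main_amp_perf}-style concentration bound gives the claim for user 1.

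\emph{Main obstacle.} The principal technical difficulty is extending the state-evolution progress bound of Lemma \ref{lem:xt_taut_lb} from the single-exponential allocation to the two-phase concatenated allocation --- specifically, verifying that the transition between the $A_2$-phase and the $A_1$-phase is seamless so that $x_{t+1}-x_t$ stays bounded below by a uniform $\chi>0$ throughout the trajectory. The supporting identity is $\mc{C}_1=\mc{C}_2'+(\mc{C}_1-\mc{C}_2')$, which tiles the total point-to-point capacity into the interference-limited capacity $\mc{C}_2'$ (available for $R_2$) and the clean capacity $\mc{C}_1-\mc{C}_2'$ (available for $R_1$); this is precisely what makes the concatenated allocation internally consistent. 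Once the progress bound is established, the conditional-distribution and concentration inductions of Section \ref{subsec:amp_proof} transfer essentially verbatim, yielding the claimed non-asymptotic tail bound on $P(\mc{E}_{sec}^1>\e)$.
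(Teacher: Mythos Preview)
Your proposal is correct and follows essentially the same approach as the paper, which gives only a two-sentence sketch: verify via Lemma \ref{lem:conv_expec} that the state evolution recursion predicts reliable decoding for the specified power allocation, then invoke arguments ``very similar to'' those proving Theorem \ref{thm:main_amp_perf}. Your treatment is considerably more fleshed out---the explicit point-to-point reduction for Receiver 2 (absorbing $A_1\beta^{(1)}$ into i.i.d.\ Gaussian noise) is a clean way to make Theorem \ref{thm:main_amp_perf} apply verbatim there, and your two-phase state-evolution analysis for Receiver 1, together with the identification of the transition as the main technical obstacle, accurately captures what the paper leaves to the reader.
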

\begin{proof}
It can be verified (using Lemma \ref{lem:conv_expec}) that the  state evolution  recursion \eqref{eq:tau_def} predicts reliable decoding in the large system limit for any rate pair within the capacity region and the specified power allocation. The result then follows using arguments very similar to  those used to prove Theorem \ref{thm:main_amp_perf}.
\end{proof}

\section{Simulation results}

\begin{figure}[p]
    \centering
    \begin{tabular}{ll}
    \includegraphics[width=0.5\textwidth]{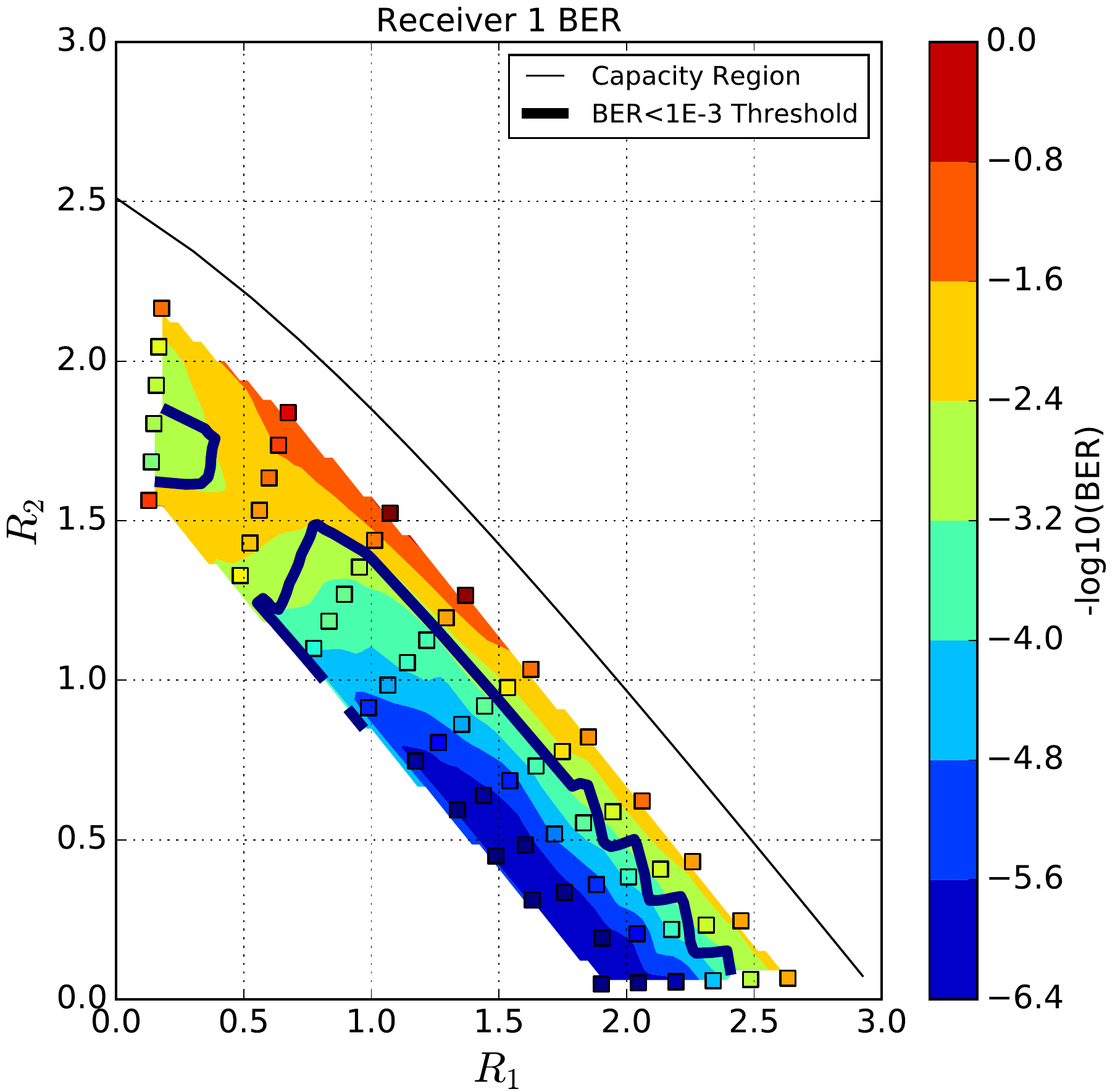}
        &
    \includegraphics[width=0.5\textwidth]{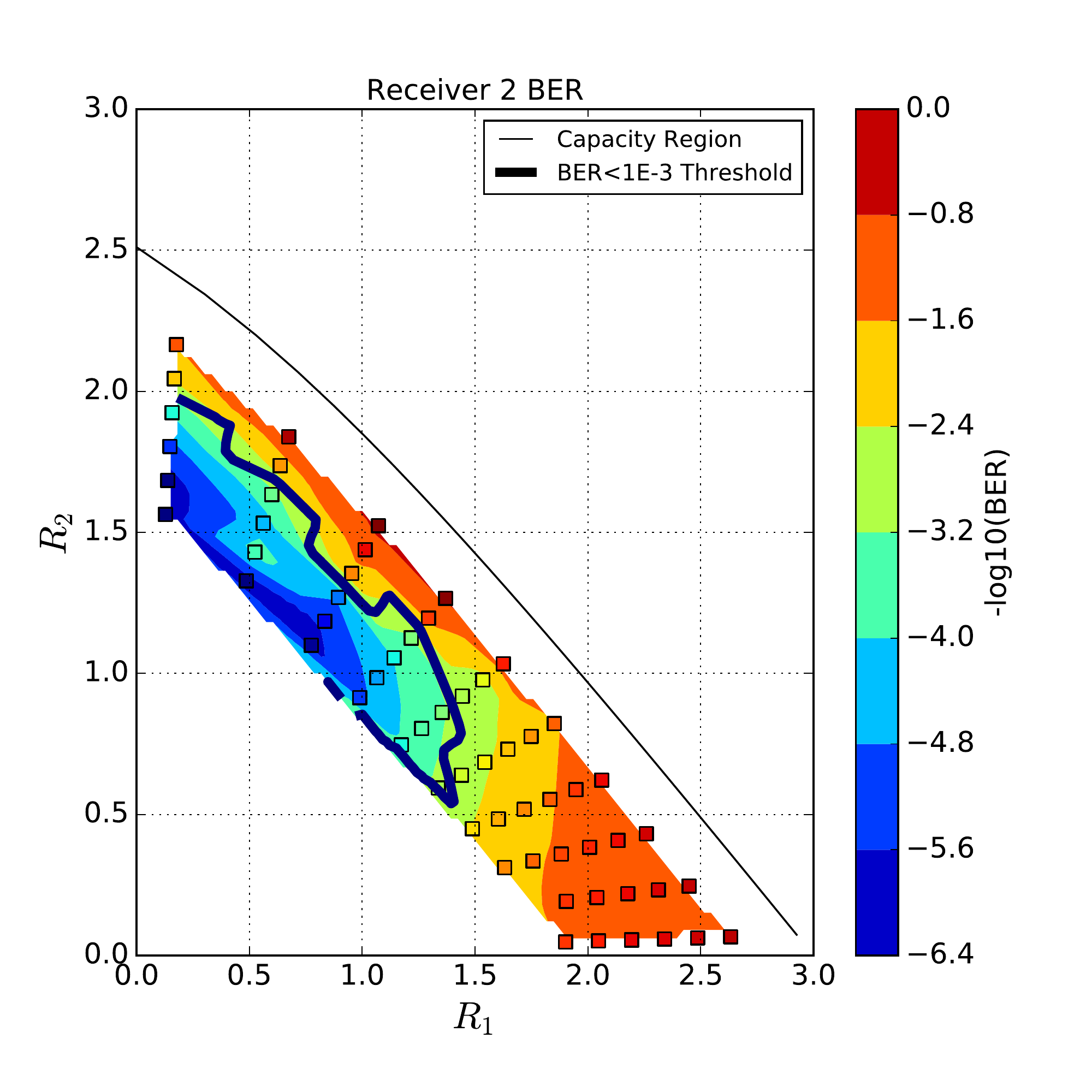}
    \end{tabular}
    \includegraphics[width=0.5\textwidth]{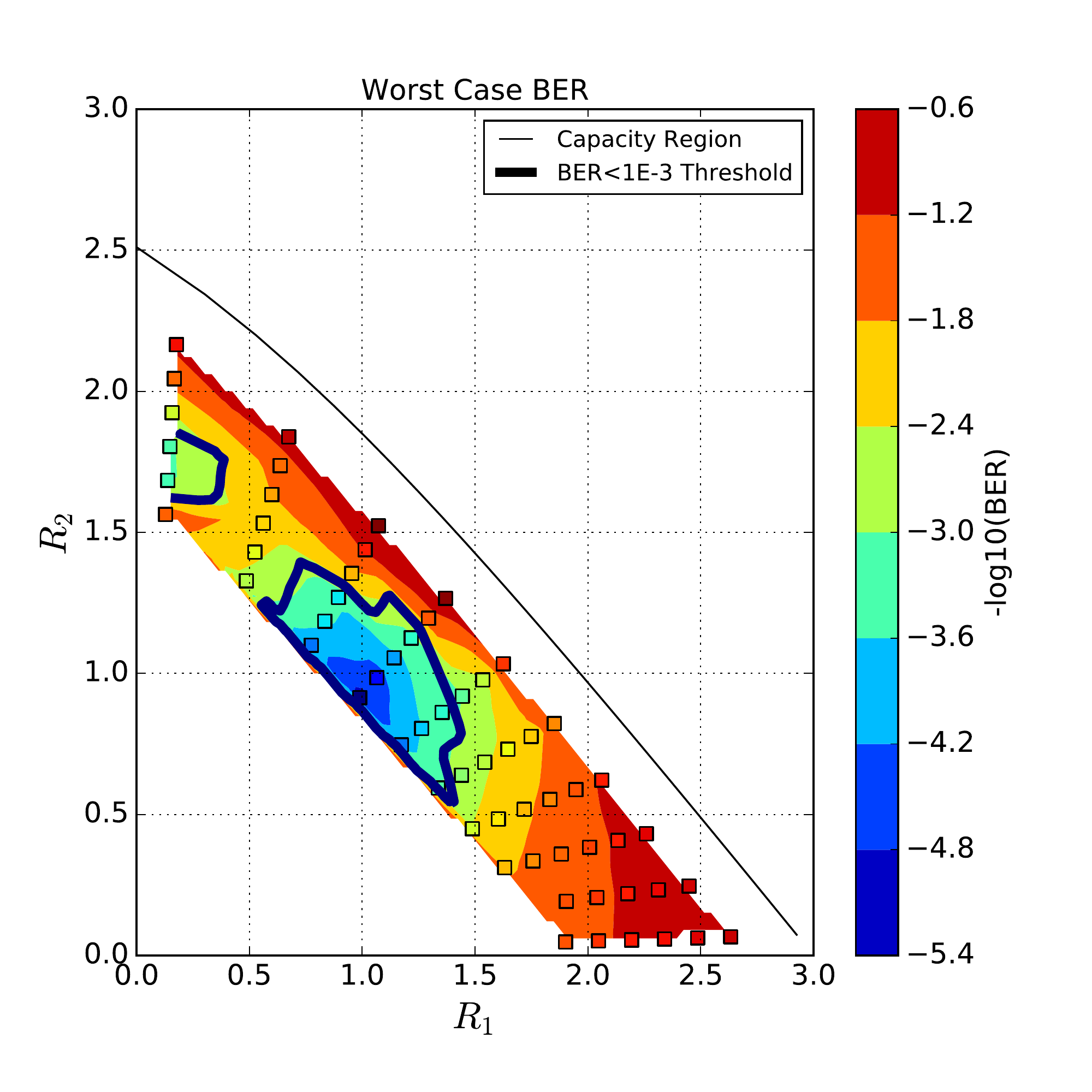}
    \caption{Bit error rates for the broadcast channel with AMP decoding, with contour
    indicating the boundary of the regions where the error rate is below
    $10^{-3}$.
    Each displayed point is the average of approximately 3000 trials.
    $P=63$, $\sigma_1^2=1.0$, $\sigma_2^2=2.0$, $M=512$, $n=4095$.}
\label{fig:bc-results}
\end{figure}

We now discuss the empirical performance of SPARCs for the Gaussian broadcast channel with AMP decoding, considering a setup with $P=63$, $\sigma_1^2=1$, and $\sigma_2^2=2$.  Each operating point is set as follows: fix $\alpha\in[0,1]$, which specifies
the balance of power between the two receivers. The point on the boundary of the capacity region corresponding to this $\alpha$ is
$\mc{C}_1=\frac{1}{2}\log(1+\frac{\alpha P}{\sigma_1^2})$ and
$\mc{C}_2=\frac{1}{2}\log(1+\frac{1-\alpha}{\alpha P + \sigma_2^2})$. Fix
$\gamma\in[0,1]$ and set $R_1=\gamma\mc{C}_1$ and $R_2=\gamma\mc{C}_2$. The parameter
$\gamma$  determines the back-off from the boundary point
$(\mc{C}_1,\mc{C}_2)$ of the capacity region.

For the SPARC design matrix, we first fix $M=512$ and $n=4095$. The parameters $L_1, L_2$ are then determined by the rate pair as   $L_1 = n R_1 / \log M$, and $L_2 = n R_2 / \log M$. With $P_1=\alpha P$ and $P_2 = (1-\alpha)P$,  the value of the non-zero coefficient in each section is set using the iterative power allocation algorithm as discussed on p. \pageref{page:palloc}.

The encoding and decoding operations are then performed as described in Section \ref{sec:SPARC_BC}. Figure~\ref{fig:bc-results} shows the bit-error rate performance in three charts. The  two charts on the top show the bit error rate
performance achieved when only considering the first and second receiver,
while the third chart shows the worst case of those two. Additionally, a
contour is plotted showing the boundaries of regions where the bit error
rate is found to be below $10^{-3}$.

When considering each receiver independently, we observe that the bit error rates
are reasonably low when that receiver's rate is above $0.5$, which is in
accordance with the results from point-to-point channels.  However, because
each receiver suffers badly once its rate goes below $0.5$, the worst-case
error is only better than $10^{-3}$ in a small number of cases, relatively far
from the capacity boundary and only near equal power balance where
$\alpha=0.5$.

Degradation of decoding performance at low rates is already observed in the simpler 
point-to-point case, but in the broadcast setup there is an additional factor contributing to
the performance. For the experimental setup as described, $M$ is fixed to the
same value for both users. As we saw in Chapter~\ref{chap:emp_perf}, for a
particular channel set-up, there is an optimal $M$, above and below which
performance can rapidly decrease. Therefore, when the rates for the two
receivers in the BC setup differ substantially, so too will the optimal $M$.
The gap between each receiver's optimum $M$ and the chosen value will lead to
performance degradation, as observed. It is conceptually possible to run the
AMP decoder with differing values for $M$ in different sections, which would
allow each receiver to operate on an optimal $M$, but this has not been
explored.

\section{The Gaussian multiple-access channel }

The $K$-user Gaussian multiple-access channel (MAC) has $K$ transmitters and a single receiver. The  codeword transmitted by user $i \in [K]$, denoted by
$x^{(i)}=(x^{(i)}_1, \ldots, x^{(i)}_n)$, has to satisfy an average power constraint $P_i$. The channel output sequence is $y =\sum_{i=1}^K  \, x^{(i)}  +  w$, where $w \in \mathbb{R}^n$ is $\sim_{\text{i.i.d.}} \normal(0, \sigma^2)$.

We will focus on the two-user MAC for simplicity, although the coding schemes can be extended to $K > 2$ users in a straightforward manner.  Denoting the rates of the two users by $R_1$ and $R_2$, the capacity region is the set of all rate pairs $(R_1, R_2)$  which satisfy \cite[Chapter 4]{elGKBook} 
\begin{align}
R_1 &  \leq \frac{1}{2} \log \left( 1 + \frac{P_1}{\sigma^2} \right), \label{eq:R1mac}\\
R_2 &  \leq \frac{1}{2} \log \left( 1 + \frac{P_2}{\sigma^2} \right),   \label{eq:R2mac}  \\
R_1 +  R_2 &  \leq \frac{1}{2} \log \left( 1 + \frac{P_1+P_2}{\sigma^2} \right).   \label{eq:R1R2mac} 
\end{align}

Any rate pair $(R_1, R_2)$ within the capacity region can be achieved using Shannon-style random coding, with independent codebooks for each of the two users.  Here the  entries of the two codebooks are generated $\sim_{i.i.d.} \normal(0,P_1)$ and $\sim_{i.i.d.} \normal(0,P_2)$, respectively.  Each user transmits a codeword from their own  codebook, and the receiver attempts to recover  the two codewords from $y = x^{(1)} +  x^{(2)} + w$  via \emph{joint}  maximum-likelihood  decoding.

We now show how an efficient SPARC coding scheme can be used to communicate reliably at any pair of rates within the capacity region.

\section{SPARCs for the Gaussian multiple-access channel}

Consider a rate pair $(R_1, R_2)$  within the MAC capacity region \eqref{eq:R1mac}--\eqref{eq:R1R2mac}.  The SPARC construction is similar to that of the broadcast channel. The two user's codebooks  are defined via  SPARC design matrices $A_1$ and $A_2$ with parameters $(n, M, L_1)$ and $(n, M, L_2)$, respectively.  The parameters are chosen such that 
\be
nR_1 = L_1 \log M, \qquad nR_2 = L_2 \log M.
\label{eq:maccsparc_relations}
\ee

\paragraph{Power allocation}  The non-zero coefficient in section  $\ell \in [L_1]$ of $A_1$  is
$\sqrt{nP_{1\ell}}$, while that in  section  $\ell \in [L_2]$ of $A_2$ is $\sqrt{nP_{2\ell}}$. With optimal (ML) decoding,  each transmitter  can use a flat power allocation across sections, i.e., $P_{1\ell} =P_1/L$ and $P_{2\ell} =P_2/L$ for $\ell \in [L]$.  

 For  AMP decoding, we design a power allocation for the combined SPARC defined by the design matrix $A=[A_1 \, A_2]$, and then partition the allocation between the two transmitters. Designing a combined allocation facilitates effective joint decoding  of both the messages by the receiver. The details of designing such an allocation are  given in the next section.

\paragraph{Encoding} Each transmitter $i \in \{1,2\}$  first maps its message  to a message vector $\beta^{(i)} \in \mc{B}_{M,L_i}$, and then generates its codeword $x^{(i)} = A_i \beta^{(i)}$. The channel output sequence at the receiver is 
\be
y = A_1 x^{(1)} \, + \, A_2 x^{(2 )} \, + \, w = [A_1 \quad A_2] \begin{bmatrix}  \beta^{(1)}  
\\ \beta^{(2)} \end{bmatrix}  \, + \, w.   \label{eq:macop}
\ee

\paragraph{Optimal decoding}   The receiver jointly decodes  the two message vectors as 
\be    
 [ \hat{\beta}^{(1)}_{\textsf{opt}}, \, \hat{\beta}^{(2)}_{\textsf{opt}} ] \ = \,  \argmin_{\hat{\beta}^{(1)} \in \mc{B}_{M, L_1}, \, \hat{\beta}^{(2)} \in \mc{B}_{M, L_2} }  \, \norm{y^{(2)} - A_1\hat{\beta}^{(1)} -A_2\hat{\beta}^{(2)}  }^2.  
\label{eq:rx2_opt_mac}
\ee
Writing $L= L_1 + L_2$, $A= [A_1 \, A_2]$, and 
$\beta = \begin{bmatrix}  \beta^{(1)}  
\\ \beta^{(2)} \end{bmatrix}$,
an equivalent representation of the optimal decoder  is 
\be
\hat{\beta}_{\textsf{opt}} = \, \argmin_{\hat{\beta} \in \mcb} \, \norm{y - A \hat{\beta}}^2.
\ee
The first $L_1$ sections of $\hat{\beta}_{\textsf{opt}}$  form $\hat{\beta}^{(1)}_{\textsf{opt}}$, while the remaining sections form $\hat{\beta}^{(2)}_{\textsf{opt}}$.

\paragraph{AMP decoding} The receiver  runs a standard SPARC AMP decoding
routine (as described  in Section \ref{sec:AMP_dec}) on the concatenated design matrix $A = [A_1 \, A_2]$. The first $L_1$ sections of the decoded message vector constitute $\hat{\beta}^{(1)}$, and the next $L_2$ constitute $\hat{\beta}^{(2)}$.

\section{Power allocation for AMP decoding} \label{sec:MAC_PA}

As there is a single combined decoder, we first construct an overall power allocation with total power $P=P_1 +P_2$ for the concatenated SPARC $A = [A_1 \, A_2]$, which has $L=L_1+L_2$ sections and rate $R=R_1+R_2$. 

 This  overall power allocation  $\{  P_\ell \}_{\ell \in [L]}$ is constructed
 using the iterative technique described in Chapter \ref{chap:emp_perf} for point-to-point channels. Now, this power allocation must now partitioned between the two transmitters. We will use the fact that $\{  P_\ell \}_{\ell \in [L]}$  in non-increasing in $\ell$.
 
Transmitter $i \in \{1, 2 \}$ must be allocated precisely
$L_i$ sections, whose total power must be no more than $P_i$. Additionally,
we would like for any section errors to be fairly distributed between transmitters, so that
each transmitter experiences approximately the same error rate. We know from
Chapter~\ref{chap:emp_perf} that the majority of errors occur in sections towards
the end of the power allocation, where the power per section is lower and many sections
share the same power (the \emph{flat} region). Therefore we would like each
user to have the same proportion of their allocation be flat. To summarize,
the requirements for the power allocation are:

\begin{enumerate}
    \item Of the $L=L_1+L_2$ sections, we must allocate any $L_1$ sections to
        user 1, and the remaining $L_2$ to user 2.
    \item The sections allocated to user 1 must sum to no more than $P_1$,
        and those for user 2 to no more than $P_2$.
    \item Once the  conditions above are met, choose the solution which
        divides any equal power sections more equally between the two users.
\end{enumerate}

\begin{figure}[p]
    \captionsetup[subfigure]{justification=centering}
    \begin{subfigure}[b]{0.9\textwidth}
    \centering
    \includegraphics[width=0.45\textwidth]{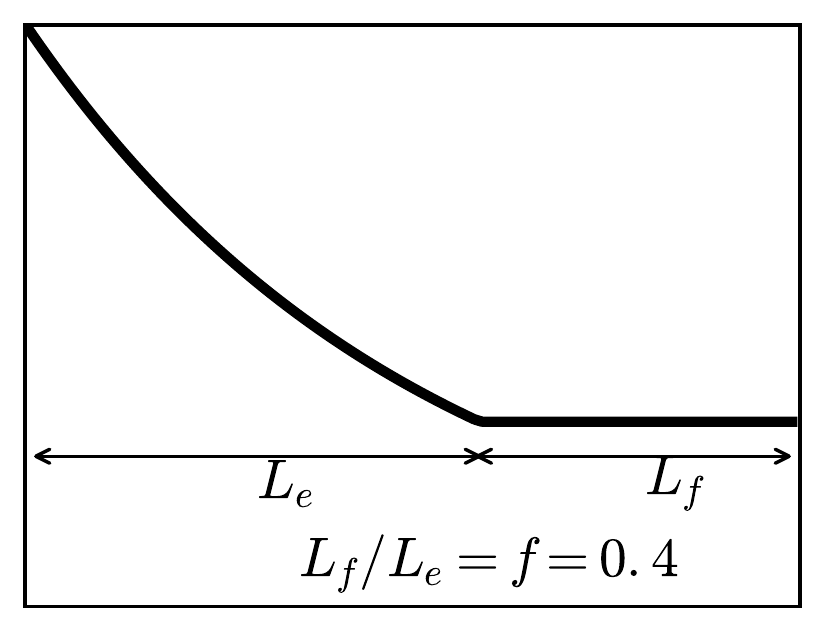}
    \caption{The overall power allocation to partition.The ratio of exponential
        to flat sections is denoted $f$, here with $f=0.4$.}
    \end{subfigure}

    \begin{subfigure}[b]{0.9\textwidth}
    \centering
    \includegraphics[width=0.45\textwidth]{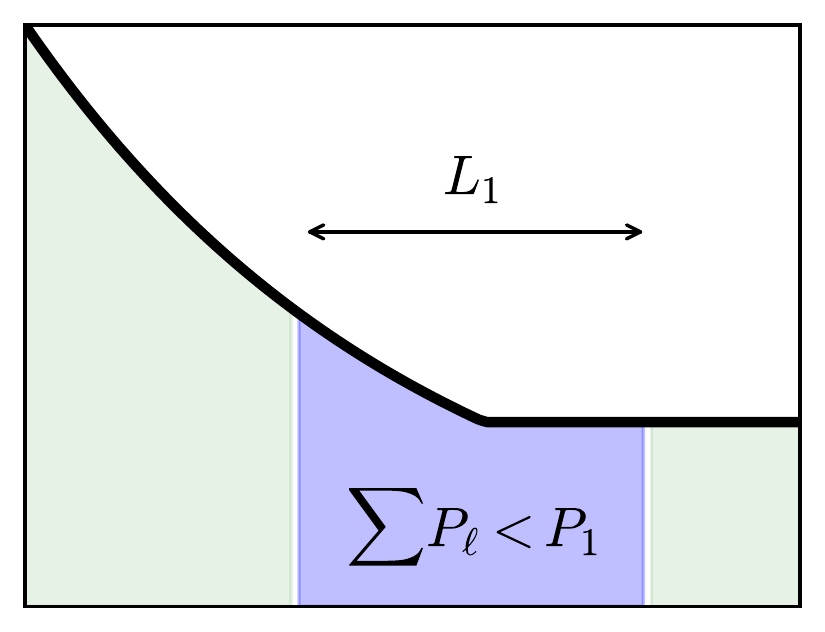}
    \includegraphics[width=0.45\textwidth]{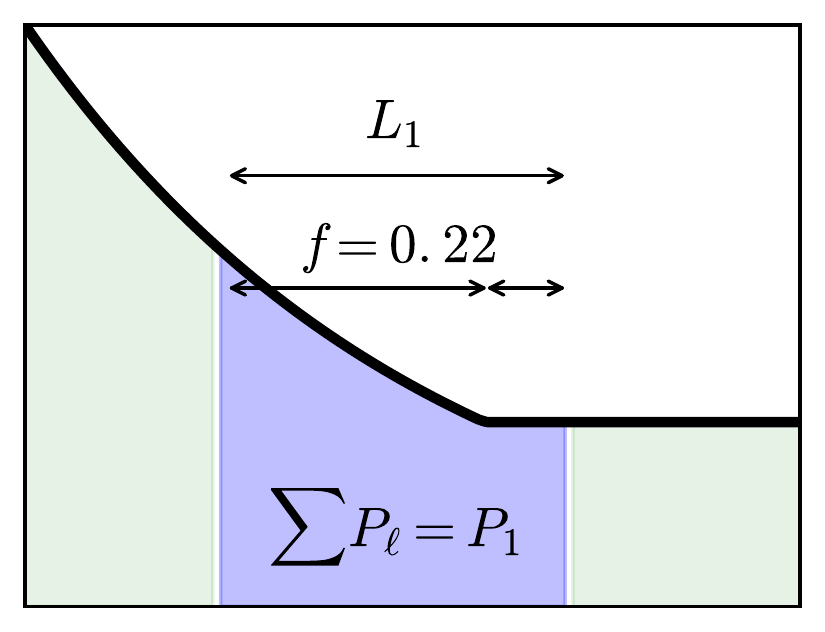}
    \caption{
        We first consider a bracket of size $L_1$, shown in blue.
        Our first attempt (on the left) contains too little power, so we move it
        leftwards until the contained power reaches $P_1$, shown on the right.
        At this position, $f=0.22$.
    }
    \end{subfigure}

    \begin{subfigure}[b]{0.9\textwidth}
    \centering
    \includegraphics[width=0.45\textwidth]{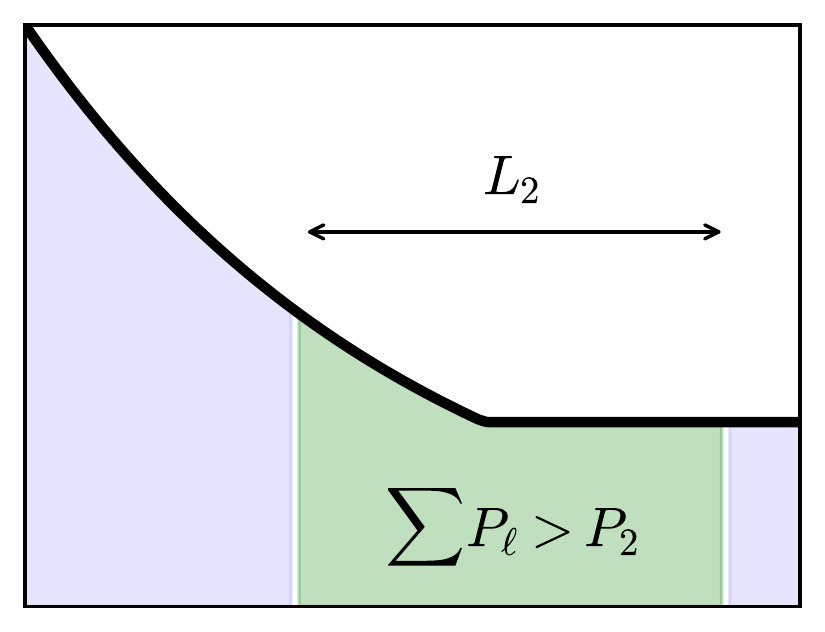}
    \includegraphics[width=0.45\textwidth]{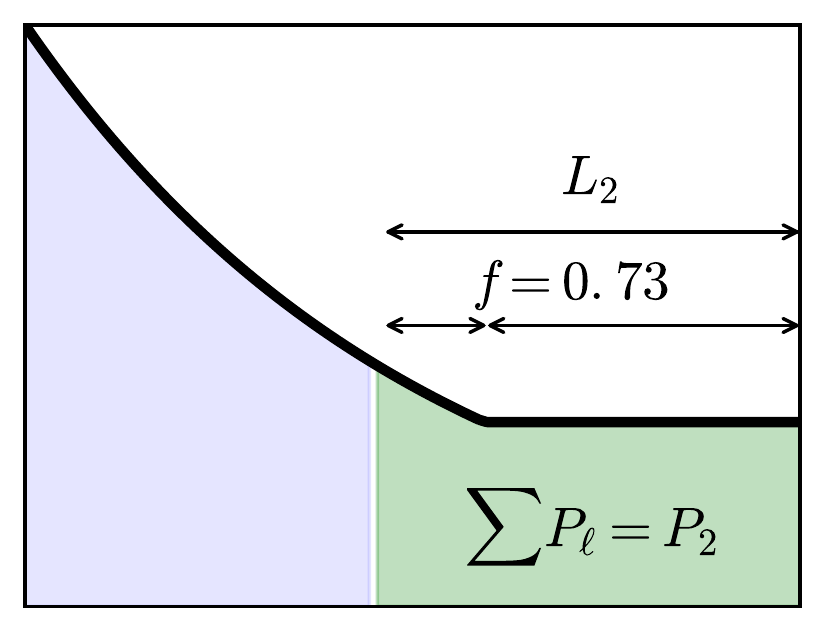}
    \caption{
        Next we consider a bracket of size $L_2$, shown in green.
        The first attempt contains too much power, so we move it
        rightwards until the contained power is $P_2$. In this position $f=0.73$.
        Since the $L_1$ bracket has an $f$ closer to that of the original
        allocation, we use that bracket for partitioning.
    }
    \end{subfigure}
\caption{Example of the power allocation partitioning strategy. Sections highlighted in blue are considered for user 1,
    and those in green for user 2.}
\label{fig:mu-mac-pa-cartoon}
\end{figure}

 The strategy for partitioning the power allocation is as follows. We locate a bracket of size either
$L_1$ or $L_2$ sections inside $\{P_\ell\}_{\ell\in[L]}$ such that its sum is
as close as possible to, without exceeding, $P_1$ or $P_2$ respectively, and
allocate the coefficients within the bracket to transmitter 1 or transmitter 2
as appropriate. The remaining sections on either side of the bracket are
allocated to the other transmitter.  The choice of bracket size (and thus of
which transmitter is allocated the bracketed coefficients) is determined by
which option gives the closest to optimal division of the coefficients from the
flat section.  This strategy is illustrated graphically in
Figure~\ref{fig:mu-mac-pa-cartoon}.

\begin{remark}
The partitioning method above can be applied to any overall power allocation $\{P_\ell\}_{\ell\in[L]}$, including the exponentially decaying one where $P _\ell \propto e^{-2 \mc{C} \ell/L}$ for $\ell \in [L]$. (Here $\mc{C}$ is the sum capacity $\frac{1}{2}\log(1 + P/\sigma^2)$, where $P=P_1+P_2$.) The decoding analysis is identical to that of a point-to-point AWGN channel with power constraint $P$ operating at rate $R=R_1 + R_2$. Therefore the AMP decoding result of Theorem \ref{thm:main_amp_perf} can be directly applied, and \eqref{eq:pezero}  bounds the probability of  the \emph{sum} of the section error rates of the two users exceeding some $\e >0$. This establishes that all rate pairs within the capacity region are achievable with an efficient AMP decoder.
\end{remark}

\section{Simulation results}

\begin{figure}[p]
    \centering
    \begin{tabular}{ll}
    \includegraphics[width=0.5\textwidth]{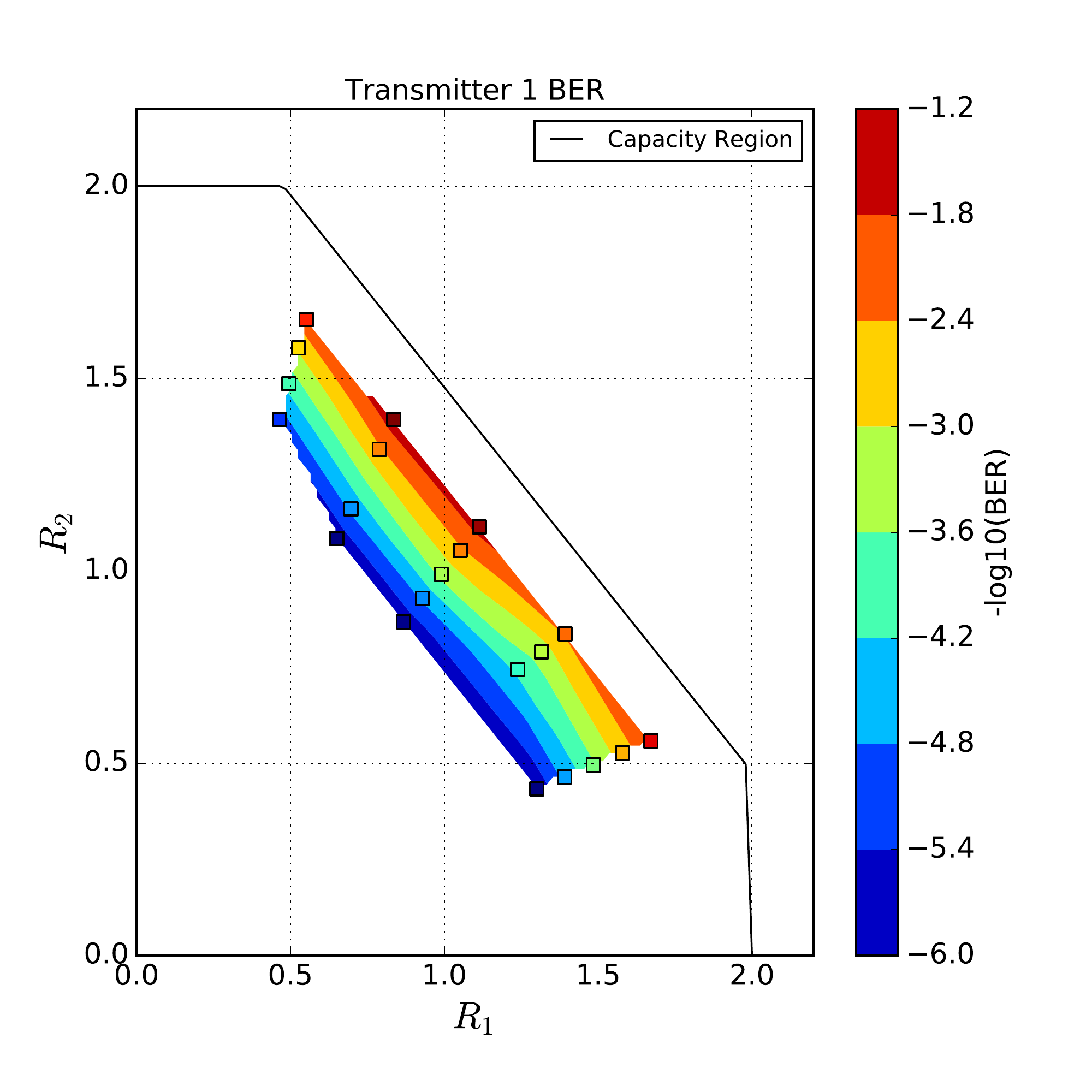}
        &
    \includegraphics[width=0.5\textwidth]{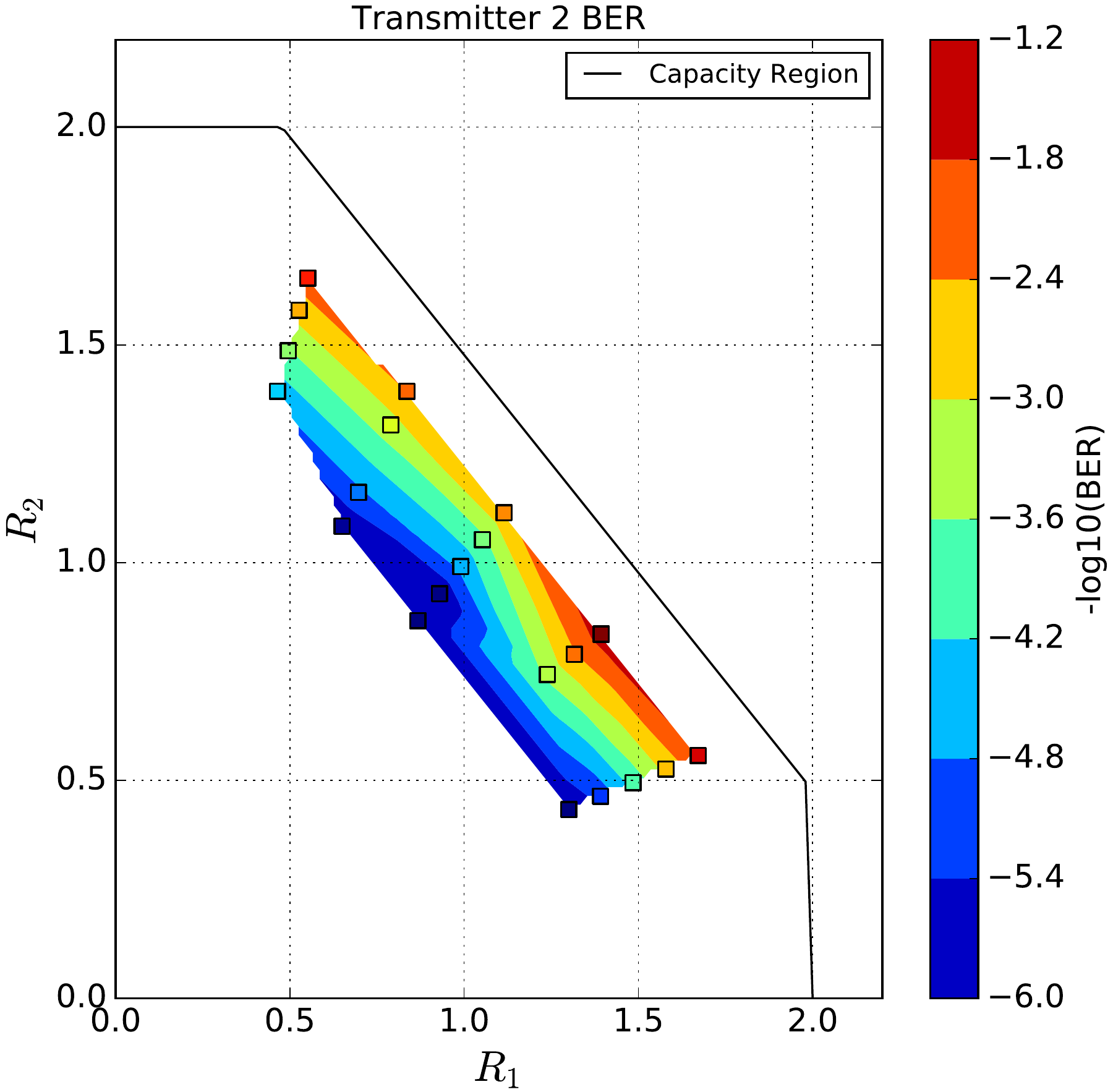}
    \end{tabular}
    \includegraphics[width=0.5\textwidth]{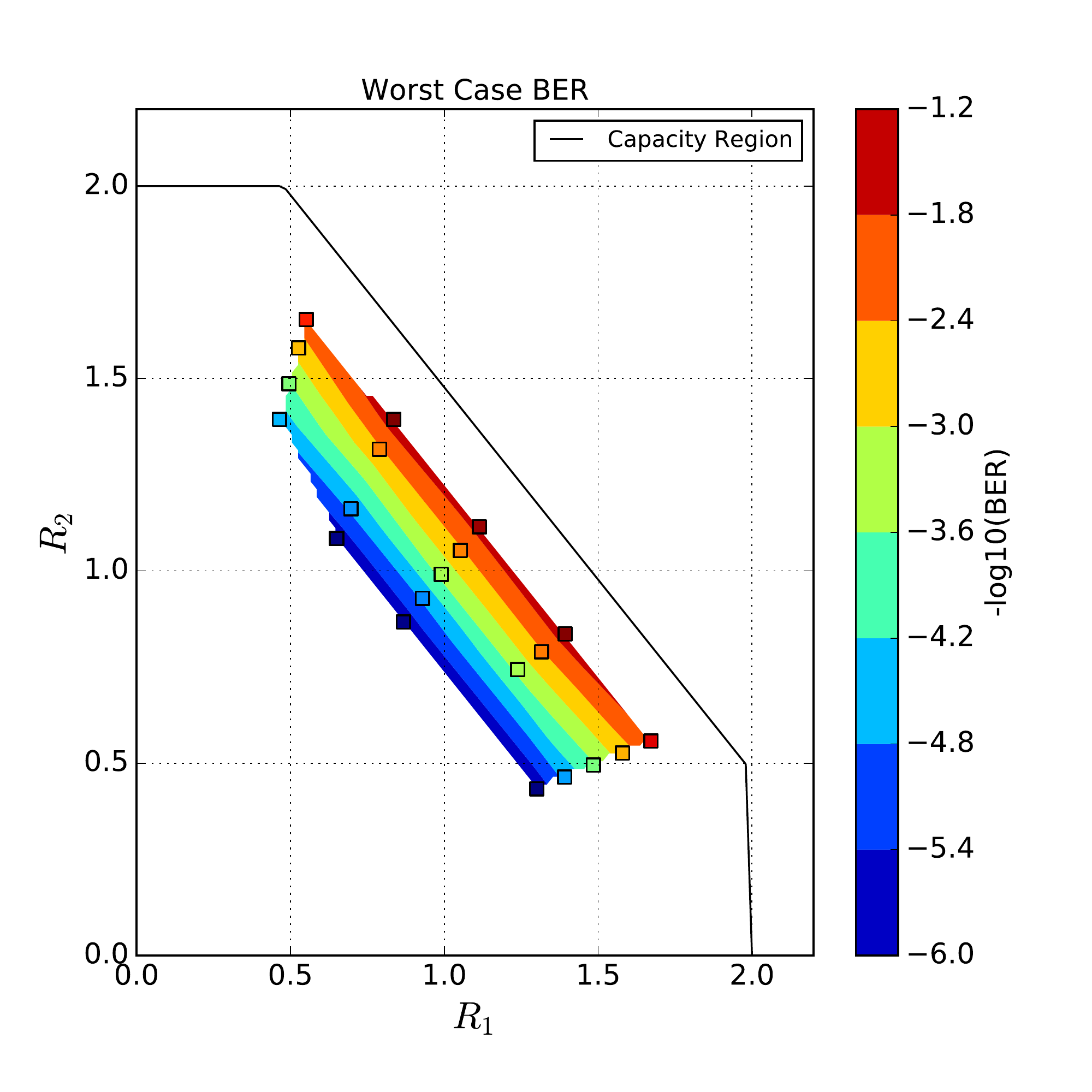}
    \caption{Bit error rates for multiple access channel simulations.
    Each displayed point is the average of approximately 30000 trials.
             $P_1=15$, $P_2=15$, $\sigma^2=1$, $M=512$, $n=4095$. }
    \label{fig:mac-results}
\end{figure}

We now discuss the empirical performance of SPARCs with AMP decoding for the Gaussian MAC, considering  a symmetric setup with $P_1=P_2=15$, and $\sigma^2=1$. The sum-capacity is $\mc{C} = \frac{1}{2} \log \left( 1+ \frac{P_1+P_2}{\sigma^2} \right)$  Each operating point is set as follows.  We fix
$\gamma\in[0,1]$ and then set $R_1+R_2=R=\gamma\mc{C}$. The parameter $\gamma$ represents the backoff from the sum rate limit. Next fix
$\alpha\in[0,1]$ which sets the share of this sum rate for each transmitter as 
$R_1=\alpha R$ and $R_2=(1-\alpha)R$.

For the SPARC design matrix,  first fix $M=512$ and $n=4095$. The parameters $L_1, L_2$ are then determined by the rate pair:   $L_1 = n R_1 / \log M$, and $L_2 = n R_2 / \log M$.  A power allocation is then designed using the sum rate $R$ and $L=L_1+L_2$, and partitioned according to the strategy described in the previous section.

Figure ~\ref{fig:mac-results} shows the bit error rates with AMP decoding for different values of $(R_1, R_2)$. The boundary of the capacity region is also shown.  After performing AMP decoding on the combined SPARC $A$, the number of sections decoded in error is counted separately for the first $L_1$ and then the last $L_2$ sections, and used to report section error rates for each transmitter. We also report the worst case section error rate between the two users.

As the experimental set-up is equivalent to a single point-to-point channel
with the same sum power and sum rate and power allocation, we expect to obtain
similar bit error performance to that scenario. The results support this, with
good bit error rates even reasonably close to capacity at all points in the
rate region. The worst case bit error rate is also close to each individual user's bit error rate,
showing that the power allocation partition is generally successful at
ensuring equal error rates between users.

%
\chapter{Communication and Compression  with Side Information}  \label{chap:sideinfo}
 
 Random binning is a key ingredient of optimal coding schemes for several models in multiuser information theory. In a scheme with binning, a large codebook is partitioned into equal-sized codebooks of smaller rate.   In this chapter, we demonstrate how random binning can be efficiently implemented using SPARCs for two canonical multiuser models: lossy compression with side information at the decoder (the Wyner-Ziv model \cite{WynerZiv}), and channel coding with state information at the encoder (the Gelfand-Pinsker model \cite{GelfandPinsker,CostaDP}). We will consider the Gaussian versions of these models, and aim to construct SPARC-based coding schemes  that achieve rates close to the information-theoretic optimum.   The proposed binning construction can be extended to other multiuser models such as  lossy distributed source coding where the best known achievable rates use a random binning strategy.

\begin{figure}[t]
    \captionsetup[subfigure]{justification=centering}
    \begin{subfigure}[b]{0.9\textwidth}
    \centering
    \includegraphics[width=0.7\textwidth]{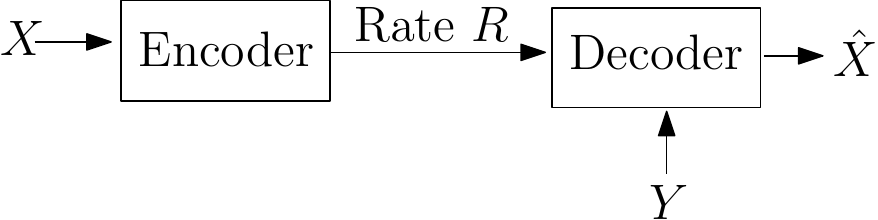}
    \vspace{10pt}
    \caption{Compressing $X$ with decoder side-information $Y$. \\ }
    \label{fig:wz}
    \end{subfigure}

       \vspace{20pt}
        
    \begin{subfigure}[b]{0.9\textwidth}
    \centering
    \includegraphics[width=0.7\textwidth]{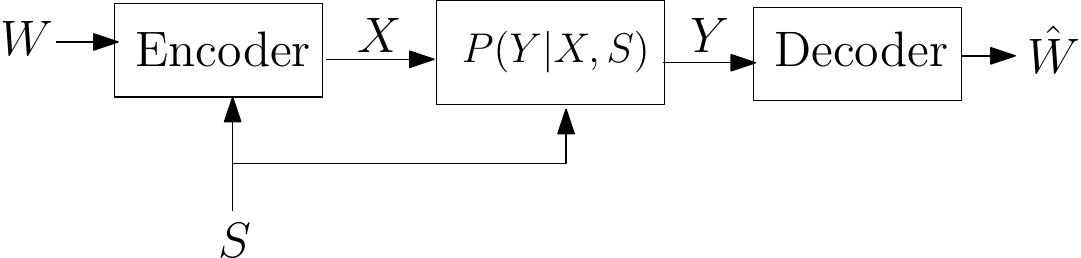}
        \vspace{10pt}
    \caption{Communicating a message $W$ over channel $P(Y|X S)$ with state $S$ known at the encoder.}
    \label{fig:gp}
    \end{subfigure}
   \label{fig:wz_gp}
\end{figure}  

  In the  Wyner-Ziv compression problem (Figure \ref{fig:wz}), given a rate $R >0$, the goal is to design a coding scheme that optimally uses the decoder side information $Y$  to reconstruct the source with minimal distortion. This model has been used in  distributed video coding \cite{Girod2,Prism07}.  In the Gelfand-Pinsker channel coding problem (Figure \ref{fig:gp}),  the encoder knows the  channel state $S$ non-causally (at the beginning of communication), while the decoder only receives the channel output $Y$.  The goal is to design a scheme that optimally uses the channel state information available at the encoder to achieve the maximal rate.  The Gelfand-Pinsker problem is the channel coding dual  of the Wyner-Ziv problem \cite{PradhanDuality,WornellDuality}, and has been used in multi-antenna communication \cite{SpencerCommMag}, digital watermarking \cite{MoulinK05,ChenW01} and steganography \cite{SulliStegan}.

   We will consider the Gaussian versions of the Wyner-Ziv and the Gelfand-Pinsker models, where the side information and the additive noise are independent Gaussian random variables.  An interesting feature of the Gaussian Wyner-Ziv and Gelfand-Pinsker problems is that the optimal rate in each case is the same as the setting where the side/state  information is available to both the encoder and the decoder \cite{WynerZiv,CostaDP}. 
   
   \paragraph{Previous code constructions} Several practical coding schemes have been proposed for the Wyner-Ziv problem, e.g.,  \cite{Discus03,GirodCoding1, XiongCoding1}. Recent constructions based on polar codes are the  first computationally efficient schemes that are provably rate-optimal  \cite{Polarrd,Polarwzgp}. However, the polar coding  constructions are only applicable to problems where the source and side-information distributions are discrete and symmetric.    For the Wyner-Ziv problem with continuous source and side-information distributions, elegant coding schemes such as  those based on lattices \cite{Zamir02, ErezLZ05} have been proposed. But these generally have exponential encoding and decoding complexity. Constructions using nested lattice codes have also been used for  the Gaussian Gelfand-Pinsker model (dubbed `writing on dirty paper')  \cite{Zamir02,ErezSZ05,ChenW01}. Computationally efficient code designs for this problem have been proposed in several works, e.g., \cite{XiongDPCoding,EreztenBrink}.

\section{Binning with SPARCs} \label{sec:sparc_bin}

We now describe the SPARC binning construction, which is applied to the Wyner-Ziv and Gelfand-Pinsker problems in the next two sections.  For any pair of rates $R_1, R$ with $R_1 > R$, we would like to divide a rate $R_1$ SPARC with block length $n$ into $e^{nR}$ bins. Each bin corresponds to a rate $(R_1-R)$ SPARC with the same block length $n$. This is done as follows. 

\begin{figure}[t]
\centering
\includegraphics[width = 0.9\textwidth]{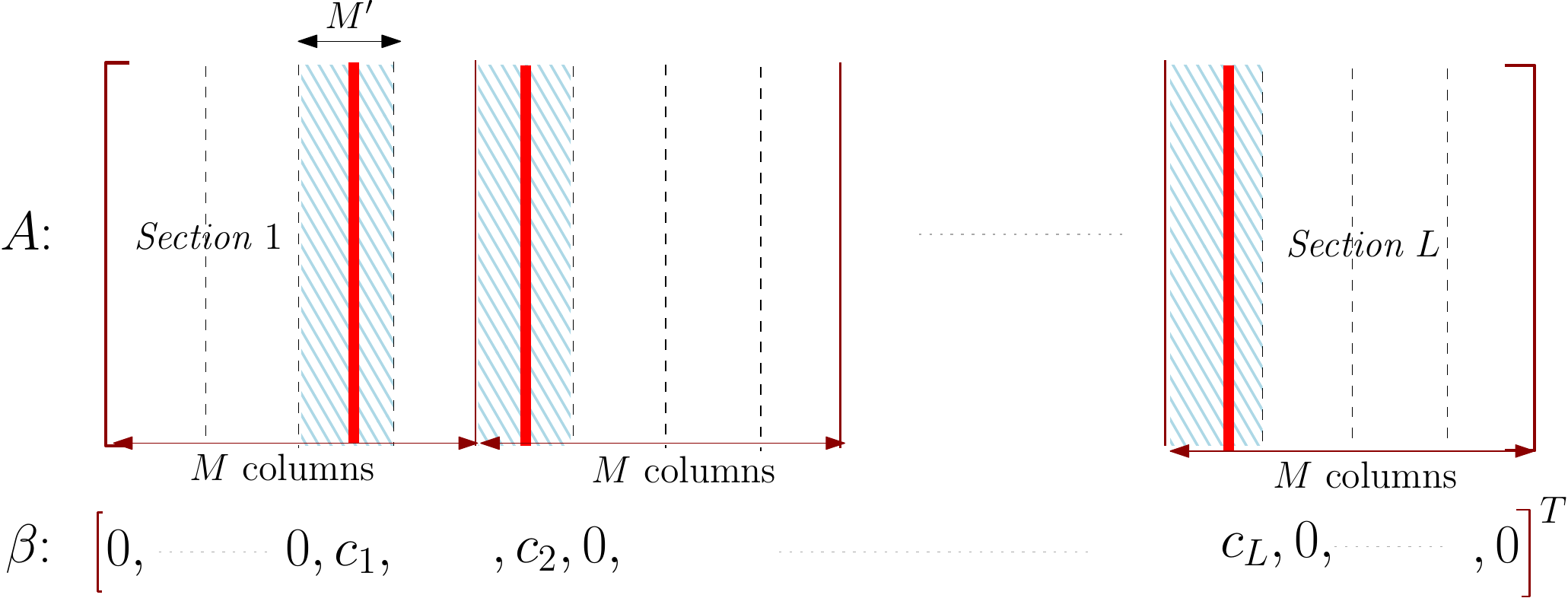}
\caption{\small Codebook binning using SPARCs}
\label{fig:sparc_binning}
\end{figure}

Fix  the parameters $M, L, n$  of  the design matrix $A$ such that \be L \log M = nR_1.  \label{eq:LMR1_def} \ee  As shown in Figure \ref{fig:sparc_binning}, divide each section of $A$ into sub-sections consisting of $M'$ columns each. Each bin is a smaller SPARC defined via a \emph{sub-matrix} of $A$ formed by picking one {sub-section} from each section. For example, the collection of shaded sub-sections in the figure together forms one bin.  If $M'$ is chosen such that 
\be
n(R_1 -R) = L \log M',
\label{eq:mpr_def}
\ee
then each sub-matrix defines a  bin that is a rate $(R_1-R)$ SPARC with parameters $(n, L, M')$. Since we have $(M/M')$ sub-section choices in each of the $L$ sections, the total number of bins that can be chosen is $\left(M/M'\right)^L$.
Combining \eqref{eq:LMR1_def} and \eqref{eq:mpr_def}, we have
\be
L \log \frac{M}{M'} =n R, \quad \text{ or } \quad \left( \frac{M}{M'} \right)^L = e^{nR}
\ee
Therefore, the number of bins is $e^{nR}$, as required. The binning structure mimics that of the SPARC codebook, where two codewords may share up to $(L-1)$  columns.  Similarly, two bins may share as many as $(L-1)$ sub-sections of $A$. 

\section{Wyner-Ziv coding with SPARCs} \label{sec:wz_sparcs}

Consider the model in Figure \ref{fig:wz}, with $X \sim \mc{N}(0, \sigma^2)$ being an i.i.d. Gaussian source to be compressed with mean-squared distortion $D$. The decoder side-information $Y$ is noisy version of $X$  and is related to $X$ by $Y=X + Z$, where $Z \sim \mc{N}(0,N)$ is independent of $X$.  Let the target distortion be $D < \text{Var}(X|Y)$. (Distortions greater than this value can be achieved with zero rate, by simply estimating $X$ from $Y$.) 

Let $x,y \in \reals^n$ be the source and side information sequences drawn i.i.d. according to the distributions of $X$ and  $Y$, respectively.  The sequence $y$ is available at the decoder non-causally.  If $y$ were available at the encoder as well, then an optimal strategy is to compress $(y-x)$ with a rate high enough to ensure reconstruction of $x$ within distortion $D$. The minimum rate required for this strategy is $\frac{1}{2} \log \frac{\text{Var}(X|Y)}{D}$ nats/sample. The result of Wyner and Ziv  \cite{WynerZiv} shows that this rate is achievable even when $y$ is available only at  the decoder.

We first review the main ideas in the Wyner-Ziv random coding scheme. Define an auxiliary random variable $U$  jointly distributed with $X$ according to
\be \label{eq:uxv} 
U = X  + V, 
\ee 
where $V \sim \mc{N}(0, Q)$ is independent of $X$.  We choose  
\be 
Q = 
\left( \frac{1}{D} - \frac{1}{\text{Var}(X|Y)} \right)^{-1}. 
\label{eq:Qdef0}
\ee 
One can invert this test channel to write
\be \label{eq:uxv_rev} 
X = \frac{\sigma^2}{\sigma^2+Q} \, U  + V' = U' + V', 
\ee 
where $V' \sim \mc{N}(0, \frac{\sigma^2 Q}{\sigma^2+Q})$ is independent of 
$U' \sim \mc{N}(0, \frac{\sigma^4}{\sigma^2+Q})$.

In the Wyner-Ziv scheme \cite{WynerZiv}, $x$ is first quantized to a codeword $u'$ within distortion $\frac{\sigma^2 Q}{\sigma^2+Q}$, using a rate-distortion codebook of rate $R_1$. The sequences in this codebook, say $\{ u'(1), \ldots, u'(e^{nR_1}) \}$,  are drawn $\sim_{i.i.d.} \mc{N}(0, \frac{\sigma^4}{\sigma^2+Q})$.
 Shannon's rate-distortion theorem guarantees that with high probability  a codeword will be found within the required distortion if 
\be
R_1 >  I(X; U') = I(X;U) = \frac{1}{2} \log \left(1 + \frac{\sigma^2}{Q} \right),
\label{eq:R1_cond}
\ee
where the mutual information has been calculated using the test channel  in  \eqref{eq:uxv_rev}. The index corresponding to the chosen codeword $u'$ is not sent to the decoder directly. Instead, the size $e^{nR_1}$ rate-distortion codebook is divided into $e^{nR}$ equal-sized bins, with a uniformly random assignment of sequences to bins. The encoder only sends  the index of the bin containing $u'$, which requires a rate of $R$ nats/sample .

 The decoder's task is to recover $u'$ using the bin index and the side information $y$. This is equivalent to a \emph{channel decoding} problem, where the effective channel is 
 \be Y=X +Z  = U' + V' +Z.  \label{eq:eff_channel} \ee
 The $\snr$ of the effective channel is 
 \be
 \frac{ \text{Var}(U')}{\text{Var}(V') + \text{Var}(Z)} =  \frac{\sigma^4}{\sigma^2 Q + (\sigma^2 + Q)N}.
 \ee
 The decoder will succeed with high probability if the number of sequences within the bin, $e^{n(R_1-R)}$,  is less than $e^{nI(U';Y)}$. We therefore need
\be  R_1 -R  < I(U';Y) = \frac{1}{2} \log \left( 1 +  \frac{\sigma^4}{\sigma^2 Q + (\sigma^2 + Q)N} \right). \label{eq:R2_cond}
\ee

Combining the conditions \eqref{eq:R1_cond} and \eqref{eq:R2_cond}, we conclude that both the encoding and the decoding steps are successful with high probability if 
\be
R > I(X;U') - I(Y;U') = \frac{1}{2} \log \frac{\text{Var}(X|Y)}{D},
\label{eq:Rwz_cond}
\ee
where the last equality follows by substituting the value for $Q$ from \eqref{eq:Qdef0}. For any $R$ satisfying \eqref{eq:Rwz_cond}, $R_1 > R$ can be chosen to satisfy both \eqref{eq:R1_cond} and \eqref{eq:R2_cond}.

The final step at the decoder is produce the reconstructed sequence $\hat{x}$ as the MMSE estimate of $x$ given $u'$ and $y$:
\be
\hat{x} = \left(\frac{1}{Q} + \frac{1}{\sigma^2} + \frac{1}{N} \right)^{-1} \left( \frac{u'}{Q} + \frac{y}{N}  \right).
\ee
Since the sequence triple $(x,y, u')$ is jointly typical according to the joint distribution of the random variables $(X,Y, U')$,  the expected distortion $\expec \abs{x-\hat{x}}^2$ can be made arbitrarily close to $D$ for sufficiently large $n$.

We now implement the coding scheme using a  SPARC with optimal (least-squares) encoding and decoding.  For a given target distortion $D$, choose $R > \frac{1}{2} \log \frac{\text{Var}(X|Y)}{D}$, and choose $R_1$ such that 
\be 
\frac{1}{2} \log \left(1 + \frac{\sigma^2}{Q} \right) < R_1 <  R +  \frac{1}{2} \log \left( 1 +  \frac{\sigma^4}{\sigma^2 Q + (\sigma^2 + Q)N} \right),
\label{eq:R1_lohi}
\ee
so that both \eqref{eq:R1_cond} and \eqref{eq:R2_cond} are satisfied. 

Fix block length $n$, and consider a SPARC defined via an $n \times ML$ design matrix $A$ with $M=L^b$ and $bL \log L = nR_1$, where $b$ is greater than the minimum value specified by Theorem \ref{thm:err_exp_rd}. The entries of $A$ are $\sim_{i.i.d.} \mc{N}(0, \frac{1}{n})$, and the non-zero entries each $\beta \in \mcb$ are all set to $\sqrt{\frac{n}{L} \frac{\sigma^4}{(\sigma^2 +Q)}}$.

 With $M'$ be determined by $M'^L = e^{n(R_1 -R)}$, partition each section of $A$ into sub-sections of  $M'$ columns each, as shown in Figure \ref{fig:sparc_binning}.

\emph{Encoding}: The encoder determines the SPARC codeword that is closest in Euclidean distance to the source sequence $x$. Let 
\be  \beta^* = \underset{\beta \in \mcb}{\argmin} \ \norm{x- A \beta}^2. \label{eq:sparc_wz_enc}\ee
The encoder sends the decoder a message $W  \equiv (p_1,\ldots,p_L)$, where $p_i \in \{1, \ldots, \frac{M}{M'} \}$  indicates the subsection in the $i$th section of $A$ where
$\beta^*$ contains a non-zero element.

\emph{Decoding}: Let $A_{\textsf{bin}(W)}$ be  the $n \times M'L$ sub-matrix corresponding to the subsections of $A$ corresponding to  the bin index $W$.  Recalling from Section \ref{sec:sparc_bin} that $A_{{\sf{bin}}(W)}$ defines an $n \times M'L$ SPARC design matrix, the decoder determines 
\be
\hat{\beta} = \underset{\beta \in \mc{B}_{M',L}}{\argmin} \ \norm{y -  A_{{\sf{bin}}(W)}\beta}^2.
\ee
Letting $\hat{u} = {A}_{{\sf{bin}}(W)}\hat{\beta}$,  the source sequence is reconstructed as
\be \hat{x}  = \left(\frac{1}{Q} + \frac{1}{\sigma^2} + \frac{1}{N} \right)^{-1} \left( \frac{\hat{u}}{Q} + \frac{y}{N}  \right).  \ee

\emph{Analysis}: The SPARC rate-distortion result in Theorem \ref{thm:err_exp_rd} guarantees that the encoding will succeed with high probability. That is, for $R_1$ exceeding the lower bound in \eqref{eq:R1_lohi}, the probability of the event 
\[ 
\left\{ \frac{1}{n} \norm{x- A \beta^*}^2 > \frac{\sigma^2 Q}{\sigma^2+Q}  \right\}
\]
is exponentially small in $n$. On the decoder side, the effective channel is given by  \eqref{eq:eff_channel}, with the task being to recover the codeword $u'$ with $v'+z$ treated as a noise sequence. If we assume that  $v' + z$ is distributed as  
$\sim_{i.i.d.} \mc{N}(0, \text{Var}(V') + \sigma^2)$, then the SPARC channel coding result Theorem \ref{thm:MLresult} guarantees reliable decoding with exponentially small error probability. However, this assumption is not true at finite code lengths.   Indeed, recall that $v=(x-u')$ is the distortion (quantization noise) incurred at the encoder.  Therefore, $v'$ may be dependent on the codeword $u'$; moreover, its distribution will not be exactly Gaussian. 

Though we expect that the distribution of $(v' + z)$ will be asymptotically independent of $u'$ and converge to Gaussian, a careful analysis is needed to rigorously prove that SPARCs achieve the Gaussian Wyner-Ziv rate-distortion bound. This remains an open question for future work. 

\section{Gelfand-Pinsker coding with SPARCs} \label{sec:gp_sparcs}

Consider the model in Figure \ref{fig:gp}, with the channel law $P(Y|X,S)$ given by 
\be
Y=X+ S + Z.
\ee
The state variable $S \sim \mc{N}(0, \sigma^2_s)$ is independent of the additive noise $Z \sim \mc{N}(0,N)$. There is an average power constraint $P$ on the input sequence $x \in \reals^n$. The state sequence $s \in \reals^n$  is  $\sim_{i.i.d.} \mc{N}(0, \sigma^2_s)$, and  is known non-causally at the encoder.

Due to the power constraint, the encoder cannot simply cancel out the effect of the  state sequence $s$ using the codeword.   In Costa's capacity-achieving scheme \cite{CostaDP},  one part of the state sequence $s$ is used by the encoder to produce the input sequence $x$, and the remaining part  is treated as noise. This is done as follows.

Define an auxiliary random variable $U$ as
\be
U = X+ \alpha S,
\label{eq:UdefAlph}
\ee
where $X \sim \mc{N}(0, P)$ is independent of $S \sim \mc{N}(0, \sigma^2_s)$, and
\be
\alpha = \frac{P}{P+N}.
\ee
Inverting the test channel $U$, we write 
\be
S =  \frac{\alpha \sigma_s^2}{P + \alpha^2 \sigma_s^2} U + X' = U' +X'
\label{eq:Sx'_test_ch}
\ee
where $U' \sim \mc{N}(0,  \frac{\alpha^2 \sigma_s^4}{P + \alpha^2 \sigma_s^2}  )$ and $X' \sim \mc{N}(0, \frac{P \sigma_s^2}{ P + \alpha^2 \sigma_s^2})$ are independent. 

In Costa's scheme \cite{CostaDP}, we first construct a random codebook of rate $R_1$,  with sequences $\{ u'(1), \ldots $ $, u'(e^{nR_1}) \}$  drawn $\sim_{i.i.d.} \mc{N}(0, \frac{\alpha^2 \sigma_s^4}{P + \alpha^2 \sigma_s^2} )$. The  codebook is partitioned into $e^{nR}$ bins, with each bin containing $e^{n(R_1 - R)}$ codewords. 

 To transmit message $W \in \{1, \ldots, e^{nR} \}$, the encoder finds a  codeword $u'$ \emph{inside bin $W$} of the codebook that is within distortion $\frac{P \sigma_s^2}{ P + \alpha^2 \sigma_s^2}$ of the state sequence $s$. From Shannon's rate-distortion theorem, such a codeword will be found with high probability if
\be
R_1 -  R > I(S; U') = \frac{1}{2} \log \left( 1  + \frac{ \alpha^2 \sigma_s^2}{P} \right).
\label{eq:R_gpcond}
\ee

The transmitted sequence $x$ is then determined as 
\be
x = \left(  \frac{P + \alpha^2 \sigma_s^2}{\alpha \sigma_s^2} \right) u'  - \alpha  s.
\label{eq:xtran_GP}
\ee
Since the empirical joint distribution of $(u',s)$ is close to that specified by the test channel \eqref{eq:Sx'_test_ch}, it can be verified that the second moment of $x$ will be close to $P$ with high probability. 

Using the test channel in \eqref{eq:Sx'_test_ch}, the channel input-output relationship can be expressed as 
\be
\begin{split}
Y  & = X + S +  Z  \\ 
& =   \left(  \frac{P + \alpha^2 \sigma_s^2}{\alpha \sigma_s^2} \right) U'  + (1- \alpha)  S + Z \\
& = \left( \frac{P + \alpha^2 \sigma_s^2}{\alpha \sigma_s^2} + 1- \alpha \right)U' + (1- \alpha) X' + Z,
\end{split}
\label{eq:eff_testch}
\ee
where $U', X', Z'$ are mutually independent. 

The decoder's task is to determine the codeword $u'$ from the output sequence $y$.  The index of the bin containing the decoded codeword gives the reconstructed message $\hat{W}$.  Assuming that the encoding has been successful, the empirical joint distribution of $(u',y)$ will be close to that of the effective  channel in \eqref{eq:eff_testch}. The effective signal to noise ratio of this channel is  
\be
\snr_{\sf{eff}} =  \frac{ \frac{(P + \alpha \sigma_s^2)^2}{\alpha^2 \sigma_s^4} \text{Var}(U')}{(1-\alpha)^2\text{Var}(X') + \text{Var}(Z)}
= \frac{(P + \alpha \sigma_s^2)^2}{ (1-\alpha)^2 P \sigma_s^2 + N(P + \alpha^2 \sigma_s^2)}.
\ee
The decoding step will be successful with high probability if 
\be
\begin{split}
R_1 < I(U'; Y) & = \frac{1}{2}\log(1 + \snr_{\sf{eff}})  \\ 
& = \frac{1}{2} \log \left( \frac{(P + \alpha^2 \sigma_s^2 )(\sigma_s^2 +P+N)}{ (1-\alpha)^2 P \sigma_s^2 + N(P + \alpha^2 \sigma_s^2)}  \right).
\end{split}
\label{eq:R1_gpcond}
\ee

Combining \eqref{eq:R_gpcond} and \eqref{eq:R1_gpcond}, we conclude that both the encoding and the decoding steps are successful with high probability if 
\be
R < I(S; U') - I(U';Y) = \frac{1}{2} \log \left( 1 + \frac{P}{N} \right),
\label{eq:Ropt_gp}
\ee
where the last equality follows by substituting $\alpha = \frac{P}{P+N}$ and simplifying.  For any $R$ satisfying \eqref{eq:Ropt_gp}, $R_1$ should be chosen large enough to satisfy \eqref{eq:R1_gpcond}.

We now implement the above coding scheme using a SPARC with optimal (minimum distance) encoding and decoding. Fix block length $n$, and consider a SPARC defined via an $n \times ML$ design matrix $A$ with $M=L^b$ and $bL \log L = nR_1$, where $b$ is greater than the minimum value specified by Theorem \ref{thm:err_exp_rd}. The entries of $A$ are $\sim_{i.i.d.} \mc{N}(0, \frac{1}{n})$, and the non-zero entries each $\beta \in \mcb$ are all set to $\sqrt{\frac{n}{L} \frac{\alpha^2 \sigma_s^4}{(P + \alpha^2 \sigma_s^2)}}$.

 With $M'$ be determined by $M'^L = e^{n(R_1 -R)}$, partition each section of $A$ into sub-sections of  $M'$ columns each, as shown in Figure \ref{fig:sparc_binning}.  This defines $e^{nR}$ bins, one for each message.

\emph{Encoding}: To transmit message $W \in [e^{nR}]$, the encoder determines the SPARC codeword within bin $W$ that is closest to the state sequence $s$.  Denoting by $A_{{\sf{bin}}(W)}$ the sub-matrix of $A$ corresponding to bin $W$, the encoder computes
\[ \beta^* = \underset{\beta \in \mathcal{B}_{M',L} }{\argmin} \ \norm{s- A_{{\sf{bin}}(W)} \beta}^2. \]
Following \eqref{eq:xtran_GP}, the transmitted sequence is 
\[
x=  \left(  \frac{P + \alpha^2 \sigma_s^2}{\alpha \sigma_s^2} \right) A_{{\sf{bin}}(W)} \beta^*  - \alpha  s.
\]

\emph{Decoding}: The decoder determines the codeword in the big rate $R_1$ SPARC closest to the output sequence $y$. It computes
\be
\hat{\beta} = \underset{\beta \in \mc{B}_{M,L}}{\argmin} \ \norm{y -  A\beta}^2.
\ee
The index of the bin containing $\hat{\beta}$ is the decoded message $\hat{W}$.

\emph{Analysis}: The SPARC rate-distortion result in Theorem \ref{thm:err_exp_rd} guarantees that encoding will succeed with high probability provided $R_1-R$ satisfies \eqref{eq:R_gpcond}.  However the analysis of the decoder is challenging, for reasons similar to those for Wyner-Ziv SPARCs. In the effective channel \eqref{eq:eff_testch}, we cannot assume that  the  noise $(1- \alpha)x' +z$ is exactly i.i.d. Gaussian  and independent of the codeword $u'$.  This is because $x'=s-u'$, the quantization noise incurred at the encoder, cannot be assumed to be independent of $u'$ and Gaussian. 

In summary, a rigorous analysis of the SPARC coding schemes for the Gaussian Gelfand-Pinsker and Wyner-Ziv models remains open. Another important open problem is to construct feasible SPARC coding schemes that achieve the Shannon limits for these models. The challenges in designing such schemes, and some ideas to address them, are discussed in the final chapter.

%
%
\chapter{Open Problems and Further Directions}  \label{chap:summary}   

In the first nine chapters, we described how sparse regression codes can be used for channel coding, lossy source coding, and  multi-terminal versions of these problems. 
We conclude the monograph with a discussion of  open questions and directions for further work.

\section{Channel coding with SPARCs}

\paragraph{Gap from capacity} For fixed value of decoding error probability, how fast can the gap from capacity  $(\mc{C} - R)$ shrink with growing block length $n$?   With optimal decoding,  the result in Chapter \ref{chap:AWGN_opt}  implies  that the gap from capacity for SPARCs is of  $O(\frac{1}{\sqrt{n}})$, which is order-optimal (from the results in \cite{ppv10}). In contrast, the  feasible decoders  in Chapter \ref{chap:AWGN_eff} all have much larger gap from capacity: even with optimized power allocations, the gap to capacity is no smaller than $O(\frac{\log \log n}{\log n})$. 

A key open question is: can one achieve polynomial gap to capacity using SPARCs with feasible decoding?  For polar codes over binary input symmetric channels, This question was recently answered in the affirmative \cite{GuruXia15polar}.  Achieving a polynomial gap from capacity for SPARCs may require new constructions such as spatially coupled SPARCs with power allocation as well as new decoding algorithms.

\paragraph{Analysis of sub-Gaussian and Hadamard-based SPARCs} In Chapter \ref{chap:AWGN_opt}, we analyzed the optimal decoder for SPARCs defined via i.i.d. Gaussian and Bernoulli dictionaries. An interesting direction is to generalize these results to dictionaries with arbitrary i.i.d. sub-Gaussian random variables. The key part of the analysis in the Gaussian case involves controlling the moment generating function of the difference between the  squares of two Gaussian random variables. Extending the analysis to sub-Gaussian dictionaries  would involve reworking the proof  of Proposition \ref{prop:nonasympML} to replace the steps using Gaussian-specific results  with the appropriate results for sub-Gaussians. 

A more difficult open question is to analyze SPARC dictionaries defined via partial  Hadamard or Fourier matrices.  As described in Chapter \ref{chap:emp_perf}, these structured matrices significantly reduce decoding and storage complexity, and are therefore important for practical implementation of SPARCs. 

The analysis of feasible SPARC decoders with non-Gaussian dictionaries is more challenging than that of   optimal decoding, and remains open even for  Bernoulli dictionaries.

\paragraph{Coded modulation using SPARCs}  The empirical results in Section \ref{sec:ldpc-outer} show that concatenating an outer LPDC code with an inner SPARC can produce a steep drop in error rates, when the $\snr$ exceeds a threshold. This waterfall behaviour was obtained using an off-the-shelf LDPC code. It appears likely that one can further improve the error performance (i.e., obtain a smaller threshold) by jointly optimizing the design of the outer LDPC code and inner SPARC using an EXIT chart analysis, similar to \cite{ten2004design}.

\paragraph{SPARCs for general channel models} The recent work of Barbier et al. \cite{barbITW16,BarbierDM17} extends the spatially coupled SPARC construction to general memoryless channels. The idea is to apply a symbol-by-symbol mapping to each Gaussian SC-SPARC codeword $A\beta$ to produce a codebook  whose input alphabet and distribution are tailored to the channel. The message vector $\beta$ is recovered from the channel output sequence using generalized AMP (GAMP) \cite{Rangan11} decoding.  The fixed points of the state evolution recursion  are analyzed using the potential function method in \cite{BarbierDM17}. This analysis predicts that the GAMP decoder is asymptotically capacity achieving for a general memoryless channel. 

The results in \cite{BarbierDM17} suggest two directions for further work. One is to extend the analysis of spatially coupled SPARCs to general memoryless channels with GAMP decoding, and rigorously prove that the coding scheme is capacity achieving. It would also be interesting to study the empirical performance of the SPARC coding scheme over commonly studied memoryless channels such as the binary symmetric channel, and compare the performance with that of capacity achieving codes such as  LDPC and polar codes. 

Another interesting direction is to generalize SPARC coding schemes originally designed for AWGN channels to  Gaussian fading channels and MIMO channels, which are important in wireless communication \cite{tse2005fundamentals,goldsmith2005wireless}.

\section{Lossy compression with SPARCs}

\paragraph{Gap from optimal rate-distortion function} With optimal encoding, the results in Chapter \ref{chap:comp_opt}  imply that for a given distortion level $D$,  the  SPARC rate $R$ should be $O(\frac{1}{\sqrt{n}})$ higher than $R^*(D)$ (the optimal Gaussian rate-distortion function) in order to ensure a fixed probability of excess distortion. In contrast, the successive cancellation encoder described in Chapter  \ref{chap:comp_eff_enc} can only achieve rates that are $O(\frac{\log \log n}{\log n })$ above $R^*(D)$ (see Sec. \ref{sec:gapDR}). 

A key open problem is  to design a feasible SPARC encoder with a gap from $R^*(D)$ that is of smaller order.  One idea is to investigate algorithms  that process multiple sections at a time and use soft-decision updating, instead of encoding one section at a time. Another idea to improve the gap from $R^*(D)$ is to  use spatially coupled SPARCs for lossy compression. Doing so would also yield  coding  schemes with spatially coupled SPARCs for multiuser models that require random binning.

\paragraph{Sub-Gaussian and structured dictionaries}  The compression performance of sub-Gaussian or Hadamard-based SPARC design matrices is empirically very similar to that of i.i.d. Gaussian design matrices. 
As in the channel coding case, Hadamard-based designs significantly reduce both the encoding complexity and the memory required.  However, there are no theoretical results for compression with non-Gaussian dictionaries. It would of interest to establish  performance guarantees for compression with such dictionaries, both with optimal encoding and with successive cancellation encoding.

\paragraph{Lossy compression of general sources} We expect that the results for SPARC compression of i.i.d. Gaussian sources in Chapters \ref{chap:comp_opt} and \ref{chap:comp_eff_enc} can be extended to Gaussian sources with memory (e.g., Gauss-Markov sources), using the spectral representation of the source distribution. More broadly, can one use SPARC-like constructions to compare finite alphabet sources, e.g., binary sources with Hamming distortion?

\section{Multi-terminal coding schemes with SPARCs}

\paragraph{Performance guarantees with optimal encoding} A key open problem is to provide a rigorous proof that the SPARC coding schemes in Chapter \ref{chap:sideinfo} for the  Wyner-Ziv and Gelfand-Pinsker problems achieve the Shannon limits. As discussed in Sections \ref{sec:wz_sparcs} and \ref{sec:gp_sparcs}, the main challenge in the analysis of both schemes is that  part of the effective noise at the decoder is quantization noise, which cannot be assumed to be  Gaussian and independent of the codeword.

\paragraph{Performance guarantees with feasible encoding}  The  coding schemes for the Wyner-Ziv and Gelfand-Pinsker problems each involve a quantization operation at the encoder and a channel decoding operation at the decoder. Both operations are to be performed using the same overall SPARC. The key challenge in constructing a rate-optimal  feasible coding scheme based on power allocation is that the optimal allocations  for the quantization and the channel decoding parts are different (as the rates for the two parts are different). Since the overall SPARC must have a single power allocation, we cannot simultaneously ensure that the power allocation is optimal for both the quantization and  channel decoding operations. One idea to construct rate-optimal feasible coding schemes is to use spatially coupled SPARCs, which circumvent the need for power allocation. This will require first designing a lossy compression scheme using SC-SPARCs.

\paragraph{Coding schemes for general Gaussian multiuser models}   The best-known rates for many canonical models in multiuser information theory are achieved by coding schemes that use superposition coding and/or random binning. Since we have shown how to implement both these operations using SPARCs, one can design SPARC-based schemes for a variety of  Gaussian multi-terminal problems such as the interference channel,  distributed lossy compression, and multiple descriptions coding. In addition to the theoretical question of establishing rigorous  performance guarantees for such schemes,  an interesting direction for empirical work is to provide design guidelines to optimize the error performance at finite block lengths. 

SPARC-based coding schemes have recently been used  for unsourced random access communication over the AWGN channel \cite{fengler2019sparcs}. Since SPARCs are built on the principle of superposition coding, we expect that they will be  promising candidates for  new variants of multiple-access or broadcast communication involving a large number of nodes.

%
%
%

\backmatter  

\begin{ack}
This monograph grew out of work done in collaboration with several people, to whom we are very grateful.
 In particular, we thank Sanghee Cho, Adam Greig, Kuan Hsieh,  Antony Joseph, Cynthia Rush, and Tuhin
Sarkar. Special thanks to Cynthia Rush and Kuan Hsieh for proof-reading parts of this manuscript. We also thank the two anonymous reviewers for several helpful comments. This work was supported in part by the National Science
Foundation under Grant CCF-1217023, by a Marie Curie
Career Integration Grant (Grant Agreement No. 631489), and by EPRSC Grant EP/N013999/1.
\end{ack}

\bibliographystyle{abbrv}
\bibliography{fnt_sparc}

\begin{thebibliography}{100}

\bibitem{adamSPARC}
Python script for {SPARC} with {AMP} decoding.

\bibitem{cml}
The coded modulation library, 2008.

\bibitem{abbe2011polar}
E.~Abbe and A.~Barron.
\newblock Polar coding schemes for the {AWGN} channel.
\newblock In {\em Proc. IEEE Int. Symp. Inf. Theory}, pages 194--198. IEEE,
  2011.

\bibitem{aref2015approaching}
V.~Aref, N.~Macris, and M.~Vuffray.
\newblock Approaching the rate-distortion limit with spatial coupling, belief
  propagation, and decimation.
\newblock {\em IEEE Trans. Inf. Theory}, 61(7):3954--3979, 2015.

\bibitem{Polarcc}
E.~Arikan.
\newblock Channel polarization: A method for constructing capacity-achieving
  codes for symmetric binary-input memoryless channels.
\newblock {\em IEEE Trans. Inf. Theory}, 55(7):3051 --3073, July 2009.

\bibitem{ashikhmin2004extrinsic}
A.~Ashikhmin, G.~Kramer, and S.~{ten Brink}.
\newblock Extrinsic information transfer functions: model and erasure channel
  properties.
\newblock {\em IEEE Trans. Inf. Theory}, 50(11):2657--2673, 2004.

\bibitem{BahadurR60}
R.~R. Bahadur and R.~R. Rao.
\newblock On deviations of the sample mean.
\newblock {\em The Annals of Mathematical Statistics}, 31(4), 1960.

\bibitem{barbISIT16}
J.~Barbier, M.~Dia, and N.~Macris.
\newblock Proof of threshold saturation for spatially coupled sparse
  superposition codes.
\newblock In {\em Proc. IEEE Int. Symp. Inf. Theory}, 2016.

\bibitem{barbITW16}
J.~Barbier, M.~Dia, and N.~Macris.
\newblock Threshold saturation of spatially coupled sparse superposition codes
  for all memoryless channels.
\newblock In {\em IEEE Inf. Theory Workshop}, 2016.

\bibitem{BarbierDM17}
J.~{Barbier}, M.~{Dia}, and N.~{Macris}.
\newblock {Universal Sparse Superposition Codes with Spatial Coupling and GAMP
  Decoding}.
\newblock {\em \normalfont arXiv:1707.04203}, July 2017.

\bibitem{barbKrzISIT14}
J.~Barbier and F.~Krzakala.
\newblock Replica analysis and approximate message passing decoder for sparse
  superposition codes.
\newblock In {\em Proc. IEEE Int. Symp. Inf. Theory}, 2014.

\bibitem{BarbKrz17}
J.~Barbier and F.~Krzakala.
\newblock Approximate message passing decoder and capacity-achieving sparse
  superposition codes.
\newblock {\em IEEE Trans. Inf. Theory}, 63(8):4894--4927, August 2017.

\bibitem{BarbSK2015}
J.~Barbier, C.~Sch{\"u}lke, and F.~Krzakala.
\newblock Approximate message-passing with spatially coupled structured
  operators, with applications to compressed sensing and sparse superposition
  codes.
\newblock {\em Journal of Statistical Mechanics: Theory and Experiment}, (5),
  2015.

\bibitem{BarbKrz15}
J.~Barbier, C.~Sch{\"u}lke, and F.~Krzakala.
\newblock Approximate message-passing with spatially coupled structured
  operators, with applications to compressed sensing and sparse superposition
  codes.
\newblock {\em Journal of Statistical Mechanics: Theory and Experiment},
  2015(5):P05013, 2015.

\bibitem{AntonyMLarxiv}
A.~Barron and A.~Joseph.
\newblock Least squares superposition codes of moderate dictionary size,
  reliable at rates up to capacity.
\newblock {\em {\normalfont Arxiv:1712.06866}}, 2010.

\bibitem{AntonyML}
A.~Barron and A.~Joseph.
\newblock Least squares superposition codes of moderate dictionary size are
  reliable at rates up to capacity.
\newblock {\em IEEE Trans. Inf. Theory}, 58(5):2541--2557, Feb. 2012.

\bibitem{BarronC12}
A.~R. Barron and S.~Cho.
\newblock High-rate sparse superposition codes with iteratively optimal
  estimates.
\newblock In {\em Proc. IEEE Int. Symp. Inf. Theory}, 2012.

\bibitem{WornellDuality}
R.~Barron, B.~Chen, and G.~Wornell.
\newblock The duality between information embedding and source coding with side
  information and some applications.
\newblock {\em IEEE Trans. Inf. Theory}, 49(5):1159 -- 1180, May 2003.

\bibitem{BayMont11}
M.~Bayati and A.~Montanari.
\newblock The dynamics of message passing on dense graphs, with applications to
  compressed sensing.
\newblock {\em IEEE Trans. Inf. Theory}, pages 764--785, 2011.

\bibitem{BayLASSO}
M.~Bayati and A.~Montanari.
\newblock The {LASSO} {R}isk for {G}aussian {M}atrices.
\newblock {\em IEEE Trans. Inf. Theory}, 58(4):1997--2017, April 2012.

\bibitem{Berrou96}
C.~Berrou and A.~Glavieux.
\newblock Near optimum error correcting coding and decoding: turbo-codes.
\newblock {\em IEEE Trans. Commun.}, 44(10):1261 --1271, Oct 1996.

\bibitem{blahut03algebraic}
R.~E. Blahut.
\newblock {\em Algebraic codes for data transmission}.
\newblock Cambridge University Press, 2003.

\bibitem{bocherer15}
G.~B{\"o}cherer, F.~Steiner, and P.~Schulte.
\newblock Bandwidth efficient and rate-matched low-density parity-check coded
  modulation.
\newblock {\em IEEE Trans. Commun.}, 63(12):4651--4665, 2015.

\bibitem{CCSDS131}
CCSDS.
\newblock {\em 131.0-B-2 TM Synchonization and Channel Coding}, August 2011.

\bibitem{ChenW01}
B.~Chen and G.~Wornell.
\newblock Quantization index modulation: a class of provably good methods for
  digital watermarking and information embedding.
\newblock {\em IEEE Trans. Inf. Theory}, 47(4):1423 --1443, May 2001.

\bibitem{choThesis}
S.~Cho.
\newblock {\em High-dimensional regression with random design, including sparse
  superposition codes}.
\newblock PhD thesis, Yale University, 2014.

\bibitem{choBarron13}
S.~Cho and A.~Barron.
\newblock Approximate iterative bayes optimal estimates for high-rate sparse
  superposition codes.
\newblock In {\em Sixth Workshop on Information-Theoretic Methods in Science
  and Engineering}, 2013.

\bibitem{CojaZdeb12}
A.~Coja-Oghlan and L.~Zdeborov\'{a}.
\newblock The condensation transition in random hypergraph 2-coloring.
\newblock In {\em Proc. 23rd Annual ACM-SIAM Symp. on Discrete Algorithms
  (SODA)}, pages 241--250, 2012.

\bibitem{CostaDP}
M.~Costa.
\newblock Writing on dirty paper (corresp.).
\newblock {\em IEEE Trans. Inf. Theory}, 29(3):439 -- 441, May 1983.

\bibitem{costForney2007}
D.~J. Costello and G.~D. Forney.
\newblock Channel coding: The road to channel capacity.
\newblock {\em Proc. IEEE}, 95(6):1150--1177, 2007.

\bibitem{cover1972broadcast}
T.~Cover.
\newblock Broadcast channels.
\newblock {\em IEEE Trans. Inf. Theory}, 18(1):2--14, 1972.

\bibitem{coverT12}
T.~M. Cover and J.~A. Thomas.
\newblock {\em Elements of {I}nformation {T}heory}.
\newblock John Wiley and Sons, 2012.

\bibitem{DavidNagaraja}
H.~David and H.~Nagaraja.
\newblock {\em Order Statistics}.
\newblock John Wiley \& Sons, 2003.

\bibitem{Den2008LD}
F.~Den~Hollander.
\newblock {\em Large deviations}, volume~14.
\newblock Amer. Mathematical Society, 2008.

\bibitem{DonSpatialC13}
D.~L. Donoho, A.~Javanmard, and A.~Montanari.
\newblock Information-theoretically optimal compressed sensing via spatial
  coupling and approximate message passing.
\newblock {\em IEEE Trans. Inf. Theory}, (11):7434--7464, Nov. 2013.

\bibitem{DonMalMont09}
D.~L. Donoho, A.~Maleki, and A.~Montanari.
\newblock Message-passing algorithms for compressed sensing.
\newblock {\em Proceedings of the National Academy of Sciences},
  106(45):18914--18919, 2009.

\bibitem{elGKBook}
A.~El~Gamal and Y.-H. Kim.
\newblock {\em Network {I}nformation {T}heory}.
\newblock Cambridge University Press, 2011.

\bibitem{EqCover91}
W.~Equitz and T.~Cover.
\newblock Successive refinement of information.
\newblock {\em IEEE Trans. Inf. Theory}, 37(2):269 --275, Mar 1991.

\bibitem{ErezLZ05}
U.~Erez, S.~Litsyn, and R.~Zamir.
\newblock Lattices which are good for (almost) everything.
\newblock {\em IEEE Trans. Inf. Theory}, 51(10):3401--3416, 2005.

\bibitem{ErezSZ05}
U.~Erez, S.~Shamai, and R.~Zamir.
\newblock Capacity and lattice strategies for canceling known interference.
\newblock {\em IEEE Trans. Inf. Theory}, 51(11):3820 -- 3833, Nov. 2005.

\bibitem{EreztenBrink}
U.~Erez and S.~ten Brink.
\newblock A close-to-capacity dirty paper coding scheme.
\newblock {\em IEEE Trans. Inf. Theory}, 51(10):3417--3432, 2005.

\bibitem{ErezZ04}
U.~Erez and R.~Zamir.
\newblock Achieving $\frac{1}{2}\log(1+ \textsf{snr})$ on the {AWGN} channel
  with lattice encoding and decoding.
\newblock {\em IEEE Trans. Inf. Theory}, 50(10):2293--2314, 2004.

\bibitem{felstrom1999time}
A.~J. Felstrom and K.~S. Zigangirov.
\newblock Time-varying periodic convolutional codes with low-density
  parity-check matrix.
\newblock {\em IEEE Trans. Inf. Theory}, 45(6):2181--2191, 1999.

\bibitem{fengler2019sparcs}
A.~Fengler, P.~Jung, and G.~Caire.
\newblock Sparcs for unsourced random access.
\newblock {\em {\sf arXiv:1901.06234}}, 2019.

\bibitem{forney98}
G.~D. Forney and G.~Ungerboeck.
\newblock Modulation and coding for linear gaussian channels.
\newblock {\em IEEE Trans. Inf. Theory}, 44(6):2384--2415, 1998.

\bibitem{gallagerBook}
R.~G. Gallager.
\newblock {\em Information theory and reliable communication}, volume~2.
\newblock Springer, 1968.

\bibitem{GelfandPinsker}
S.~Gelfand and M.~Pinsker.
\newblock Coding for a channel with random parameters.
\newblock {\em Problems of Control and Information}, 9:19 -- 31, January 1980.

\bibitem{Girod2}
B.~Girod, A.~M. Aaron, S.~Rane, and D.~Rebollo-Monedero.
\newblock Distributed video coding.
\newblock {\em Proceedings of the IEEE}, 93(1):71 --83, Jan. 2005.

\bibitem{GishPierce68}
H.~Gish and J.~Pierce.
\newblock Asymptotically efficient quantizing.
\newblock {\em IEEE Trans. Inf. Theory}, 14(5):676--683, 1968.

\bibitem{goldsmith2005wireless}
A.~Goldsmith.
\newblock {\em Wireless communications}.
\newblock Cambridge University Press, 2005.

\bibitem{GreigV18}
A.~Greig and R.~Venkataramanan.
\newblock Techniques for improving the finite length performance of sparse
  superposition codes.
\newblock {\em IEEE Transactions on Communications}, 66(3):905--917, 2018.

\bibitem{bicmAlbert}
A.~Guill{\'e}n~i F{\`a}bregas, A.~Martinez, and G.~Caire.
\newblock {\em Bit-interleaved coded modulation}.
\newblock Now Publishers Inc, 2008.

\bibitem{GuptaVerdu09}
A.~Gupta and S.~Verd{\`u}.
\newblock Nonlinear sparse-graph codes for lossy compression.
\newblock {\em IEEE Trans. Inf. Theory}, 55(5):1961 --1975, May 2009.

\bibitem{GuptaVerduWeiss}
A.~Gupta, S.~Verd{\`u}, and T.~Weissman.
\newblock Rate-distortion in near-linear time.
\newblock In {\em Proc. IEEE Int. Symp. on Inf. Theory}, 2008.

\bibitem{GuruXia15polar}
V.~Guruswami and P.~Xia.
\newblock Polar codes: Speed of polarization and polynomial gap to capacity.
\newblock {\em IEEE Trans. Inf. Theory}, 61(1):3--16, Jan. 2015.

\bibitem{hall79rate}
P.~Hall.
\newblock On the rate of convergence of normal extremes.
\newblock {\em Journal of Applied Probability}, 16(2):433--439, 1979.

\bibitem{HamkZ02}
J.~Hamkins and K.~Zeger.
\newblock Gaussian source coding with spherical codes.
\newblock {\em IEEE Trans. Inf. Theory,}, 48(11):2980--2989, Nov. 2002.

\bibitem{herzet2016relaxed}
C.~Herzet, A.~Dr{\'e}meau, and C.~Soussen.
\newblock Relaxed recovery conditions for omp/ols by exploiting both coherence
  and decay.
\newblock {\em IEEE Trans. Inf. Theory}, 62(1):459--470, 2016.

\bibitem{hsiehISIT18}
K.~Hsieh, C.~Rush, and R.~Venkataramanan.
\newblock Spatially coupled sparse regression codes: Design and state evolution
  analysis.
\newblock In {\em Proc. IEEE Int. Symp. Inf. Theory}. IEEE, 2018.

\bibitem{IharaKubo00}
S.~Ihara and M.~Kubo.
\newblock Error exponent for coding of memoryless {G}aussian sources with a
  fidelity criterion.
\newblock {\em IEICE Trans. Fundamentals}, E83-A(10), Oct. 2000.

\bibitem{IngberKochman}
A.~Ingber and Y.~Kochman.
\newblock The dispersion of lossy source coding.
\newblock In {\em Data Compression Conference (DCC)}, pages 53 --62, March
  2011.

\bibitem{JalaliWeiss}
S.~Jalali and T.~Weissman.
\newblock Rate-distortion via {M}arkov {C}hain {M}onte {C}arlo.
\newblock In {\em Proc. IEEE Int. Symp. on Inf. Theory}, 2010.

\bibitem{JansonBook}
S.~Janson.
\newblock {\em Random Graphs}.
\newblock Wiley, 2000.

\bibitem{AntonyThesis}
A.~Joseph.
\newblock {\em Achieving information-theoretic limits with high-dimensional
  regression}.
\newblock PhD thesis, Yale University, 2012.

\bibitem{AntonyFast}
A.~Joseph and A.~R. Barron.
\newblock Fast sparse superposition codes have near exponential error
  probability for ${R} < \mathcal{C}$.
\newblock {\em IEEE Trans. Inf. Theory}, 60(2):919--942, Feb. 2014.

\bibitem{Kont99}
I.~Kontoyiannis.
\newblock An implementable lossy version of the {L}empel-{Z}iv algorithm-{I}:
  {O}ptimality for memoryless sources.
\newblock {\em IEEE Trans. Inf. Theory}, 45(7):2293 --2305, nov 1999.

\bibitem{KontGioran}
I.~Kontoyiannis and C.~Gioran.
\newblock Efficient random codebooks and databases for lossy compression in
  near-linear time.
\newblock In {\em IEEE Inf. Theory Workshop}, pages 236 --240, 2009.

\bibitem{KontSPARC}
I.~Kontoyiannis, K.~Rad, and S.~Gitzenis.
\newblock Sparse superposition codes for {G}aussian vector quantization.
\newblock In {\em IEEE Inf. Theory Workshop}, page~1, 2010.

\bibitem{Polarrd}
S.~Korada and R.~Urbanke.
\newblock Polar codes are optimal for lossy source coding.
\newblock {\em IEEE Trans. Inf. Theory}, 56(4):1751 --1768, April 2010.

\bibitem{Polarwzgp}
S.~Korada and R.~Urbanke.
\newblock Polar codes for slepian-wolf, wyner-ziv, and gelfand-pinsker.
\newblock In {\em IEEE Inf. Theory Workshop}, Jan. 2010.

\bibitem{KostinaV12}
V.~Kostina and S.~Verd{\'u}.
\newblock Fixed-length lossy compression in the finite blocklength regime.
\newblock {\em IEEE Trans. on Inf. Theory}, 58(6):3309--3338, 2012.

\bibitem{krz12}
F.~Krzakala, M.~M{\'e}zard, F.~Sausset, Y.~Sun, and L.~Zdeborov{\'a}.
\newblock Probabilistic reconstruction in compressed sensing: algorithms, phase
  diagrams, and threshold achieving matrices.
\newblock {\em Journal of Statistical Mechanics: Theory and Experiment}, (8),
  2012.

\bibitem{kudekar2010effect}
S.~Kudekar and H.~D. Pfister.
\newblock The effect of spatial coupling on compressive sensing.
\newblock In {\em Proc. 48th Annual Allerton Conference on Communication,
  Control, and Computing}, pages 347--353, 2010.

\bibitem{spatialCoup13}
S.~Kudekar, T.~Richardson, and R.~L. Urbanke.
\newblock Spatially coupled ensembles universally achieve capacity under belief
  propagation.
\newblock {\em IEEE Trans. Inf. Theory}, 59(12):7761--7813, December 2013.

\bibitem{kudekar2011threshold}
S.~Kudekar, T.~J. Richardson, and R.~L. Urbanke.
\newblock Threshold saturation via spatial coupling: Why convolutional {LDPC}
  ensembles perform so well over the {BEC}.
\newblock {\em IEEE Trans. Inf. Theory}, 57(2):803--834, 2011.

\bibitem{kumar2014threshold}
S.~Kumar, A.~J. Young, N.~Macris, and H.~D. Pfister.
\newblock Threshold saturation for spatially coupled {LDPC} and {LDGM} codes on
  {BMS} channels.
\newblock {\em IEEE Trans. Inf. Theory}, 60(12):7389--7415, 2014.

\bibitem{Lapidoth97}
A.~Lapidoth.
\newblock On the role of mismatch in rate distortion theory.
\newblock {\em IEEE Trans. Inf. Theory}, 43(1):38 --47, Jan 1997.

\bibitem{lentmaier2010iterative}
M.~Lentmaier, A.~Sridharan, D.~J. Costello, and K.~S. Zigangirov.
\newblock Iterative decoding threshold analysis for {LDPC} convolutional codes.
\newblock {\em IEEE Trans. Inf. Thy}, 56(10):5274--5289, 2010.

\bibitem{liang2017}
S.~Liang, J.~Ma, and L.~Ping.
\newblock Clipping can improve the performance of spatially coupled sparse
  superposition codes.
\newblock {\em IEEE Commun. Letters}, 21(12):2578--2581, Dec. 2017.

\bibitem{linCost04}
S.~Lin and D.~J. Costello.
\newblock {\em Error control coding}, volume~2.
\newblock Prentice Hall Englewood Cliffs, 2004.

\bibitem{MartonRD74}
K.~Marton.
\newblock Error exponent for source coding with a fidelity criterion.
\newblock {\em IEEE Trans. Inf. Theory}, 20(2):197 -- 199, Mar 1974.

\bibitem{mitchell2015}
D.~G. Mitchell, M.~Lentmaier, and D.~J. Costello.
\newblock Spatially coupled ldpc codes constructed from protographs.
\newblock {\em IEEE Trans. Inf. Theory}, 61(9):4866--4889, 2015.

\bibitem{MontChap11}
A.~Montanari.
\newblock Graphical models concepts in compressed sensing.
\newblock In Y.~C. Eldar and G.~Kutyniok, editors, {\em Compressed Sensing},
  pages 394--438. Cambridge University Press, 2012.

\bibitem{MoulinK05}
P.~Moulin and R.~Koetter.
\newblock Data-hiding codes.
\newblock {\em Proc. IEEE}, 93(12):2083 -- 2126, Dec. 2005.

\bibitem{no2016rateless}
A.~No and T.~Weissman.
\newblock Rateless lossy compression via the extremes.
\newblock {\em IEEE Trans. Inf. Theory}, 62(10):5484--5495, 2016.

\bibitem{SulliStegan}
J.~O'Sullivan, P.~Moulin, and J.~Ettinger.
\newblock Information theoretic analysis of steganography.
\newblock In {\em IEEE Int. Symp. on Inf. Theory}, 1998.

\bibitem{ppv10}
Y.~Polyanskiy, H.~V. Poor, and S.~Verd{\'u}.
\newblock Channel coding rate in the finite blocklength regime.
\newblock {\em IEEE Trans. Inf. Theory}, 56(5):2307--2359, 2010.

\bibitem{PradhanDuality}
S.~Pradhan, J.~Chou, and K.~Ramchandran.
\newblock Duality between source coding and channel coding and its extension to
  the side information case.
\newblock {\em IEEE Trans. Inf. Theory}, 49(5):1181 -- 1203, May 2003.

\bibitem{Discus03}
S.~Pradhan and K.~Ramchandran.
\newblock Distributed source coding using syndromes ({DISCUS}): design and
  construction.
\newblock {\em IEEE Trans. Inf. Theory}, 49(3):626 -- 643, Mar 2003.

\bibitem{Prism07}
R.~Puri, A.~Majumdar, and K.~Ramchandran.
\newblock Prism: A video coding paradigm with motion estimation at the decoder.
\newblock {\em IEEE Trans. Image Process.}, 16(10):2436 --2448, Oct. 2007.

\bibitem{Rangan11}
S.~Rangan.
\newblock Generalized approximate message passing for estimation with random
  linear mixing.
\newblock In {\em Proc. IEEE Int. Symp. Inf. Theory}, pages 2168--2172, 2011.

\bibitem{GirodCoding1}
D.~Rebollo-Monedero, R.~Zhang, and B.~Girod.
\newblock Design of optimal quantizers for distributed source coding.
\newblock In {\em Data Compression Conference}, pages 13 -- 22, March 2003.

\bibitem{RichUBook}
T.~Richardson and R.~Urbanke.
\newblock {\em Modern Coding Theory}.
\newblock Cambridge University Press, 2008.

\bibitem{Rimoldi94}
B.~Rimoldi.
\newblock Successive refinement of information: characterization of the
  achievable rates.
\newblock {\em IEEE Trans. Inf. Theory}, 40(1):253 --259, Jan 1994.

\bibitem{RushGV17}
C.~Rush, A.~Greig, and R.~Venkataramanan.
\newblock Capacity-achieving sparse superposition codes via approximate message
  passing decoding.
\newblock {\em IEEE Trans. Inf. Theory}, 63(3):1476--1500, March 2017.

\bibitem{rushITW18}
C.~Rush, K.~Hsieh, and R.~Venkataramanan.
\newblock Capacity-achieving sparse regression codes via spatial coupling.
\newblock In {\em Proc. IEEE Inf. Theory Workshop}, 2018.

\bibitem{RushV17ErrExp}
C.~Rush and R.~Venkataramanan.
\newblock The error probability of sparse superposition codes with approximate
  message passing decoding.
\newblock {\em {\normalfont Arxiv:1712.06866}}, 2017.

\bibitem{SakrisonFin}
D.~Sakrison.
\newblock A geometric treatment of the source encoding of a {G}aussian random
  variable.
\newblock {\em IEEE Trans. Inf. Theory}, 14(3):481 -- 486, May 1968.

\bibitem{SakMismatch1}
D.~Sakrison.
\newblock The rate distortion function for a class of sources.
\newblock {\em Information and Control}, 15(2):165 -- 195, 1969.

\bibitem{SakMismatch2}
D.~Sakrison.
\newblock The rate of a class of random processes.
\newblock {\em IEEE Trans. Inf. Theory}, 16(1):10 -- 16, Jan 1970.

\bibitem{Shanks1969}
J.~L. Shanks.
\newblock Computation of the {F}ast {W}alsh-{F}ourier transform.
\newblock {\em IEEE Trans. Comput.}, 18(5):457--459, May 1969.

\bibitem{LDLC08}
N.~Sommer, M.~Feder, and O.~Shalvi.
\newblock Low-density lattice codes.
\newblock {\em IEEE Trans. on Inf. Theory}, 54(4):1561--1585, April 2008.

\bibitem{SpencerCommMag}
Q.~Spencer, C.~Peel, A.~Swindlehurst, and M.~Haardt.
\newblock An introduction to the multi-user mimo downlink.
\newblock {\em IEEE Commun. Mag.}, 42(10):60 -- 67, Oct. 2004.

\bibitem{XiongDPCoding}
Y.~Sun, Y.~Yang, A.~Liveris, V.~Stankovic, and Z.~Xiong.
\newblock Near-capacity dirty-paper code design: A source-channel coding
  approach.
\newblock {\em IEEE Trans. Inf. Theory}, 55(7):3013 --3031, July 2009.

\bibitem{takeishi14}
Y.~Takeishi, M.~Kawakita, and J.~Takeuchi.
\newblock Least squares superposition codes with {B}ernoulli dictionary are
  still reliable at rates up to capacity.
\newblock {\em IEEE Trans. Inf. Theory}, 60:2737--2750, 2014.

\bibitem{takeishi2016improved}
Y.~Takeishi and J.~Takeuchi.
\newblock An improved upper bound on block error probability of least squares
  superposition codes with unbiased bernoulli dictionary.
\newblock In {\em IEEE Int. Symp. on Inf. Theory}, pages 1168--1172, 2016.

\bibitem{ten1999convergence}
S.~{ten Brink}.
\newblock Convergence of iterative decoding.
\newblock {\em Electronics letters}, 35(13):1117--1119, 1999.

\bibitem{ten2004design}
S.~{ten Brink}, G.~Kramer, and A.~Ashikhmin.
\newblock Design of low-density parity-check codes for modulation and
  detection.
\newblock {\em IEEE Trans. Commun.}, 52(4):670--678, 2004.

\bibitem{tibshirani1996regression}
R.~Tibshirani.
\newblock Regression shrinkage and selection via the {LASSO}.
\newblock {\em Journal of the Royal Statistical Society. Series B
  (Methodological)}, pages 267--288, 1996.

\bibitem{tibshirani2015statistical}
R.~Tibshirani, M.~Wainwright, and T.~Hastie.
\newblock {\em Statistical learning with sparsity: the {LASSO} and
  generalizations}.
\newblock Chapman and Hall/CRC, 2015.

\bibitem{tse2005fundamentals}
D.~Tse and P.~Viswanath.
\newblock {\em Fundamentals of wireless communication}.
\newblock Cambridge University Press, 2005.

\bibitem{RVGaussianML}
R.~Venkataramanan, A.~Joseph, and S.~Tatikonda.
\newblock Lossy compression via sparse linear regression: Performance under
  minimum-distance encoding.
\newblock {\em IEEE Trans. Inf. Thy}, 60(6):3254--3264, June 2014.

\bibitem{RVGaussFeasible}
R.~Venkataramanan, T.~Sarkar, and S.~Tatikonda.
\newblock Lossy compression via sparse linear regression: Computationally
  efficient encoding and decoding.
\newblock {\em IEEE Trans. Inf. Theory}, 60(6):3265--3278, June 2014.

\bibitem{RVAller12}
R.~Venkataramanan and S.~Tatikonda.
\newblock Sparse regression codes for multi-terminal source and channel coding.
\newblock In {\em 50th Allerton Conf. on Commun., Control, and Computing},
  2012.

\bibitem{RVsparc_ml}
R.~Venkataramanan and S.~Tatikonda.
\newblock The rate-distortion function and excess-distortion exponent of sparse
  regression codes with optimal encoding.
\newblock {\em IEEE Trans. Inf. Theory}, 63(8):5228--5243, August 2017.

\bibitem{WainManeva10}
M.~Wainwright, E.~Maneva, and E.~Martinian.
\newblock Lossy source compression using low-density generator matrix codes:
  Analysis and algorithms.
\newblock {\em IEEE Trans. Inf. Theory}, 56(3):1351 --1368, 2010.

\bibitem{WynerZiv}
A.~D. Wyner and J.~Ziv.
\newblock The rate-distortion function for source coding with side information
  at the decoder.
\newblock {\em IEEE Trans. Inf. Theory}, 22:1 -- 10, January 1976.

\bibitem{yanLing14}
Y.~Yan, L.~Liu, C.~Ling, and X.~Wu.
\newblock Construction of capacity-achieving lattice codes: Polar lattices.
\newblock {\em {\normalfont arXiv:1411.0187}}, 2014.

\bibitem{XiongCoding1}
Y.~Yang, S.~Cheng, Z.~Xiong, and W.~Zhao.
\newblock {W}yner-{Z}iv coding based on {TCQ} and {LDPC} codes.
\newblock {\em IEEE Trans. Commun.}, 57(2):376 --387, Feb. 2009.

\bibitem{yedla2014simple}
A.~Yedla, Y.-Y. Jian, P.~S. Nguyen, and H.~D. Pfister.
\newblock A simple proof of {M}axwell saturation for coupled scalar recursions.
\newblock {\em IEEE Trans. Inf. Theory}, 60(11):6943--6965, 2014.

\bibitem{zamir14lattice}
R.~Zamir.
\newblock {\em Lattice Coding for Signals and Networks: A Structured Coding
  Approach to Quantization, Modulation, and Multiuser Information Theory}.
\newblock Cambridge University Press, 2014.

\bibitem{Zamir02}
R.~Zamir, S.~Shamai, and U.~Erez.
\newblock Nested linear/lattice codes for structured multiterminal binning.
\newblock {\em IEEE Trans. Inf. Theory}, 48(6):1250 --1276, June 2002.

\end{thebibliography}
\addcontentsline{toc}{section}{References}

\end{document}